\newif\ifhaspru
\newif\ifanon
\renewcommand{\vec}[1]{\mathbf{#1}}  
\newcommand{\bit}{\{0,1\}}
\newcommand{\reg}[1]{\mathsf{#1}}
\newcommand{\ignore}[1]{}
\newcommand{\proj}[1]{\ensuremath{|#1\rangle \!\langle #1|}}
\newcommand{\Z}{\mathbb{Z}}
\newcommand{\N}{\mathbb{N}}
\newcommand{\Tr}{\mathrm{Tr}}
\newcommand{\td}{\mathrm{td}}
\newcommand{\states}{\mathrm{S}}
\newcommand{\negl}{\mathsf{negl}}
\newcommand{\HPS}{\mathsf{HPS}}
\newcommand{\mb}{\mathbb}
\renewcommand{\vec}[1]{\mathbf{#1}}
\newcommand\id{\mathbb{I}}
\newcommand{\ot}{\otimes}
\newcommand{\norm}[1]{\left\lVert#1\right\rVert}
\newcommand{\opnorm}[1]{\norm{#1}_{\infty}}
\newcommand\algo{\mathcal}
\newcommand{\jnote}[1]{}
\newcommand{\alex}[1]{}
\newcommand{\jonas}[1]{}
\newcommand{\Dom}[1]{}
\newcommand{\dom}[1]{}
\declaretheoremstyle[bodyfont=\it,qed=\qedsymbol]{noproofstyle}
\declaretheorem[name=Observation,numbered=no]{observation*}
\declaretheorem[numberlike=equation]{theorem}
\declaretheorem[name=Theorem,numbered=no]{theorem*}
\declaretheorem[numberlike=equation]{lemma}
\declaretheorem[name=Lemma,numbered=no]{lemma*}
\declaretheorem[numberlike=equation]{corollary}
\declaretheorem[name=Corollary,numbered=no]{corollary*}
\declaretheorem[name=Proposition,numbered=no]{proposition*}
\declaretheorem[name=Claim,numbered=no]{claim*}
\declaretheorem[numberlike=equation]{conjecture}
\declaretheorem[name=Conjecture,numbered=no]{conjecture*}
\declaretheorem[name=Question,numbered=no]{question*}
\declaretheoremstyle[bodyfont=\it]{defstyle} 
\declaretheorem[numberlike=equation,style=defstyle]{definition}
\declaretheorem[unnumbered,name=Definition,style=defstyle]{definition*}
\declaretheorem[unnumbered,name=Example,style=defstyle]{example*}
\declaretheorem[unnumbered,name=Notation=defstyle]{notation*}
\declaretheorem[numberlike=equation,style=defstyle]{construction}
\declaretheorem[unnumbered,name=Construction,style=defstyle]{construction*}
\declaretheorem[numberlike=equation,style=defstyle]{remark}
\declaretheorem[unnumbered,name=Remark,style=defstyle]{remark*}
\declaretheorem[unnumbered,name=Remark,style=defstyle]{algorithm*}
\title{Efficient Quantum Pseudorandomness
from Hamiltonian Phase States}
\author{}
\author[1]{John Bostanci}
\affil[1]{Columbia University}
\author[2]{Jonas Haferkamp}
\affil[2]{Harvard University}
\author[3,4]{Dominik Hangleiter}
\affil[3]{QuICS, University of Maryland \& NIST}
\affil[4]{Simons Institute for the Theory of Computing, UC Berkeley}
\author[5]{\quad Alexander Poremba}
\affil[5]{Massachusetts Institute of Technology}
\date{}
\begin{document}

\numberwithin{equation}{section}

\maketitle

\begin{abstract}
	
Quantum pseudorandomness has found applications in many areas of quantum information, ranging from entanglement theory, to models of scrambling phenomena in chaotic quantum systems, and, more recently, in the foundations of quantum cryptography. Kretschmer (TQC '21) showed that both pseudorandom states and pseudorandom unitaries exist even in a world without classical one-way functions. To this day, however, all known constructions require classical cryptographic building blocks which are themselves synonymous with the existence of one-way functions, and which are also challenging to implement on realistic quantum hardware.

In this work, we seek to make progress on both of these fronts simultaneously---by decoupling quantum pseudorandomness from classical cryptography altogether.
We introduce a quantum hardness assumption called the \emph{Hamiltonian Phase State} ($\mathsf{HPS}$) problem, which is the task of decoding output states of a random instantaneous quantum polynomial-time (IQP) circuit. Hamiltonian phase states can be generated very efficiently using only Hadamard gates, single-qubit $Z$ rotations and CNOT circuits. We show that the hardness of our problem reduces to a worst-case version of the problem, and we provide evidence that our assumption is plausibly \emph{fully quantum}; meaning, it cannot be used to construct one-way functions.
We also show information-theoretic hardness when only few copies of $\mathsf{HPS}$ are available by proving an approximate $t$-design property of our ensemble.
Finally, we show that our $\mathsf{HPS}$ assumption and its variants allow us to \emph{efficiently} construct many pseudorandom primitives, ranging from pseudorandom states, to quantum pseudoentanglement, to pseudorandom unitaries, and even primitives such as public-key encryption with quantum keys. \ifhaspru Along the way, we analyze a natural iterative construction of pseudorandom unitaries which resembles a candidate of Ji, Liu, and Song (CRYPTO'18).\fi
\end{abstract}

\newpage
\setcounter{tocdepth}{2}
\tableofcontents

\newpage

%
%

\section{Introduction}

Pseudorandomness~\cite{10.5555/1202577,TCS-010} is ubiquitous in theoretical computer science and has found applications in many areas, ranging from cryptography, to computational complexity, to the study of randomized algorithms, and even to combinatorics. 
The celebrated result of Håstad, Impagliazzo, Levin, and Luby~\cite{doi:10.1137/S0097539793244708} shows that one can construct a \emph{pseudorandom generator} from any one-way function---a function that is easy to evaluate but computationally hard to invert. Pseudorandom generators can then in turn be used to construct more advanced cryptographic primitives, such as \emph{pseudorandom functions}~\cite{10.1145/6490.6503}, i.e., keyed families of functions that appear random to any computationally bounded observer. This fact has elevated the notion of a one-way function as the minimal assumption in all of theoretical cryptography.
One-way functions are typically built from well-studied mathematical conjectures, such as the hardness of factoring~\cite{10.1145/359340.359342} and discrete logarithms~\cite{10.1145/359460.359473}, decoding error correcting codes~\cite{BFKL93,Alekhnovich03}, or finding short vectors in high-dimensional lattices~\cite{regev2009lattices}.
More advanced cryptographic primitives (which are believed  to lie beyond what is generically possible to construct from any one-way function), such as public-key encryption, tend to require highly structured assumptions which are more susceptible to algorithmic attacks---particularly by quantum computers~\cite{Shor_1997}, which has led to the design of \emph{post-quantum assumptions}~\cite{915326}.

In quantum cryptography\jnote{hi}, there has recently been a significant interest in so-called "fully quantum" cryptographic primitives (occasionally referred to as \emph{MicroCrypt} primitives) which are potentially \emph{weaker} than the conventional minimal assumptions used in classical cryptography. 
Here, the notion of \emph{quantum pseudorandomness} has emerged as the natural quantum analogue of pseudorandomness in the classical world~\cite{ji2018pseudorandom,kretschmer2021quantum,aaronson_et_al:LIPIcs.ITCS.2024.2}. In particular, Ji, Liu and Song~\cite{ji2018pseudorandom} proposed the notion of pseudorandom states~\cite{ji2018pseudorandom} and pseudorandom unitaries as the natural quantum analogues of pseudorandom generators~\cite{doi:10.1137/S0097539793244708} and pseudorandom functions~\cite{10.1145/6490.6503}, respectively.
The work of Kretschmer~\cite{kretschmer2021quantum, kretschmer2023quantum}
has shown that such fully quantum cryptographic primitives
can exist in a world in which no classical cryptography exists---including one-way functions.
At the same time, quantum pseudorandomness has applications in many areas of quantum information, ranging from entanglement theory~\cite{aaronson_et_al:LIPIcs.ITCS.2024.2,bouland_et_al:LIPIcs.CCC.2024.21,feng2024dynamicspseudoentanglement}, quantum learning theory~\cite{zhao2023learningquantumstatesunitaries}, to models of scrambling phenomena in chaotic quantum systems~\cite{Kim_2023,engelhardt2024cryptographiccensorship}, and, more generally, even in the foundations of quantum cryptography~\cite{ji2018pseudorandom,kretschmer2021quantum,kretschmer2023quantum,morimae2022one,ananth2022pseudorandom,brakerski2022computational,bostanci2023unitarycomplexityuhlmanntransformation,khurana2024commitments,batra2024commitments}. 

\paragraph{Limitations of existing constructions.}

Despite strong evidence that MicroCrypt primitives such as pseudorandom states and pseudorandom unitaries lie "below" one-way functions~\cite{kretschmer2021quantum, kretschmer2023quantum}, 
all known constructions implicitly make use of one-way functions (or other assumptions which are themselves synonymous with the existence of one-way functions)~\cite{ji2018pseudorandom,brakerski2019pseudo,metger2024simpleconstructionslineardepthtdesigns}. This begs the question:
\begin{center}
\emph{
Is it possible to construct fully quantum primitives, including quantum pseudorandomness, from quantum rather than classical hardness assumptions?}
\end{center}
Instantiating fully quantum primitives from a concrete and well-founded quantum hardness assumption (rather than from the existence of one-way functions) has remained a long standing open problem~\cite{10.1007/978-3-031-15802-5_8,morimae2022one}.

Moreover, the fact that quantum pseudorandom states and unitaries are built from classical one-way functions makes them nearly impossible to realize on  realistic quantum hardware.
In some sense, this is inherent because cryptographic pseudorandom functions are highly complex by design~\cite{cryptoeprint:2024/1104}, and therefore require a massive computational overhead to implement coherently.
As a result, this severely limits the potential of using quantum pseudorandomness in practical applications; for example in the context of entanglement theory~\cite{aaronson_et_al:LIPIcs.ITCS.2024.2,bouland_et_al:LIPIcs.CCC.2024.21}, or when studying the
emergence of thermal
equilibria in isolated many-body systems~\cite{feng2024dynamicspseudoentanglement}, or when modeling scrambling phenomena in chaotic quantum systems~\cite{Kim_2023}. 
A second limitation of existing pseudorandom constructions is therefore also the notion of quantum efficiency, which begs the question: 
\begin{center}
\emph{Are there more efficient constructions of quantum pseudorandomness which can be implemented on realistic quantum hardware?}
\end{center}

Making progress on both of these  questions would not only lead to new insights in the foundations of quantum cryptography and the study of quantum hardness assumptions more generally, but also make quantum pseudorandomness more useful in practice.
To this day, however, no concrete fully quantum hardness assumption has been explored in an attempt to answer this question.

\paragraph{Towards a fully quantum assumption.} 

In order to plausibly claim that quantum pseudorandomness and other fully quantum cryptographic primitives exist in a world in which classical cryptography does not, we must construct these primitives from new assumptions that do not themselves imply classical cryptography.

The history of cryptography has taught us that finding good and well-founded cryptographic assumptions is not at all an easy task---even entirely plausible assumptions have often found surprising attacks~\cite{shor_algorithms_1994,cryptoeprint:2022/975,cryptoeprint:2022/214}.
What makes a new cryptographic assumption reasonable? While no widely agreed upon standards exist~\cite{cryptoeprint:2015/907}, the conventional belief is to use assumptions
\begin{itemize}
    \item which are rooted in a well-studied problem (ideally, a problem that has already been analyzed for many years) and which seems intractable in the worst case;

    \item for which there is a natural notion of what constitutes a "random instance" of the problem; moreover, such an instance can always be efficiently generated;
    
    \item for which there is evidence of average-case hardness, ideally in the form of a worst-case to average-case reduction;

    \item which can be connected to other assumptions or computational tasks that have been studied over the years, and

    \item which have enough structure to enable interesting cryptographic primitives.
\end{itemize}

A natural candidate for constructing quantum pseudorandomness (and other fully quantum cryptographic primitives) is via \emph{random quantum circuits}. 
In fact, the computational pseudorandomness of random quantum circuits appears to be a folklore conjecture and is widely believed among many quantum computer scientists.
As we are unaware of a concrete technical conjecture, we provide such a formulation here.
\begin{conjecture}[Random quantum circuits give rise to pseudorandom unitaries]
\label{conj:rqc} \ \\
Consider $n$-qubit random quantum circuits with $m$ gates defined by repeating the following process $m$ times independently at random: Draw a random pair $(i,j)$ of qubits and apply a gate from a universal gate set $\mathsf{G}\subset SU(4)$ to the qubits $i$ and $j$.
Then, there exist univeral constants $c>0$ and $C_{\mathsf{G}}>0$ (depending on the gate set $\mathsf{G}$) such that random quantum circuits with $m\geq C_{\mathsf{G}} n^c$ gates form ensembles of pseudorandom unitaries.
\end{conjecture}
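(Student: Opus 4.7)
The plan is to decompose the conjecture into an information-theoretic step and a computational step, with the latter being where I expect the real difficulty. The information-theoretic part leverages the extensive literature on approximate unitary $t$-designs: a suitable refinement of the spectral gap techniques of Brand\~ao, Harrow, and Horodecki, together with more recent depth improvements due to Haah, Haferkamp, Hunter-Jones, and collaborators, should give that random circuits of size $m \geq C_{\mathsf{G}}\, n \cdot t \cdot \mathrm{polylog}(1/\epsilon)$ form $\epsilon$-approximate unitary $t$-designs for any universal gate set $\mathsf{G}$. Setting $t = \mathrm{poly}(n)$ and $\epsilon = 2^{-n}$ immediately yields statistical indistinguishability from Haar against any---even computationally unbounded---adversary restricted to $\mathrm{poly}(n)$ oracle queries, which already covers the \emph{query complexity} of any polynomial-time distinguisher in the standard PRU security game.

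The real obstacle is bridging from this query-bounded indistinguishability to genuine computational pseudorandomness against arbitrary efficient distinguishers who may also perform unbounded internal computation on the queried states. The design property alone is not sufficient in general: one can in principle ``poison'' a $t$-design with a low-complexity backdoor while preserving its moments, so some structural feature of the random circuit distribution itself must be invoked. A natural plan is to argue that any polynomial-time distinguisher between random circuits and Haar can be compressed, via a transcript or heavy-hitter argument, into one whose effective query complexity is polynomial, thereby reducing to the design setting. I view this compression step as the main bottleneck; I do not expect it to go through unconditionally, since it would essentially amount to showing that efficient quantum adversaries cannot exploit polynomial-time computable structure in circuit ensembles---a statement that is strictly stronger than, and would imply, known-to-be-hard pseudorandomness conjectures.

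A more tractable route, and arguably the one this paper implicitly points toward, is to prove the conjecture \emph{relative to} a concrete quantum hardness assumption such as $\mathsf{HPS}$. Concretely, I would try to embed a random $\mathsf{HPS}$ instance into the random circuit ensemble by writing a random circuit as an outer Clifford-plus-single-qubit-rotation wrapper around an IQP block, and then argue via a twirling and gate-simulation argument that the induced marginal over the IQP subcircuit is statistically close to the distribution assumed hard in $\mathsf{HPS}$. A distinguisher for the random-circuit PRU could then be converted into a decoder for $\mathsf{HPS}$, establishing a weakened form of Conjecture~\ref{conj:rqc}: random quantum circuits of polynomial size are PRUs assuming $\mathsf{HPS}$. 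Removing this hypothesis, i.e., proving \emph{unconditional} pseudorandomness of random circuits, seems to require a fundamentally new technique---some way to witness computational structure that is simultaneously invisible to the design moments---and this is, in my view, the central open problem behind the conjecture.
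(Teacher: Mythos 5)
The statement you were asked to prove is, in the paper, explicitly a \emph{conjecture}, not a theorem, and the paper offers no proof of it; indeed the authors write that ``we currently do not have rigorous evidence for [the conjecture]; for example, in terms of a worst-to-average reduction for a corresponding learning problem.'' So there is no proof in the paper against which to check your argument, and your honest assessment that the conjecture cannot presently be established is exactly right. Your decomposition into an information-theoretic step (approximate $t$-designs, now known for random circuits in depth roughly linear in $t$) and a computational step (bridging from query-bounded indistinguishability to full PRU security) is the correct way to see the problem, and your ``poisoned $t$-design'' observation---that moments alone cannot certify pseudorandomness because a low-complexity backdoor can coexist with high design order---is precisely the obstruction that makes this a conjecture rather than a corollary of known design results. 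This is the same reason the paper introduces $\mathsf{HPS}$ in the first place: it sidesteps Conjecture~\ref{conj:rqc} by working with a structured circuit family that admits re-randomization and a plausible worst-to-average story, rather than trying to prove pseudorandomness of unstructured random circuits.

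Where I would push back is on your third paragraph, the proposed route of ``embedding a random $\mathsf{HPS}$ instance into the random circuit ensemble'' to prove Conjecture~\ref{conj:rqc} assuming $\mathsf{HPS}$. This is not an approach the paper pursues, and I do not think it goes through as sketched. A random circuit built from two-qubit gates drawn from a universal set $\mathsf{G}$ does not, after any natural twirl, marginalize onto the $\mathsf{HPS}$ ensemble: the $\mathsf{HPS}$ distribution lives on commuting diagonal evolutions of a very specific form $\exp(\mathrm{i}\sum_i \theta_i \bigotimes_j Z^{A_{ij}})$, and the marginal of a generic random circuit over any subcircuit is not close in total variation to this structured family. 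Even granting some coupling, the logical direction is also off: showing that a PRU distinguisher for random circuits yields an $\mathsf{HPS}$ solver would require the \emph{opposite} reduction (random circuits at least as hard as $\mathsf{HPS}$), and the statistical mismatch between the two ensembles blocks that as well. The paper's actual position is more modest and cleaner: $\mathsf{HPS}$ is proposed as an \emph{alternative} structured assumption precisely because no such reduction from unstructured random circuits is in sight. Your concluding sentence---that removing the hardness hypothesis altogether would require a fundamentally new technique for witnessing computational structure invisible to design moments---is the right way to state the open problem, and aligns with the paper's framing.
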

We note that many other possible formulations (e.g. with specific geometric architectures or only regarding pseudorandom states) are also possible.
Indeed, if~\Cref{conj:rqc} holds even with exponential security, then Ref.~\cite{schuster_random_2024} implies that a simple ensemble of random quantum circuits in a 1D architecture of depth $\mathrm{polylog}(n)$ is also pseudorandom.

\cref{conj:rqc} can be seen as a direct quantum analogue of a claim that was first proposed by Gowers~\cite{gowers1996almost} who conjectured that random reversible quantum circuits form pseudorandom permutations on bitstrings.
This conjecture has inspired multiple recent works in classical cryptography.
For instance, it was recently proven by He and O'Donnell~\cite{he2024pseudorandom} that the Luby-Rackoff~\cite{luby1988construct} construction of pseudorandom permutations from pseudorandom functions can be implemented with reversible permutations.
Random reversible circuits have recently also inspired entirely new approaches for constructing program obfuscation schemes~\cite{canetti2024towards}.

Gowers originally conjectured the emergence of pseudorandomness when attempting to prove that random quantum circuits converge quickly to ensembles of $t$-wise independent permutations~\cite{gowers1996almost} (this bound was further improved later on towards an optimal scaling~\cite{hoory2005simple,brodsky2008simple,chen2024incompressibility}.
In fact, this property can itself be viewed as evidence for pseudorandomness as we further discuss in \cref{sec:hardness and quantumness}.
It turns out that random quantum circuits satisfy an analogous property by converging nearly optimally towards approximate $t$-designs~\cite{chen2024incompressibility,brandao2016local,haferkamp2022random,schuster_random_2024}.

However, we currently do not have rigorous evidence for \cref{conj:rqc}; for example, in terms of a worst-to-average reduction for a corresponding learning problem. 
Moreover, and maybe more importantly, it is unclear how one would use unstructured random quantum circuits to construct more advanced quantum cryptographic primitives (e.g., as discussed in \Cref{sec:intro-app}).
A similar situation arises for general one-way functions, which require additional structure to build more advanced cryptographic applications, such as public-key encryption.
It could very well be the case that random quantum circuits are simply \emph{too mixing} to be a useful in the context of quantum cryptography.
A natural way forward is to search for a sweet spot---an ensemble of random quantum circuits that is sufficiently structured to permit the construction of interesting cryptographic primitives but which, at the same time, is sufficiently mixing to guarantee security.

\section{Our contributions}
\label{sec:contributions}

In this work, we simultaneously address the two major open problems in the field of quantum pseudorandomness and propose the first well-founded and fully quantum hardness assumption. 
To this end, we follow the strategy sketched above, and propose a family of quantum states which we call \emph{Hamiltonian Phase States}. 
These states are a family of quantum states which are "maximally quantum" in the sense that the state has support on all bitstrings with amplitudes equal in magnitude, but varying phases. Hamiltonian Phase States are generated by a family of commuting \emph{instantaneous quantum polynomial-time} (IQP) circuits which generalize the $X$ programs proposed by Shepherd and Bremner~\cite{shepherd_temporally_2009}.
The corresponding circuits are highly structured in that they are generated by a Hamiltonian with only $Z$-type terms applied to the all-$\ket +$ state.
This structure makes them amenable to rigorous analysis \cite{kahanamoku-meyer_forging_2023,bremner_iqp_2023,gross_secret_2023}. 
At the same time, these circuits are also believed to be sufficiently mixing and hard to simulate classically~\cite{shepherd_temporally_2009,bremner_classical_2010,Bremner_2016,hangleiter_computational_2023}.
Moreover, since Hamiltonian phase states can be generated by a commuting Hamiltonian, they admit a highly efficient implementation in practice.


Phase states are a natural direction to look at in the search for a fully-quantum cryptographic assumption with sufficient amounts of structure. 
On the one hand, this is because of their quantum advantage properties.
On the other hand, the (quantum) learnability of different ensembles of phase states has been studied extensively in recent work~\cite{arunachalam2023optimal}.
There, the authors give optimal bounds for the sample complexity of learning many families of phase states from quantum samples, as well as upper bounds on the time complexity. 
Importantly, there are families of phase states generated using a small number of (long-range) gates, which cannot be learned from polynomially many samples.
Following this, our proposed cryptographic assumption is that Hamiltonian Phase States are hard to learn, given quantum samples and classical side information. 

Moreover, the known constructions for pseudorandom states with useful cryptographic applications are based on phase states~\cite{ji2018pseudorandom}. These are generated using a single-bit output quantum-secure pseudorandom function family $\{f_k\}_k$ with 
\begin{align}
    \label{eq:ji liu song PRS}
    \ket{\phi_k} \propto \sum_x \omega_q^{f_k(x)} \ket x,
\end{align}
where $\omega_q$ is a $q$-th root of unity, for example $q=2$~\cite{brakerski2019pseudo}. 
Because these states are based on a classical assumption, they require the reversible implementation of a classical PRF which requires a large number of Toffoli gates. 
These are extremely expensive in standard fault-tolerant constructions.
However, the results of Refs.~\cite{ji2018pseudorandom,arunachalam2023optimal} suggest that a more natural family of phase states which is generated by a quantum circuit with a small number of expensive gates can also yield quantum pseudorandomness. 
This would require gates affecting a large number of qubits, since low-degree phase states can be learned efficiently. 
As we show below, in spite of having terms with high support, the Hamiltonian Phase States can be generated highly efficiently using only local $Z$-rotations and CNOT gates.  


\subsection{Hamiltonian Phase States}

\begin{figure}
    \includegraphics{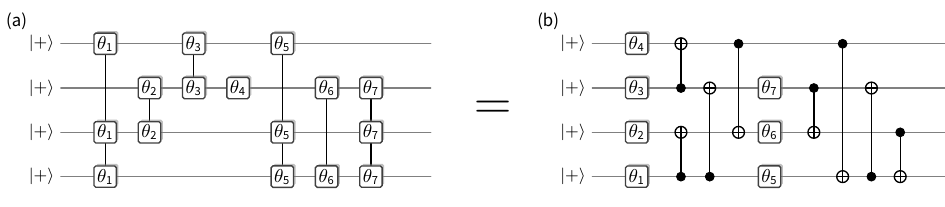}
    \caption{\label{fig:hps} 
    Hamiltonian Phase States (HPS) are generated by sequentially applying Ising-type rotations around angles $\theta_i$ to the state $\ket{+^n} =H^{\otimes n} \ket{0^n}$. 
    (a) Example of a HPS on 4 qubits. Connected boxes at sites $i,j,k$ with angle $\theta$ represent the unitary $\exp(i \theta Z_i Z_j Z_k ) $.
    (b) HPS can be implemented using only single-qubit $Z$ rotations interlaced with \textsf{CNOT} circuits.
    } 
\end{figure}
Let $\vec A \in \Z_2^{m \times n}$ be a binary matrix and let $\boldsymbol{\theta} = (\theta_1, \dots, \theta_m)$ be a set of uniformly random angles in the interval $[0,2\pi)$ according to some discretization into $q = \poly(n)$ parts. 
We consider phase states of the form 
\begin{align}
\ket{\Phi_{\boldsymbol{\theta}}^{\vec A}} = \exp\left(\mathrm{i} \, \sum_{i=1}^m \theta_i \bigotimes_{j=1}^n \mathsf{Z}^{\vec A_{ij}}  \right) H^{\otimes n} \ket{0^n}. \label{eq:HPS-state}
\end{align}
where, for $i \in [m]$, we denote the $i$-th row of $\vec A$ by $(\vec A_{i1}, \dots,\vec A_{in})$ and let
$$
\bigotimes_{j=1}^n \mathsf{Z}^{\vec A_{ij}} = \mathsf{Z}^{\vec A_{i1}} \otimes \dots \otimes \mathsf{Z}^{\vec A_{in}} \quad\quad \text{ for } \quad\quad \mathsf{Z}^0 = \id,\,\,\, \mathsf{Z}^1 = \mathsf{Z}.
$$
We call these states \emph{Hamiltonian Phase States}  since they can naturally be prepared as the result of a time evolution under a sparse Ising Hamiltonian. 
We also call the matrix $\vec A$ the \emph{architecture} of the states, as it specifies the overall structure/location of the Ising terms.
Hamiltonian Phase States with a single fixed angle $\theta_i \equiv \theta$ have been studied as a means to demonstrate verified quantum advantage, when measured in the $X$ basis, under the name $X$-programs~\cite{shepherd_temporally_2009}.
A Hamiltonian Phase State is therefore a generalized version of an $X$ program parameterized by the pair $(\vec A, \boldsymbol \theta)$. 
$X$ programs with  $\theta=\pi/8$ have the interesting property that its Fourier coefficients can be computed efficiently classically, but at the same time the simulation of such $X$-programs is believed to be classically intractable \cite{shepherd_temporally_2009,bremner_classical_2010,Bremner_2016,hangleiter_computational_2023}. 

Our cryptographic assumption rests on the  apparent hardness of \emph{learning} Hamiltonian Phase States (or generalized $X$ programs), which was highlighted in recent work~\cite{arunachalam2023optimal}. Concretely, our quantum computational assumption amounts to the conjecture that our ensemble of Hamiltonian phase states satisfies the following two properties:
\begin{itemize}
    \item Random Hamiltonian Phase States are \emph{hard to invert} in the following sense: given $\ket{\Phi_{\boldsymbol{\theta}}^{\vec A}}^{\otimes t}$, for any $t=\poly(n)$, it is computationally difficult to reverse-engineer the angles $\boldsymbol{\theta}$ and architecture $\vec A$. This means that the ensemble $\{\ket{\Phi_{\boldsymbol{\theta}}^{\vec A}}\}_{\boldsymbol{\theta},\vec A}$ gives rise to a so-called \emph{one-way state generator} (OWSG) as in \Cref{def:OWSG}. 
    
    \item Random Hamiltonian Phase States are \emph{hard to distinguish from Haar random states} in the following sense: given $\ket{\Phi_{\boldsymbol{\theta}}^{\vec A}}^{\otimes t}$, for any $t=\poly(n)$, it is computationally difficult to distinguish $\ket{\Phi_{\boldsymbol{\theta}}^{\vec A}}^{\otimes t}$ from $\ket{\Psi}^{\otimes t}$, where $\ket{\Psi}$ is a Haar random state. This means that the ensemble $\{\ket{\Phi_{\boldsymbol{\theta}}^{\vec A}}\}_{\boldsymbol{\theta},\vec A}$ gives rise to a so-called \emph{pseudorandom state generator} (PRSG) as in \Cref{{def:prsg}}.
\end{itemize}

We can call the two assumptions above the search (respectively, decision) variant of \emph{Hamiltonian Phase State} assumption ($\mathsf{HPS}_{n,m,q,\chi}$). Here, $n,m \in \N$ are circuit parameters, $q$ is a discretization parameter for the interval $[0,2\pi)$, and $\chi$ is a distribution over the choice of matrix $\vec A \in \Z_2^{m \times n}$; typically, $\chi$ is chosen to be the uniform distribution. 

There is some evidence that $\mathsf{HPS}_{n,m,q,\chi}$ is a reasonable assumption for constructing pseudorandom states.
Brakerski and Shmueli~\cite{brakerski2019pseudo} show that the states $D H^{\otimes n} \ket{0^n}$ form a state $t$-design when the diagonal operator $D$ consists of a $2t$-wise independent binary phase operator.
Previously, Nakata, Koashi and Murao~\cite{Nakata_2014} also showed that the states $D H^{\otimes n} \ket{0^n}$, where $D$ is a diagonal operator composed of appropriate diagonal gates with random phases, form a $t$-design. Starting from this intuition, we now provide rigorous evidence for the hardness of the $\mathsf{HPS}_{n,m,q,\chi}$ assumption.

\subsection{Overview of our Results}

In this work, we establish
$\mathsf{HPS}_{n,m,q,\chi}$ as a well-founded quantum computational assumption. Specifically, we address each of the meta-criteria we mentioned before:
\begin{itemize}
    \item (Evidence of worst-case hardness) The learnability of ensembles of phase states has been studied extensively in recent work~\cite{arunachalam2023optimal}, and has been found to have exponential time complexity in the worst case (despite only having polynomial sample complexity). We give a detailed discussion in \Cref{sssec:solving hps}.

    \item (Notion of a random instance) A random Hamiltonian phase state $\ket{\Phi_{\boldsymbol{\theta}}^{\vec A}}$, e.g., as in~\cref{eq:HPS-state}, is naturally defined in terms of a random binary matrix $\vec A \leftarrow \Z_q^{m \times n}$ and a random set of angles $\boldsymbol{\theta} = (\theta_1, \dots, \theta_m)$. Hence, it can be efficiently generated by a simple quantum circuit comprising $O(m/n \cdot n^2) $ \textsf{CNOT} gates, $n$ Hadamard gates, and $m$ single-qubit $Z$ rotations.

    \item (Evidence of average-case hardness) Our learning task admits a worst-case to average-case reduction. In \Cref{sec:worst-to-average}, we separately show how to re-randomize the architecture and the set of angles. Therefore, the hardness of our problem reduces to a worst-case version of the problem.

    \item (Relation to other problems) In \Cref{sec:worst-to-average}, we draw a connection between the task of learning Hamiltonian Phase states and the security of classical \emph{Goppa codes} and the well-known McEliece cryptosystem.

    \item (Cryptographic applications) 
    Hamiltonian Phase states have a sufficient amount of structure which suffices to construct a number of interesting cryptographic primitives which we sketch in detail in \Cref{sec:intro-app}.
\end{itemize}
In \Cref{ssec:fully quantumness of hps}, we also provide evidence that $\mathsf{HPS}_{n,m,q,\chi}$ is plausibly fully-quantum and does not allow one to construct one-way functions. In particular, we note that the result of \cite{kretschmer2021quantum} indicates that the idealized versions of any assumption that yields \emph{only} pseudorandom states can not be used to build one-way functions in a black-box way.  We further discuss the implications of the fact that HPS states are state $t$-designs on this reduction, noting that the resulting concentration properties by themselves rule out one-way function constructions that do not simulateneously measure many copies of the HPS state.
For all of the primitives we construct, in addition to just constructing these primitives from our hardness assumption, we argue that constructing them from our hardness assumption yields more efficient and practical implementations of these primitives (if and when fault-tolerant quantum computers become widely available).

\subsection{Applications}\label{sec:intro-app}

In this section, we give an overview of all the applications which are enabled by the HPS assumption. Besides the natural application of constructing efficient one-way state generators (see \Cref{def:OWSG}) and pseudorandom state generators (see \Cref{def:prsg}), which essentially follow by definition of our assumption, we also construct a number of other interesting applications that are relevant in quantum information science more broadly.

\paragraph{Quantum Trapdoor Functions and Public-Key Encryption with Quantum Public Keys.}

Recent work of Coladangelo~\cite{coladangelo2023quantumtrapdoorfunctionsclassical} introduced the notion of a quantum trapdoor function (QTF). This primitive is essentially a variant of a one-way state generator that also features a secret trapdoor which makes inversion possible. QTFs are interesting in the sense that they \emph{almost} enable public-key encryption: two parties can communicate classical messages over a quantum channel without ever exchanging a shared key in advance---the only caveat being that this requires the public keys to be quantum states~\cite{coladangelo2023quantumtrapdoorfunctionsclassical}. Using a construction based on binary-phase states,
Coladangelo~\cite{coladangelo2023quantumtrapdoorfunctionsclassical} showed that quantum trapdoor functions exist, if post-quantum one-way functions exist. However, to this day, it remains unclear how to construct QTFs from assumptions which are potentially weaker than one-way functions, such as the existence of pseudorandom states. 

In \Cref{sec:QTF}, we show how to construct QTFs from our (decisional) HPS assumption, which yields the first construction of QTFs from an assumption which is plausibly weaker than that of one-way functions. We believe that this application strongly highlights the versatility of Hamiltonian Phase states in the context of quantum cryptography; for example, it is far less clear how to construct QTFs from other, less structured, assumptions such as genuinely random quantum circuits via \Cref{conj:rqc}.

\paragraph{Quantum Pseudoentanglement.}

The notion of pseudoentanglement~\cite{aaronson_et_al:LIPIcs.ITCS.2024.2,bouland_et_al:LIPIcs.CCC.2024.21} has found many applications in quantum physics, for example to study the
emergence of thermal
equilibria in isolated many-body systems~\cite{feng2024dynamicspseudoentanglement}. Pseudoentangled
states have also been viewed as a potential tool for probing computational aspects of the AdS/CFT correspondence, which physicists believe may shed insight onto the behavior of black holes in certain simplified models of the universe. 
We note that it is currently not known how to construct these from \emph{any} assumption other than one-way functions.

In \Cref{sec:pseudoentanglement} we give a construction of pseudoentangled states from our HPS assumption, which yields the first construction of pseudoentanglement from an assumption which is plausibly weaker than that of one-way functions. Our proof sheds new light on the entanglement properties of random IQP circuits more generally.\footnote{To the best of our knowledge, such bounds for random IQP circuits were previously not known.} Therefore, we belief that this contribution is of independent interest.
Moreover, as we point out in the next section, our construction is also highly efficient and could enable implementations of quantum pseudoentanglement in practical scenarios.

\paragraph{Pseudorandom Unitaries.}

Pseudorandom unitaries are families of unitaries that are indistinguishable from Haar random unitaries in the presence of computationally bounded adversaries.  They are widely considered the most powerful fully-quantum primitive, and there has been a long line of work towards constructing them from the existence of one-way functions~\cite{ananth2022pseudorandom,brakerski2024real,metger2024simpleconstructionslineardepthtdesigns,chen2024efficient}, eventually resulting in the most recent breakthrough result by Ma and Huang~\cite{ma2024pseudorandom}.  

The result of \cite{ma2024pseudorandom} show that the ensemble of unitaries, colloquially known as the $\mathsf{PFC}$-ensemble~\cite{metger2024simpleconstructionslineardepthtdesigns}, form an approximation to a Haar random unitary.  However, this construction is not well suited for the HPS assumption, which, in some sense, provides a pseudorandom \emph{diagonal} unitary.  \ifhaspru Using the techniques introduced in~\cite{ma2024pseudorandom}, we show that another pseudorandom construction from \cite{ji2018pseudorandom}, the $\mathsf{FHFHFC}$-ensemble is also an approximation to a Haar random unitary when both functions are taken to be uniformly random binary phase oracles. This resolves a conjecture proposed by Ji, Liu and Song~\cite{ji2018pseudorandom}. We further show how diagonal matrices can be used to replace the Clifford operation with more applications of a random diagonal matrix and Hadamard gates.  With this result, we provide a plausible construction of efficient pseudorandom unitaries from the HPS assumption: alternating applications of HPS unitaries and Hadamards.\else \cite{ji2018pseudorandom} presented an alternative construction of pseudorandom unitaries that involves alternating random diagonal gates and $n$-qubit Hadamard gates.  We propose that instantiating the random diagonal gates using randomly sampled HPS instances yields a computationally secure pseudo-random unitary\fi See \Cref{sec:unitary} for more details.

\subsection{Physical Implementations}

Hamiltonian Phase States with $m$ terms on $n$ qubits can be generated very efficiently compared to phase states constructed from pseudorandom functions: 
to prepare a HPS, we require only a layer of Hadamard gates, followed by $\lceil m/n \rceil$ alternating layers of single-qubit $Z$ rotations and CNOT circuits. 
To see this, we observe two facts. First, 
\begin{align}
   \textsf{CNOT}_{k,l} e^{i \theta Z_l}  = e^{i \theta Z_k Z_l} \textsf{CNOT}_{k,l},
\end{align}
where $\textsf{CNOT}_{k,l}$ is  controlled on qubit $k$ and targeted  on qubit $l$.
Second, 
\begin{align}
    \textsf{CNOT}_{k,l} \ket{+^n} = \ket{+^n}. 
\end{align}
More generally, if $C$ is a circuit comprised of \textsf{CNOT} gates, then $C \ket x = \ket{\vec C \cdot x}$, where $\vec C \in \mathrm {GL}(n, \mathbb Z_2)$ is an invertible binary matrix. 
Given an HPS, decompose its architecture matrix $\vec A \in \mb Z_2^{n \times m}$ into $\lceil m/n \rceil$ submatrices of $n$ rows (except for the last one), and suppose each of those submatrices has full rank.
The HPS $\ket{\Phi_{\boldsymbol{\theta}}^{\vec A}}$ can then be prepared as
\begin{align}
     \ket{\Phi_{\boldsymbol{\theta}}^{\vec A}} = 
     \left ( C_{\lceil m/n \rceil} \prod_{i=(\lceil m/n \rceil-1)n}^m 
     e^{i \theta_i Z_{i}}\right) \textbf{}
     \cdots \left(C_1 \prod_{i=1}^n e^{i \theta_i Z_i}\right)\ket {+^n} 
\end{align} 
where the CNOT circuits $C_k$ are chosen such that the first $n$ rows of $\vec A$ are given by $\vec C_{\lceil m/n \rceil} \cdots \vec C_1$, the second $n$ rows by $\vec C_{\lceil m/n \rceil} \cdots \vec C_2$, and so on, see \cref{fig:hps} for an example.
If the rank condition above is not satisfied, decompose $\vec A$ into the minimal number $\ell$ of submatrices with full rank, and proceed as above. 
The smallest meaningful example of such states---with $n$ random, linearly independent terms---can thus be prepared using a single layer of rotations and a  CNOT circuit.
By the fact that $\mathrm{GL}(n,\mathbb Z_2)$ is a group, uniformly random architecture matrices $\vec C_i$ and phases $\theta_i$  generate a uniformly random HPS. 

This protocol for the implementation of HPS is also interesting from an early fault-tolerance perspective, since there are quantum codes in which all required operations are transversal, yielding a highly efficient fault-tolerant implementation. 
To see this, consider a $q=2^d$-fold discretization of the unit circle.  
Now, we observe that there are $d$-dimensional CSS codes with a transversal $Z^{1/2^{d-1}}$ gate such as the $[[ 2^d-1,1,3]]$ simplex code \cite{zeng_local_2007}. 
By the fact that they are  CSS codes, they also admit a transversal CNOT gate between code blocks. 
This means that HPS can be prepared using transversal in-block $Z^{1/2^{d-1}}$ gates as well as inter-block CNOT gates, making them amenable to implementations in early fault-tolerant architectures such as reconfigurable atom arrays \cite{bluvstein_logical_2024}, or trapped ion processors \cite{ryan-anderson_high-fidelity_2024,reichardt_demonstration_2024} in which arbitrary inter-block connectivities can be achieved.

\subsection{Related Work}

\paragraph{Minimal assumptions in cryptography.}
The study of minimal assumptions in quantum cryptography has a long history, dating back to the exploration of the possibility of unconditional cryptography in the 1980ies and 90ies~\cite{bennett1997strengths,lo1997quantum,mayers2001unconditional,lo1999unconditional}.  Modern research into minimal assumptions largely began with the work of \cite{ji2018pseudorandom}, who first proposed the idea of pseudo-random states and unitaries, which are classically samplable ensembles of pure states and unitaries that are indistinguishable from states and unitaries drawn from the Haar measure.  They also gave a construction of pseudo-random states from post-quantum one-way functions.  Since then, a number of works have proposed quantum analogs of minimal classical assumptions.  
\cite{morimae2022one} proposed the one-way state generator, a quantum analog of one-way functions.  \cite{brakerski2022computational} then proposed the EFI pair as a minimal quantum assumption, showing that it was required for many "useful" primitives, and posing the question of finding a concrete assumption that implies EFI pairs.  Since then, a number of works have provided oracle separations and reductions between these primitives~\cite{khurana2024commitments,batra2024commitments,bostanci2024oracle,behera2024oracle}, showing that unlike in classical cryptography, quantum cryptographic primitives seem to have distinct powers relative to each other.  

While almost all quantum cryptographic primitives are implied by some form of classical cryptography, it was not clear whether quantum primitives were strictly weaker than their classical counterparts.  
This question was resolved in \cite{kretschmer2021quantum}, which gave a quantum oracle relative to which one-way functions do not exist, but pseudo-random unitaries do.  Later this was extended to a classical oracle~\cite{kretschmer2023quantum}, although relative to this oracle only EFI pairs exist.  

Recently, there has been a large body of research constructing pseudo-random unitaries from one-way functions.  \cite{ji2018pseudorandom} conjectured several constructions, but were unable to prove their security from one-way functions.  \cite{metger2024simpleconstructionslineardepthtdesigns} proposed a new construction, the $\mathsf{PFC}$ ensemble, which was later proven to be a secure pseudo-random unitary family by \cite{ma2024pseudorandom}.  The recent work of \cite{ma2024pseudorandom} uses a new path recording framework, which we use heavily in our work.

Even more recently, a work has come out motivating the existence of quantum cryptography from conjectures related to quantum supremacy.  Namely, the work of \cite{khurana2024foundingquantumcryptographyquantum} argues that a sampling problem for which there exists a polynomial-time quantum sampler but no classical sampler that can approximate the output distribution (even to a very high error) implies one-way puzzles.

\paragraph{Concrete assumptions for quantum cryptography.}

A number of prior works have proposed assumptions that could be candidate concrete assumptions for building quantum cryptography.  The most common approach to building quantum crptography is to assume that random polynomial-depth quantum circuits become pseudo-random unitary or state ensembles after a certain depth.  The motivation to use random quantum circuits comes from their extensive study as related to quantum su\-pre\-macy \cite{bremner_classical_2010,aaronson_complexity-theoretic_2017,hangleiter_computational_2023}, as well as from the theory of black holes~\cite{brown2018second, brandao2021models}, which posits that random circuits should have ``scrambling'' properties similar to those expected in black holes.  \cite{bouland2019computational} proposes an even more bold foundation for quantum cryptography.  By formalizing assumptions and conjectures from physics, such as "complexity = volume" and the AdS-CFT correspondence, they argue that time evolution by a the Hamiltonian of a black hole in the CFT gives rise to pseudo-random states.

Another interesting approach to quantum cryptography is the result of \cite{kawachi2005computational}.  The authors propose the hardness of distinguishing signed coset states of the symmetric group as a basis for cryptography.  
They show a number of interesting properties of these states, firstly they prove a worst-case to average case reduction.  They reduce the hardness of this distinguishing task to the hardness of graph automorphism, which in turn is related to the hardness of the hidden subgroup problem on the symmetric group.  Finally, they build a public-key cryptography system with quantum keys (both secret and private), but for which trusted set-up is needed.  

\paragraph{$t$-designs and classical reversible circuits.}
%
A common practice in cryptography used to justify the security of primitives is to prove an approximate $t$-wise independence property.  In particular, a long line of works have steadily improved the bounds on the number of rounds of a block cipher are needed to be applied in order to get certain $t$-wise independence~\cite{liu2021t, liu2023layout}.  
This is a popular justification for cryptographic security because it rules out a large class of natural attacks, such as linear attacks ($2$-wise independence rules this out), and differential attacks ($t$-wise independence rules out $\log_2(t)$ differential attacks).  

Recently, the study of random reversible circuits has led to a number of exciting results pertaining to pseudorandomness~\cite{he2024pseudorandom} and obsfucation~\cite{canetti2024towards}.  These results, motivated by early work on the $t$-wise independence of bounded depth reversible circuite~\cite{gowers1996almost,hoory2005simple,brodsky2008simple,chen2024incompressibility,gretta2024more,he2024pseudorandom}, have used the structure of random circuits, combined with an assumption that they are pseudo-random functions when given random inputs, to build indistinguishability obfuscators.  

Here, we show the approximate state $t$-design property for HPS in~\Cref{section:designbounds} with $O(nt^2)$ random $Z$-rotations.
Enormous effort went into proving the design property over various ensembles of circuits.
Notably, improving over previous bounds~\cite{brandao2016local,haferkamp2022random}, a near optimal bound (in $t$) of $\mathrm{poly}(n)t$ was found in Refs.~\cite{chen2024incompressibility}.
Previously, other constructions in depth $\mathrm{poly}(n)t$ where found in Refs.~\cite{chen2024efficient,metger2024simpleconstructionslineardepthtdesigns} for the weaker notion of additive error approximate designs.
We note that, while we do not achieve a linear scaling in $t$, our bound comes with very realistic constants and a surprisingly simple proof based on the analysis of Boolean functions.
Surprisingly, quantum circuits allow for a further, exponential improvement in the $n$-dependence, which is provably impossible for reversible circuits~\cite{schuster_random_2024}.
More concretely, Ref.~\cite{schuster_random_2024} shows that approximate unitary $t$-designs can be generated in depth $\tilde{O}(t\log(n))$, inheriting the $t$-dependence from earlier work~\cite{chen2024incompressibility}.
Last, very similar bounds of $O(nt^2)$ ``Pauli rotations'' was proven for the approximate unitary $t$-design property  in Ref.~\cite{haah2024efficient} by a non-trivial generalization of the argument presented here.\\
\\
\noindent\textbf{Note:} The results on the approximate state $t$-design property presented in~\Cref{section:designbounds} are contained in an earlier preprint~\cite{haferkamp_moments_2023}, which this paper supersedes.

\subsection{Concurrent Work}

Here we discuss concurrent work on building fully-quantum cryptography and the connections to this work.

\paragraph{The Hardness of Quantum Learning
and its Cryptographic Applications.} Ref.~\cite{fefferman2024hardness} proposes a pair of computational assumptions related to random circuits and construct quantum cryptography from them.  Specifically, they conjecture that the problem of learning the description of a random circuit of depth $\log^2(n)$, from copies of its output state, and the problem of cloning the output of a random circuit, are computationally hard problems.  The authors motivate this problem in a idealized model where the output state is truly Haar random, and subsequently build a number of cryptographic primitives from it.  We note that while our suite of results seem similar, there are large differences in both the assumptions themselves: random brickwork circuits versus Hamiltonian phase states, how we motivate the hardness of our problems: a query lower bound in an idealize model versus worst-to-average case reduction, analysis of design properties and analysis of best known algorithms, and the kinds of cryptography we build from it: OWSG and primitives implied by OWSG as well as noise-resistant versions of these primitives, compared our to public key cryptography and pseudorandom unitaries. Both works are interesting explorations into how fully-quantum cryptography might be realized in practice.

\paragraph{Quantum cryptography from \#P-hard problems.}

The recent and concurrent work of Khurana and Tomer~\cite{khurana2024foundingquantumcryptographyquantum} argues that one-way puzzles are equivalent to the hardness of average case sampling problems.  In particular, they point out that conjectures regarding the hardness of sampling the output distribution of random IQP circuits should imply quantum bit commitments and one-way state synthesis problems.  We believe that because our computational assumption takes a quantum input, and because it implies pseudo-random states instead of a classical communication protocol, our assumption is more plausibly ``fully-quantum''.  Additionally, we are able to build a wide plethora of quantum cryptographic primitives, as opposed to just building commitments and one-way puzzles.

\paragraph{Quantum cryptography from meta-complexity.}  The recent works of \cite{hiroka2024quantumcryptographymetacomplexity, cavalar2024meta} also constructs one-way puzzles (and therefore quantum bit commitments) from a different classical complexity theoretic assumption, namely the average case hardness of the gapped Kolmogorov complexity problem.  They do not use IQP circuit sampling or random circuit sampling to instantiate their one-way puzzles, and also do not construct primitives like quantum pseudo-random states.

\paragraph{Computational complexity of learning pure states.} The recent work of \cite{hiroka2024computationalcomplexitylearningefficiently} constructs EFI pairs and one-way state generators from the existance of learning problems that exhibit average-case hardness. 
We note that this work does not propose a concrete family of states for which learning might be hard, and by exploiting the properties of our state family, we are able to create quantum cryptographic primitives beyond one-way state generators and EFI pairs.

\paragraph{Quantum group actions.}
The recent work of \cite{morimae2024quantumgroupactions} propose another way to build quantum cryptography from so called ``one-way quantum group actions'', which generalize group actions to efficiently describable group representations (and starting states).  One candidate quantum group action they propose is a random $Z$-diagonal circuit, which bares resemblance to the HPS assumption.  The authors are also able to construct pseudo-random function-like state generators, but provide no evidence of the one-wayness of these random $Z$-diagonal circuits.  
\ifanon
\else
\section*{Acknowledgements}
This work was done in part while the authors were visiting the Simons Institute for the Theory of Computing, supported by NSF QLCI Grant No. 2016245.

The authors would like to thank Fermi Ma for pointing out a bug in an earlier version of the paper related to the pseudorandom states construction, and for helpful discussions.  The authors would also like to thank Aram Harrow, David Gross, Eli Goldin, Henry Yuen, Kabir Tomer, Luowen Qian, Natalie Parham, Ran Canetti, Saachi Mutreja, Soonwon Choi, Thiago Bergamaschi, Uma Girish and Vinod Vaikuntanathan for useful discussions.

JB is supported by Henry Yuen's AFORS (award FA9550-21-1-036) and NSF CAREER (award CCF2144219).
JH acknowledges funding from the Harvard Quantum Initiative postdoctoral fellowship.
DH acknowledges funding from the US Department of Defense through a QuICS Hartree fellowship. 
AP is supported by the U.S. Department of Energy, Office of Science, National Quantum Information Science Research Centers, Co-design Center for Quantum Advantage (C2QA) under contract number DE-SC0012704.
\fi

\section{Background}
\label{sec:background}

\subsection{Notation}

The notation $x \sim X$ describes that an element $x$ is drawn uniformly at random from the set $X$. Similarly, if $\algo D$ is a distribution, we let
$x \sim \algo D$ denote sampling $x$ according to $\algo D$.
We denote the expectation value of a random variable $X$ 
by $\mathbb{E}[X] = \sum_{x} x \Pr[ X = x] $.

\subsection{Quantum Preliminaries}

A quantum \emph{register} $\reg{R}$ corresponds to a finite-dimensional complex Hilbert space.  We write $L(\reg{R})$ to denote the set of linear transformations on $\reg{R}$, $\mathrm{GL}(\reg{R})$ to denote the set of invertible linear transformations on $\reg{R}$, $\states(\reg{R})$ to denote the Hermitian, positive semi-definite matrices over $\reg{R}$, and $U(\reg{R})$ to denote the unitary group over $\reg{R}$.  When the register is clear from context, we sometimes write $U(d)$, where $d$ is the dimension of $\reg{R}$.  A linear transformation is called a quantum channel if it is completely positive and trace-preserving (i.e. for every positive input with trace $t$, the output of the map is a positive matrix with trace $t$).  

For a vector $\ket{\psi} \in \reg{R}$, we write $\psi$ to denote the density matrix $\proj{\psi}$, and for vectors $\ket{\psi}, \ket{\phi} \in \reg{R}$, we write $\braket{\psi|\phi}$ to denote the inner product.  For a vector $\ket{\psi} \in \reg{R}$, we write $\norm{\ket{\psi}}$ to denote the standard norm over $\reg{R}$, i.e. $\braket{\psi | \psi}$.  We write $\Tr(\cdot)$ to denote the trace and $\Tr_{\reg{R}}$ to denote the partial trace over a register $\reg{R}$.

For a linear operator $X \in L(\reg{R})$, let $\opnorm{X}$ be its operator norm and $\norm{X}_{1} = \Tr(\sqrt{X^{\dagger}X})$ be its trace norm.  For two density matrices $\rho, \sigma \in \states(\reg{R})$, let $\td(\rho, \sigma) = \frac{1}{2}\norm{\rho - \sigma}_{1}$ be the trace distance between the two.  We sometimes write $X_\reg{R}$ to indicate that $X$ acts on $\reg{R}$.  All un-labeled operators act on all registers that do not have an operator acting on them, and if an operator is associated with specific registers, we drop the register subscripts for brevity.

For two quantum channels $\mathcal{M}$ and $\mathcal{N}$, the diamond distance between them is
\begin{equation*}
    \norm{\mathcal{M} - \mathcal{N}}_{\diamond} = \max_{\ket{\psi} \in \reg{R}^{\otimes 2}} \norm{(\mathcal{M}\otimes \id_{\reg{R}})(\proj{\psi}) - (\mathcal{N}\otimes \id_{\reg{R}})(\proj{\psi})}_{1}\,,
\end{equation*}
where $\id_{\reg{R}}$ is the identity channel on $\reg{R}$.

\subsection{Useful Lemmas}

Here we write lemmas that will be important in the rest of the paper, but should be familiar to someone well-versed in quantum computation.
\begin{lemma}[Gentle measurement lemma~\cite{winter1999coding}]
    \label{lemma:gentle-measurement}
    Given a pure state $\rho$ and a projector $\Lambda$, let 
    \begin{equation*}
        \rho' = \frac{\Lambda \rho \Lambda}{\Tr(\Lambda \rho)}
    \end{equation*}
    be the post-measurement state.  Then $\td(\rho',\rho) \leq \sqrt{1 - \Tr(\Lambda \rho)}$.
\end{lemma}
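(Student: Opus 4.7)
The plan is to exploit the purity of $\rho$ so that the post-measurement state is itself pure, reducing the problem to the well-known closed form for the trace distance between two pure states. Write $\rho = \proj{\psi}$ and let $p = \Tr(\Lambda \rho) = \braket{\psi|\Lambda|\psi}$. Since $\Lambda$ is a projector, $\Lambda^{2} = \Lambda$, so $\braket{\psi|\Lambda(\id - \Lambda)|\psi} = 0$ and the two vectors $\Lambda\ket{\psi}$ and $(\id-\Lambda)\ket{\psi}$ are orthogonal. Normalizing, define $\ket{\psi_{0}} = \Lambda\ket{\psi}/\sqrt{p}$ and $\ket{\psi_{1}} = (\id-\Lambda)\ket{\psi}/\sqrt{1-p}$, which form an orthonormal pair (assuming $0 < p < 1$; the edge cases $p \in \{0,1\}$ are trivial or undefined).

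Next I would observe that $\ket{\psi} = \sqrt{p}\,\ket{\psi_{0}} + \sqrt{1-p}\,\ket{\psi_{1}}$ and, by definition of $\rho'$,
\begin{equation*}
\rho' \;=\; \frac{\Lambda \proj{\psi} \Lambda}{p} \;=\; \proj{\psi_{0}}.
\end{equation*}
Thus both $\rho$ and $\rho'$ are pure, and their overlap is $|\braket{\psi|\psi_{0}}|^{2} = p$.

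Finally I would invoke the standard identity for the trace distance between two pure states,
\begin{equation*}
\td(\proj{\psi},\proj{\phi}) \;=\; \sqrt{1 - |\braket{\psi|\phi}|^{2}},
\end{equation*}
which follows from diagonalizing the rank-two Hermitian operator $\proj{\psi} - \proj{\phi}$ in the two-dimensional span of $\ket{\psi}$ and $\ket{\phi}$ and summing the absolute values of its eigenvalues. Applying this with $\ket{\phi} = \ket{\psi_{0}}$ yields $\td(\rho, \rho') = \sqrt{1 - p} = \sqrt{1 - \Tr(\Lambda \rho)}$, as claimed.

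There is no real obstacle here; the only subtlety is confirming orthogonality of the two components in the decomposition, which uses $\Lambda^{2} = \Lambda$, and recalling the pure-state formula for trace distance. If one wanted a self-contained argument avoiding the latter as a black box, a half-page alternative is to compute the eigenvalues of $\rho - \rho'$ directly in the basis $\{\ket{\psi_{0}}, \ket{\psi_{1}}\}$: the matrix is $\begin{pmatrix} p-1 & \sqrt{p(1-p)} \\ \sqrt{p(1-p)} & 1-p \end{pmatrix}$, with eigenvalues $\pm\sqrt{1-p}$, so $\tfrac{1}{2}\|\rho - \rho'\|_{1} = \sqrt{1-p}$.
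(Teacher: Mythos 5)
Your proof is correct. The paper itself gives no proof of this lemma---it is stated as a black-box import from Winter's work---so there is nothing to compare against; but your argument is the standard and essentially optimal one for the pure-state case. The key steps all check out: orthogonality of $\Lambda\ket{\psi}$ and $(\id-\Lambda)\ket{\psi}$ via idempotence, the identification $\rho' = \proj{\psi_0}$, and the pure-state trace-distance formula $\td(\proj{\psi},\proj{\phi}) = \sqrt{1-|\braket{\psi|\phi}|^2}$. The eigenvalue computation you sketch as an alternative is also correct: the matrix $\rho - \rho'$ in the $\{\ket{\psi_0},\ket{\psi_1}\}$ basis is traceless with determinant $-(1-p)$, so its eigenvalues are $\pm\sqrt{1-p}$. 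One point worth noting: you actually establish \emph{equality}, $\td(\rho',\rho) = \sqrt{1-\Tr(\Lambda\rho)}$, which is stronger than the stated inequality. Winter's general gentle measurement lemma allows mixed $\rho$ and general POVM elements $0 \le \Lambda \le \id$ and produces only an inequality (with a worse constant in some formulations); restricting to pure states and projectors is what buys you equality and the clean constant here.
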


We will also need a similar lemma, but where the projector acts on a register being traced out, and normalization is not applied. 
A similar lemma appears in \cite{ma2024pseudorandom}, we state the proof here for completeness.
\begin{lemma}
    \label{lem:un-normalized_gentle_measurement}
    Let $\Lambda_{\reg{B}}$ be a projector and $\rho_{\reg{AB}}$ be a quantum state, then the following holds
    \begin{equation*}
        \norm{\Tr_{\reg{B}}(\rho_{\reg{AB}}) - \Tr_{\reg{B}}(\Lambda_{\reg{B}} \rho_{\reg{AB}} \Lambda_{\reg{B}})}_1 = 1 - \Tr(\Lambda_{\reg{B}} \rho_{\reg{AB}})\,.
    \end{equation*}
\end{lemma}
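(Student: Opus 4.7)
The plan is to decompose $\rho_{\reg{AB}}$ with respect to the projector $\Lambda_{\reg B}$ and its complement $\bar\Lambda_{\reg B} = \id_{\reg B} - \Lambda_{\reg B}$, and then show that the partial trace kills the off-diagonal cross terms, leaving a positive operator whose trace norm is easy to compute.

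Concretely, I would first write
$$
\rho_{\reg{AB}} = \Lambda_{\reg B}\rho_{\reg{AB}}\Lambda_{\reg B} + \Lambda_{\reg B}\rho_{\reg{AB}}\bar\Lambda_{\reg B} + \bar\Lambda_{\reg B}\rho_{\reg{AB}}\Lambda_{\reg B} + \bar\Lambda_{\reg B}\rho_{\reg{AB}}\bar\Lambda_{\reg B},
$$
using $\Lambda+\bar\Lambda = \id$ on register $\reg B$ from both sides. Taking partial trace over $\reg B$ and subtracting $\Tr_{\reg B}(\Lambda_{\reg B}\rho_{\reg{AB}}\Lambda_{\reg B})$ then yields
$$
\Tr_{\reg B}(\rho_{\reg{AB}}) - \Tr_{\reg B}(\Lambda_{\reg B}\rho_{\reg{AB}}\Lambda_{\reg B}) = \Tr_{\reg B}(\Lambda_{\reg B}\rho_{\reg{AB}}\bar\Lambda_{\reg B}) + \Tr_{\reg B}(\bar\Lambda_{\reg B}\rho_{\reg{AB}}\Lambda_{\reg B}) + \Tr_{\reg B}(\bar\Lambda_{\reg B}\rho_{\reg{AB}}\bar\Lambda_{\reg B}).
$$
For the two cross terms, I would invoke the cyclic property of the partial trace restricted to operators acting on the traced-out register: for any $X_{\reg B}, Y_{\reg B}$ acting on $\reg B$ only, $\Tr_{\reg B}(X_{\reg B}\rho_{\reg{AB}}Y_{\reg B}) = \Tr_{\reg B}(Y_{\reg B}X_{\reg B}\rho_{\reg{AB}})$. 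Since $\bar\Lambda_{\reg B}\Lambda_{\reg B} = \Lambda_{\reg B}\bar\Lambda_{\reg B} = 0$, both cross terms vanish identically, and we are left with
$$
\Tr_{\reg B}(\rho_{\reg{AB}}) - \Tr_{\reg B}(\Lambda_{\reg B}\rho_{\reg{AB}}\Lambda_{\reg B}) = \Tr_{\reg B}(\bar\Lambda_{\reg B}\rho_{\reg{AB}}\bar\Lambda_{\reg B}).
$$

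The final step is to observe that the right-hand side is positive semidefinite, as it is the partial trace of the psd operator $\bar\Lambda_{\reg B}\rho_{\reg{AB}}\bar\Lambda_{\reg B} = \bar\Lambda_{\reg B}\rho_{\reg{AB}}\bar\Lambda_{\reg B}^\dagger$. For any psd operator $M$ one has $\norm{M}_1 = \Tr(M)$, so
$$
\norm{\Tr_{\reg B}(\bar\Lambda_{\reg B}\rho_{\reg{AB}}\bar\Lambda_{\reg B})}_1 = \Tr(\bar\Lambda_{\reg B}\rho_{\reg{AB}}\bar\Lambda_{\reg B}) = \Tr(\bar\Lambda_{\reg B}^2 \rho_{\reg{AB}}) = \Tr(\bar\Lambda_{\reg B}\rho_{\reg{AB}}) = 1 - \Tr(\Lambda_{\reg B}\rho_{\reg{AB}}),
$$
using cyclicity of the full trace, idempotency $\bar\Lambda^2 = \bar\Lambda$, and $\Tr(\rho_{\reg{AB}}) = 1$.

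There is no real obstacle here; the only subtlety is being careful that the cyclic property used on the cross terms applies precisely because $\Lambda_{\reg B}$ and $\bar\Lambda_{\reg B}$ act only on the register that is being traced out, which is exactly the hypothesis of the lemma.
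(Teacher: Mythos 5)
Your proof is correct and follows essentially the same route as the paper's: both reduce the difference to $\Tr_{\reg B}\bigl((\id-\Lambda_{\reg B})\rho_{\reg{AB}}(\id-\Lambda_{\reg B})\bigr)$ and then apply positivity, cyclicity of the full trace, idempotency, and normalization of $\rho$. The only difference is that you explicitly justify, via the four-term decomposition and the cyclic property of the partial trace over $\reg B$, why the cross terms vanish, whereas the paper states this decomposition as an unproven observation.
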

\begin{proof}
    As $\reg{B}$ is being traced out, we can write the following.
    \begin{equation*}
        \Tr_{\reg{B}}(\rho_{\reg{AB}}) = \Tr_{\reg{B}}(\Lambda_{\reg{B}} \rho_{\reg{AB}} \Lambda_{\reg{B}}) + \Tr_{\reg{B}}((\id - \Lambda_{\reg{B}}) \rho_{\reg{AB}} (\id - \Lambda_{\reg{B}}))\,.
    \end{equation*}
    Then we can write the left hand side of the theorem statement as follows.
    \begin{align*}
        \norm{\Tr_{\reg{B}}(\rho_{\reg{AB}}) - \Tr_{\reg{B}}(\Lambda_{\reg{B}} \rho_{\reg{AB}} \Lambda_{\reg{B}})}_1 &= \norm{\Tr_{\reg{B}}((\id - \Lambda_{\reg{B}}) \rho_{\reg{AB}} (\id - \Lambda_{\reg{B}}))}_1\\
        &= \Tr((\id - \Lambda_{\reg{B}}) \rho_{\reg{AB}} (\id - \Lambda_{\reg{B}}))\\
        &= \Tr((\id - \Lambda_{\reg{B}}) \rho_{\reg{AB}})\\
        &= 1 - \Tr(\Lambda_{\reg{B}} \rho_{\reg{AB}})\,.
    \end{align*}
    Here the first line uses the previous observation.  Then we use the fact that $\norm{X}_1 = \Tr(X)$ whenever $X$ is positive-semidefinite.  Then we use the cyclic property of the trace and the fact that $\id - \Lambda$ is a projector (and thus squares to itself).  Finally, we use the fact that $\rho$ is trace $1$.
\end{proof}

\subsection{The Haar Measure and $t$-designs}

Here we define the Haar measure and quantum $t$-designs.
\begin{definition}[Haar measure and Haar random states]
    The Haar measure is the unique left- and right- invariant probability measure on the unitary group $U(d)$. A Haar random state is a state sampled by first sampling a unitary from the Haar measure and then applying it to $\ket{0}$ (or any fixed state). 
    We use the notation $\mu(d)$ to denote the Haar measure on $d \times d$ unitary matrices, and $\mathrm{Haar}(d)$ to denote the distribution over Haar random states.
    Similarly, we denote by $\mu_D(d)$ the unique left- and right- invariant probability measure on the diagonal subgroup $D(d)\subset U(d)$. 
    Moreover, the Haar measure induces a unique unitarily invariant probability measure on states in $\mathbb{C}^d$, which we will also denote by $\mu(d)$.
\end{definition}

An approximate $t$-design is a discrete distribution over unitaries that looks identical to $t$-calls to a unitary sampled from the Haar measure. 
\begin{definition}[$t$'th moment operator]
    Let $\nu$ be a probability distribution over unitaries, then the $t$'th moment operator is defined as 
    \begin{equation*}
        \mathcal{M}_{\nu}^{(t)}(\cdot) = \mathbb{E}_{U \sim \nu} (\cdot) \left[U^{\otimes t} (\cdot) U^{\dagger, \otimes t}\right]\,.
    \end{equation*}
\end{definition}

\begin{definition}[Approximate unitary and $t$-design]\label{def: approximate designs}
    An ensemble of $d \times d$ unitaries $\nu$ in is called an $\epsilon$-approximate $t$-design if 
    \begin{equation*}
        \norm{\mathcal{M}_{\nu}^{(t)} - \mathcal{M}_{\mu(d)}^{(t)}}_{\diamond} \leq \epsilon\,.
    \end{equation*}
    Similarly, an ensemble $\nu$ of states in $\mathbb{C}^d$ is a $\varepsilon$-approximate state $t$-design if 
    \begin{equation}
        \left|\left|\underset{\psi\sim\nu}{\mathbb{E}}(|\psi\rangle\langle\psi|)^{\otimes t}-\underset{\ket{\psi}\sim\mu(d)}{\mathbb{E}}(|\psi\rangle\langle\psi|)^{\otimes t}\right|\right|_{1}\leq \varepsilon.
    \end{equation}
\end{definition}
We will also need a notion of approximate designs on the diagonal subgroup. 
\begin{definition}[Approximate diagonal $t$-design]\label{def: diagonal designs}
    An ensemble of $d \times d$ diagonal unitaries $\nu$ in is called an $\epsilon$-approximate diagonal $t$-design if 
    \begin{equation*}
        \norm{\mathcal{M}_{\nu}^{(t)} - \mathcal{M}_{\mu_D(d)}^{(t)}}_{\diamond} \leq \epsilon\,.
    \end{equation*}
    Here $\mu_D$ is the Haar measure (unique left- and right- invariant measure) over diagonal unitary matrices.
\end{definition}

\subsection{Cryptographic Primitives}

Here we define a number of cryptographic primitives.

\paragraph{One-Way State Generator.}
First, we define a one-way state generator~\cite{morimae2022one}.  These are quantum equivalents of one-way functions, with classical input and quantum output. 
\begin{definition}[One-Way State Generator]\label{def:OWSG}
Let $n \in \N$ denote the security parameter.  A one-way state generatator is a tuple $(\mathsf{KeyGen},\mathsf{StateGen},\mathsf{Ver})$ consisting of the following alorithms:
\begin{itemize}
    \item $\mathsf{KeyGen}(1^n) \rightarrow \mathsf{k}$: on input $1^n$, it outputs a classical key $k$.

    \item $\mathsf{StateGen}(k) \rightarrow \ket{\phi_k}$: on input $k$, it outputs a quantum state $\ket{\phi_k}$.

    \item $\mathsf{Ver}(k,\sigma) \rightarrow \top/\bot$: on input a key $k$ and a state $\sigma$, it outputs $\top$ or $\bot$.
\end{itemize}
We require that the algorithms satisfy the following properties:
\begin{itemize}
    \item[(a)] \textbf{Correctness:} For any $n \in \N$, it holds that
    $$
    \Pr\left[\top \leftarrow \mathsf{Ver}(k,\ket{\phi_k}) \, : \, \substack{
    k \leftarrow \mathsf{KeyGen}(1^n)\\
\ket{\phi_k} \leftarrow \mathsf{StateGen}(k)
    }\right] \geq  1-\negl(n).
    $$

\item[(b)] \textbf{Security:} For any $n \in \N$, any $t=\poly(n)$, and for any QPT adversary $\mathcal{A}$,
$$
\Pr\left[\top \leftarrow \mathsf{Ver}(k',\ket{\phi_k}) \, : \, \substack{
    k \leftarrow \mathsf{KeyGen}(1^n)\\
\ket{\phi_k} \leftarrow \mathsf{StateGen}(k)\\
k' \leftarrow \mathcal{A}(\ket{\phi_k}^{\otimes t} )
}\right] 
\leq \negl(n) \, ,
$$
where $\mathsf{Ver}(k',\ket{\phi_k})$ denotes the algorithm which applies the projective measurement $$
\{\proj{\phi_{k'}}, I - \proj{\phi_{k'}}\}$$
onto $\ket{\phi_k}$ and outputs $\top$, if the measurement succeeds, and outputs $\bot$ otherwise. 
\end{itemize}
\end{definition}

\paragraph{Pseudorandom State Generator.}

Pseudorandom states are collections of states that are indistinguishable from a Haar random state to bounded adversaries.  These were first defined in \cite{ji2018pseudorandom}.
\begin{definition}[Pseudorandom State Generator]\label{def:prsg}
Let $n \in \N$ be the security parameter. A pseudorandom state generator (PRSG) is a tuple $(\mathsf{KeyGen},\mathsf{StateGen})$ of QPT algorithms:
    \begin{itemize}
\item $\mathsf{KeyGen}(1^n) \rightarrow k$: on input $1^n$, it outputs a classical key $k$.
 \item $\mathsf{StateGen}(k) \rightarrow \ket{\phi_k}$: on input $k$, it outputs a quantum state $\ket{\phi_k}$.

    \end{itemize}
We require that the algorithms satisfy the following security property:
For all $n \in \N$ and every polynomial function $t = \mathsf{poly}(n)$, and every QPT adversary $\mathcal{A}$, 
\begin{align*}
        \big|\Pr_{k \gets \mathsf{KeyGen}(1^n)}[\mathcal{A}(\ket{\phi_k}^{\otimes t})=1] - 
        \Pr_{\ket{\phi} \sim \mathrm{Haar}(2^n)}[\mathcal{A}(\ket{\phi}^{\otimes t})=1] \big| \leq \negl(n) 
        \end{align*}
\end{definition}

\section{Hamiltonian Phase State Assumption}
\label{sec:assumption}

In this section, we give a formal definition of our hardness assumption.
Recall that an $n$-qubit
Hamiltonian Phase State is of the form
\begin{align*}
\ket{\Phi_{\boldsymbol{\theta}}^{\vec A}} = \exp\left(\mathrm{i} \, \sum_{i=1}^m \theta_i \bigotimes_{j=1}^n \mathsf{Z}^{\vec A_{ij}}  \right) H^{\otimes n} \ket{0^n}
\end{align*}
where $\vec A \in \Z_2^{m \times n}$ is a binary matrix and $\boldsymbol{\theta} = (\theta_1, \dots, \theta_m)$ is a set of angles in the interval $[0,2\pi)$. To avoid matters of precision, we introduce a discretization parameter $q \in \N$ with $q=\poly(n)$ and partition the interval $[0,2\pi)$ into $q$ parts via the set
$$
\Theta_q := \left\{\frac{2\pi k}{q} \, : \, k \in \{0,1,\dots,q-1\}\right\}.
$$

\noindent We now introduce two variants of our hardness assumption.

\subsection{Search Variant}

Our first variant considers a search problem. Roughly speaking, it says that given many copies of a random Hamiltonian phase state, it is computationally difficult to reverse-engineer its architecture and its angles. Therefore, our assumption says that an ensemble of Hamiltonian Phase states forms a one-way state generator~\cite{morimae2022one}. We provide a formal definition of a one-way state generator in \Cref{def:OWSG}.

We now give a formal definition.

\begin{definition}[Search HPS]\label{def:search-HPS}
Let $n \in \N$ denote the security parameter, and let $m$ and $q$ be integers (possibly depending on $n$). Let $\chi$ be a distribution with support over matrices in $\Z_2^{m \times n}$.
Then, the (search) Hamiltonian Phase State assumption ($\HPS_{n,m,q,\chi}$) states that, for any number of copies $t=\poly(n)$ and for any efficient quantum algorithm $\mathcal{A}$,
$$
\Pr\left[1 \leftarrow \mathsf{Ver}(\vec A',\boldsymbol{\theta}', \ket{\Phi_{\boldsymbol{\theta}}^{\vec A}}) \, : \, \substack{
\vec A \sim \chi, \, \boldsymbol{\theta} \sim \Theta_q^{m}\\
(\vec A',\boldsymbol{\theta}') \leftarrow \mathcal{A}(\ket{\Phi_{\boldsymbol{\theta}}^{\vec A}}^{\otimes t})
}\right] \leq \negl(n) \, ,
$$
where $\mathsf{Ver}(\vec A',\boldsymbol{\theta}', \ket{\Phi_{\boldsymbol{\theta}}^{\vec A}})$ denotes the algorithm which applies the projective measurement $$
\{\proj{\Phi_{\boldsymbol{\theta}'}^{\vec A'}}, I - \proj{\Phi_{\boldsymbol{\theta}'}^{\vec A'}}\}$$
onto $\ket{\Phi_{\boldsymbol{\theta}}^{\vec A}}$ and outputs $1$, if the measurement succeeds, and outputs $0$ otherwise. We say that a quantum algorithm solves the (search) $\HPS_{n,m,q,\chi}$ problem if it runs in time $\poly(n,m,\log q)$ and succeeds with probability at least $1/\poly(n,m,\log q)$.
\end{definition}

An alternative but equivalent formulation of the security property is to say that it is computationally difficult to find a state $\ket{\Phi_{\boldsymbol{\theta}'}^{\vec A'}}$ which has non-vanishing fidelity with the input state, on average over the choice of architecture and set of angles.


\subsection{Decision Variant}

Our second variant considers a decision problem. Roughly speaking, it says that given many copies of a random Hamiltonian phase state, it is computationally difficult to distinguish it from many copies of a Haar random state. Therefore, our (decision) assumption says that an ensemble of Hamiltonian Phase states forms a so-called pseudorandom state generator~\cite{ji2018pseudorandom,morimae2022one}. We provide a formal definition of a pseudorandom state generator in \Cref{def:prsg}.

\begin{definition}[Decision HPS]\label{def:decision-HPS}
Let $n \in \N$ denote the security parameter, and let $m$ and $q$ be integers (possibly depending on $n$). Let $\chi$ be a distribution with support over matrices in $\Z_2^{m \times n}$.
Then, the (decision) Hamiltonian Phase State assumption ($\HPS_{n,m,q,\chi}$) states that, for any number of copies $t=\poly(n)$ and for any efficient quantum distinguisher $\mathcal{D}$,
$$
\left| \Pr\left[1 \leftarrow \mathcal{D}(\ket{\Phi_{\boldsymbol{\theta}}^{\vec A}}^{\otimes t}) \, : \, \substack{
\vec A \sim \chi\\ 
\boldsymbol{\theta} \sim \Theta_q^{m}
}\right] - \Pr\left[1 \leftarrow \mathcal{D}(\ket{\Psi}^{\otimes t}) \, : \, \substack{
\ket{\Psi} \sim \mathrm{Haar}(2^n)
}\right] \right| \leq \negl(n) \, ,
$$
We say that a quantum algorithm solves the (decision) $\HPS_{n,m,q,\chi}$ problem if it runs in time $\poly(n,m,\log q)$ and succeeds with probability at least $1/\poly(n,m,\log q)$.
\end{definition}

\section{Evidence for Average-Case Hardness and Full Quantumness}
\label{sec:hardness and quantumness}

In this section, we give several pieces of evidence for the security of the $\HPS_{n,m,q,\chi}$ assumption, as well as evidence that it is a fully quantum assumption. 

First, in \cref{sec:worst-to-average}, we show two worst-to-average-case reductions for the $\HPS_{n,m,q,\chi}$ problem, and also discuss the limitations of those reductions. 
To this end, we first show that if the Hamiltonian architecture matrix $\vec A$ is publicly known, then there is a worst-to-average-case reduction for the angles $\theta \in \Theta_q$
. 
Second, we show that for $m=n$, and any fixed set of angles, there is a worst-to-average-case reduction over the architecture matrices $\vec A$, and discuss limitations of the reduction. 
We argue that the HPS assumption is related to the security of the McEliece cryptographic system. 

Then, we show that if $\chi$ is the uniform distribution of $m \times n$ binary matrices and $q >2t$, Hamiltonian Phase States  with $m \gtrsim n t^2$ random terms form approximate state $t$-designs in~\cref{section:designbounds}. 
This shows that given less than $\Omega(\sqrt{m/n})$ many copies, HPS are information-theoretically indistinguishable from Haar-random states.
It also implies that the Hamiltonian Phase States contain an exponentially large set of almost orthogonal states.
This implies that Hamiltonian phase states are fast mixing, giving additional evidence that the learning problem is computationally hard. 

Then, in \cref{sssec:solving hps}, we discuss algorithms for learning phase states with public and secret architecture matrices. 
In particular, we give a sample-optimal (but exponential-time) algorithm for solving the $\HPS$ problem using pretty good measurements \cite{barnum2000reversingquantumdynamicsnearoptimal, montanaro2019pretty} and a simple algorithm that uses classical shadows \cite{aaronson2018shadow,huang2020predicting}. 

Finally, we discuss why the HPS assumption is fully quantum. 
To this end, we give evidence against the possibility of building one-way functions from $\mathsf{HPS}$.

\subsection{Worst-Case to Average-Case Reduction}
\label{sec:worst-to-average}

We begin providing evidence for the security of our assumption by showing how in different regimes learning the parameters of the HPS problem of a fixed (worst-case) instance can be reduced to learning a random instance.
Our evidence will treat the angles $\theta$ and the Hamiltonian architecture matrix $\vec A$ separately.
Specifically, we will show two types of worst-to-average-case reductions. 
First, we will show that  in a certain regime of $m,n$, given a copy of a HPS instance, a quantum algorithm can efficiently generate a random HPS with the same angles and architecture dimensions.
Second, we will fix the Hamiltonian architecture $\vec A$ and show that given a copy of a HPS instance and its architecture $\vec A$, a quantum algorithm can generate a random HPS with the same architecture but uniformly random angles. 
Our worst-to-average-case reductions are therefore similar to those for, say, the Learning with Errors (LWE) problem~\cite{regev2009lattices}, with different levels of public knowledge. 

\paragraph{Reduction for the architecture for $m\leq n $}
First, we observe that for any fixed choice of angles $\boldsymbol \theta$, the Hamiltonian architecture can be re-randomized if $m\leq n$ and $\chi$ is the uniform distribution over full-rank matrices $\mathcal R(m,n) \coloneqq \{ \vec A \in \mathbb Z_2^{m \times n} \,|\, \mathrm{rank}(\vec A) = \min(m,n)\}$.
Notice that the restriction to Hamiltonian architectures with full rank is not too significant, since the probability that a uniformly random $\Z_2^{m\times n}$ matrix has full rank with probability\footnote{See \url{https://math.mit.edu/~dav/genlin.pdf}, and this \href{https://math.stackexchange.com/questions/71288/probability-of-a-random-n-times-n-matrix-over-mathbb-f-2-being-nonsingular}{Stackexchange post} for a proof.} $\prod_{k=1}^{\min(m,n)} (1- 2^{-k})\ge 0.288$ \cite{oeisA048651}. 
The basic idea of the reduction is to apply a circuit composed of uniformly random \textsf{CNOT} gates to the given HPS instance.
In the parameter regime we consider, this will have the effect of completely scrambling the Hamiltonian architecture to a uniformly random one with the same choices of $m,n$ and subject to the full-rank constraint.

\begin{lemma}[Worst-to-average-case reduction for the architecture]
\label{lem:architecture reduction}
Suppose there exists an algorithm $\mathcal{A}$ that runs in time $T$ and solves the
(search) $\mathsf{HPS}_{n,m,q,\chi}$ problem with probability $\epsilon$ in the average case, where $\chi$ is the uniform distribution over $\mathcal R(m,n)$ and $m \leq n$. 
Then, there exists an algorithm which runs in time $T \cdot  \poly(n)$ and  inverts Hamiltonian phase states $\ket{\Phi_{\boldsymbol{\theta}}^{\vec C}}^{\otimes t}$ with probability $\epsilon$ for a worst-case choice of architecture $\vec C \in \mathcal R(m,n)$, uniformly random angles $\boldsymbol{\theta}$,  and for any number of copies $t=\poly(n)$.
Here, $\mathcal R(m,n) = \{ \vec A \in \mathbb Z_2^{m \times n} | \mathrm{rank} (\vec A) = \min(m,n)\}$ is the set of full-rank binary $m \times n$ matrices.
\end{lemma}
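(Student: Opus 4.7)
The plan is to use that CNOT circuits leave $\ket{+^n}$ invariant while conjugating every $Z$-type Pauli into another $Z$-type Pauli via an invertible $\Z_2$-linear map on its support. Concretely, writing a CNOT circuit $C$ in the computational-basis form $C\ket{x} = \ket{\vec D x}$ for some $\vec D \in \mathrm{GL}(n,\Z_2)$, a direct check gives $C\, Z^{\vec a}\, C^\dagger = Z^{\vec a \vec D^{-1}}$ for every row vector $\vec a \in \Z_2^n$, where $Z^{\vec a} = \prod_j Z_j^{a_j}$. Exponentiating this identity term-by-term and using $C\ket{+^n} = \ket{+^n}$ yields
$$
C \ket{\Phi_{\boldsymbol{\theta}}^{\vec A}} \;=\; \exp\!\Big(\mathrm i\sum_{i=1}^m \theta_i\, Z^{\vec A_i \vec D^{-1}}\Big)\ket{+^n} \;=\; \ket{\Phi_{\boldsymbol{\theta}}^{\vec A \vec D^{-1}}},
$$
so that $C$ re-randomizes the architecture via $\vec A \mapsto \vec A \vec D^{-1}$ while leaving the angles $\boldsymbol{\theta}$ untouched.

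Given this identity, the reduction is immediate. On input $\ket{\Phi_{\boldsymbol{\theta}}^{\vec C}}^{\otimes t}$ with $\vec C \in \mathcal R(m,n)$ fixed and $\boldsymbol{\theta}\sim\Theta_q^m$ uniform, I would sample $\vec D$ uniformly from $\mathrm{GL}(n,\Z_2)$ in $\poly(n)$ time (e.g.\ draw a uniform $n\times n$ $\Z_2$-matrix and reject unless it is invertible; success probability is $\Omega(1)$), synthesize the associated CNOT circuit of size $O(n^2)$ via Gaussian elimination, and apply it to each of the $t$ copies. Then I would feed the resulting $t$ copies of $\ket{\Phi_{\boldsymbol{\theta}}^{\vec C \vec D^{-1}}}$ to the average-case solver $\mathcal A$ to obtain a candidate $(\vec A'', \boldsymbol{\theta}'')$, and return $(\vec A'' \vec D, \boldsymbol{\theta}'')$ as the answer for the worst-case instance. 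Unitary invariance of the overlap guarantees
$$
|\braket{\Phi_{\boldsymbol{\theta}''}^{\vec A''}|C|\Phi_{\boldsymbol{\theta}}^{\vec C}}|^2 \;=\; |\braket{\Phi_{\boldsymbol{\theta}''}^{\vec A'' \vec D}|\Phi_{\boldsymbol{\theta}}^{\vec C}}|^2,
$$
so the verifier's acceptance probability on the original instance with the returned pair exactly equals $\mathcal A$'s acceptance probability on the re-randomized input.

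The main (and essentially only nontrivial) point—which I expect to be the main obstacle—is showing that $\vec C \vec D^{-1}$ is uniformly distributed over $\mathcal R(m,n)$ when $\vec D$ is uniform on $\mathrm{GL}(n,\Z_2)$, so that the input to $\mathcal A$ matches its average-case distribution exactly. This is where the hypothesis $m \leq n$ together with the full-rank restriction is essential: right-multiplication by $\mathrm{GL}(n,\Z_2)$ preserves $\mathcal R(m,n)$ and acts transitively on it, because any ordered tuple of $m \leq n$ linearly independent vectors in $\Z_2^n$ can be mapped to any other by an invertible linear change of coordinates. By the orbit--stabilizer theorem, all pointwise stabilizers have equal size in $\mathrm{GL}(n,\Z_2)$, so the pushforward of the uniform measure is uniform on $\mathcal R(m,n)$. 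Combining the three ingredients yields a reduction running in time $T\cdot\poly(n)$ with success probability exactly $\epsilon$ for every worst-case $\vec C \in \mathcal R(m,n)$, as claimed.
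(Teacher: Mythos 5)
Your proof takes the same route as the paper: both re-randomize the architecture by a uniformly random \textsf{CNOT} circuit (equivalently, right-multiplying $\vec A$ by a random $\mathrm{GL}(n,\Z_2)$ element), use that $\ket{+^n}$ is invariant and that $Z$-strings conjugate to $Z$-strings, and then argue that the resulting architecture is uniform on $\mathcal R(m,n)$. If anything, your write-up is slightly more careful than the paper's: you explicitly translate $\mathcal A$'s answer back to the original coordinates (returning $(\vec A''\vec D,\boldsymbol\theta'')$, a step the paper elides by saying ``outputs whatever $\mathcal A$ outputs''), and you give a cleaner transitivity/orbit--stabilizer argument for the uniformity in the $m<n$ case, which the paper dispatches with a one-line remark about submatrices.
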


\begin{proof}
Consider the reduction $\mathcal{B}$ which, on input $\ket{\Phi_{\boldsymbol{\theta}}^{\vec C}}^{\otimes t}$, does the following:
\begin{enumerate}
    \item $\mathcal{B}$ samples a uniformly random invertible matrix $\vec R \sim \mathrm{GL}(n,\Z_2)$.

\item $\mathcal{B}$ runs the average-case solver $\mathcal{A}$ on the input
$$
(U_{\vec R}\ket{\Phi_{\boldsymbol{\theta} }^{\vec C}})^{\otimes t}.
$$
where $U_{\vec R}$ is the $n$-qubit unitary transformation given by $U_{\vec R}: \ket{x} \mapsto \ket{\vec R^{-1} \cdot x}$, for $x \in \bit^n$. Finally, $\mathcal{B}$ outputs whatever $\mathcal{A}$ outputs.
\end{enumerate}

Note that $U_{\vec R}$ is a quantum circuit composed just of \textsf{CNOT} gates and therefore efficiently implementable. 
Because the average-case solver $\mathcal{A}$ runs in time $T$, it follows that the reduction $\mathcal{B}$ runs in time $T \cdot  \poly(n)$. 

Next, we show that $\mathcal{B}$ also succeeds with probability $\epsilon$.
By assumption, the worst-case instance $\ket{\Phi_{\boldsymbol{\theta}}^{\vec C}}^{\otimes t}$ consists of structured phase states states
$$
\ket{\Phi_{\boldsymbol{\theta}}^{\vec C}} = \exp\left(\mathrm{i} \, \sum_{i=1}^m \theta_i \bigotimes_{j=1}^n \mathsf{Z}^{\vec C_{ij}}  \right) H^{\otimes n} \ket{0^n},
$$
where $\vec C \in \mathcal R(m,n)$ and $\boldsymbol{\theta}$ is a tuple of random angles
$\boldsymbol{\theta} = (\theta_1, \dots, \theta_m) \in \Theta_q^m$.
To complete the proof, it suffices to show that $U_{\vec R}\ket{\psi_{ \boldsymbol \theta }^{\vec C}})^{\otimes t}$ is distributed exactly as in the $\mathsf{HPS}_{n,n,q,\chi}$ problem, where $\chi$ is the uniform distribution over $\mathcal R(m,n)$. First, we make the following key observation: it follows from unitarity of $U_{\vec R}$ that
$$
U_{\vec R}\ket{\Phi_{\boldsymbol \theta }^{\vec C}} = \left(U_{\vec R} \exp\left(\mathrm{i} \, \sum_{i=1}^m \theta_i \bigotimes_{j=1}^n \mathsf{Z}^{\vec C_{ij}}  \right) U_{\vec R}^\dag \right) U_{\vec R} H^{\otimes n} \ket{0^n}.
$$
Because $U_{\vec R}$ is an invertible matrix, it leaves the state $H^{\otimes n} \ket{0^n}$ invariant, and thus we have $U_{\vec R} H^{\otimes n} \ket{0^n} = H^{\otimes n} \ket{0^n}$.
Next, we study the action of $U_{\vec R}$ onto tensor products of Pauli operators. We find that for any index $i \in [n]$:
\begin{align*}
U_{\vec R} \left(\bigotimes_{j=1}^n \mathsf{Z}^{\vec C_{ij}} \right)U_{\vec R} ^\dag &= \sum_{x \in \bit^n} \bra{x} U_{\vec R} \left(\bigotimes_{j=1}^n \mathsf{Z}^{\vec C_{ij}} \right)U_{\vec R} ^\dag \ket{x} \cdot \proj{x}\\
&= \sum_{x \in \bit^n} \bra{\vec R x} \left(\bigotimes_{j=1}^n \mathsf{Z}^{\vec C_{ij}} \right) \ket{\vec R x} \cdot \proj{x}\\
&= \sum_{x \in \bit^n} (-1)^{\sum_{j=1}^n \vec C_{ij} (\vec R x)_{j}}  \proj{x}\\
&= \sum_{x \in \bit^n} (-1)^{\sum_{j=1}^n (\vec C \cdot \vec R)_{ij} x_{j}}  \proj{x} \\
&= \sum_{x \in \bit^n} \bra{x} \left(\bigotimes_{j=1}^n \mathsf{Z}^{(\vec C \cdot \vec R)_{ij}} \right) \ket{x}  \cdot \proj{x}\\
&= \bigotimes_{j=1}^n \mathsf{Z}^{(\vec C \cdot \vec R)_{ij}}.
\end{align*}
Because $U_{\vec R}$ is acting on a matrix exponential of a diagonal matrix, it follows that
\begin{align*}
U_{\vec R} \,\exp\left(\mathrm{i} \, \sum_{i=1}^m \theta_i\bigotimes_{j=1}^n \mathsf{Z}^{\vec C_{ij}}  \right) U_{\vec R}^\dag &= \exp\left(\mathrm{i} \, \sum_{i=1}^m \theta_i \, U_{\vec R} \left(\bigotimes_{j=1}^n \mathsf{Z}^{\vec C_{ij}} \right) U_{\vec R}^\dag \right)\\
&= \exp\left(\mathrm{i} \, \sum_{i=1}^m  \theta_i\bigotimes_{j=1}^n \mathsf{Z}^{(\vec C \cdot \vec R)_{ij}}  \right).
\end{align*}
Finally, we observe that for $m=n$, $\mathcal R(m,n) = \mathrm{GL}(n,\Z_2)$, which is a group. 
Because $\vec C \in \mathrm{GL}(n,\Z_2)$ it follows that $\vec C \cdot \vec R$ is uniformly distributed whenever $\vec R \sim \mathrm{GL}(n,\Z_2)$. Putting everything together, it follows that $U_{\vec R}\ket{\psi_{\boldsymbol \theta }^{\vec C}})^{\otimes t}$ is distributed precisely as in the $\mathsf{HPS}_{n,n,q,\chi}$ problem, and thus $\mathcal{B}$ succeeds with probability $\epsilon$.
The claim for $m \leq n$ follows from the fact that in that case $\vec C$ is a submatrix of a $\mathrm{GL}(n,\Z_2)$ matrix.
\end{proof}

\cref{lem:architecture reduction} shows a worst-to-average-case reduction in the regime $m \le n$ using only a \textsf{CNOT} circuit for randomization. 
We can also obtain a meaningful worst-case to average-case reduction when $m > n$ using additional Hamiltonian terms on a larger system, and a \textsf{CNOT} circuit on the larger system. However, we currently do not know how to analyze the resulting distribution and therefore state it as a conjecture:
\begin{conjecture}
\label{conj:uniformity}
    There is a regime of $m >n$ and choices of $k,l = \poly(n)$ such that the following is true.
    The uniform distribution with $C \sim \Z_2^{(m+ k) \times (n+ l)}$ is $1/\poly(n)$ close in total variation distance to the distribution over matrices of the form \begin{align}
    \label{eq:hidden ensemble}
    \vec M = \Pi \cdot \begin{pmatrix}
        \vec A& \vec 0\\ \vec B & \vec C
    \end{pmatrix}\cdot \vec R,
    \end{align} where $\vec A \in  \Z_2^{m \times n}$ is a fixed matrix, and all other matrices are uniformly random, i.e., $(\vec B|\vec C) \sim  \Z_2^{m \times (n+l)}$, $\vec R \sim \mathrm{GL}(n+l, \Z_2)$, $\Pi$ is a random row-permutation operator which we identify with an element in $\mathrm S_{m+k}$, where $\mathrm S_N$ denotes the symmetric group over $N$ elements.
\end{conjecture}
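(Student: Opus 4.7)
The plan is to identify the structural signature that distinguishes the planted distribution from uniform, and then to choose parameters $k,l$ so that this signature is statistically swamped by accidental structure in a uniformly random matrix. Writing $\vec R = \left(\begin{smallmatrix} \vec R^{(1)} \\ \vec R^{(2)} \end{smallmatrix}\right)$ with $\vec R^{(1)} \in \Z_2^{n \times (n+l)}$ the top block and $\vec R^{(2)} \in \Z_2^{l \times (n+l)}$ the bottom block, I would first expand
\begin{align*}
\begin{pmatrix} \vec A & \vec 0 \\ \vec B & \vec C \end{pmatrix} \vec R = \begin{pmatrix} \vec A \vec R^{(1)} \\ \vec B \vec R^{(1)} + \vec C \vec R^{(2)} \end{pmatrix}\,.
\end{align*}
Conditional on $\vec R$, the top $m$ rows all lie in the fixed $n$-dimensional subspace $\mathrm{rowspan}(\vec R^{(1)}) \subset \Z_2^{n+l}$, while the bottom $k$ rows are independently uniform in $\Z_2^{n+l}$, since $\vec B, \vec C$ are independently uniform and the row spans of $\vec R^{(1)}$ and $\vec R^{(2)}$ are complementary inside $\Z_2^{n+l}$. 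Marginalizing over $\vec R$, the $n$-dimensional subspace containing the top $m$ rows becomes uniformly random over the Grassmannian, and $\Pi$ then scatters the $m$ structured row indices uniformly through $[m+k]$.

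The second step is to recast indistinguishability as a question about the existence of a \emph{rank-deficient row subset}. In the planted distribution, $\vec M$ always admits a size-$m$ subset $S \subset [m+k]$ for which $\vec M_{S,*}$ has rank at most $n$, whereas for uniformly random $\vec M$ any fixed such $S$ has this property with probability at most roughly $\binom{n+l}{n}_2 \cdot 2^{-ml} \approx 2^{-(m-n)l}$ by a Gaussian-binomial count of rank-deficient matrices. The key is to pick $k, l = \poly(n)$ so that $\binom{m+k}{m} \cdot 2^{-(m-n)l}$ is much larger than $1$, so that accidental rank-deficient $m$-subsets are abundant in a uniform matrix and absorb the planted one. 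In that regime I would attempt to upgrade the signature-matching heuristic to a genuine total variation bound either by a second-moment computation of the number of bad subsets in both distributions, or by a hybrid argument that reveals the columns of $\vec R$ one block at a time and invokes a matrix-valued leftover-hash statement at each step.

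The main obstacle, and presumably the reason this appears as a conjecture, is that the events $\{\mathrm{rank}(\vec M_{S,*}) \leq n\}$ for overlapping subsets $S$ are strongly correlated, so first- and second-moment bounds alone are unlikely to be tight. A cleaner route may be to prove directly that $\mathrm{col}(\vec M)$, viewed as a random $(n+l)$-dimensional subspace of $\Z_2^{m+k}$, is statistically close to the uniform Grassmannian distribution, exploiting the fact that the $k$ generic rows contributed by $\vec B, \vec C$ already furnish an almost uniform subspace to which the planted $\vec A$-rows add only a low-entropy perturbation which $\Pi$ diffuses across all coordinates. Any rigorous instantiation will also need to absorb low-probability failure events where $\vec C$ loses column rank or $\vec R^{(1)}$ loses row rank, and to verify that the column space of the planted matrix is generically of full rank $n+l$.
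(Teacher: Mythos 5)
This statement is labeled a \emph{conjecture} in the paper, and the paper does not attempt to prove it; instead, immediately after stating it, the paper presents a \emph{negative} heuristic in the paragraph ``Limitations of the architecture reduction,'' arguing that the conjecture most likely \emph{fails} whenever $k \in \omega(\log n)$ and can at best be hoped for when $k \in O(\log n)$. Your plan heads in the opposite direction: you propose a route to \emph{establish} closeness, by choosing $k$ large enough that $\binom{m+k}{m} 2^{-(m-n)l} \gg 1$ so that accidental rank-deficient $m$-subsets ``absorb'' the planted one. You should be aware that the paper's counting (done with $l=0$) points the other way: it estimates the collision probability for the uniform ensemble to exhibit the planted structure as roughly $\tfrac{(k+m)!^2}{k!^2} \cdot 2^{-mn + \Theta(n^2)}$, which is $\exp(-\Theta(n^2))$ for $m = \Theta(n)$, $k = \poly(n)$; a simple distinguisher (searching for a size-$m$ row subset that can be brought to a fixed echelon form by a common $\mathrm{GL}$ transformation and row permutation, applied to two independent samples) therefore succeeds with high probability. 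Thus the existence of many accidental rank-deficient subsets in a uniform matrix is not by itself enough — the issue is whether two independent samples from the planted ensemble share a \emph{consistent} planted structure that two uniform samples would not.

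Beyond that, the test you describe (``$\mathrm{rank}(\vec M_{S,*}) \le n$ for some $|S|=m$'') is morally the same signature the paper examines, so your proposal is not a genuinely different attack. Your proposal also correctly flags the main technical obstruction (strong correlations between overlapping row subsets make first- and second-moment arguments unreliable), which mirrors the paper's attitude. But the proposal does not engage with the parameter constraints that the paper's heuristic imposes: for the conjecture to have a chance, the paper suggests $m - n$ must be $O(\log n)$, i.e., very few extra rows, which sharply limits the parameter regime. Any serious attempt at a proof (or disproof) should start by pinning down this regime and verifying, even heuristically, whether your $\binom{m+k}{m} 2^{-(m-n)l} \gg 1$ criterion is compatible with the paper's collision-probability estimate once $l > 0$ is allowed; as written, the two calculations are not reconciled and may point to incompatible conclusions about the same parameter choices.
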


The Hamiltonian $\vec M$ characterizes a HPS which is obtained from a phase state $\ket{\Phi_{\boldsymbol \theta}^{\vec A}}$ via the following re-randomization:
\begin{align}
U_{\vec R}\exp\left (i\sum_{i=1}^k \varphi_i \sum_{j=1}^{n+l} Z_j^{(\vec B|\vec C)_{ij}}\right) \ket{\Phi_{\boldsymbol \theta}^{\vec A}}\otimes \ket +^{l} =  \ket{\Phi_{(\boldsymbol \theta, \boldsymbol \varphi)}^{\vec M}},
 \end{align}
using additional angles $\boldsymbol \varphi \in \Theta_q^{k}$.
Note that the permutation $\Pi$ of the rows of $\vec M$ does not affect the final state, but is a freedom we have in its description since all Hamiltonian terms commute.
Interestingly, these are precisely the freedoms exploited in the proposal for verified quantum advantage by Shepherd and Bremner~\cite{shepherd_temporally_2009}, and studied in more detail in Refs.~\cite{kahanamoku-meyer_forging_2023,bremner_iqp_2023,gross_secret_2023}.

\paragraph{Worst-case hardness and relation with the McEliece cryptosystem}
\cref{conj:uniformity} makes the connection to a particular classical cryptosystem, the McEliece cryptosystem, explicit. 
Using the relation to this cryptosystem, we argue for the worst-case hardness of the $\mathsf{HPS}$ problem. 

The McEliece cryptosystem is a code-based candidates for post-quantum cryptography~\cite{mceliece1978public}.  The scheme is based on Goppa codes:
\begin{definition}[Goppa code]
    Let $g(z) = \sum_{i} g_i z^{i} \in \mathbb{F}_{q^m}[z]$ be a polynomial and $L = \{\alpha_1, \ldots, \alpha_n\} \subseteq \mathbb{F}_{q^m}$ be values that do not evaluate to $0$, i.e. $g(\alpha_i) \neq 0$. 
    Then the \emph{Goppa code} with parameters $g(z)$ and $L$, $\Gamma(L, g(z))$ has the codewords
    \begin{equation*}
        \left\{c = (c_1, \ldots, c_n) \in \mathbb{F}_{q^m}^n : \sum_{i = 1}^{n} \frac{c_i}{z - \alpha_i} \equiv 0 \bmod g(z)\right\}\,.
    \end{equation*}
    Note that the sum in the condition describes a function of $z$, not an element of $\mathbb{F}_{q^m}$.
\end{definition}
The definition of the Goppa code will not be important, but it is important to note that instantiations of the McEliece cryptosystem with some other kinds of codes have been shown to be insecure, while the instantiation with Goppa codes is still believed to be secure against quantum adversaries.  
The cryptosystem is usually defined in terms of the following public-key encryption scheme.
\begin{enumerate}
    \item Alice samples a $[n, k, 2t+1]$-Goppa code, specified by its $k \times n$ sized generator matrix $G$, a random $k \times k$ non-singular matrix $R$, and a permutation $\Pi \in S_n$. 
    \item Alice releases $\hat{G} = R \cdot G \cdot \Pi$ and the rate $t$ as the public key, and keeps the tuple $(R, G, \Pi)$ as her private key.
    \item To encrypt a message $m$ of $k$ bits, Bob computes a $n$-bit error vector $e$ with Hamming weight $t$ and computes $c = m \cdot \hat{G} + e$.
\end{enumerate}

A related problem to the security of the primitive is the so called ``Goppa code distinguishing'' problem, which asks if the public key $\hat G = R \cdot G\cdot  \Pi$ can be distinguished from a uniformly random matrix.
In certain regimes, specifically the high rate regime, it is known that there is an efficient distinguisher~\cite{faugere2013distinguisher}, however for the parameter range used in practice the problem is believed to be hard.

Consider the following variant of the $\mathsf{HPS}$ problem
\begin{definition}[Goppa code phase learning]
\label{def:goppa learning}
    Let $G$ be a $[n, k, t]$-Goppa code, and consider the state $\ket{\Phi^{G^{\intercal}}_{\theta}}$.  
    The search variant of the Goppa code phase learning problem is the problem of learning the phases $\theta$ given a desciption of $G$ and samples of $\ket{\Phi^{G^{\intercal}}_\theta}$.

    The decision variant is to distinguish the states $\ket{\Phi^{G^{\intercal}}_{\theta}}$ from a Haar random state, given samples of the state and the architecture matrix.
\end{definition}

We can relate this problem to the search and decision variants of $\mathsf{HPS}$ using the security of the McEleice cryptosystem.

\begin{lemma}
\label{lem:goppa relation}
    If the Goppa code distinguishing problem is hard for quantum adversaries, then the Goppa code phase learning problem is as hard as the $\mathsf{HPS}$ problem.
\end{lemma}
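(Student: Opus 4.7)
The plan is to construct a reduction establishing that Goppa code phase learning inherits its hardness from the $\HPS_{k,n,q,\mathrm{unif}}$ problem, with Goppa code distinguishing serving as the computational bridge between the two architectures. Concretely, I will turn any efficient solver for $\HPS_{k,n,q,\mathrm{unif}}$ into an efficient solver for Goppa code phase learning on an $[n,k,2t+1]$-Goppa code, so that under the Goppa distinguishing assumption the two problems are polynomially related.

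The first step is a quantum re-randomization of the Goppa architecture. Given an input pair $(G, \ket{\Phi^{G^\intercal}_{\boldsymbol\theta}}^{\otimes s})$ with $s = \poly(n)$, I sample $\vec R \sim \mathrm{GL}(k,\Z_2)$ and $\Pi \in S_n$ uniformly, and apply to each copy the CNOT circuit implementing $U_{\vec R^\intercal} : \ket{x} \mapsto \ket{(\vec R^\intercal)^{-1} x}$. By the computation already performed in the proof of \cref{lem:architecture reduction}, this maps the phase state to $\ket{\Phi^{G^\intercal \vec R^\intercal}_{\boldsymbol\theta}}$. Because the underlying Hamiltonian consists of mutually commuting $Z$-type terms, any row permutation of the architecture matrix leaves the state physically unchanged, so one may equivalently view the post-transformation state as $\ket{\Phi^{\hat G^\intercal}_{\boldsymbol\theta}}$ with $\hat G = \vec R\, G\, \Pi$, which is distributed exactly as a McEliece public key.

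The second step lifts the Goppa distinguishing assumption from matrices to phase states via a standard hybrid. Since for any $(M,\boldsymbol\theta')$ one can efficiently prepare polynomially many copies of $\ket{\Phi^{M^\intercal}_{\boldsymbol\theta'}}$, an efficient algorithm distinguishing $\ket{\Phi^{\hat G^\intercal}_{\boldsymbol\theta}}^{\otimes s}$ from $\ket{\Phi^{\vec A}_{\boldsymbol\theta}}^{\otimes s}$, with $\hat G$ a McEliece public key and $\vec A \sim \Z_2^{n \times k}$ uniform, would yield an efficient matrix-level distinguisher between $\hat G$ and uniform, contradicting Goppa distinguishing. Running the hypothesized $\HPS$ solver on our re-randomized state therefore succeeds with probability only negligibly smaller than on a genuine $\HPS_{k,n,q,\mathrm{unif}}$ instance, and the decision variant of Goppa phase learning follows immediately, noting that the unitary re-randomization preserves the Haar measure on the Haar branch of the distinguishing game.

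The main obstacle is the search direction. The $\HPS$ verifier accepts any pair $(\vec A', \boldsymbol\theta')$ whose phase state merely has large fidelity with the input, so there is \emph{a priori} no guarantee that $\vec A'$ coincides with $\hat G^\intercal$ up to the free row permutations; if not, it is unclear how to back-translate the recovered angles to the original Goppa instance. I would close this gap in one of two ways: either (i) by composing the above with a decision-to-search self-reduction for $\HPS$ that tests candidate angles coordinate-by-coordinate using the decision oracle just constructed, or (ii) by appealing to the approximate state $t$-design bounds of \cref{section:designbounds} to argue that, for typical $\boldsymbol\theta$, distinct architectures produce nearly orthogonal phase states, so any verifying $\vec A'$ must be a row-permutation of $\hat G^\intercal$; inverting the known $(\vec R,\Pi)$ then recovers a valid output for the original Goppa phase learning instance.
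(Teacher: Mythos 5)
Your proposal follows essentially the same route as the paper's proof: re-randomize the Goppa architecture $G^\intercal$ into a McEliece-style public key $\hat G^\intercal = \Pi\, G^\intercal R$ by sampling $R\sim\mathrm{GL}$, $\Pi\sim S$, conjugating the phase state by the corresponding \textsf{CNOT} circuit (and, in the paper, additionally re-randomizing the angles), then invoking the Goppa code distinguishing assumption to argue the $\HPS$ solver behaves as it would on a uniformly random architecture. You flesh out the Goppa-distinguishing step as an explicit hybrid from matrices to state ensembles, which the paper leaves implicit; that is a welcome addition of rigor rather than a different strategy.

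Where your writeup goes beyond the paper is in flagging the search-direction gap, and you are right that it is a genuine one which the paper's two-sentence proof elides. The search $\HPS$ verifier only checks fidelity of $\ket{\Phi^{\vec A'}_{\boldsymbol\theta'}}$ against the input, so the solver may return an architecture $\vec A'$ that is \emph{not} (a row permutation of) $\hat G^\intercal$; undoing the known re-randomization $(R,\Pi,\varphi)$ then yields a phase state over some architecture $\vec A'' \ne G^\intercal$, which does not directly produce the angles demanded by \cref{def:goppa learning}. Both of your proposed patches are reasonable directions, but neither is automatic: the decision-to-search self-reduction (i) would first need to be established for $\HPS$ (it is not in the paper), and the near-orthogonality argument (ii) requires more than \cref{lemma: many orthogonal states} provides, since that corollary controls overlap with a \emph{fixed} state over a \emph{random} architecture, not the adversary's ability to \emph{search} for a colliding $(\vec A',\boldsymbol\theta')$ against a fixed target. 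So the paper's proof and yours share the same core mechanism, and your note that the search variant needs an extra argument is a correct observation about a weakness in the paper's sketch rather than an error in your own reasoning.
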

\begin{proof}
    Given an adversary for the $\mathsf{HPS}$ problem, which works for a uniformly random architecture $\vec A$, we can sample a random permutation and non-singular matrix, and apply the architecture re-randomization from the previous section to get $\hat{G}^{\intercal} = \Pi G^{\intercal} R$. 
    We can further re-randomize the phase by applying another IQP circuit with the same architecture.
    By the hardness of the Goppa code distinguishing problem, this architecture is indistinguishable from a uniformly random one for the $\mathsf{HPS}$ solver, so the solver will output the correct phases with high probability.   
\end{proof}

Since \cref{lem:goppa relation} holds for all Goppa codes, the Goppa code phase learning problem (\cref{def:goppa learning}) is a worst-case problem over a restricted family of architectures that come from Goppa codes.
It shows that the $\mathsf{HPS}_{m,n,q,\chi}$ problem is at least as hard as the Goppa code distinguishing problem, thereby showing a worst-case reduction for it.
Note however, that the parameter regime of the worst-case instances may not be the same regime in which we can show a worst-to-average-case reduction.

\paragraph{Limitations of the architecture reduction.} 
Let us now discuss some limitations of the worst-to-average-case reductions we have given for the Hamiltonian architectures with fixed angles.
Specifically, we provide evidence that randomizing with \textsf{CNOT} circuits and additional Hamiltonian terms is not sufficient for a worst-to-average-case reduction whenever $m \gg n$, but that this might still be possible in the regime in which $m = n + O(\log n)$. 
We note however that while the \emph{information-theoretic} worst-to-average-case reduction might not work in this regime, we still believe that the \emph{computational} hardness persists, and therefore \cref{conj:uniformity} to be true.

To give a lower bound on the total-variation distance between the uniform distribution and the distribution with a fixed Hamiltonian architecture~$\vec A$, it is sufficient to describe a distinguishing test and give a lower bound on its success probability. 
Our test will distinguish two matrices $M_1, M_2$ drawn from the hidden ensemble $\mu$ with architecture matrices chosen as in \cref{eq:hidden ensemble} of \cref{conj:uniformity} (with $l=0$) versus the uniform ensemble $\mathcal U$ over $(m+k) \times n$ binary matrices, respectively. 
To this end, let us write $\vec A = \vec Z \vec E$ in row echelon form $\vec E$ via an invertible matrix $ \vec Z \in \mathrm{GL}(m,\Z_2)$.
We ask the question: what is the probability that there exist an invertible matrix $\vec Z \in \mathrm {GL}(m, \mathbb Z_2)$ and a matrix $\vec D\in \mb Z_2^{m \times n}$ in reduced row-echelon form with the property that
\begin{align}
    \exists \vec P_1, \vec P_2 \in S(m+k): (\vec Z \oplus \id_{k}) \vec P_1 \vec M_1 = (\vec Z \oplus \id_{k}) \vec P_2 \vec M_2 = \begin{pmatrix}
        \vec E\\
        \vdots
    \end{pmatrix}?
\end{align}
In particular, $\vec E$ has $m-n$ all-$0$ rows which distinguish it from a uniformly random matrix even up to row permutations, and multiplication by a $\mathrm{GL}(n,\mathbb Z_2)$ matrix from the right. 
For the $\mu$ ensemble, we know that this probability is equal to $1$ by construction. 
For the uniform ensemble, we can get a rough estimate on it as follows 
\begin{itemize}
    \item There are $\binom{k+m}{m}$ subsets of $m$ rows of $\vec M_1$, and $m!$ permutations of those rows. Hence, there are this many matrices $\vec Z_i$ which bring the permutations of those rows into the form $\vec E$.
    Let us think of all of the $\vec Z_i$ as being  different from each other.
    \item For each $\vec Z_i$, we can apply it to an arbitrary permutation of $m$ rows of $\vec M_2$ and check the probability of getting $\vec E$ (of which there are $m!\binom{k+m}{m}$).
    This probability is given roughly by $2^{-m \cdot n + \Theta(n^2)}$ if we model each choice of $m$ rows as iid.\ uniformly random, and the factor of $2^{\Theta(n^2)}$ comes from the fact that we only care about equality up to a $\mathrm{GL}(n,\mathbb Z_2)$ transformation on the right (of which there are roughly $2^{\Theta(n^2)}$).  
    Modelling the submatrices as uniform iid.\ matrices is of course strictly speaking false. 
    However, we expect it to be approximately true. To see why observe that if we choose a fixed subset of $m$ rows of $\vec M$, denoted as $\vec A = \vec Z\vec E$ and exchange only one of the rows of $\vec A$ for another row of the big matrix $\vec M$, yielding a matrix $\vec A^*$, then  $\vec Z^{-1}\vec A^*$ will be far from~$\vec E$ and look essentially random.
\end{itemize}

In total we roughly get a `collision probability' of 
\begin{align}
    \frac{(k+m)!^2}{k!^2} \cdot 2^{-m \cdot n + \Theta(n^2)}
    \sim 2^{ 2m \log k - n (m-\alpha n)} = 2^{2dc n \log n - (d-\alpha)n^2} \in \exp(-O(n^2))
\end{align}
where we chose $m = d n $ and $k = n^c$, and we fix $\Theta(n^2) = \alpha n^2$. 
This suggests that we can make the collision probability on the order of $1$ (i.e., a constant TVD away from the hidden ensemble $\mu$) if we choose $m-n = e \log n $ for some constant $e$, yielding 
\begin{align}
    2^{m \log k - n (m-n)} =  2^{(2c-e) n \log n + 2ec \log n}.
\end{align}
Hence, we can choose $c$ in $k = n^c$ large enough that $2c -e \geq 0$.
This suggests that the worst-to-average-case reduction using random additional Hamiltonian terms and randomization by a \textsf{CNOT} circuit fails for $k \in \omega(\log n)$, but might still work for $k \in O(\log n)$.
\\

The worst-to-average-case reduction over the architectures $\boldsymbol{A}$ does not require any information about the angles.
In~\Cref{sssec:solving hps} we will show that HPS can be broken if the architecture $\boldsymbol{A}$ is known precisely in the regime $m\leq n$ in which the above worst-to-average reduction works. 
It can, however, be desirable to make part of the key publicly available.
An example where making part of the key public is helpful for designing cryptographic primitives is the Learning with Errors (LWE) problem~\cite{regev2009lattices}.
The LWE problem is to recover an unknown vector $\vec s\in \mathbb{Z}_q^n$ from ``noisy'' scalar products with public but random vectors $\vec a\in \mathbb{Z}_q^n$.
Clearly, the LWE problem could be made harder by keeping the random vectors $\vec a$ secret, but keeping them public is crucial for many applications such as public-key encryption.
In the following, we study a variant of the $\mathsf{HPS}$ problem, where the architectures are randomly drawn but publicly available. 
We discover that even with public architecture, we can reduce the problem of learning the angles of a worst case to learning those of a typical instance. 

\paragraph{Reduction for the angles.}
To show a worst-to-average case reduction for task of learning the angles with a public architecture, we show that given an (worst-case) $\mathsf{HPS}_{n,n,q,\chi}$ instance $\ket{\Phi_{\boldsymbol \theta}^{\vec A}}$ with $\vec A \in \Z_q^{m \times n}$ in the clear, the angles can be completely randomized by a simple quantum algorithm.
The key idea is just to one-time-pad the angles by applying uniformly random rotations around the public Hamiltonian terms to the given HPS $\ket{\Phi_{\boldsymbol \theta}^{\vec A}}$.
\begin{lemma}[Worst-to-average-case reduction for the angles]
\label{lem:angle reduction}
    Suppose there is a quantum algorithm $\mathcal A$ which for every choice of $\vec A \in \mathbb Z_2^{m \times n}$, given a HPS $\ket{\Phi_{\boldsymbol \theta}^{\vec A}}$  with $\boldsymbol \theta$ chosen uniformly at random from $\Theta_q^m$, succeeds in returning $(\boldsymbol \theta, \vec A)$ with probability at least $1/\poly(m,n,\log q)$ over the choice of $\boldsymbol \theta$. 
    Then there is a quantum algorithm $\mathcal A'$ which succeeds at this task for \emph{every} choice of $\boldsymbol{\theta}$ with probability $1/\poly(m,n,\log q)$. 
\end{lemma}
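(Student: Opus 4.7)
The plan is to exploit the fact that all terms in the IQP Hamiltonian commute (being simultaneously diagonal in the computational basis) to implement a one-time-pad on the angles. Given the architecture $\vec A$ in the clear and $t$ copies of $\ket{\Phi_{\boldsymbol{\theta}}^{\vec A}}$ for a worst-case $\boldsymbol{\theta}$, the reduction $\mathcal{A}'$ proceeds as follows. First, sample $\boldsymbol{\phi} = (\phi_1, \ldots, \phi_m) \sim \Theta_q^m$ uniformly, and apply the unitary
$$U_{\boldsymbol{\phi}, \vec A} \;=\; \exp\!\left(\mathrm{i} \sum_{i=1}^m \phi_i \bigotimes_{j=1}^n \mathsf{Z}^{\vec A_{ij}}\right)$$
to each of the $t$ copies. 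Since $U_{\boldsymbol{\phi}, \vec A}$ and the original state-preparation unitary are both diagonal in the computational basis, they commute, and a direct computation yields
$$U_{\boldsymbol{\phi}, \vec A}\,\ket{\Phi_{\boldsymbol{\theta}}^{\vec A}} \;=\; \ket{\Phi_{(\boldsymbol{\theta} + \boldsymbol{\phi}) \bmod q}^{\vec A}}.$$
Because $\boldsymbol{\phi}$ is drawn uniformly from $\Theta_q^m \cong \Z_q^m$ and addition by a fixed element in the abelian group $\Z_q^m$ is a bijection, the shifted angles $\boldsymbol{\theta}':=(\boldsymbol{\theta}+\boldsymbol{\phi}) \bmod q$ are exactly uniformly distributed over $\Theta_q^m$, independent of $\boldsymbol{\theta}$.

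Next, $\mathcal{A}'$ invokes the average-case solver $\mathcal{A}$ on the transformed state $\ket{\Phi_{\boldsymbol{\theta}'}^{\vec A}}^{\otimes t}$. By hypothesis, $\mathcal{A}$ returns a pair $(\widetilde{\boldsymbol{\theta}}, \vec A)$ that passes verification with probability at least $1/\poly(m,n,\log q)$ over the uniform choice of angles, which by the previous paragraph matches the distribution fed to $\mathcal{A}$. Whenever this occurs, $\mathcal{A}'$ outputs $(\widetilde{\boldsymbol{\theta}} - \boldsymbol{\phi} \bmod q,\, \vec A)$, which by the same commutation argument in reverse is a valid inversion of the original state $\ket{\Phi_{\boldsymbol{\theta}}^{\vec A}}$ — that is, it is accepted by the projective measurement $\{\proj{\Phi_{\boldsymbol{\theta}}^{\vec A}}, I - \proj{\Phi_{\boldsymbol{\theta}}^{\vec A}}\}$ with certainty.

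For efficiency, note that $U_{\boldsymbol{\phi}, \vec A}$ is itself an HPS-style IQP circuit with architecture $\vec A$ and angles $\boldsymbol{\phi}$, so it admits a $\poly(m,n,\log q)$-size implementation using only Hadamard, single-qubit $\mathsf{Z}$-rotation, and \textsf{CNOT} gates (as described in the physical-implementations section). Hence $\mathcal{A}'$ runs in polynomial time whenever $\mathcal{A}$ does, and its success probability is exactly that of $\mathcal{A}$ on a uniformly random angle vector, which is $1/\poly(m,n,\log q)$. I do not anticipate a serious obstacle: the abelian-group structure of $\Theta_q^m$ under addition makes the one-time-pad exact (no statistical loss), the commutation of all terms makes the reduction a single coherent IQP layer rather than a deeper circuit, and the verification condition is automatically preserved by reversing the known shift $\boldsymbol{\phi}$.
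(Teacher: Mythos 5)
Your proposal is correct and takes essentially the same approach as the paper's proof: both apply a uniformly random IQP layer with the public architecture $\vec A$ to one-time-pad the angles, invoke the average-case solver, and then subtract the known shift $\boldsymbol{\phi}$ to recover the original $\boldsymbol{\theta}$. Your write-up is a more detailed version of the paper's terse argument, spelling out the commutation, the exact uniformity of the shifted angles, and the efficiency of the re-randomizing unitary, all of which the paper leaves implicit.
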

\begin{proof}
    To show the lemma, we observe that given any fixed HPS $\ket{\Phi_{\boldsymbol \theta}^{\vec A}}$, we can randomize it by choosing $\boldsymbol \varphi \sim \Theta_q^m$  uniformly at random and apply $\mathcal A$ to the re-randomized  state  
    \begin{align}
        \exp\left( i \sum_{j =1}^m \varphi_j \prod_{k=1}^m  Z_k^{\vec A_{jk}}\right) \ket{\Phi_{\boldsymbol \theta}^{\vec A}} \equiv \ket{\Phi_{\boldsymbol \theta + \boldsymbol \varphi}^{\vec A}}.
    \end{align}
    Repeating this procedure at most $\poly(n,m,\log q)$ times will recover $\boldsymbol \theta + \boldsymbol \varphi$ and $\vec A$, and thus~$(\boldsymbol \theta, \vec A)$.
    This implies that an average-case algorithm can solve worst-case instances and thus proves the reduction.
\end{proof}

\paragraph{Learning with public architecture}
The worst-to-average-case reduction works for any parameters of $m,n$, but of course, for the hardness of $\HPS$ it is crucial that the learning problem is also worst-case hard. 
In particular, one might wonder whether the $\HPS$ assumption can be relaxed to a publicly known architecture, and unknown angles. 
We delineate the parameter regime in which we can expect this in the following.
To this end, we show that for $m \le n$ the angles can in fact be efficiently learned given a public, full-rank Hamiltonian architecture. 
Nonetheless, we believe such a stronger assumption to still be true whenever $m \gg n$. 

\begin{lemma}[Learning with public Hamiltonian architecture]
\label{lem:learning public architecture}

Let $n \in \mathbb{N}$ be the security parameter, and $m = \poly(n)$ and $q = \poly(n)$.
Consider the family of $\mathsf{HPS}_{n,m,q,\chi}$ instances $\mathcal E_{\vec A} = \{\ket{\Phi_{\boldsymbol \theta}^{\vec A}}\}_{\boldsymbol \theta \in \Theta_q^m}$ with publicly known $\vec A \in \mathcal R(m,n)$. 
Then, there exists an efficient quantum algorithm which inverts any state in the ensemble $\mathcal E_{\vec A}$ whenever $m \le n$. 
\end{lemma}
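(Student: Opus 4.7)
The plan is to exploit the publicly known architecture $\vec A$ to disentangle the HPS into a product of single-qubit states by undoing the \textsf{CNOT} structure, after which standard single-qubit tomography recovers the angles. Since $\vec A \in \mathcal R(m,n)$ has full row rank $m \leq n$, Gaussian elimination over $\mathbb F_2$ produces (efficiently) an invertible matrix $\vec R \in \mathrm{GL}(n, \mathbb Z_2)$ such that $\vec A \cdot \vec R = (\id_m \mid \mathbf 0)$: pick any $m$ linearly independent columns of $\vec A$ to form an invertible block $\vec B$, then invert $\vec B$ and sweep the remaining columns. The associated \textsf{CNOT} circuit $U_{\vec R} : \ket x \mapsto \ket{\vec R^{-1} x}$ has a polynomial-size implementation and leaves $\ket{+^n}$ invariant.

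Using the Pauli conjugation identity $U_{\vec R}(\bigotimes_j \mathsf Z^{\vec A_{ij}})U_{\vec R}^{\dagger} = \bigotimes_j \mathsf Z^{(\vec A \vec R)_{ij}}$ established in the proof of \cref{lem:architecture reduction}, together with the invariance of $\ket{+^n}$ under $U_{\vec R}$, we obtain
\begin{align*}
U_{\vec R}\ket{\Phi_{\boldsymbol \theta}^{\vec A}} \;=\; \exp\!\left(\mathrm{i} \sum_{i=1}^m \theta_i \mathsf Z_i\right) \ket{+^n} \;=\; \bigotimes_{i=1}^m \ket{\phi_{\theta_i}} \;\otimes\; \ket{+}^{\otimes (n-m)},
\end{align*}
where $\ket{\phi_\theta} := (e^{\mathrm i \theta}\ket 0 + e^{-\mathrm i \theta}\ket 1)/\sqrt 2$. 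This is a genuine product state, so the problem collapses into $m$ independent single-qubit learning tasks.

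On the $i$-th qubit, $\ket{\phi_{\theta_i}}$ satisfies $\langle \mathsf X\rangle = \cos(2\theta_i)$ and $\langle \mathsf Y\rangle = -\sin(2\theta_i)$, which jointly pin down $\theta_i \Mod \pi$. Given $t = O(q^2 \log(mn))$ copies of the HPS, the algorithm applies $U_{\vec R}$ to each copy and measures each qubit in either the $\mathsf X$ or the $\mathsf Y$ basis; by Hoeffding's inequality, every $\cos(2\theta_i)$ and $\sin(2\theta_i)$ is estimated to additive error below $\pi/q$ simultaneously with probability $1-\negl(n)$. Since $\theta_i \in \Theta_q$ is discretized in steps of $2\pi/q$, this precision determines each $\theta_i$ modulo $\pi$, and the residual $\theta_i \mapsto \theta_i + \pi$ ambiguity only multiplies the state by an irrelevant global $-1$ per flipped index (since $e^{\mathrm i \pi \mathsf Z_i}= -\id$). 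Outputting $(\vec A, \boldsymbol \theta')$ thus causes the verifier's projector $\proj{\Phi_{\boldsymbol \theta'}^{\vec A}}$ to fire with probability $1-\negl(n)$.

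There is no substantial obstacle; the pipeline is classical $\mathbb F_2$ linear algebra plus single-qubit tomography, each running in $\poly(n,m,\log q)$ time. The argument relies crucially on $m \leq n$ and on $\vec A$ having full row rank: as soon as $m > n$, no right-multiplication by an invertible $\vec R$ can reduce $\vec A$ to a form with purely single-qubit rows, so genuinely multi-body $\mathsf Z$ terms must remain, and the single-qubit product-state decomposition that makes tomography easy simply does not exist. This is consistent with the fact that the HPS assumption is believed to remain computationally hard in the publicly-known-architecture regime precisely when $m \gg n$.
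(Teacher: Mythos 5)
Your proof takes essentially the same route as the paper: construct $\vec R \in \mathrm{GL}(n,\Z_2)$ with $\vec A\vec R = (\id_m \mid \mathbf 0)$, apply the corresponding \textsf{CNOT} circuit $U_{\vec R}$ to collapse the HPS into a product of single-qubit $Z$-rotated plus states, then recover each $\theta_i$ by single-qubit tomography. Your version is somewhat more careful than the paper's at the last step (you measure both $X$ and $Y$ to pin down $\theta_i \bmod \pi$ and note that the residual $\pi$ ambiguity is a harmless global phase, whereas the paper only mentions $X$-basis measurement), but the argument is the same.
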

\begin{proof}
    Let $\vec R \in \mathrm{GL}(n,\Z_2)$ be an invertible matrix such that if we write $\vec A = \sum_{i=1}^m \ket{e_i}\bra{a_i}$, i.e., the rows of $\vec A$ are denoted by $\ket{a_i} \in \Z_2^n$, and the standard vectors by $\ket{e_i} \in \Z_2^m$, then 
    $\bra{a_i} \vec R = \bra{e_i} $. 
    Then $\vec A \vec R = (\id_m | 0_{(n-m) \times (n-m)})$. 

    This means that we can apply a \textsf{CNOT} circuit $U_{\vec R}$ to $\ket{\Phi_{\boldsymbol \theta}^{\vec A}}$, yielding 
    \begin{align}
        U_{\vec R}\ket{\Phi_{\boldsymbol \theta}^{\vec A}} &= \exp\left(i \sum_{j=1}^m \theta_j\prod_{k=1}^n Z_k^{(\vec A \vec R)_{jk}}\right)\ket +^{\otimes n} \\
        & =  \exp\left(i \sum_{j=1}^m \theta_j\prod_{k=1}^n Z_j\right)\ket +^{\otimes n} = \bigotimes_{j=1}^m\left( e^{i \theta_j Z_j} \ket + \right)\otimes \ket +^{\otimes (n-m)} ,
    \end{align}
    which is a product state. 
    The angles $\theta_j$ can now be learned to precision $1/\epsilon$ using $O(1/\epsilon^2)$ many copies of $ U_{\vec R}\ket{\Phi_{\boldsymbol \theta}^{\vec A}}$ via a measurement in the $X$-basis.
\end{proof}

The consequence of \cref{lem:learning public architecture} is that at least in the parameter regime in which we can show a worst-to-average case reduction over the architecture, it is necessary for the $\HPS$ assumption to be true that the Hamiltonian architecture is hidden. 
We note that---like for the worst-to-average-case reduction over the architectures---we expect that the learning algorithm can be extended to $m = n + O(\log n)$.

\subsection{Hamiltonian Phase States Form Approximate State Designs}\label{section:designbounds}

In this subsection we show that the states in the HPS ensemble form approximate state designs if $m\geq Cn$ for a constant $C>0$.
It will be convenient to view HPS as a random walk of depth $m$ on the diagonal group. 
We will therefore slightly adjust the notation.
Consider the following probability distribution $\nu$ on the diagonal subgroup of $SU(2^n)$: Draw a uniformly random bitstring $\boldsymbol{A}_{1}\in\{0,1\}^n$ and a uniformly random angle $\theta\in [0,2\pi)$ and apply $e^{\mathrm{i}\theta \bigotimes_{j=1}^n Z^{\boldsymbol{A}_{1j}}}$. 
We can draw $m$ such diagonal unitaries independently and multiply them. The resulting probability measure is denoted by $\nu^{*m}$.

We will first show that $e^{\mathrm{i}\sum_{i=1}^m\theta_i\bigotimes_{j=1}^n Z^{\boldsymbol{A}_{ij}}}$ is an approximate $t$-design on the diagonal group.
More precisely, we prove the following theorem:
\begin{theorem}\label{thm:diagonal designs}
    For $m\geq 2t(2nt+\log(1/\varepsilon))$ the random unitary $e^{\mathrm{i}\theta_i \sum_{i=1}^m\bigotimes_j Z^{\boldsymbol{A}_{ij}}}$ with random $A_{ij}$ and $\theta_i$ is a $\varepsilon$-approximate diagonal $t$-design in the sense of~\Cref{def: diagonal designs}.
    Moreover, the same bound holds if $\theta_i$ is drawn uniformly from $\{2\pi k/q\}_{k=1}^q$, where $q$ is an integer satisfying $q>2t$.
\end{theorem}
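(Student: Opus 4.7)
The plan is to exploit the fact that every unitary in the ensemble is diagonal in the computational basis, so the $t$-th moment operator $\mathcal{M}_\nu^{(t)}$ is itself diagonal on the standard basis of $(\mathbb{C}^{2^n})^{\otimes t} \otimes (\mathbb{C}^{2^n})^{\otimes t}$, mapping $|\vec x\rangle\langle\vec y| \mapsto c_\nu(\vec x,\vec y)\,|\vec x\rangle\langle\vec y|$. For the Haar-diagonal moment, the eigenvalue $c_{\mu_D}(\vec x,\vec y)$ equals $\mathbb 1[\{x_k\}=\{y_k\}\text{ as multisets}]$. The strategy is to show that $c_\nu$ agrees with $c_{\mu_D}$ up to exponentially small error on all multiset-distinct pairs, and then translate this into a diamond-norm bound.

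For one step of the walk, a direct computation gives
\begin{equation*}
c_1(\vec x,\vec y) = \mathbb{E}_{\theta,A}\!\left[\exp\!\left(\mathrm{i}\theta\sum_{k=1}^t\bigl((-1)^{A\cdot x_k}-(-1)^{A\cdot y_k}\bigr)\right)\right] = \mathbb{E}_A\!\left[\mathbb{E}_\theta\!\left[e^{\mathrm{i}\theta f_{\vec x,\vec y}(A)}\right]\right],
\end{equation*}
where $f_{\vec x,\vec y}(A)=\sum_k\!\bigl((-1)^{A\cdot x_k}-(-1)^{A\cdot y_k}\bigr)$ is an integer with $|f|\le 2t$. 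Because $\theta$ is drawn uniformly from $[0,2\pi)$ (or from $\{2\pi k/q\}_{k=0}^{q-1}$ with $q>2t$, in which case $|f|<q$ kills all nonzero integer values), the $\theta$-average collapses to $c_1(\vec x,\vec y)=\Pr_A[f_{\vec x,\vec y}(A)=0]$. When $\{x\}=\{y\}$ as multisets, $f\equiv 0$ and $c_1=1$; otherwise we need a quantitative gap.

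The heart of the argument is a Fourier-analytic bound. Writing $f_{\vec x,\vec y}(A)=\sum_{z}c_z(-1)^{A\cdot z}$ with integer Walsh coefficients $c_z=m_T(z)-m_S(z)$, the coefficient vector is nonzero (for distinct multisets) and has support of size at most $|T\cup S|\le 2t$. The Donoho--Stark uncertainty principle on $\mathbb{Z}_2^n$ then forces the spatial support of $f$ to be at least $2^n/(2t)$, so $\Pr_A[f(A)=0]\le 1-\tfrac{1}{2t}$. Since the walk consists of $m$ i.i.d.\ steps and the moment super-operator is multiplicative under composition in this basis, $c_m(\vec x,\vec y)=c_1(\vec x,\vec y)^m\le e^{-m/(2t)}$ for every multiset-distinct pair.

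Finally I convert this entrywise bound into a diamond-norm bound. The difference $\Delta=\mathcal{M}_\nu^{(t)}-\mathcal{M}_{\mu_D}^{(t)}$ is a real symmetric Schur multiplier whose only nonzero entries sit at pairs $(\vec x,\vec y)$ with $\{x\}\neq\{y\}$, each bounded by $e^{-m/(2t)}$. A crude bound (e.g.\ bounding the Choi trace norm by the number of nonzero entries times their maximum, using that there are at most $(2^n)^{2t}=e^{2nt\ln 2}$ pairs) yields $\|\Delta\|_\diamond\lesssim e^{2nt}\cdot e^{-m/(2t)}$, and the choice $m\ge 2t(2nt+\log(1/\varepsilon))$ makes this at most $\varepsilon$. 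The main technical obstacle is the Fourier uncertainty step; the diamond-norm reduction is routine, though a cleaner Schur-multiplier estimate exploiting Hermiticity (giving $\|\Delta\|_\diamond=\|D\|_{\mathrm{op}}$) could improve the constant in front of $nt^2$ if desired. A small check is that the discrete-$\theta$ case with $q>2t$ really annihilates every nonzero value of $f$, which is immediate from $|f|\le 2t<q$.
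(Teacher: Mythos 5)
Your proposal is correct and takes essentially the same approach as the paper: both diagonalize the moment operator in the computational product basis, reduce the spectral gap to bounding $\Pr_A[f_{\vec x,\vec y}(A)=0]$ for multiset-distinct $(\vec x,\vec y)$, apply Fourier analysis on $\mathbb{Z}_2^n$ (your Donoho--Stark invocation is the same Parseval estimate the paper carries out directly, and yields the identical $1-\tfrac{1}{2t}$ gap), and then amplify over $m$ i.i.d.\ steps before a crude dimension-dependent conversion to the diamond norm. The only cosmetic differences are that the paper packages the amplification as submultiplicativity of the essential norm $g_D(\nu,t)$ and invokes a generic $\|\cdot\|_\diamond\le 2^{nt}\|\cdot\|_\infty$ bound, whereas you go through the Choi trace norm; both incur the same $2^{O(nt)}$ overhead that the hypothesis on $m$ is designed to absorb.
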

We provide a proof of Theorem~\ref{thm:diagonal designs} in \Cref{app:design proofs}.
The proof of Theorem~\ref{thm:diagonal designs} is remarkably simple in comparison to the derivations of similar results for random quantum circuits~\cite{chen2024incompressibility,brandao2016local,haferkamp2022random}.
Additionally, the constants in Theorem~\ref{thm:diagonal designs} are unusually small: In stark contrast the constants in these results are north of $10^{13}$.
A similar result was obtained in Ref.~\cite{haah2024efficient} for the related random Pauli rotations $e^{\mathrm{i}\theta P}$ for a random $\theta\in(0,2\pi]$ and a random Pauli string $P$.

Theorem~\ref{thm:diagonal designs} almost directly implies the following corollary:
\begin{corollary}\label{cor:designs}
    For $m\geq 2t(2nt+\log(1/\varepsilon))$ the state ensemble defined by $|\Phi_{\boldsymbol{\theta}}^{\boldsymbol{A}}\rangle=U|+^n\rangle$ for U drawn from $\nu^{*m}$ (or $\nu_q^{*m}$ for $q>2t$) is a $\varepsilon+O(t^2/2^n)$-approximate state t-design.
\end{corollary}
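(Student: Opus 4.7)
The plan is to combine \Cref{thm:diagonal designs} with the (by now standard) fact that the ensemble of \emph{random binary phase states} $\frac{1}{\sqrt{2^n}}\sum_{x\in\{0,1\}^n}e^{\mathrm{i}\phi_x}\ket{x}$, with $\phi_x$ i.i.d.\ uniform on $[0,2\pi)$, is itself an $O(t^2/2^n)$-approximate state $t$-design (a result that traces back to~\cite{Nakata_2014,brakerski2019pseudo}). The key observation is that a Haar-random element $D\sim\mu_D(2^n)$ of the diagonal subgroup, applied to $\ket{+^n}$, produces exactly such a random phase state, since the diagonal entries of $D$ are independent uniform phases and $\ket{+^n}$ is a uniform superposition.

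First, unfolding the definition of the diamond norm in \Cref{thm:diagonal designs}, I apply the moment operators to the fixed input $(\ket{+^n}\!\bra{+^n})^{\otimes t}$ to obtain
\[
\left\lVert \mathbb{E}_{U\sim\nu^{*m}}\bigl(U\ket{+^n}\!\bra{+^n} U^\dagger\bigr)^{\otimes t} \;-\; \mathbb{E}_{D\sim\mu_D(2^n)}\bigl(D\ket{+^n}\!\bra{+^n} D^\dagger\bigr)^{\otimes t} \right\rVert_1 \leq \varepsilon.
\]
The first term is exactly the $t$-fold HPS moment state $\mathbb{E}_{\boldsymbol{A},\boldsymbol{\theta}}\,\proj{\Phi^{\boldsymbol A}_{\boldsymbol\theta}}^{\otimes t}$ (or the discretized analogue for $\nu_q^{*m}$ when $q>2t$, using the second half of \Cref{thm:diagonal designs}). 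The second term equals the random-phase-state moment by the observation above.

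Next, I would show that this random-phase-state moment is $O(t^2/2^n)$-close in trace distance to the Haar state moment $\mathbb{E}_{\ket\psi\sim\mathrm{Haar}}\proj{\psi}^{\otimes t}=\Pi_{\mathrm{sym}}^{(t)}/\binom{2^n+t-1}{t}$. A direct Fourier calculation yields
\[
\mathbb{E}_{\phi}\bigl(\proj{\psi_\phi}\bigr)^{\otimes t} = \frac{1}{2^{nt}} \sum_{\vec x\in(\{0,1\}^n)^t}\,\sum_{\pi\in S_t:\,\pi(\vec x)=\vec x\text{ as multisets}}\ket{\vec x}\!\bra{\pi(\vec x)},
\]
since $\mathbb{E}[e^{\mathrm{i}(\sum_i\phi_{x_i}-\sum_j\phi_{y_j})}]$ is nonzero only when $\vec x$ and $\vec y$ agree as multisets. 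Splitting the outer sum into the \emph{collision-free} tuples (all entries distinct, stabilizer equal to all of $S_t$) and the \emph{colliding} ones: the collision-free part reproduces $\Pi_{\mathrm{sym}}^{(t)}$ up to an overall renormalization that matches the Haar moment up to relative error $O(t^2/2^n)$, while the colliding part has trace norm bounded by $\binom{t}{2}/2^n$ via a union bound over the $\binom{t}{2}$ possible collision pairs $i<j$, each occurring with probability $2^{-n}$.

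A final application of the triangle inequality yields the claimed bound $\varepsilon + O(t^2/2^n)$. The main obstacle is conceptual rather than technical: convincing oneself that the diagonal Haar measure on $\ket{+^n}$ really is the random-phase ensemble, and then executing the collision accounting cleanly; the latter is standard but must be done with some care to expose the $O(t^2/2^n)$ dependence on $t$ and $n$. No new ideas beyond \Cref{thm:diagonal designs} and the known design bound for binary phase states are needed.
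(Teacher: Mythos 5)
Your proposal follows the paper's argument exactly: apply the diagonal $t$-design bound from \Cref{thm:diagonal designs} to the fixed input $(\proj{+^n})^{\otimes t}$, observe that $D\ket{+^n}$ for Haar-diagonal $D$ is precisely a random phase state and hence an $O(t^2/2^n)$-approximate state $t$-design (the paper simply cites~\cite{brakerski2019pseudo} for this rather than re-deriving it), and close with the triangle inequality. Your sketch of the phase-state design bound contains a couple of harmless slips --- the stabilizer of a collision-free tuple is trivial rather than all of $S_t$, and the inner sum should range over tuples $\vec y$ equal to $\vec x$ as multisets, not over permutations $\pi$ satisfying the vacuous condition ``$\pi(\vec x)=\vec x$ as multisets'' --- but neither affects the collision accounting or the final $\varepsilon+O(t^2/2^n)$ bound.
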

As a consequence no algorithm with access to $t$ copies can distinguish the states $|\phi_{\boldsymbol{\theta}}^{\boldsymbol{A}}\rangle$ from Haar random. 
In particular, this rules out a large class of natural attacks which make use of a small number of samples. 
Prominent examples in classical cryptanalysis are linear attacks ($2$-wise independence rules this out), and differential attacks ($t$-wise independence rules out $\log_2(t)$ differential attacks).  
Moreover, the fact that HPS with sufficiently many terms can generate arbitrary state $t$ designs makes it seem unlikely even that there is a distinguishing algorithm using just a few more than $t$ samples. This would mean that there is a sharp transition in the complexity of distinguishing HPS states from uniform. 
Thus, the $t$-design property gives evidence for the security of the $\textsf{HPS}$ assumption.

\begin{proof}[Proof of Corollary~\ref{cor:designs}]
    $U|+^n\rangle=\sum_{x\in \{0,1\}^n} e^{\mathrm{i}\phi_x}|x\rangle$ for a uniformly random diagonal $U$ form a $O(t^2/2^n)$-approximate state design~\cite{brakerski2019pseudo}.
    The statement then follows directly from the triangle inequality:
    \begin{align*}
       & \left|\left|\underset{U\sim\nu^{*m}}{\mathbb{E}} (U|+^n\rangle\langle+^n|U^{\dagger})^{\otimes t}-\underset{U\sim\mu(d)}{\mathbb{E}} (U|+^n\rangle\langle+^n|U^{\dagger})^{\otimes t}\right|\right|_{1}\\
        &\leq \left|\left|\underset{U\sim\nu^{*m}}{\mathbb{E}} (U|+^n\rangle\langle+^n|U^{\dagger})^{\otimes t}-\underset{U\sim\mu_D(d)}{\mathbb{E}} (U|+^n\rangle\langle+^n|U^{\dagger})^{\otimes t}\right|\right|_{1}\\
        &+\left|\left|\underset{U\sim\mu(d)}{\mathbb{E}} (U|+^n\rangle\langle+^n|U^{\dagger})^{\otimes t}-\underset{U\sim\mu_D(d)}{\mathbb{E}} (U|+^n\rangle\langle+^n|U^{\dagger})^{\otimes t}\right|\right|_{1}\\
        &\leq \varepsilon+O(t^2/2^{n}).
    \end{align*}
\end{proof}

As a consequence we can also show that HPS contains many almost orthogonal states, yielding additional evidence for the $\textsf{HPS}$ assumption:
\begin{corollary}\label{lemma: many orthogonal states}
    Let $m=100nt$, $\delta = 1-2^{-n/8}$ and $t\leq 2^{n/2}$. 
    For any fixed state $|\psi\rangle$, we have with probability $1-2^{-\Omega(nt)}$ over the matrix $\boldsymbol{A}$ that 
    \begin{equation}
\mathrm{Pr}_{\boldsymbol{\theta}}\left[|\langle \psi|\exp\left(\sum_{i=1}^m\mathrm{i}\theta_i \bigotimes_{j=1}^{n}Z^{\boldsymbol{A}_{ij}}\right)|+^n\rangle|^2\geq 1-\delta\right]\leq 2^{-\Omega(nt)}.
    \end{equation}
\end{corollary}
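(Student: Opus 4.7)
The plan is to compute the joint $2t$-th moment $\mathbb{E}_{\boldsymbol A, \boldsymbol \theta}[|\langle\psi|U_{\boldsymbol A, \boldsymbol \theta}|+^n\rangle|^{2t}]$ and then apply Markov's inequality twice. For the moment, I would expand $|\langle\psi|U|+^n\rangle|^{2t}$ over ordered index tuples $(\boldsymbol x, \boldsymbol y) \in (\{0,1\}^n)^t \times (\{0,1\}^n)^t$. The per-coordinate expectation $\mathbb{E}_{\theta_j}[e^{i c_j \theta_j}]$ vanishes whenever the integer $c_j = \sum_i ((-1)^{\boldsymbol A_j \cdot x_i} - (-1)^{\boldsymbol A_j \cdot y_i})$ is nonzero (using $q > 2t$), and the row-independent expectation over $\boldsymbol A$ introduces a factor $p(\boldsymbol x, \boldsymbol y)^m$ with $p(\boldsymbol x, \boldsymbol y) = \Pr_{\boldsymbol a \sim \{0,1\}^n}[\sum_i((-1)^{\boldsymbol a \cdot x_i} - (-1)^{\boldsymbol a \cdot y_i}) = 0]$. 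Pairs that are permutations of each other satisfy $p = 1$ and yield the Haar-diagonal moment $\sum_\mu (t!/\prod_z m_z!)^2 \prod_z |\psi_z|^{2m_z}/d^t \leq t!/d^t$ by a direct multiset count (using $\prod_z m_z! \geq 1$ and the multinomial theorem). Since $t \leq 2^{n/2}$ and $d = 2^n$, this is at most $2^{-nt/2}$. Non-permutation pairs have $p < 1$, and the $m = 100nt$ factor drives $p^m$ to be exponentially small, so together with $\sum_{\boldsymbol x, \boldsymbol y}\prod_i|\psi_{x_i}\psi_{y_i}| \leq d^t$ (Cauchy--Schwarz) this contribution is $2^{-\Omega(nt)}$. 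The joint moment is therefore at most $2 \cdot 2^{-nt/2}$.

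Markov's inequality applied to the nonnegative random variable $|\langle\psi|U|+^n\rangle|^{2t}$ with threshold $(1-\delta)^t = 2^{-nt/8}$ then gives
\begin{align*}
\Pr_{\boldsymbol A, \boldsymbol \theta}\bigl[|\langle\psi|U|+^n\rangle|^2 \geq 1-\delta\bigr] \leq 2 \cdot 2^{-3nt/8}.
\end{align*}
Writing $p(\boldsymbol A) := \Pr_{\boldsymbol \theta}[|\langle\psi|U|+^n\rangle|^2 \geq 1-\delta]$ so that $\mathbb{E}_{\boldsymbol A}[p(\boldsymbol A)]$ equals this joint probability, a second application of Markov's inequality with threshold $2^{-3nt/16}$ yields $\Pr_{\boldsymbol A}[p(\boldsymbol A) \geq 2^{-3nt/16}] \leq 2 \cdot 2^{-3nt/16}$. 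Both bounds are of the required form $2^{-\Omega(nt)}$.

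The main technical obstacle lies in bounding the non-permutation contribution for the full range $t \leq 2^{n/2}$. Certain structured tuples---for instance, those indexing cosets of a small-dimensional affine subspace---can push $p(\boldsymbol x, \boldsymbol y)$ close to $1$, so a plain worst-case estimate on $p$ is insufficient. The resolution I expect is a finer combinatorial count: pairs with $p \geq 1 - 2^{-k}$ depend on at most $k$ effective Boolean directions of $\boldsymbol a$, so their weighted total is small enough that the $(1-2^{-k})^m$ decay coming from $m = 100nt$ keeps the overall non-permutation contribution at $2^{-\Omega(nt)}$. Once this estimate is in place, the rest of the argument (the two Markov applications) is routine.
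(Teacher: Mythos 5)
Your strategy---bound the joint $2t$-th moment over both $\boldsymbol A$ and $\boldsymbol\theta$, then apply Markov twice---is the same skeleton the paper uses, and your computation of the permutation (Haar-diagonal) contribution is correct, in fact slightly sharper than the paper's: the direct multiset count gives $\sum_\mu (t!/\prod_z m_z!)^2\prod_z|\psi_z|^{2m_z}/d^t \le t!/d^t$, whereas the paper first applies Cauchy--Schwarz and idempotency of the Haar-diagonal moment operator to reduce to $\sqrt{\mathbb E_{\mu_D}|\langle +^n|U|+^n\rangle|^{2t}}\le\sqrt{t!/2^{nt}}$, which is a weaker bound. Both Markov steps are also routine and match the paper.

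The genuine gap is exactly the one you flag, and it is the load-bearing part of the argument. Your factor $p(\boldsymbol x,\boldsymbol y)^m$ is nothing other than the spectral-gap quantity the paper controls: Lemma~\ref{lemma:gapbound} in the appendix shows, via Parseval's identity on the Boolean function $f^{\boldsymbol x,\boldsymbol x'}(a)=\sum_j(-1)^{\langle a,x_j\rangle}-(-1)^{\langle a,x'_j\rangle}$, that $p(\boldsymbol x,\boldsymbol y)\le 1-\tfrac{1}{2r}\le 1-\tfrac{1}{2t}$ for every non-permutation pair, where $r$ is the number of non-shared coordinates after cancellation. This \emph{is} the paper's key lemma, and it is what you are implicitly reproving when you observe $p<1$. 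The paper then deduces $|\mathbb E_{\nu^{*m}}[\cdot]-\mathbb E_{\mu_D}[\cdot]|\le g_D(\nu,t)^m\le(1-\tfrac{1}{2t})^m$ directly from the operator-norm characterization of $g_D$, rather than doing a term-by-term combinatorial count. You should adopt this uniform spectral-gap bound rather than attempting a ``finer count of high-$p$ pairs''; the latter is unnecessary once $g_D$ is controlled.

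I also want to flag that your proposed heuristic for the finer count is pointing in the wrong direction. You suggest that pairs with $p\ge 1-2^{-k}$ ``depend on at most $k$ effective Boolean directions of $\boldsymbol a$,'' but Parseval gives the opposite inequality: $p\le 1-\tfrac{1}{2r}$ forces $r\ge 2^{k-1}$, so high $p$ requires \emph{many} non-shared coordinates, not a low-dimensional structure. The pairs you should be worried about are not coset-like collections supported on a small subspace but rather tuples where $\boldsymbol x$ and $\boldsymbol y$ overlap almost completely as multisets while not being permutations of one another. If you want to push beyond the uniform bound $(1-\tfrac{1}{2t})^m$, you would need to show that the weight $\prod_i|\psi_{x_i}\psi_{y_i}|/d^t$ placed on pairs with $p$ close to $1-\tfrac{1}{2t}$ is small, but this depends on $\psi$ and is not something the paper attempts; the paper simply uses the uniform gap bound and then asks the design-order $t$ to be large enough relative to $m$. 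Note finally that neither your crude estimate $\max p^m\le e^{-m/(2t)}=e^{-50n}$ nor the paper's invocation of Theorem~\ref{thm:diagonal designs} delivers a $2^{-\Omega(nt)}$ non-permutation contribution for $m=100nt$ once $t$ is a sufficiently large constant multiple of $n$, so the quantitative bookkeeping here deserves a second look regardless of which route you take; the structure of your argument is, however, the right one.
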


We defer the proofs of  
Theorem~\ref{thm:diagonal designs} and Corollary~\ref{lemma: many orthogonal states} to \cref{app:design proofs}.

\subsection{Algorithms for Learning Hamiltonian Phase States} 
\label{sssec:solving hps}

Recall that our (search) $\HPS$ assumption can be thought of as a state discrimination task. The goal is to recover the architecture $\vec A \in \Z_2^{m \times n}$ and the set of angles $\boldsymbol{\theta} \in \Theta_q^m$ given many copies of a random Hamiltonian phase state from the ensemble
$$
\mathcal{E} =\left\{\ket{\Phi_{\boldsymbol{\theta}}^{\vec A}} = \exp\left(\mathrm{i} \, \sum_{i=1}^m \theta_i \bigotimes_{j=1}^n \mathsf{Z}^{\vec A_{ij}}  \right)  \ket{+^n} \right\}_{\vec A \in \Z_2^{m \times n}, \,\boldsymbol{\theta} = (\theta_1, \dots, \theta_m) \in \Theta_q^m}.
$$
In this section, we consider various learning algorithms for the (search) HPS probem. We observe that the HPS problem does in fact have polynomial quantum sample complexity, and can thus be solved information-theoretically. However, as we also observe, all known learning algorithms have exponential time complexity, which suggests that the HSP problem cannot be solved efficiently. 

We distinguish between the \emph{private-key} and \emph{public-key} setting: the former is essentially the learning task from \Cref{def:search-HPS}, whereas in the latter we further assume that the learner also has access to the architecture matrix $\vec A \in \Z_2^{m \times n}$. 
We provide evidence that the learning tasks remains hard even if we reveal additional information about $\vec A \in \Z_2^{m \times n}$ and the goal is simply to guess the angles $\boldsymbol{\theta}$.

\paragraph{Sample complexity of HPS and hypothesis selection}
While we believe that HPS is a computationally hard problem, it can be solved information-theoretically with only polynomially many samples. 
In full generality, the problem of finding a fixed state $\rho_j$ among many hypothesis states $\rho_1,\ldots,\rho_M$ is called quantum hypothesis testing. 
Currently, the best known general algorithm is threshold search as described in~\cite[Theorem~1.5]{buadescu2021improved} requires $n\log^2(M)$ copies improving over the bound from Ref.~\cite{aaronson2018shadow}.
For the HPS problem this implies an upper bound on the sample complexity of $O(n\log^2(q^m2^{nm}))=O(n^3m^2\log(q))$.
As the fidelities for pure states are PSD observables of rank $1$, we can also use the shadow tomography protocol of Ref.~\cite{huang2020predicting}.
Given a secret state $|\Phi^{\boldsymbol{A}}_{\boldsymbol{\theta}}\rangle$ allows us to estimate the fidelities of all the $M=q^{m}2^{nm}$ phase states up to an error of $\varepsilon$ from $O(\log(M)/\varepsilon^2)=O(mn\log(q)/\varepsilon^2)$ samples.
Then, a solver can simply list all estimated fidelities and pick the state with the largest overlap up to an error of $\varepsilon$.

We expect these bounds to be tight in the regime where $m\leq O(n^{\log(q)})$.
For $m\to \infty$ better bounds are available at least for $q=2^d$.
In this case, the HPS instance generated b unitaries in the $d$th level of the Clifford hierarchy and it was proven in Ref.~\cite[Theorem~15]{arunachalam2023optimal} that for any state of the form 
\begin{equation}
\exp\left(\mathrm{i}\sum_{y\in\{0,1\}^n} a_y \bigotimes_{j=1}^nZ^{y_j}\right)|+^n\rangle
\end{equation}
with $a_y\in \mathbb{Z}$ a circuit description can be learned with $O(n^d)$ copies using only measurements in the standard basis.

\paragraph{Learning algorithms for HPS with a public architecture.}

In the special case when the architecture is public, our $\HPS$ assumption does in fact admit an optimal\footnote{Here, we mean an algorithm that achieves the optimal success probability for a given number of copies.} but nevertheless exponential-time learning algorithm.

We consider the following state discrimination task, where the goal is to recover the set of angles $\boldsymbol{\theta}$ given many copies from the ensemble
$$
\mathcal{E}_{\vec A} =\left\{\ket{\Phi_{\boldsymbol{\theta}}^{\vec A}} = \exp\left(\mathrm{i} \, \sum_{i=1}^m \theta_i \bigotimes_{j=1}^n \mathsf{Z}^{\vec A_{ij}}  \right)  \ket{+^n} \right\}_{\boldsymbol{\theta} = (\theta_1, \dots, \theta_m) \in \Theta_q^m}
$$
where the matrix $\vec A \in \Z_2^{m \times n}$ is a random but fixed \emph{architecture} which is known to the learner.
This fits exactly into the framework of the pretty good measurement (PGM)~\cite{barnum2000reversingquantumdynamicsnearoptimal,montanaro2019pretty}.
The ensemble $\mathcal{E}$ now turns out to be \emph{geometrically uniform} because it can be written as 
$\mathcal{E}_{\vec A} =\left\{  U_{\boldsymbol{\theta}}^{\vec A} \ket{+^n} \right\}_{\boldsymbol{\theta} = (\theta_1, \dots, \theta_m)}$
where $\{U_{\boldsymbol{\theta}}^{\vec A}\}_{\boldsymbol{\theta}}$ is an Abelian group of matrices. Eldar and Forney~\cite{eldar2000quantumdetectionsquarerootmeasurement} showed that the PGM is optimal for all geometrically uniform ensembles, which implies that it is also optimal for our variant of the HPS problem. Nevertheless, despite the optimality, the best known algorithm for implementing pretty good measurements has exponential-time complexity in the size of the ensemble~{\cite{Gily_n_2022}. Consequently, we believe that the HPS problem remains computationally intractable, even if the architecture is public.

\subsection{Discussion of Full Quantumness}
\label{ssec:fully quantumness of hps}

We now provide some evidence that our HPS assumption is \emph{plausibly} fully quantum.
To begin, note that it is quite challenging (in general) to argue that any concrete assumption (one that names a specific problem as not being solvable in polynomial time) does not imply one-way functions.  
Indeed, if one-way functions exist by other means, then every concrete assumption implies a one-way function independent of whether it is true.  

One way that researchers get around this is to take some ``idealized'' version of the assumption, and show that relative to some oracle, the idealized version exists while one-way functions do not.  
The oracle separations shown in \cite{kretschmer2021quantum,kretschmer2023quantum} already prove this for us.  Namely, if we assume that our circuits actually output Haar random states, then with an additional $\mathsf{PSPACE}$ oracle, $\mathsf{BQP} = \mathsf{QMA}$, but our idealized Haar random states are genuinely Haar random.  
However, using this as evidence that our assumption is fully-quantum is barely better than simply stating it to be so, since our argument essentially reduced to ``nothing that claims to output Haar random-looking states can imply one-way functions''.
However, the fact that our ensembles are actually $t$-designs for a very high $t$ should make one-way functions very hard to build from these states in practice.  
Indeed, examining the proof of \cite{kretschmer2021quantum}, we can see that if we instead take a family of $t$-designs instead of Haar random unitaries, any $\mathsf{BQP}$ algorithm that calls the oracle fewer than $t$ times will have a similar concentration as the Haar measure, and thus will be simulatable by the $\mathsf{PSPACE}$ oracle (without access to the $t$-design).
This implies that any construction of one-way functions that directly uses our assumption must make many (more than $t = \sqrt{m/n}$) calls to our oracle, implying that any such construction cannot be simple.

\section{Applications}\label{sec:applications}

\subsection{One-Way State Generators}

The search variant of $\mathsf{HPS}$ allows for a practical and efficient implementation of one-way state generators (\cref{def:OWSG}). The following observation is immediate.

\begin{theorem}[One-Way State Generators from HPS]
Let $n \in \N$ be the security parameter, and let $m=\poly(n)$ and $q=\poly(n)$. Then, under the (search) $\mathsf{HPS}_{n,m,q,\chi}$ assumption, where $\chi$ is the uniform distribution, there exist one-way state generators.
\end{theorem}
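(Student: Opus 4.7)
The plan is to observe that the construction of the one-way state generator is essentially read off from the definition of the search $\mathsf{HPS}$ assumption, and the main work is to verify that each algorithm is efficient and that the security requirement matches verbatim. Concretely, I would define $\mathsf{KeyGen}(1^n)$ to sample $\vec A \sim \chi$ and $\boldsymbol{\theta} \sim \Theta_q^m$ and output the classical key $k = (\vec A, \boldsymbol{\theta})$; $\mathsf{StateGen}(k)$ to output $\ket{\Phi_{\boldsymbol{\theta}}^{\vec A}}$; and $\mathsf{Ver}(k',\sigma)$ to apply the projective measurement $\{\proj{\Phi_{\boldsymbol{\theta}'}^{\vec A'}}, I - \proj{\Phi_{\boldsymbol{\theta}'}^{\vec A'}}\}$ onto $\sigma$ and output $\top$ iff the first outcome occurs.

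For efficiency, $\mathsf{KeyGen}$ clearly runs in time $\poly(n,m,\log q)$. For $\mathsf{StateGen}$, I would invoke the physical implementation discussed in the paper: the state $\ket{\Phi_{\boldsymbol{\theta}}^{\vec A}}$ can be prepared using a single layer of Hadamards followed by $\lceil m/n \rceil$ alternating layers of single-qubit $Z$-rotations and $\mathsf{CNOT}$ circuits, which amounts to $O(m)$ rotations and $O((m/n)\cdot n^2)$ $\mathsf{CNOT}$ gates. For $\mathsf{Ver}$, I would implement the projection onto $\ket{\Phi_{\boldsymbol{\theta}'}^{\vec A'}}$ by first applying the inverse of the preparation circuit for $\ket{\Phi_{\boldsymbol{\theta}'}^{\vec A'}}$ to $\sigma$ (which only involves inverting Hadamards, $Z$-rotations, and $\mathsf{CNOT}$s) and then measuring all qubits in the computational basis, accepting iff the outcome is $0^n$; this is a coherent realization of the two-outcome measurement $\{\proj{\Phi_{\boldsymbol{\theta}'}^{\vec A'}}, I - \proj{\Phi_{\boldsymbol{\theta}'}^{\vec A'}}\}$.

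Correctness is immediate: for any honestly generated key $k = (\vec A,\boldsymbol{\theta})$, the measurement applied to $\ket{\Phi_{\boldsymbol{\theta}}^{\vec A}}$ itself yields the first outcome with probability $\lvert\braket{\Phi_{\boldsymbol{\theta}}^{\vec A}|\Phi_{\boldsymbol{\theta}}^{\vec A}}\rvert^2 = 1$, so $\mathsf{Ver}$ accepts with probability $1$ (well within $1 - \negl(n)$).

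For security, I would observe that the security experiment in \Cref{def:OWSG} instantiated with the above construction is literally the search $\mathsf{HPS}_{n,m,q,\chi}$ game of \Cref{def:search-HPS}: the challenger samples $(\vec A,\boldsymbol{\theta})$ exactly as prescribed by $\chi$ and $\Theta_q^m$, hands the adversary $t$ copies of $\ket{\Phi_{\boldsymbol{\theta}}^{\vec A}}$, and accepts iff the adversary's output $(\vec A',\boldsymbol{\theta}')$ passes the same projective-measurement test. Hence any QPT adversary $\mathcal{A}$ that violates OWSG security with non-negligible advantage immediately yields a solver for search $\mathsf{HPS}_{n,m,q,\chi}$ with the same advantage, contradicting the assumption. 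There is essentially no technical obstacle here beyond checking the definitions line up; the only substantive content is the efficient preparation of $\ket{\Phi_{\boldsymbol{\theta}}^{\vec A}}$ and its inverse, which the paper has already justified in the physical-implementations section.
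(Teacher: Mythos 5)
Your proposal is correct and takes the same approach as the paper, which simply states that the claim is immediate from the definitions (the search $\mathsf{HPS}$ assumption is literally the OWSG security game for this construction). You have usefully spelled out the efficiency and correctness checks that the paper leaves implicit.
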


\subsection{Pseudorandom State Generators}

In a similar vein, the decision variant of $\mathsf{HPS}$ gives rise to an efficient implementaion of pseudo-random states (\cref{def:prsg}). The following observation is immediate.

\begin{theorem}[Pseudorandom State Generators from HPS]
Let $n \in \N$ be the security parameter, and let $m=\poly(n)$ and $q=\poly(n)$. Then, under the (decision) $\mathsf{HPS}_{n,m,q,\chi}$ assumption, where $\chi$ is the uniform distribution, there exist pseudorandom state generators.
\end{theorem}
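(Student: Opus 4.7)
The proof is essentially immediate from the definitions and the physical implementation discussion earlier in the paper, so my plan is mostly a bookkeeping exercise matching the HPS ensemble to the three requirements of \Cref{def:prsg}: efficient key generation, efficient state preparation, and computational indistinguishability from Haar.

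I would begin by defining the candidate construction. On input $1^n$, the algorithm $\mathsf{KeyGen}$ samples $\vec A \sim \chi = \mathrm{Unif}(\Z_2^{m \times n})$ and $\boldsymbol{\theta} \sim \Theta_q^m$, and outputs the classical key $k = (\vec A, \boldsymbol{\theta})$. Since $m = \poly(n)$ and $q = \poly(n)$, the key has polynomial length $m \cdot n + m \cdot \lceil \log q \rceil = \poly(n)$ and can be sampled in polynomial time. The algorithm $\mathsf{StateGen}(k)$ then outputs $\ket{\Phi_{\boldsymbol{\theta}}^{\vec A}}$ as defined in~\eqref{eq:HPS-state}.

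The next step is to verify efficiency of $\mathsf{StateGen}$. Here I would invoke the physical implementation analysis from the paper: after a layer of Hadamards, one decomposes $\vec A$ into at most $\lceil m/n \rceil$ blocks of $n$ rows each and implements the IQP unitary via $\lceil m/n \rceil$ alternating layers of single-qubit $Z$-rotations (one per row) and $\mathsf{CNOT}$ circuits of depth $O(n)$ implementing the corresponding $\mathrm{GL}(n, \Z_2)$ row transformations (handling the full-rank case as stated, and the degenerate case by the block-splitting remark). This gives a $\poly(n)$-size circuit, so $\mathsf{StateGen}$ is QPT.

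For the security property, let $\mathcal{A}$ be any QPT distinguisher and let $t = \poly(n)$. By construction, the distribution of $\ket{\phi_k}^{\otimes t}$ output by $\mathsf{StateGen}$ on a random key is exactly the distribution $\{\ket{\Phi_{\boldsymbol{\theta}}^{\vec A}}^{\otimes t}\}$ appearing in \Cref{def:decision-HPS} with $\chi$ the uniform distribution. Hence
\[
\left| \Pr_{k \gets \mathsf{KeyGen}(1^n)}\!\left[\mathcal{A}(\ket{\phi_k}^{\otimes t}) = 1\right] - \Pr_{\ket{\Psi} \sim \mathrm{Haar}(2^n)}\!\left[\mathcal{A}(\ket{\Psi}^{\otimes t}) = 1\right] \right|
\]
is bounded by the advantage of $\mathcal{A}$ against (decision) $\mathsf{HPS}_{n,m,q,\chi}$, which is negligible by assumption.

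There is no real obstacle here; the only subtlety worth flagging is the quantifier ordering in the security games. \Cref{def:decision-HPS} allows any $t = \poly(n)$ while \Cref{def:prsg} quantifies over all polynomial $t$ and all QPT $\mathcal{A}$, and these match up directly. A minor technical remark: the decision HPS assumption is stated for $\chi$ being an arbitrary distribution over $\Z_2^{m \times n}$ (typically uniform), so the theorem statement should be read with $\chi$ equal to the uniform distribution, which also matches the parameter regime for which the architecture re-randomization from \Cref{lem:architecture reduction} applies. With these choices the construction meets \Cref{def:prsg} verbatim.
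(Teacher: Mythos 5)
Your proposal is correct and follows exactly the paper's (unstated) reasoning: the paper marks this theorem as "immediate" because the decision $\mathsf{HPS}_{n,m,q,\chi}$ assumption is, by inspection, a verbatim instance of the PRSG security game in \Cref{def:prsg} for the keyed ensemble $\{\ket{\Phi_{\boldsymbol\theta}^{\vec A}}\}$, and your write-up simply makes the key identification, the efficiency of $\mathsf{StateGen}$ via the layered $Z$-rotation/CNOT decomposition, and the quantifier matching explicit. No gap.
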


\subsection{Quantum Trapdoor Functions}\label{sec:QTF}
As a stepping stone towards building public-key cryptography from the HPS assumption, we show how to construct quantum trapdoor functions~\cite{coladangelo2023quantumtrapdoorfunctionsclassical}. These are one-way state generators that have a trapdoor that can be used to invert them easily. 
\begin{definition}[Quantum Trapdoor Function] Let $n \in \N$ denote the security parameter.\\ A quantum trapdoor function is a tuple of QPT algorithms $(\mathsf{GenTrap},\mathsf{GenEval}, \mathsf{Eval}, \mathsf{Invert})$:
\begin{itemize}
    \item $\mathsf{GenTrap}(1^n) \rightarrow \mathsf{td}$: on input $1^n$, it outputs a classical trapdoor $\mathsf{td}$.

    \item $\mathsf{GenEval}(\mathsf{td}) \rightarrow \ket{\xi_{\mathsf{td}}}$: on input $\mathsf{td}$, it outputs a quantum evaluation state $\ket{\xi_{\mathsf{td}}}$.

    \item $\mathsf{Eval}(\ket{\xi_{\mathsf{td}}},x) \rightarrow \ket{\psi_x}$: on input $\ket{\xi_{\mathsf{td}}}$ and $x \in \bit^n$, it outputs a state $\ket{\psi_x}$.

    \item $\mathsf{Invert}(\mathsf{td}, \ket{\psi_x}) \rightarrow x'$: on input $\mathsf{td}$ and $\ket{\psi_x}$, it outputs a string $x' \in \bit^n$.
\end{itemize}
We require that the algorithms satisfy the following properties:
\begin{itemize}
    \item[(a)] \textbf{Hardness of inversion:} For any $n \in \N$, for any QPT algorithm $\mathcal{A}$ and for any number of copies $m=\poly(n)$, it holds that
    $$
    \Pr\left[x \leftarrow \mathcal{A}(\ket{\psi_x},\ket{\xi_{\mathsf{td}}}^{\otimes m}) \, : \, \substack{
    x \sim \bit^n\\
\mathsf{td} \leftarrow \mathsf{GenTrap}(1^n)\\
\ket{\xi_{\mathsf{td}}} \leftarrow \mathsf{GenEval}(\mathsf{td})\\
\ket{\psi_x} \leftarrow \mathsf{Eval}(\ket{\xi_{\mathsf{td}}},x)
    }\right] \leq \negl(n).
    $$

\item[(b)] \textbf{Trapdoor inversion:} For any $n \in \N$ and for any $x \in \bit^n$, it holds that
$$
    \Pr\left[x \leftarrow \mathsf{Invert}(\mathsf{td}, \ket{\psi_x}) \, : \, \substack{
\mathsf{td} \leftarrow \mathsf{GenTrap}(1^n)\\
\ket{\xi_{\mathsf{td}}} \leftarrow \mathsf{GenEval}(\mathsf{td})\\
\ket{\psi_x} \leftarrow \mathsf{Eval}(\ket{\xi_{\mathsf{td}}},x)
    }\right] =1.
$$
\end{itemize}

\end{definition}

\paragraph{Quantum trapdoor functions from HPS.}

Let us now construct QTFs from HPS using the following construction.

\begin{construction}\label{cons:qtf}
Let $n \in \N$ be the security parameter and let $m=\poly(n)$ and $q=\poly(n)$. Then, we define the QTF construction $(\mathsf{GenTrap},\mathsf{GenEval}, \mathsf{Eval}, \mathsf{Invert})$ as follows:
\begin{itemize}
    \item $\mathsf{GenTrap}(1^n) \rightarrow \mathsf{td}$: on input $1^n$, it outputs a trapdoor $\mathsf{td} = (\boldsymbol{\theta},\vec A)$, where
    $$
    \boldsymbol{\theta} \sim \Theta_q^m \quad\quad\text{and}\quad\quad \vec A \sim  \Z_2^{m \times n}.
    $$

    \item $\mathsf{GenEval}(\mathsf{td}) \rightarrow \ket{\xi_{\mathsf{td}}}$: on input $\mathsf{td}$, it outputs the quantum evaluation state $$
\ket{\xi_{\mathsf{td}}} = \exp\left(\mathrm{i} \, \sum_{i=1}^m \theta_i \bigotimes_{j=1}^n \mathsf{Z}^{\vec A_{ij}}  \right)  \ket{+^n}.
    $$

    \item $\mathsf{Eval}(\ket{\xi_{\mathsf{td}}},x) \rightarrow \ket{\psi_x}$: on input $\ket{\xi_{\mathsf{td}}}$ and $x \in \bit^n$, it outputs the state
$$
\ket{\psi_x} = (\mathsf{Z}^{x_1} \otimes  \dots \otimes \mathsf{Z}^{x_n})\ket{\xi_{\mathsf{td}}}.
$$

    \item $\mathsf{Invert}(\mathsf{td}, \ket{\psi_x}) \rightarrow x'$: on input $\mathsf{td}$ and $\ket{\psi_x}$, it 
    applies $H^{\otimes n} U_{\boldsymbol{\theta}}^{\vec A, \dag}$ to $\ket{\psi_x}$, where
    $$
    U_{\boldsymbol{\theta}}^{\vec A} =\exp\left(\mathrm{i} \, \sum_{i=1}^m \theta_i \bigotimes_{j=1}^n \mathsf{Z}^{\vec A_{ij}}  \right). 
    $$
   Then, it measures all qubits in the computational basis and outputs the result $x'$.
\end{itemize}
\end{construction}

\begin{theorem}
Let $n \in \N$ be the security parameter, and let $m=\poly(n)$ and $q=\poly(n)$. Then, under the (decisional) $\mathsf{HPS}_{n,m,q,\chi}$ assumption, where $\chi$ is the uniform distribution, \Cref{cons:qtf} yields a quantum trapdoor function. 
\end{theorem}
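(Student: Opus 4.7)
\emph{Correctness} is a one-line computation. Since $U_{\boldsymbol\theta}^{\vec A}$ is diagonal in the computational basis, it commutes with every Pauli $Z^x := Z^{x_1}\otimes\cdots\otimes Z^{x_n}$. Hence
\[
H^{\otimes n}\,U_{\boldsymbol\theta}^{\vec A,\dag}\,|\psi_x\rangle
\;=\;H^{\otimes n}\,U_{\boldsymbol\theta}^{\vec A,\dag}Z^x U_{\boldsymbol\theta}^{\vec A}|+^n\rangle
\;=\;H^{\otimes n}Z^x|+^n\rangle\;=\;|x\rangle,
\]
so $\mathsf{Invert}$ outputs $x$ with certainty.

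\emph{Security} is done via a two-stage hybrid argument. First, I introduce a hybrid $\mathsf{Hyb}$ that replaces the HPS key $|\xi_{\mathsf{td}}\rangle$ (used both to form the evaluation copies and the ciphertext $|\psi_x\rangle=Z^x|\xi_{\mathsf{td}}\rangle$) by a single Haar-random $n$-qubit state $|\phi\rangle$, so that the adversary sees $(Z^x|\phi\rangle,|\phi\rangle^{\otimes m})$ for uniform $x$ and Haar $|\phi\rangle$. Indistinguishability of $\mathsf{Hyb}$ from the real inversion game reduces to decisional $\mathsf{HPS}_{n,m,q,\chi}$ via a straightforward reduction $\mathcal B$: given $(m{+}1)$-copies of a challenge state $|\chi\rangle$, $\mathcal B$ samples $x\sim\{0,1\}^n$, applies $Z^x$ to a single copy, runs the QTF inverter $\mathcal A$ on the resulting tuple, and outputs $1$ iff $\mathcal A$ returns $x$. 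The two cases of the decisional challenge correspond exactly to the real game and $\mathsf{Hyb}$, so their success probabilities differ by $\negl(n)$.

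Second, I bound the success probability in $\mathsf{Hyb}$ \emph{information-theoretically}. Let $d=2^n$ and
\[
\sigma_x\;=\;\mathbb E_{|\phi\rangle\sim\mathrm{Haar}}\!\Bigl[\bigl(Z^x|\phi\rangle\!\langle\phi|Z^x\bigr)\otimes(|\phi\rangle\!\langle\phi|)^{\otimes m}\Bigr]\;=\;\binom{d+m}{m+1}^{-1}(Z^x\otimes I^{\otimes m})\,\Pi_{\mathrm{sym}}^{(m+1)}\,(Z^x\otimes I^{\otimes m}),
\]
using the symmetric-subspace identity for Haar moments. For the uniform prior, the Holevo--Helstrom-style bound gives
\[
P^\star\;\le\;\tfrac{1}{2^n}+\tfrac12\,\mathbb E_{x}\bigl\lVert\sigma_x-\bar\sigma\bigr\rVert_1,\qquad \bar\sigma=\mathbb E_x\sigma_x.
\]
Expanding $\Pi_{\mathrm{sym}}^{(m+1)}$ as an average of permutations in the computational basis and averaging over $x$ kills every off-diagonal contribution on the first register except those supported on tuples $(y_0,y_1,\ldots,y_m)$ in which $y_0$ collides with some $y_i$. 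A direct counting argument shows that the fraction of $(m{+}1)$-tuples admitting such a collision is $O(m/d)=O(\poly(n)/2^n)$, hence $\mathbb E_x\lVert\sigma_x-\bar\sigma\rVert_1=O(\poly(n,m)/2^n)$ and therefore $P^\star=\negl(n)$. Combined with Stage~1 this yields hardness of inversion.

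\emph{Main obstacle.} The purely computational reduction in Stage~1 is routine; the substantive work is Stage~2, where one must verify that the "quantum one-time pad" $Z^x$ remains a good mask for $|\phi\rangle$ even after revealing $m$ additional copies of $|\phi\rangle$. The delicate point is controlling the collision terms between the $Z^x$-shifted register and the symmetric cloud $|\phi\rangle^{\otimes m}$ inside the symmetric projector; this is where the regime $m=\poly(n)\ll 2^n$ enters crucially, and a slightly more refined Schur--Weyl bookkeeping (tracking when the first register sits in the same symmetric sector as the remaining ones) is the cleanest way to obtain the $O(\poly(n,m)/2^n)$ bound on the trace distance.
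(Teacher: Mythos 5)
Your correctness argument and Stage~1 of your security argument coincide with the paper's: $\mathsf{Invert}$ succeeds because $U_{\boldsymbol\theta}^{\vec A}$ is diagonal and hence commutes with $Z^x$, and the decisional $\mathsf{HPS}$ assumption replaces $\ket{\xi_{\mathsf{td}}}$ by a Haar-random state, reducing hardness of inversion to recovering $x$ from $(Z^x\ket{\phi},\ket{\phi}^{\otimes m})$. At that point the paper simply cites Lemma~5 of \cite{coladangelo2023quantumtrapdoorfunctionsclassical}; you instead try to re-prove it in Stage~2, and that sketch contains a genuine error.

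The bound $P^\star\le 2^{-n}+\tfrac12\,\mathbb E_x\|\sigma_x-\bar\sigma\|_1$ is valid but vacuous here, because $\|\sigma_x-\bar\sigma\|_1=\Theta(1)$, not $O(\poly(n,m)/2^n)$. Writing $\rho_0=\binom{d+m}{m+1}^{-1}\Pi_{\mathrm{sym}}^{(m+1)}$, one has $\sigma_x=(Z^x\otimes I^{\otimes m})\rho_0(Z^x\otimes I^{\otimes m})$ and $\bar\sigma=\mathbb E_x\sigma_x$ is exactly the first-register computational-basis pinching of $\rho_0$; by unitary invariance, $\|\sigma_x-\bar\sigma\|_1=\|\rho_0-\bar\sigma\|_1$ for every $x$. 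On the subspace of distinct $(m+1)$-tuples, which carries weight $1-O(m^2/2^n)$, $\rho_0$ restricts within each unordered $(m+1)$-element class to a pure symmetric superposition, while $\bar\sigma$ restricts to the maximally mixed state on the corresponding $(m+1)$-dimensional subspace (indexed by which element occupies the first slot); the per-class trace norm of the difference is $2m/(m+1)$, giving $\|\rho_0-\bar\sigma\|_1=2m/(m+1)-O(m^2/2^n)\approx 2$. (For $m=1$ one checks directly that $\|\rho_0-\bar\sigma\|_1=(d-1)/(d+1)$.) Your collision bookkeeping is also inverted: the surviving \emph{diagonal} residual of $P_\pi$ for $\pi(0)\ne 0$ after pinching is what requires a collision $y_0=y_{\pi(0)}$, whereas the off-diagonal remainder, which is all that contributes to $\sigma_x-\bar\sigma$, lives almost entirely on collision-free tuples. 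Conceptually, the adversary \emph{can} distinguish any single $\sigma_x$ from the dephased average $\bar\sigma$ essentially perfectly (undo $Z^x$ and project onto the $(m+1)$-register symmetric subspace), yet this reveals nothing about $x$; a state-vs-average bound therefore cannot capture the hardness of identification. What is needed is something finer, e.g.\ a pretty-good-measurement computation for the $\{Z^x\}$-covariant ensemble (which does give an $O(m/2^n)$ success probability), or the hybrid argument in \cite{coladangelo2023quantumtrapdoorfunctionsclassical} --- or simply citing that lemma, as the paper does.
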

\begin{proof} We need to verify two properties:
\begin{itemize}
\item[(a)] \emph{Hardness of inversion:} First, we invoke the decisional $\mathsf{HPS}_{n,m,q,\chi}$ assumption,
\begin{align*}
 &\Pr\left[x \leftarrow \mathcal{A}(\ket{\psi_x},\ket{\xi_{\mathsf{td}}}^{\otimes m}) \, : \, \substack{
    x \sim \bit^n\\
\mathsf{td} \leftarrow \mathsf{GenTrap}(1^n)\\
\ket{\xi_{\mathsf{td}}} \leftarrow \mathsf{GenEval}(\mathsf{td})\\
\ket{\psi_x} \leftarrow \mathsf{Eval}(\ket{\xi_{\mathsf{td}}},x)
    }\right]\\
    &\leq \Pr\left[x \leftarrow \mathcal{A}(\mathsf{Z}^x\ket{\psi},\ket{\psi}^{\otimes m}) \, : \, \substack{
    x \sim \bit^n\\
\ket{\psi} \leftarrow \mathrm{Haar}(2^n)
    }\right] + \negl(n).
\end{align*}
The claim then follows from \cite[Lemma 5]{coladangelo2023quantumtrapdoorfunctionsclassical}.

\item[(b)] \emph{Trapdoor inversion:} This follows from the fact that $\mathsf{Z}^x$ commutes with the diagonal part of the Hamiltonian Phase state. In particular, using that 
$$
\mathsf{Z}^x H^{\otimes n}  =  H^{\otimes n} \mathsf{X}^x, \quad \forall x \in \bit^n,
$$
we find that
\begin{align*}
\mathsf{Z}^{x}U_{\boldsymbol{\theta}}^{\vec A} \ket{+^n}   &=
U_{\boldsymbol{\theta}}^{\vec A}\mathsf{Z}^{x}\ket{+^n}\\
&=U_{\boldsymbol{\theta}}^{\vec A} \mathsf{Z}^{x}H^{\otimes n}\ket{0^n}\\
&=
U_{\boldsymbol{\theta}}^{\vec A} H^{\otimes n}\mathsf{X}^{x}\ket{0^n}\\
&=
U_{\boldsymbol{\theta}}^{\vec A} H^{\otimes n}\ket{x}. 
\end{align*}
\end{itemize}
Therefore, inversion takes place with probability $1$.
\end{proof}

\subsection{Public-Key Encryption with Quantum Keys}

Due to the work of \cite{coladangelo2023quantumtrapdoorfunctionsclassical}, quantum trapdoor functions imply a public-key encrpytion scheme, with quantum public keys.  

\begin{lemma}[Theorem 5 from \cite{coladangelo2023quantumtrapdoorfunctionsclassical}]
    If quantum trapdoor functions exist, then there is a public-key encryption scheme with quantum public keys.
\end{lemma}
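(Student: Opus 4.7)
The plan is to give a direct construction of the PKE scheme from the QTF, following the approach of Coladangelo. Since the statement we need to prove is just a restatement of a theorem from~\cite{coladangelo2023quantumtrapdoorfunctionsclassical}, the main task is to describe how each component of the PKE is instantiated from the QTF and identify where the security argument requires genuine work.

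First I would define the PKE scheme as follows. The secret-key generation algorithm simply runs $\mathsf{td} \gets \mathsf{GenTrap}(1^n)$; the public-key generation algorithm runs $\ket{\xi_{\mathsf{td}}} \gets \mathsf{GenEval}(\mathsf{td})$, producing a fresh quantum public key (many copies of which may be generated on demand, since $\mathsf{td}$ is classical). To encrypt a single bit $b$, the sender samples $x \sim \bit^n$ and $r \sim \bit^n$, computes the ciphertext register $\ket{\psi_x} \gets \mathsf{Eval}(\ket{\xi_{\mathsf{td}}}, x)$, and outputs the triple $(\ket{\psi_x}, r, \langle r,x\rangle \oplus b)$. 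Decryption recovers $x' \gets \mathsf{Invert}(\mathsf{td}, \ket{\psi_x})$ using the trapdoor and outputs $\langle r, x'\rangle \oplus \langle r,x\rangle \oplus b$. Correctness is immediate from the trapdoor inversion property of the QTF, which guarantees $x' = x$ with probability $1$. For multi-bit messages one encrypts each bit independently, or uses a Goldreich-Levin-style construction that outputs many hardcore bits.

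For the IND-CPA security argument, I would reduce indistinguishability of ciphertexts to the hardness-of-inversion property of the QTF via a quantum hardcore bit theorem. Concretely, suppose an adversary $\mathcal{A}$, given $\ket{\xi_{\mathsf{td}}}^{\otimes \poly(n)}$, $\ket{\psi_x}$, $r$, and a challenge bit $c$, can distinguish whether $c = \langle r, x\rangle$ or $c$ is uniform with non-negligible advantage. Then a quantum Goldreich-Levin-type extractor converts $\mathcal{A}$ into an algorithm that recovers $x$ from $(\ket{\psi_x}, \ket{\xi_{\mathsf{td}}}^{\otimes \poly(n)})$ with non-negligible probability, contradicting the hardness-of-inversion property. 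This step is exactly the content of~\cite[Theorem~5]{coladangelo2023quantumtrapdoorfunctionsclassical}, so I would simply cite it, noting that the proof in that paper goes through a quantum Goldreich-Levin theorem (e.g.\ the version of Adcock-Cleve) adapted to the setting where the adversary also receives many copies of an auxiliary quantum state $\ket{\xi_{\mathsf{td}}}$.

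The main obstacle---which is already handled in~\cite{coladangelo2023quantumtrapdoorfunctionsclassical}---is ensuring that the Goldreich-Levin reduction is compatible with the quantum public key. In the classical setting the reduction rewinds the distinguisher with different challenges $r$; in the quantum setting one must be careful because the public key and the ciphertext register are quantum states that are consumed during the distinguisher's execution. The fix is to take the public key to be a \emph{polynomial number} of copies of $\ket{\xi_{\mathsf{td}}}$ (so the reduction can supply a fresh copy each time it invokes $\mathcal{A}$) and to exploit the fact that $\mathsf{Eval}$ can be re-run with a fresh copy of $\ket{\xi_{\mathsf{td}}}$ on any desired $x$. Given this, the reduction proceeds essentially as in the classical case, and the overall proof is a black-box invocation of~\cite[Theorem~5]{coladangelo2023quantumtrapdoorfunctionsclassical} applied to the QTF constructed in the previous subsection.
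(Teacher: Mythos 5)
The paper does not give a proof of this lemma at all: it is stated as a direct citation of Theorem~5 in Coladangelo's quantum trapdoor functions paper, and the only "proof" in the surrounding text is the one-line observation (in the corollary right after) that one plugs the HPS-based QTF into Coladangelo's generic construction. Your proposal is therefore not a different route from the paper so much as an unpacking of the black box the paper invokes.

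That said, your unpacking is a faithful sketch of how Coladangelo's argument actually works, and it correctly identifies the two places where genuine care is needed: (i) the public key must consist of polynomially many copies of $\ket{\xi_{\mathsf{td}}}$ because the reduction consumes a fresh copy on each invocation of the distinguisher, and (ii) the hardcore-bit step needs a quantum Goldreich--Levin theorem that tolerates quantum auxiliary input. Your construction (mask the message bit with $\langle r, x\rangle$, decrypt via $\mathsf{Invert}$, reduce IND-CPA to inversion via Goldreich--Levin) matches the cited reference. One small caveat worth flagging: in the quantum setting the distinguisher cannot be "rewound" in the classical sense, so the Goldreich--Levin extractor must be the non-rewinding (Adcock--Cleve / Grover-amplified) variant rather than a naive rerun-with-new-$r$; you gesture at this but frame it as a fix involving fresh copies of the public key, which is part of the story but not the whole of it. Since the lemma is ultimately a citation, this level of detail is optional, but if you intend this sketch to stand as a self-contained argument you should be explicit that the quantum Goldreich--Levin step is the substantive technical content being imported.
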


Note, that the work of Coladangelo~\cite{coladangelo2023quantumtrapdoorfunctionsclassical} required the use of (post-quantum) one-way functions.
Using our construction of quantum trapdoor functions from HPS (which is a potentially even weaker assumption than the existence of one-way functions), we immediately get that public-key encryption scheme with quantum public keys exist.

\begin{corollary}
    Let $n \in \mathbb{N}$ be the security parameter, and $m = \poly(n)$ and $q = \poly(n)$.  Then under the (decisional) $\mathsf{HPS}_{n, m, q, \chi}$ assumption, where $\chi$ is the uniform distribution, there is a construction of public-key encryption with quantum public keys.
\end{corollary}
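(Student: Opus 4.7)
The plan is to observe that this corollary follows immediately by composing two results already in hand: the construction of a quantum trapdoor function (QTF) from the decisional HPS assumption, and Coladangelo's generic transformation from any QTF to a public-key encryption scheme with quantum public keys.

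First, I would invoke the preceding theorem in \Cref{sec:QTF}, which shows that \Cref{cons:qtf} yields a secure QTF under the decisional $\mathsf{HPS}_{n,m,q,\chi}$ assumption (with $\chi$ uniform, $m,q = \poly(n)$). This takes care of instantiating all four algorithms $(\mathsf{GenTrap}, \mathsf{GenEval}, \mathsf{Eval}, \mathsf{Invert})$ purely from the HPS hardness assumption, without any appeal to classical one-way functions.

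Second, I would apply the cited lemma (Theorem~5 of \cite{coladangelo2023quantumtrapdoorfunctionsclassical}) as a black box to convert this QTF into a public-key encryption scheme with quantum public keys. In particular, the trapdoor $\mathsf{td}$ plays the role of the secret key, the evaluation state $\ket{\xi_{\mathsf{td}}}$ plays the role of the (quantum) public key that can be distributed to senders, and the inversion procedure enables the receiver to decrypt. Correctness of encryption/decryption follows from the trapdoor inversion property of the QTF, while semantic security reduces to the hardness-of-inversion property, which we established from the decisional HPS assumption.

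Since both ingredients are already proved (or cited) in the excerpt, there is essentially no remaining obstacle: the only thing to verify is that the syntactic interface of our QTF matches the one required by Coladangelo's transformation, and this is immediate from our definitions. Thus the proof reduces to a one-line chain of implications, namely
\begin{equation*}
    \mathsf{HPS}_{n,m,q,\chi} \;\Longrightarrow\; \text{QTF} \;\Longrightarrow\; \text{PKE with quantum public keys.}
\end{equation*}
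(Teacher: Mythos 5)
Your proposal is correct and follows exactly the same two-step chain as the paper: instantiate the QTF from the decisional HPS assumption via \Cref{cons:qtf}, then plug that QTF into Coladangelo's generic QTF-to-PKE transformation (Theorem~5 of \cite{coladangelo2023quantumtrapdoorfunctionsclassical}). The paper's own proof is a single sentence stating precisely this composition, so there is nothing to add or correct.
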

\begin{proof}
    The construction results from inserting \cref{cons:qtf} from HPS into the construction of public-key encryption with quantum public keys from \cite{coladangelo2023quantumtrapdoorfunctionsclassical}.
\end{proof}

\subsection{Quantum Pseudoentanglement}
\label{sec:pseudoentanglement}

In this section, we show that we use ensembles of Hamiltonian Phase states to get a highly efficient construction of quantum pseudoentanglement, a notion that was introduced in the work of Aaronson et al.~\cite{aaronson_et_al:LIPIcs.ITCS.2024.2}.

\begin{definition}[Quantum pseudoentanglement] 
\label{def:pseudo_entangled_states}
Let $n \in \N$ be the security parameter. A $(f(n),g(n))$-pseudoentangled state ensemble (PES) consists of a pair of $n$-qubit state ensembles $\{\ket{\psi_k}\}_{k \in \algo K_n}$ and $\{\ket{\phi}_k\}_{k \in \algo K_n}$ with key space $\algo K_n$ such that:
\begin{itemize}
    \item (Efficiency:) There exists a pair of uniform $\poly(n)$-sized quantum circuits $C_\psi, C_\phi$ which, on input $k \in \algo K_n$, prepare the states $\ket{\psi_k}$ and $\ket{\phi_k}$, respectively.

    \item (Entanglement gap:) With high probability (i.e., at least $1-1/\poly(n)$) over the choice of $k \sim \algo K_n$, the entanglement entropy between the first $\frac{n}{2}$ and last $\frac{n}{2}$ qubits of $\ket{\psi_k}$ (respectively, $\ket{\phi_k}$) is in the order of $\Theta(f(n))$ (respectively, $\Theta(g(n))$).

    \item (Computational indistinguishability:) For any integer $n \in \N$ and any number of copies $t=\poly(n)$, the following mixed states are computationally indistinguishable,
    $$
    \rho = \underset{k \sim \algo K_n}{\mathbb{E}}\left[\proj{\psi_k}^{\otimes t(n)} \right] \quad \approx_c \quad  \sigma = \underset{k \sim \algo K_n}{\mathbb{E}}\left[\proj{\phi_k}^{\otimes t(n)} \right].
    $$
    In other words, for any efficient quantum distinguisher $\algo A$ which outputs a single-bit,
    $$
            \left|\Pr_{k \sim \algo K_n(1^n)}[\mathcal{A}(\ket{\psi_k}^{\otimes t(n)})=1] - 
        \Pr_{k \sim \algo K_n}[\mathcal{A}(\ket{\phi_k}^{\otimes t(n)})=1] \right| \leq \negl(n).
    $$
\end{itemize}
    
\end{definition}

\paragraph{Quantum Pseudoentanglement from the HPS assumption.}

Let $n \in \N$ be the security parameter. We can construct a $(k,n)$-pseudoentangled state ensemble (PES) from the HPS assumption as follows:
\begin{itemize}
    \item (high-entanglement state) choose a parameter $m \geq n$, and let $\{\ket{\Phi_{\boldsymbol{\theta}}^{\vec A}}\}_{\vec A \in \Z_2^{m \times n}, \,\boldsymbol{\theta}\in \Theta_q^m}$ be the Hamiltonian phase state family with
    $$
\ket{\Phi_{\boldsymbol{\theta}}^{\vec A}} = \exp\left(\mathrm{i} \, \sum_{i=1}^m \theta_i \bigotimes_{j=1}^n \mathsf{Z}^{\vec A_{ij}}  \right) H^{\otimes n} \ket{0^n}.
$$

\item (low-entanglement state) choose a parameter $k \leq n$, and let $\{\ket{\Phi_{\boldsymbol{\omega}}^{\vec B}}\}_{\vec B \in \Z_2^{k \times n}, \,\boldsymbol{\omega}\in \Theta_q^k}$ be the Hamiltonian phase state family with
    $$
\ket{\Phi_{\boldsymbol{\omega}}^{\vec B}} = \exp\left(\mathrm{i} \, \sum_{i=1}^k \omega_i \bigotimes_{j=1}^n \mathsf{Z}^{\vec B_{ij}}  \right) H^{\otimes n} \ket{0^n}.
$$
\end{itemize}

The goal of this section is to prove the following theorem.

\begin{theorem}[Quantum pseudoentanglement from HPS]\label{thm:pseudoentanglement}

Let $n \in \mathbb{N}$. Let $k \leq n \leq m$ with $k=\omega(\log(n))$ and $q=\poly(n)$ with $q>4$. Assuming the hardness of the (decisional) HPS assumption $\HPS_{n,\mu,q,\chi}$, for $\mu \in\{k,m\}$ and where $\chi$ is the uniform distribution, the pair  $\{\ket{\Phi_{\boldsymbol{\theta}}^{\vec A}}\}_{\vec A \in \Z_2^{m \times n}, \,\boldsymbol{\theta}\in \Theta_q^m}$ and $\{\ket{\Phi_{\boldsymbol{\omega}}^{\vec B}}\}_{\vec B \in \Z_2^{k \times n}, \,\boldsymbol{\omega}\in \Theta_q^k}$ form a $(n,k)$-pseudoentangled state ensemble. 
\end{theorem}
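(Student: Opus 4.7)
My plan is to verify the three defining properties of a pseudoentangled state ensemble (\Cref{def:pseudo_entangled_states}) for the specified pair of HPS families. \emph{Efficiency} is immediate from the preparation scheme in~\cref{fig:hps}: any HPS with $\poly(n)$ Ising terms can be prepared in polynomial time using Hadamards, single-qubit $\mathsf{Z}$ rotations, and \textsf{CNOT} layers. \emph{Computational indistinguishability} follows from two invocations of the decisional HPS assumption: both ensembles are individually indistinguishable from $t$ copies of a Haar random state by~\Cref{def:decision-HPS} applied with parameters $(n,m,q,\chi)$ and $(n,k,q,\chi)$ respectively, so the triangle inequality gives indistinguishability between the two.

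The heart of the proof is the \emph{entanglement gap} across the balanced bipartition $A \cup \bar A$ with $|A| = |\bar A| = n/2$. The upper bounds are easy: the high-entanglement ensemble satisfies $S(\rho_A) \leq n/2$ trivially, and for the low-entanglement ensemble each commuting gate $e^{i\omega_i P_i}$ (with $P_i$ a Pauli-$\mathsf{Z}$ string) has operator Schmidt rank at most~$2$ across any bipartition, so composing $k$ such gates yields a state Schmidt rank bounded by $2^k$ and hence $S(\rho_A) \leq k$. For the matching lower bounds, I will compute the expected purity $\mathbb{E}_{\vec B, \boldsymbol{\omega}}[\Tr(\rho_A^2)]$ directly via the swap trick. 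Writing $U^{\otimes 2} = \prod_i e^{\mathrm{i}\omega_i Q_i}$ with $Q_i = P_i \otimes I + I \otimes P_i$ diagonal in the computational basis with eigenvalues in $\{-2,0,+2\}$, the assumption $q > 4$ ensures that $\mathbb{E}_{\omega_i}[e^{\mathrm{i}\omega_i \Delta}] = 0$ for every non-zero $\Delta \in \{\pm 2, \pm 4\}$, so the $\omega$-twirl reduces to a pinching onto the joint $Q_i$-eigenspaces. Subsequently averaging over a uniformly random row $\vec B_i \sim \Z_2^n$ reduces the purity to counting pairs $(u,v) \in \Z_2^{n/2} \times \Z_2^{n/2}$ for which, for every $i$, either $u \cdot \vec B_i^A = 0$ or $v \cdot \vec B_i^{\bar A} = 0$, where $\vec B_i^A, \vec B_i^{\bar A}$ denote the restrictions of row~$i$ to the $A$ and $\bar A$ columns. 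A per-row probability of $3/4$ whenever both $u$ and $v$ are non-zero then gives
\begin{equation*}
\mathbb{E}_{\vec B, \boldsymbol{\omega}}[\Tr(\rho_A^2)] \;\leq\; 2 \cdot 2^{-n/2} + (3/4)^{\mu}, \qquad \mu \in \{k,m\}.
\end{equation*}

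Markov's inequality then yields $\Tr(\rho_A^2) \leq \poly(n) \cdot (2 \cdot 2^{-n/2} + (3/4)^\mu)$ with probability $\geq 1 - 1/\poly(n)$ over the key, and the standard inequality $S(\rho_A) \geq H_2(\rho_A) = -\log_2 \Tr(\rho_A^2)$ yields $S(\rho_A) \geq \min(n/2,\, \mu \log_2(4/3)) - O(\log n)$. For $\mu = m \geq n$ this is $\Omega(n)$, and for $\mu = k = \omega(\log n)$ this is $\Omega(k)$, matching the upper bounds up to constants. The main obstacle I anticipate is the second-moment bookkeeping: one must carefully track how the swap operator on $A$ interacts with each joint $Q_i$-eigenspace, and verify that the $q > 4$ threshold kills precisely the $\pm 4$ phase contributions that arise when the two tensor copies of $Q_i$ reinforce each other. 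Once those cancellations are in hand, the remainder—a Boolean inclusion--exclusion over per-row probabilities followed by Markov—is routine.
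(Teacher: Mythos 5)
Your proof is correct and takes essentially the same route as the paper: both establish the entanglement gap by bounding $\mathbb{E}[\Tr(\rho_A^2)]$ and get the identical estimate $2\cdot 2^{-n/2} + (3/4)^\mu$, with Markov plus a R\'enyi-entropy lower bound supplying the high-probability gap. The one stylistic difference is how the $3/4$ per-row factor is obtained: you derive it directly from the factorization of the swap phase into $(\alpha-\beta)(\gamma-\delta)$ and a Boolean inclusion--exclusion over a single random row, whereas the paper packages the same Fourier-on-$\{0,1\}^n$ argument as an invocation of its diagonal-design spectral-gap lemma with $t=2$; your Schmidt-rank-per-gate bound (operator Schmidt rank $2$ for $e^{i\omega Z^{\otimes y}}$ split across the cut) is also tighter than the paper's crude $16$ per rotation, and your use of $S\geq H_2$ is marginally sharper than the paper's $S\geq \tfrac12 H_2$, but none of this changes the final $\Omega(n)$ vs.\ $\Theta(k)$ conclusion.
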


Before we give a proof of the statement, let us first introduce some relevant notation. For brevity, we define the diagonal part of of a Hamiltonian Phase state $\ket{\Phi_{\boldsymbol{\theta}}^{\vec A}}$ as
$$
D_{\boldsymbol{\theta}}^{\vec A} =\exp\left(\mathrm{i} \, \sum_{i=1}^m \theta_i \bigotimes_{j=1}^n \mathsf{Z}^{\vec A_{ij}}  \right)
$$
Next, we define the $\frac{n}{2} \times \frac{n}{2}$ partition matrix $\boldsymbol{\Lambda}_{\boldsymbol{\theta}}^{\vec A}$ such that, for $k,l \in \bit^{\frac{n}{2}}$,
$$
\boldsymbol{\Lambda}_{\boldsymbol{\theta},(k,l)}^{\vec A} := \bra{k,l}D_{\boldsymbol{\theta}}^{\vec A} \ket{k,l}.
$$
To get a handle on the entanglement entropy of the state, we have to analyze its reduced state across a particular cut $(A:B)$, say the first $n/2$ and the last $n/2$ many qubits.
\begin{align*}
\rho_A &= 2^{-n} \left(\sum_{i \in \bit^{\frac{n}{2}}} \sum_{j \in \bit^{\frac{n}{2}}} \sum_{k \in \bit^{\frac{n}{2}}} \sum_{l \in \bit^{\frac{n}{2}}} \boldsymbol{\Lambda}_{\boldsymbol{\theta},(i,j)}^{\vec A} \overline{\boldsymbol{\Lambda}}_{\boldsymbol{\theta},(k,j)}^{\vec A}\mathrm{Tr}_B[\ket{i,j}\bra{k,l}] \right)\\
&= 2^{-n} \left(\sum_{i \in \bit^{\frac{n}{2}}} \sum_{j \in \bit^{\frac{n}{2}}} \sum_{k \in \bit^{\frac{n}{2}}} \sum_{l \in \bit^{\frac{n}{2}}} \boldsymbol{\Lambda}_{\boldsymbol{\theta},(i,j)}^{\vec A} \overline{\boldsymbol{\Lambda}}_{\boldsymbol{\theta},(k,j)}^{\vec A} \ket{i}\bra{k}\right) \\
&= 2^{-n} \left(\sum_{i \in \bit^{\frac{n}{2}}} \sum_{j \in \bit^{\frac{n}{2}}} \sum_{k \in \bit^{\frac{n}{2}}} \sum_{l \in \bit^{\frac{n}{2}}} \boldsymbol{\Lambda}_{\boldsymbol{\theta},(i,j)}^{\vec A}\overline{\boldsymbol{\Lambda}}_{\boldsymbol{\theta},(k,j)}^{\vec A} \ket{i} \braket{j|j}\bra{k}\right)\\
&= 2^{-n} \left(\sum_{i \in \bit^{\frac{n}{2}}} \sum_{j \in \bit^{\frac{n}{2}}} \boldsymbol{\Lambda}_{\boldsymbol{\theta},(i,j)}^{\vec A}\ket{i} \bra{j}\right) \left(\sum_{j \in \bit^{\frac{n}{2}}} \sum_{k \in \bit^{\frac{n}{2}}} \overline{\boldsymbol{\Lambda}}_{\boldsymbol{\theta},(j,k)}^{\vec A}\ket{j} \bra{k}\right) \\
&= 2^{-n} \boldsymbol{\Lambda}_{\boldsymbol{\theta}}^{\vec A} (\boldsymbol{\Lambda}_{\boldsymbol{\theta}}^{\vec A})^\dag.
\end{align*}
From the work of Aaronson et al.~\cite{aaronson_et_al:LIPIcs.ITCS.2024.2}, we know that the entanglement entropy $S(\rho)$ is bounded below and above by the following quantities
$$
-\log\left(\| 2^{-n}\boldsymbol{\Lambda}_{\boldsymbol{\theta}}^{\vec A} (\boldsymbol{\Lambda}_{\boldsymbol{\theta}}^{\vec A})^\dag \|_F\right) \leq S(\rho) \leq \log\left(\mathrm{rank}(\boldsymbol{\Lambda}_{\boldsymbol{\theta}}^{\vec A} (\boldsymbol{\Lambda}_{\boldsymbol{\theta}}^{\vec A})^\dag)\right)
$$

We now give a proof of the aforementioned theorem.

\begin{proof}[Proof of Theorem~\ref{thm:pseudoentanglement}]

The efficiency of the state ensembles is immediate. For pseudorandomness, it suffices to invoke the (decisional) HPS assumption $\HPS_{n,\mu,q,\chi}$, for $\mu \in\{k,m\}$ and the uniform distribution $\chi$ the uniform distribution, to argue that the Hamiltonian Phase state pair  $\{\ket{\Phi_{\boldsymbol{\theta}}^{\vec A}}\}_{\vec A \in \Z_2^{m \times n}, \,\boldsymbol{\theta}\in \Theta_q^m}$ and $\{\ket{\Phi_{\boldsymbol{\omega}}^{\vec B}}\}_{\vec B \in \Z_2^{k \times n}, \,\boldsymbol{\omega}\in \Theta_q^k}$ is computationally indistinguishable. This is essentially a consequence of the triangle inequality, since we each of the states is itself indistinguishable from a Haar random state.

Next, we attempt to get a handle on the average entanglement entropy for a general Hamiltonian Phase state. For convenience, we analyze the case for a general $m \in \N$, which we can later choose to be either $m\geq n$ or $m \leq n$ with $m = \omega(\log n)$. To get a lower bound on the entanglement entropy, we need to bound the expectation
\begin{align}
\begin{split}\label{eq:frobeniusnorm}
\underset{\substack{\vec A \sim \Z_2^{m \times n}\\
\boldsymbol{\theta} \sim \Theta_q^m}}{\mathbb{E}} & \left[\| 2^{-n}\boldsymbol{\Lambda}_{\boldsymbol{\theta}}^{\vec A} (\boldsymbol{\Lambda}_{\boldsymbol{\theta}}^{\vec A})^\dag \|^2_F\right] \\
&=2^{-2n} \sum_{i\in \bit^{\frac{n}{2}}}  \sum_{j\in \bit^{\frac{n}{2}}} \underset{\substack{\vec A \sim \Z_2^{m \times n}\\
\boldsymbol{\theta} \sim \Theta_q^m}}{\mathbb{E}}\left[ \left(\sum_{k\in \bit^{\frac{n}{2}}} \boldsymbol{\Lambda}_{\boldsymbol{\theta},(i,k)}^{\vec A} \overline{\boldsymbol{\Lambda}}_{\boldsymbol{\theta},(j,k)}^{\vec A} \right)^2\right] \\
&=2^{-2n} \sum_{i\in \bit^{\frac{n}{2}}}  \sum_{j\in \bit^{\frac{n}{2}}}\sum_{k,l\in\{0,1\}^{\frac{n}{2}}}\underset{\substack{\vec A \sim \Z_2^{m \times n}\\
\boldsymbol{\theta} \sim \Theta_q^m}}{\mathbb{E}}\boldsymbol{\Lambda}_{\boldsymbol{\theta},(i,k)}^{\vec A} \overline{\boldsymbol{\Lambda}}_{\boldsymbol{\theta},(j,k)}^{\vec A} \boldsymbol{\Lambda}_{\boldsymbol{\theta},(i,l)}^{\vec A} \overline{\boldsymbol{\Lambda}}_{\boldsymbol{\theta},(j,l)}^{\vec A}. 
\end{split}
\end{align}

Consider the expression 
\begin{align}
\begin{split}\label{eq:secondmoment}
    \underset{\substack{\vec A \sim \Z_2^{m \times n}\\
\boldsymbol{\theta} \sim \Theta_q^m}}{\mathbb{E}}&\boldsymbol{\Lambda}_{\boldsymbol{\theta},(i,k)}^{\vec A} \overline{\boldsymbol{\Lambda}}_{\boldsymbol{\theta},(j,k)}^{\vec A} \boldsymbol{\Lambda}_{\boldsymbol{\theta},(i,l)}^{\vec A} \overline{\boldsymbol{\Lambda}}_{\boldsymbol{\theta},(j,l)}^{\vec A}\\
&= \underset{\substack{\vec A \sim \Z_2^{m \times n}\\
\boldsymbol{\theta} \sim \Theta_q^m}}{\mathbb{E}} \exp\left[\mathrm{i}\sum_{r=1}^m\theta_r((-1)^{\langle i,k|A^r\rangle}+(-1)^{\langle j,k|A^r\rangle} -(-1)^{\langle i,l|A^r\rangle}-(-1)^{\langle j,l|A^r\rangle})\right]\\
&=\left(\underbrace{\underset{\substack{\vec A \sim \Z_2^{1 \times n}\\
\boldsymbol{\theta} \sim \Theta_q^1}}{\mathbb{E}} \exp\left[\mathrm{i}\theta((-1)^{\langle i,k|A\rangle}+(-1)^{\langle j,k|A\rangle} -(-1)^{\langle i,l|A\rangle}-(-1)^{\langle j,l|A\rangle})\right]}_{=:e_{i,j,k,l}}\right)^m.
\end{split}
\end{align}
Clearly, if $\{(i,k),(j,k)\}=\{(i,l),(j,l)\}$ then $e_{i,j,k,l}=1$.
We can upper bound the number of all  tuples with this property by $2^{3n/2}\times 2!$, which leads to a contribution of $2\times 2^{-n/2}$ in Eq.~\eqref{eq:frobeniusnorm}.
Assume that $q> 4$. 
Then, for all other tuples $(i,j,k,l)$ we notice that $e_{i,j,k,l}$ is an eigenvalue $<1$ of the moment operator $\mathbb{E}U^{\otimes 2}\otimes \overline{U}^{\otimes 2}$ for $U=e^{\mathrm{i}\bigotimes_{j=1}^n Z^{\boldsymbol{A}_{1j}}}$.
Then, we can invoke Lemma~\ref{lemma:gapbound} with $t=2$ to obtain the bound $e_{i,j,k,l}\leq 1-\frac{1}{2t}=\frac34$.
Plugging this back into Eq.~\eqref{eq:frobeniusnorm} yields the simple upper bound
\begin{equation}
 \underset{\substack{\vec A \sim \Z_2^{m \times n}\\
\boldsymbol{\theta} \sim \Theta_q^n}}{\mathbb{E}}  \left[\| 2^{-n}\boldsymbol{\Lambda}_{\boldsymbol{\theta}}^{\vec A} (\boldsymbol{\Lambda}_{\boldsymbol{\theta}}^{\vec A})^\dag \|^2_F\right] \leq 2\times 2^{-\frac{n}{2}}+\left(\frac34\right)^m.
\end{equation}
In particular, for $m\geq n$ we find with Markov's inequality that 
\begin{equation}
    S(\rho)\geq -\log\left( \| 2^{-n}\boldsymbol{\Lambda}_{\boldsymbol{\theta}}^{\vec A} (\boldsymbol{\Lambda}_{\boldsymbol{\theta}}^{\vec A})^\dag \|_F\right)=\Omega(n)
\end{equation}
with probability $1-2^{-\Omega(n)}$.

To obtain a matching upper bound on the entanglement entropy observe that each unitary $e^{\mathrm{i}\phi \bigotimes_{j=1}^n Z^{y_j}}$ can be decomposed into a single qubit $Z$ rotation and at most $2n$ \textsf{CNOT} and \textsf{SWAP} gates such that at most $2$ \textsf{CNOT} or \textsf{SWAP} gates act across the cut $(A : B)$.
In particular, \textsf{CNOT} has an operator Schmidt rank of $2$ and SWAP has an operator Schmidt rank of $4$. 
Therefore, any robation of the form $e^{\mathrm{i}\phi \bigotimes_{j=1}^n Z^{y_j}}$ can increase the Schmidt rank of a state across the cut $(A : B)$ by at most $4\times 4=16$.
Overall, we find for the entanglement entropy 
\begin{equation}
  -\frac12\log(2\times 2^{-n}+(3/4)^m)\leq  S(\rho_A)\leq \log(\mathrm{rank}(\rho_A))\leq \log(16)m.
\end{equation}
The theorem then follows from choosing the values $m=n$ and $m=\omega(\log(n))$.
\end{proof}
\subsection{Pseudorandom Unitaries}
\label{sec:unitary}

In this section, we show that a variant of our $\mathsf{HPS}$ assumption might yield \emph{pseudorandom unitaries}.  \ifhaspru We first show that a unitary that repeatedly applies (uniformly random) phase oracles and Hadamards is indistinguishable from a truly Haar random unitary.

\fi
Let us first give a formal definiton of pseudorandom unitaries, as proposed by
Ji, Liu, and Song~\cite{ji2018pseudorandom}. Broadly speaking, these are efficient ensembles of unitaries which emulate Haar random unitaries to computationally bounded adversaries. 

\begin{definition}[Pseudorandom unitary]\label{def:PRU} Let $n \in \N$ be the security parameter. An infinite sequence $\mathcal{U} = \{\mathcal{U}_n\}_{n \in \N}$ of $n$-qubit unitary ensembles $\mathcal{U}_n = \{U_k\}_{k \in \mathcal{K}}$ is a called a family of pseudorandom unitaries if it satisfies the following conditions.
\begin{itemize}
    \item (\textbf{Efficient computation}) There exists an efficient quantum algorithm $\mathcal{Q}$ such that for all keys $k \in \mathcal{K}$, where $\mathcal{K}$ denotes the key space, and any $\ket{\psi} \in (\mathbb{C}^2)^{\ot n}$, it holds that
    $$
    \mathcal{Q}(k,\ket{\psi}) = U_k \ket{\psi}\,.
    $$

    \item (\textbf{Pseudorandomness}) The unitary $U_k$, for a randomly chosen key $k \sim \algo K$, is computationally indistinguishable from a Haar random unitary $U \sim \mu(2^n)$. In other words, for any efficient quantum algorithm $\algo A$, it holds that $$
    \vline\, \underset{k \sim \algo K}{\Pr}[\algo A^{U_k}(1^n)=1] - \underset{U \sim \mu(2^n)}{\Pr}[\algo A^{U}(1^n) =1]  \,\vline \,\leq \, \negl(n)\,.
    $$ 
\end{itemize}
Note that, whenever we write $\mathcal{U}_n = \{U_k\}_{k \in \mathcal{K}}$, it is implicit that the key space $\mathcal{K}$ depends on the security parameter $n \in \N$, and that the length of each key $k \in \mathcal{K}$ is polynomial in $n$.
\end{definition}

\ifhaspru
\subsubsection{Security of the $\mathsf{FHFHFC}$ pseudo-random unitary}

A recent line of work has shown that a ``$\mathsf{PFC}$'' ensemble~\cite{metger2024simpleconstructionslineardepthtdesigns}, where $\mathsf{P}$ and $\mathsf{F}$ are a pseudo-random permutation and pseudo-random function respectively (and $\mathsf{C}$ is a uniformly random Clifford) is a pseudo-random unitary~\cite{ma2024pseudorandom}.  However, as these require one-way functions, there are no suitable post-quantum instantiations of pseudo-random unitaries.  An older folklore construction of pseudo-random unitaries, first proposed by \cite{ji2018pseudorandom}, is to apply alternating pseudo-random phase unitaries and Hadamards.  In this section, we prove that a version of this construction based on the HPS assumption is secure, and propose an instantiation of these pseudo-random unitaries using $\mathsf{HPS}$ oracles. 

Let $\mathsf{F}$ a uniformly random diagonal matrix.  We can think of $\mathsf{F}$ being an ``ideal HPS'' phase oracle, which acts as follows for a tuple of angles $\theta = (\theta_z)_{z}$:
\begin{equation*}
    \mathsf{F}_{\theta} = \sum_{x} \exp\left(i \sum_{z} \theta_z (-1)^{\braket{x, z}}\right)\proj{x}\,.
\end{equation*}
Here, each angle $\theta_z$ will be randomly drawn from a discrete set $\{2\pi k/q\}_{k=0}^{q-1}$.
In the following, we will choose $q=2^n$.
We note that any exponentially large $q$ suffices.
We make two observations before continuing with the proof.  First, we call this an ideal HPS because it involves sampling an exponential number of real phases, instead of only a polynomial number of them.  Second, we note that these matrices are identically distributed to uniformly random diagonal matrices with complex phases, but we phrase it as an ideal HPS instance to motivate our conjectured pseudo-random unitary construction from HPS.  

\begin{remark}
As a matter of notation, we will write $\theta$, or $\theta^{(1)}$, to mean the tuple of angles $(\theta_1, \ldots, \theta_{2^n})$ (or with the appropriate superscript).  The association between the list of angles associated with a $\theta$ will be clear from context.
\end{remark}

We claim that the following unitary is pseudo-random, for three uniformly random diagonal unitaries with complex phases, $\mathsf{F}_1$, $\mathsf{F}_2$ and $\mathsf{F}_3$:
\begin{equation*}
    \mathsf{F}_3 \cdot H^{\otimes n} \cdot \mathsf{F}_2 \cdot H^{\otimes n} \cdot \mathsf{F}_1 \cdot \mathsf{C}\,.
\end{equation*}
We call this the $\mathsf{FHFHFC}$-ensemble, and by the previous observation, this is identical to the proposed construction from \cite{ji2018pseudorandom}.  The following is the main theorem of this section.

\begin{theorem}
\label{thm:fhfhfc_psuedorandom}
        Let $n \in \mathbb{N}$ be the number of qubits as input to a quantum adversary.  Then for all $t$-query adversaries $\mathcal{A}_t^{(\cdot)}$ making queries to a $n$ qubit oracle, the following holds
        \begin{equation*}
            \td\left(\mathbb{E}_{U\sim \mu(2^{n})}\left[\mathcal{A}_t^{U}(\ket{0})\right],\mathbb{E}_{\mathcal{O} \sim \mathsf{FHFHFC}}\left[\mathcal{A}_t^{\mathcal{O}}(\ket{0})\right] \right) \leq O\left(\sqrt{\frac{t^3}{2^n}}\right)\,.
        \end{equation*}
\end{theorem}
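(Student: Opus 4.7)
The plan is to adapt the path recording framework of Ma and Huang~\cite{ma2024pseudorandom}, originally developed to prove that the $\mathsf{PFC}$ ensemble is a pseudorandom unitary, to handle the $\mathsf{FHFHFC}$ construction. Recall that the path recording framework replaces queries to a random oracle with queries to a purified isometry that records all queried input--output pairs in an auxiliary ``path'' register; Ma and Huang establish that, for any $t$-query adversary, a Haar random $n$-qubit unitary is close in trace distance to this path recording isometry at a rate polynomial in $t$ and inverse polynomial in $2^n$. The overall strategy is then to show that the $\mathsf{FHFHFC}$ ensemble is similarly close to the same path recording isometry, after which the two ensembles are indistinguishable by the triangle inequality.

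The key technical steps would be as follows. First, use the Clifford $\mathsf{C}$ to reduce the analysis to queries on ``generic'' input states with controlled pairwise overlaps---just as in the $\mathsf{PFC}$ analysis, $\mathsf{C}$ applied to an adversary's query state renders it approximately Haar random, so that distinct queries produce nearly orthogonal intermediate states. Second, argue that the block $\mathsf{F}_3 H^{\otimes n} \mathsf{F}_2 H^{\otimes n} \mathsf{F}_1$ acts on such a generic input by producing output amplitudes that are, in distribution, close to a vector of i.i.d.\ complex Gaussians up to normalization: $\mathsf{F}_1$ inserts uniformly random phases in the computational basis; the Hadamard redistributes them into amplitudes whose low moments match those of a random superposition; $\mathsf{F}_2$ adds a fresh layer of independent random phases; and the subsequent $H^{\otimes n}$ and $\mathsf{F}_3$ further decorrelate the amplitudes. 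Third, define a ``bad'' collision event (intermediate states of different queries having nonnegligible overlap) and show that outside this event the joint action on all $t$ queries is statistically close to path recording, while the bad event itself contributes only the claimed $O(\sqrt{t^3/2^n})$ error.

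The main obstacle is that, unlike the uniformly random permutation $\mathsf{P}$ in $\mathsf{PFC}$---which provides an exact random injective mapping on computational basis states and thus an immediate combinatorial description of collisions---the $\mathsf{HFH}$ sandwich only provides a distributional approximation thereof. I expect the hardest step to be controlling the accumulation of small errors from near-collisions across the three $\mathsf{F}$ layers. Concretely, one must bound, for each pair of queries, the probability that interference patterns after the intermediate Hadamards survive the fresh randomization by the next diagonal $\mathsf{F}$; this requires moment calculations analogous to Ma and Huang's small-collision analysis. Summing the $\binom{t}{2}$ pairwise collision probabilities, each controlled by the three independent layers of random phases, should yield a bound of order $O(\sqrt{t^3/2^n})$. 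A secondary subtlety is that three $\mathsf{F}$ layers appear essential---with only one or two, the resulting intermediate states would retain correlations a $t$-query distinguisher could exploit---which may explain why Ji, Liu, and Song~\cite{ji2018pseudorandom} originally conjectured this specific construction rather than a shorter one.
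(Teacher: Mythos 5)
Your high-level strategy is aligned with the paper's: invoke the Ma--Huang path-recording framework, use the Clifford $\mathsf{C}$ as an approximate $2$-design to decorrelate queries, track collision events, and conclude by a hybrid/triangle argument. But the mechanism you sketch for the core step is different in kind from the one the paper actually uses, and your sketch has gaps that would be hard to fill as proposed.

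First, the paper does not reason about the output \emph{amplitudes} of $\mathsf{FHFHF}$ being ``close to i.i.d.\ complex Gaussians.'' Instead it purifies the three phase oracles into a register holding all the angles $(\theta^{(1)},\theta^{(2)},\theta^{(3)})$ (\cref{lem:purification_fhf}), and then works entirely with \emph{set states} and \emph{relation states} living in that purifying register. The adversary's view is an explicit coherent sum over databases, and the comparison to the path-recording isometry is made by exhibiting $\mathsf{Compress}$ and $\mathsf{Collide}$ isometries on the purifying register. A distributional/Gaussian-amplitude argument does not directly produce the trace-distance bound against an \emph{adaptive} $t$-query adversary that the theorem needs; the relation-state machinery is what makes the adaptive case tractable.

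Second, and more importantly, your collision analysis is at the wrong granularity. You propose summing $\binom{t}{2}$ pairwise collision probabilities, but the correct bad event for $\mathsf{FHFHF}$ is \emph{not} pairwise overlap. The key structural observation in the paper is that the middle $\mathsf{F}_2$ records only ``collision-free'' relations, where collision-freeness requires the \emph{four-tuple} quantities $x_i \oplus x_j \oplus y_i \oplus y_k$ to be pairwise distinct (see the collision-free definition and \cref{lem:collision_probability_bound}). This is strictly stronger than the distinctness/bijectivity condition in the $\mathsf{PFC}$ analysis, and it is what makes the distinct relation states $\ket{\mathsf{fhfhf}^{\mathrm{dist}}_R}$ orthogonal (\cref{lem:inner_product_distinct} together with the orthogonality lemma). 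The failure probability is $O(t^8/2^n)$, which enters the bounds as $O(\sqrt{t^8/2^n})$; your $\binom{t}{2}$ heuristic would miss this entirely and under-count the bad event. (The theorem statement's $O(\sqrt{t^3/2^n})$ is actually looser than what the proof's internal lemmas give, so matching the stated exponent is not the right target.)

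Third, your sketch omits the Fourier-analytic ingredient (\cref{lemma:BooleanFunctionsHaveDifferentFourierTransforms} and \cref{lem:set_states_orthogonal}) that shows distinct multi-sets produce orthogonal set states. Without this you cannot build the $\mathsf{Compress}$ isometry, and you have no clean way to map the adversary's FHFHF-state onto the path-recording register. In short: you have the correct intuition that the middle $\mathsf{F}$ encodes the input--output relationship and that three $\mathsf{F}$ layers are necessary, but the actual proof relies on (a) the purification-into-angles trick, (b) the set-/relation-state formalism and its Fourier-analytic orthogonality, and (c) a collision-free condition on $4$-tuples rather than pairs; as written your plan supplies none of these and would stall at the step where you try to quantify near-collisions surviving the re-randomization.
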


In order to prove that these are pseudo-random, we rely on the results of \cite{ma2024pseudorandom}, which showed that a certain isometry, called the path-recording isometry, is indinstinguishable from a Haar random unitary for bounded-query adversaries.
We state the definition of the path-recording isometry here.
\begin{definition}[Path recording isometry]
    Define the path recording isometry $\mathsf{PR}$, which acts on two registers, an input $\mathsf{A}$, and database $\mathsf{XY}$ containing $t$ input-output pairs, as follows:
    \begin{equation*}
       \mathsf{PR} \ket{x}_{\reg{A}} \ket{R}_{\reg{XY}} \mapsto \frac{1}{\sqrt{2^n - t}}\sum_{y \not\in \mathrm{Im}(R)} \ket{y}_{\reg{A}} \ket{R \cup (x, y)}_{\reg{XY}}\,. 
    \end{equation*}
\end{definition}

Before we begin with the proof, we comment on the construction itself: Why take three rounds of $\mathsf{F}$?  Intuitively, applying a single random function $\mathsf{F}$ records the input to the function, as noted in \cite{zhandry2019record}.  
Therefore, we can see that each of the three random phase oracles has a distinct ``job'': The first one records the input, and the last one records the output.  The middle one records the relationship between the input and output, as we will see later.  Where our proof differs from the proof of \cite{ma2024pseudorandom} is that the middle unitary is not able to record all \emph{bijective} relations, but instead can only record so called ``collision free'' relations.  We prove that, with high probability, a bijection with random output is collision free, which allows us to perform path recording with slightly more error than incurred in \cite{ma2024pseudorandom}.

As the first step of the proof, we will imagine a purified version of the $\mathsf{FHFHFC}$ ensemble, where the phases are stored in a register that is not visible to the adversary.
We will show that the $\mathsf{FHFHF}$ oracle exhibits behavior similar to the path recording isometry, up to an isometry that acts on the purifying register containing the phases.

We first argue that we can purify the action of $\mathsf{FHFHF}$ into another oracle we denote by $\mathsf{FHFHFO}$.
\begin{definition}
    $\mathsf{FHFHFO}$ is a unitary oracle that acts on two registers as follows:
    \begin{equation*}
        \mathsf{FHFHFO} \ket{x} \ket{\theta^{(1)}, \theta^{(2)}, \theta^{(3)}} = \mathsf{F}_{\theta^{(3)}}\cdot H^{\otimes n} \cdot\mathsf{F}_{\theta^{(2)}} \cdot H^{\otimes n} \cdot \mathsf{F}_{\theta^{(1)}} \ket{x} \ket{\theta^{(1)}, \theta^{(2)}, \theta^{(3)}}\,. 
    \end{equation*}
\end{definition}
\begin{lemma}[Purification technique for $\mathsf{FHFHF}$]
    \label{lem:purification_fhf}
    The view of any adversary querying $\mathsf{FHFHF}$ is equivalent to the view of the adversary querying $\mathsf{FHFHFO}$ when the purifying register is initialized in the following state:
    \begin{equation*}
        \frac{1}{2^{3n}}\sum_{\substack{\theta_{1}^{(1)}, \ldots, \theta_{2^n}^{(1)}\\ \theta_{1}^{(2)}, \ldots, \theta_{2^n}^{(2)}\\ \theta_{1}^{(3)}, \ldots, \theta_{2^n}^{(3)}}}\ket{\theta^{(1)}, \theta^{(2)}, \theta^{(3)}}\,.
    \end{equation*}
\end{lemma}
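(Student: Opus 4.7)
The plan is to prove this by a standard coherent-versus-classical purification argument: show by induction on the number of queries that the reduced state on the adversary's registers in the purified experiment exactly matches the mixed state that arises from sampling the angles classically.

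First, I would observe that $\mathsf{FHFHFO}$ acts on the combined adversary register $\reg{A}$, oracle workspace $\reg{B}$, and purifying register $\reg{C}$ as a controlled operation
\[ \mathsf{FHFHFO} = \sum_{\theta} U_\theta \otimes \proj{\theta}_{\reg{C}}, \]
where $U_\theta := \mathsf{F}_{\theta^{(3)}} H^{\otimes n} \mathsf{F}_{\theta^{(2)}} H^{\otimes n} \mathsf{F}_{\theta^{(1)}}$ and $\theta$ ranges over triples of angle tuples. The key structural property here is that $\mathsf{FHFHFO}$ leaves $\reg{C}$ invariant in the computational basis, and by the oracle model the adversary's own unitaries $V_0, V_1, \ldots, V_t$ act trivially on $\reg{C}$.

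Next, I would proceed by induction on the number of queries. If just before the $i$-th oracle call the joint state has the form $\sum_{\theta} c_\theta \ket{\psi_\theta^{(i-1)}}_{\reg{AB}} \otimes \ket{\theta}_{\reg{C}}$ for some fixed amplitudes $c_\theta$ with $|c_\theta|^2 = 1/N$ (where $N$ is the number of triples in the sum), then applying $V_i \otimes I_{\reg{C}}$ followed by $\mathsf{FHFHFO}$ yields $\sum_\theta c_\theta\, U_\theta V_i \ket{\psi_\theta^{(i-1)}}_{\reg{AB}} \otimes \ket{\theta}_{\reg{C}}$, preserving the form under the definition $\ket{\psi_\theta^{(i)}} := U_\theta V_i \ket{\psi_\theta^{(i-1)}}$. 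The base case is the initial state $\ket{0}_{\reg{AB}} \otimes \sum_\theta c_\theta \ket{\theta}_{\reg{C}}$ as in the lemma statement.

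Finally, since $\{\ket{\theta}_{\reg{C}}\}_\theta$ is orthonormal, tracing out $\reg{C}$ at the end of the experiment kills the cross terms and produces
\[ \sum_\theta |c_\theta|^2 \proj{\psi_\theta^{(t)}}_{\reg{AB}} \;=\; \mathbb{E}_\theta \bigl[\proj{\psi_\theta^{(t)}}_{\reg{AB}}\bigr], \]
which is exactly the reduced state the adversary sees in the classical $\mathsf{FHFHF}$ experiment with uniformly sampled $\theta^{(1)}, \theta^{(2)}, \theta^{(3)}$. I do not foresee any real obstacle here: this is the standard ``deferred randomness'' / ``lazy sampling'' technique, and the only thing one must verify is that the cross terms $\ket{\theta}\bra{\theta'}$ for $\theta \neq \theta'$ vanish upon the partial trace, which is immediate from orthonormality.
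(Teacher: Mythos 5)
Your proof is correct and takes essentially the same approach as the paper's (which is a two-sentence argument appealing directly to the same purification/deferred-randomness observation); you simply spell out the induction on query count and the orthonormality step that the paper leaves implicit.
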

\begin{proof}
    Since the view of the adversary does not contain the purifying register, when we trace them out we will get a uniform mixture of states corresponding to $\mathsf{FHFHF}$ for uniformly random angles $\theta^{(1)}$, $\theta^{(2)}$, and $\theta^{(3)}$.  
    This corresponds exactly to choosing $\theta^{(1)}$, $\theta^{(2)}$, and $\theta^{(3)}$ uniformly at random.
\end{proof}

We now define a family of orthogonal states that correspond to a database $R$.
We some new notation here to make the remaining equations less cumbersome.  
For a set $S$ of $n$-bit strings and $n$-bit string $z$, define $S_z$ to be:
\begin{equation*}
    S_z = \sum_{x \in S} (-1)^{\braket{x, z}}\,.
\end{equation*}
For a relation $R$, define the following sets derived from $R$:
\begin{align*}
    \mathrm{Dom}(R) &= \{x : (x, y) \in R\}\\
    \mathrm{Im}(R) &= \{y : (x, y) \in R\}\,.
\end{align*}

As a first step, we define so-called $\mathsf{FHFHF}$-set states, as follows:
\begin{definition}[$\mathsf{FHFHF}$-set states]
    Given a set $S$, define the set $\ket{\mathsf{fhfhf}_S}$ as follows
    \begin{equation*}
        \frac{1}{2^n}\sum_{\theta_1, \ldots, \theta_{2^{n}} \in [2^n]} \exp\left(i \sum_{z} \theta_z S_z\right)\ket{\theta}\,.
    \end{equation*}
\end{definition}

The following lemma will show that these set states are orthogonal whenever the sets are different.

We show that these states form an orthogonal basis. In order to prove this, we first prove the following property, which comes from Fourier analysis of Boolean functions.
\begin{lemma}\label{lemma:BooleanFunctionsHaveDifferentFourierTransforms}
    Let $X$ and $Y$ be two multi-sets of $t < 2^{n-1}$ elements that are not equal, then there exists a string $z$ such that
    \begin{equation*}
        \sum_{x \in X}(-1)^{\braket{z, x}} \neq \sum_{y \in Y}(-1)^{\braket{z, y}}\,.
    \end{equation*}
\end{lemma}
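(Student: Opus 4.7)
The plan is to recast the quantity of interest as a Fourier coefficient over $\mathbb{F}_2^n$ and then invoke Parseval. First, I would introduce multiplicity functions $\chi_X, \chi_Y : \mathbb{F}_2^n \to \mathbb{Z}_{\geq 0}$ that count how often each $w \in \mathbb{F}_2^n$ appears in $X$ and $Y$ respectively, and set $g = \chi_X - \chi_Y$. Because $X$ and $Y$ are distinct multi-sets by assumption, $g \not\equiv 0$.

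Next, I would observe that the object whose non-vanishing we need to establish is, up to the standard $2^{-n}$ normalization, precisely a Fourier coefficient of $g$:
$$\sum_{x \in X}(-1)^{\braket{z, x}} - \sum_{y \in Y}(-1)^{\braket{z, y}} \;=\; \sum_{w \in \mathbb{F}_2^n} g(w)\,(-1)^{\braket{z, w}} \;=\; 2^n \hat g(z),$$
where $\hat g$ is the $\mathbb{F}_2^n$-Fourier transform with respect to the characters $w \mapsto (-1)^{\braket{z, w}}$. So the lemma reduces to the statement that a nonzero integer-valued function on $\mathbb{F}_2^n$ must have at least one nonzero Fourier coefficient.

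The latter is immediate from Plancherel's identity: $\sum_{z \in \mathbb{F}_2^n}|\hat g(z)|^2 = 2^{-n}\sum_{w}|g(w)|^2 > 0$, so some $\hat g(z) \neq 0$, and this $z$ witnesses the desired inequality. I do not anticipate any real obstacle; the only mild subtlety is that the hypothesis $t < 2^{n-1}$ does not in fact enter this argument at all, so I would flag that it is only needed in the downstream application (where, presumably, it is used to control collisions or non-degeneracy in the surrounding path-recording proof) rather than to conclude existence of a distinguishing $z$ here.
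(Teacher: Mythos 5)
Your proof is correct and takes essentially the same route as the paper's: both recognize $\sum_{x\in X}(-1)^{\langle z,x\rangle}$ as the $z$th Fourier coefficient of the multiplicity function of $X$, and both then appeal to injectivity of the $\mathbb{F}_2^n$ Fourier transform — you via Plancherel, the paper by citing the uniqueness statement directly, which is the same fact. Your flag about $t<2^{n-1}$ is also well taken: the paper's own proof does not use that hypothesis either, and the remark the paper appends (that the argument breaks at $t=2^{n-1}$ because complementary halves of $\{0,1\}^n$ ``have identical Fourier coefficients'') does not actually hold up, since for disjoint $X, X^c$ one has $\hat\chi_{X^c}(z)=-\hat\chi_X(z)$ for $z\neq 0$ and $\hat\chi_X$ is not supported only at $z=0$ unless $X$ is trivial — so the lemma as stated is true without the bound on $t$. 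The place $t<2^{n-1}$ genuinely enters is, as you suspect, downstream in the orthogonality-of-set-states lemma, where the average $\mathbb{E}_{k}\,e^{\mathrm{i}(2\pi k/q)(S_{z^*}-S'_{z^*})}$ over the $q=2^n$ discretized angles vanishes only when $|S_{z^*}-S'_{z^*}|<q$, and the a priori bound $|S_{z^*}-S'_{z^*}|\le 2t$ is what requires $t<2^{n-1}$.
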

\begin{proof}
    The value $\sum_{x \in X} (-1)^{\braket{z, x}}$ is exactly the $z$'th Fourier coefficient of a function that takes value $m(x)$ in $x$, where $m(x)$ is the multiplicity of $x$ in the set $X$.  This also holds for the sum over $Y$.  
    Since two functions have the same Fourier coefficients if and only if they are the same function, if no $z$ exists, then we get a contradiction with the fact that $
    X$ and $Y$ are different.
    Thus, there must exist a $z$ for which these values differ.
\end{proof}
We note that this proof stops working when $t$ is allowed to be $2^{n-1}$, as two sets whose union is $\{0, 1\}^{n}$ will have identical Fourier coefficients. 

\begin{lemma}[Orthogonality of set states]
    \label{lem:set_states_orthogonal}
    Set $S$, $S'$ be two different multi-sets of $n$-bit strings, then the set states $\ket{\mathsf{fhfhf}_S}$ and $\ket{\mathsf{fhfhf}_{S'}}$ are orthogonal.
\end{lemma}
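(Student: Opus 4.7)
The plan is to compute the inner product $\braket{\mathsf{fhfhf}_S | \mathsf{fhfhf}_{S'}}$ directly, and exploit the fact that the superposition tensor-factorizes over the coordinates $z \in \{0,1\}^n$. Substituting the definition, one obtains
\begin{equation*}
\braket{\mathsf{fhfhf}_S | \mathsf{fhfhf}_{S'}} \;=\; \frac{1}{2^{2n}} \sum_{\theta} \exp\!\left(i \sum_z \theta_z (S'_z - S_z)\right) \;=\; \frac{1}{2^{2n}}\prod_z \Bigl(\sum_{\theta_z} \exp\bigl(i \theta_z (S'_z - S_z)\bigr)\Bigr),
\end{equation*}
where each $\theta_z$ ranges over $\{2\pi k/q\}_{k=0}^{q-1}$ with $q = 2^n$.

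Next, I would evaluate the inner sum. It is a geometric sum over the $q$-th roots of unity, equal to $q$ when $S'_z - S_z \equiv 0 \pmod q$ and to $0$ otherwise. Since $|S_z|, |S'_z| \le |S|, |S'|$ respectively, and the set states are only used for databases of size at most $t$ with $t < 2^{n-1}$, we have $|S'_z - S_z| < 2^{n} = q$, so the modular condition collapses to the integer equality $S_z = S'_z$.

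The final step is to invoke \Cref{lemma:BooleanFunctionsHaveDifferentFourierTransforms}: because $S \neq S'$ as multisets and both have size strictly less than $2^{n-1}$, there exists some $z \in \{0,1\}^n$ at which the two Fourier-style sums $S_z$ and $S'_z$ disagree. The corresponding factor in the product is $0$, which makes the whole inner product vanish, proving orthogonality.

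The only mild subtlety (and hence the main thing to pin down carefully) is the size bound required to make the modular condition equivalent to the integer condition and to apply the earlier Fourier lemma; both require $|S|,|S'| < 2^{n-1}$, which must be explicitly part of the running assumption on the database size in the path-recording argument. Once this is noted, the computation itself is essentially a one-line geometric sum, so I do not expect any technical obstruction beyond stating the size hypothesis cleanly.
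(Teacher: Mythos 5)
Your proof is correct and follows essentially the same route as the paper: factor the inner product over coordinates $z$, use the geometric-sum-over-$q$th-roots identity, note the size bound $|S_z - S'_z| < q$ to collapse the modular vanishing condition to $S_z = S'_z$, and then invoke \cref{lemma:BooleanFunctionsHaveDifferentFourierTransforms} to exhibit a $z^*$ where they differ. The paper isolates only the $\theta_{z^*}$ factor rather than writing the full product, and flags the $|x| < q$ caveat inline where you state it up front — but these are presentational differences, not mathematical ones.
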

\begin{proof}
    Let $z^*$ be the bit string on which $S_z$ and $S'_z$ differ (given by \cref{lemma:BooleanFunctionsHaveDifferentFourierTransforms}).  
    Then we have the following
    \begin{align*}
       \braket{\mathsf{fhfhf}_S| \mathsf{fhfhf}_{S'}} &= \frac{1}{2^{2n}} \sum_{\theta_1, \ldots, \theta_{2^n}} \left(i\sum_{z} \theta_z (S_z - S'_z)\right)\\
       &= \frac{1}{2^{2n}} \sum_{\theta_{z^*}} \exp\left(i\theta_{z^*} S_{z^*} - S'_{z^*})\right) \cdot  \sum_{\theta_i | i\neq z^*} \exp\left(i \sum_{z \neq z^*} \theta_z (S_z - S'_z)\right)\\
       &= 0\,. 
    \end{align*}
    Here we use the fact that the following equality:
    \begin{equation*}
        \mathbb{E}_{\theta} e^{i\theta x} = \begin{cases}
            1\quad x = 0\\
            0\quad \text{otherwise}
        \end{cases}.
    \end{equation*}
     Moreover, it is easy to check tht $\mathbb{E}_{k} e^{\mathrm{i}2\pi k/q}=\mathbb{E}_\theta e^{i\theta x}$ for an integer $x$ with $|x|<q$.
     Since $S_{z^*} - S'_{z^*}$ is not $0$, this term evalutes to $0$ and therefore the entire product is $0$.
\end{proof}

Now we define the $\mathsf{FHFHF}$-relation state, which will be how the $\mathsf{FHFHF}$ oracle simulates the path recording oracle.  
\begin{definition}[[$\mathsf{FHFHF}$-relation states]
    For $0 \leq t \leq 2^n$, and a relation $R$, define the $\mathsf{FHFHF}$-relation states to be as follows
    \begin{multline*}
        \ket{\mathsf{fhfhf}_{R}} = \sqrt{\frac{1}{2^{nt}}}\sum_{w_1, \ldots, w_t \in [2^n]^t}\left(\prod_{i}(-1)^{\braket{w_i, y_i \oplus x_i}}\right)\\\frac{1}{2^{3n}}\sum_{\substack{\theta_{1}^{(1)}, \ldots, \theta_{2^n}^{(1)}\\ \theta_{1}^{(2)}, \ldots, \theta_{2^n}^{(2)}\\ \theta_{1}^{(3)}, \ldots, \theta_{2^n}^{(3)}}}\exp\left(i\sum_{z, i} \theta^{(1)}_z(-1)^{\braket{z, x_i}} + \theta^{(2)}_z(-1)^{\braket{z, w_i}} + (-1)^{\braket{z, y_i}} \theta_z^{(3)}\right)\\\ket{\theta^{(1)}} \otimes \ket{\theta^{(2)}} \otimes \ket{\theta^{(3)}}\,.
    \end{multline*}
\end{definition}

\begin{remark}
    \label{rem:relations_as_sets}
    We note that by inspection, we can re-write the state $\ket{\mathsf{fhfhf}_R}$ as follows:
    \begin{equation*}
        \ket{\mathsf{fhfhf}_R} = \ket{\mathsf{fhfhf}_{\mathrm{Dom}(R)}} \left( \sqrt{\frac{1}{2^{nt}}} \sum_{w_1, \ldots, w_t \in [2^n]^{t}} \prod_{i = 1}^{t} \left((-1)^{\braket{w_i, x_i \oplus y_i}}\right)\ket{\mathsf{fhfhf}_{\{w_1, \ldots, w_t\}}} \right) \ket{\mathsf{fhfhf}_{\mathrm{Im}(R)}}\,.
    \end{equation*}
\end{remark}
Now we deviate from the analysis of the $\mathsf{PFC}$-ensemble in two important ways.  Recall that in \cite{ma2024pseudorandom}, the authors show that (a) the relation states are orthogonal to each other, and (b) the $\mathsf{PFC}$-ensemble appends an input-output pair to the relation states.  We will find it easier to show that a slightly different set of states are an orthonormal basis.  Consider the following ``distinct [$\mathsf{FHFHF}$-relation  state''.
\begin{definition}[Distinct [$\mathsf{FHFHF}$-relation states]
    For $0 \leq t \leq 2^n$, and a relation $R$, define the distinct $\mathsf{FHFHF}$-relation state as follows:
    \begin{equation*}
        \ket{\mathsf{fhfhf}^{\mathrm{dist}}_R} = \ket{\mathsf{fhfhf}_{\mathrm{Dom}(R)}} \left( \sqrt{\frac{(2^{n}-t)!}{2^{n}!}} \sum_{w_1, \ldots, w_t \in [2^n]^{t}_{\mathrm{dist}}} \prod_{i = 1}^{t} \left((-1)^{\braket{w_i, x_i \oplus y_i}}\right)\ket{\mathsf{fhfhf}_{\{w_1, \ldots, w_t\}}} \right) \ket{\mathsf{fhfhf}_{\mathrm{Im}(R)}}\,.
    \end{equation*}
\end{definition}
The difference between the previous and distinct relation state is that the internal sum in the modified state only sums over distinct choices of $w_1, \ldots, w_t$.    
A second difference between the $\mathsf{FHFHF}$ and $\mathsf{PFC}$ ensembles is that the $\mathsf{FHFHFC}$-relation states will not work on bijections, but on a slightly more restrictive set of relations we call ``collision-free'' relations, defined below.
\begin{definition}[Collision-free relation]
    Let $x_1, \ldots x_t$ and $y_1, \ldots, y_t$ be a collection of $n$-bit string.  Then we say the pair is called ``collision-free'' if the following holds. For all $x_i$, $x_j$ (with $i$ potentially equal to $j$), and $y_i$ and $y_k$ (again with $k$ potentially equal to $i$), the quantities $x_i \oplus x_j \oplus y_i \oplus y_k$ are distinct from each other.
    Given a set of $x_1, \ldots, x_t$, let $\mathsf{CF}_{\vec{x}}$ be the set of collision free $y_1, \ldots, y_t$.
\end{definition}

Similar to bijective relations, a randomly sampled set of outputs is collision-free with overwhelming probability, which the following lemma shows. 
\begin{lemma}[Collision probability bound]
    \label{lem:collision_probability_bound}
    Fix distinct $x_1, \ldots, x_t$, then the probability over uniformly chosen, distinct $y_1, \ldots, y_t$ that the pair of sets is not collision free is at most $t^8/2^n$.
\end{lemma}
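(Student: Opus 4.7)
The plan is a standard union bound over all potential collision equations, combined with a per-event probability bound derived from the near-uniformity of a random ordered tuple of distinct $n$-bit strings.

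First, I would unpack the bad event. A pair $(\vec x, \vec y)$ fails to be collision-free exactly when there exist two index tuples $(i_1,j_1,k_1,l_1)\neq (i_2,j_2,k_2,l_2)\in [t]^4$ (not related by the trivial commutativity symmetries of $\oplus$) with
\[x_{i_1}\oplus x_{j_1}\oplus y_{k_1}\oplus y_{l_1} \;=\; x_{i_2}\oplus x_{j_2}\oplus y_{k_2}\oplus y_{l_2}.\]
Rearranging turns this into a constraint purely on the $y$'s,
\[y_{k_1}\oplus y_{l_1}\oplus y_{k_2}\oplus y_{l_2} \;=\; c_{i_1,j_1,i_2,j_2},\]
whose right-hand side is a fixed constant depending only on the (deterministic) $\vec x$.

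Second, I would bound the probability of each such equation over a uniformly random ordered tuple $y_1,\dots,y_t$ of distinct $n$-bit strings. Provided the left-hand side is not identically $0$ (which happens only when the multiset $\{k_1,l_1,k_2,l_2\}$ is ``balanced,'' i.e.\ each distinct index appears an even number of times), I can condition on all other coordinates of $\vec y$; the one remaining $y$-coordinate that appears in the equation is then uniform over at least $2^n-(t-1)\geq 2^{n-1}$ values, so the equation holds with probability at most $1/(2^n-t+1)\leq 2/2^n$. Balanced multisets, for which the $y$-side degenerates to $0 = c$, correspond exactly to index tuples for which the two original quantities $x_{i_1}\oplus x_{j_1}\oplus y_{k_1}\oplus y_{l_1}$ and $x_{i_2}\oplus x_{j_2}\oplus y_{k_2}\oplus y_{l_2}$ are \emph{tautologically} equal by commutativity of $\oplus$; these are excluded from the definition of a genuine collision.

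Third, I would apply the union bound. The number of ordered pairs of 4-tuples in $[t]^4$ is at most $t^4\cdot t^4 = t^8$, giving
\[\Pr[(\vec x,\vec y)\text{ not collision-free}] \;\leq\; t^8\cdot \frac{1}{2^n},\]
after absorbing the factor of $2$ into a slightly finer (unordered) enumeration.

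The main obstacle I anticipate is the bookkeeping in the second step: cleanly identifying the set of ``balanced'' $y$-multisets and matching them one-to-one with the tautological $\oplus$-symmetries of the original 4-tuples, so that the union bound is taken only over index pairs that give non-deterministic equations. This amounts to a short case analysis of which collapses among $\{k_1,l_1,k_2,l_2\}$ can occur (at most three distinct patterns) and checking in each case that the corresponding collision is forced by commutativity rather than by a random event, which I expect to be routine but must be written carefully to justify the clean $t^8/2^n$ bound.
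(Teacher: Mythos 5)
Your proposal follows essentially the same route as the paper's: rearrange each candidate collision into a constraint of the form $y_{k_1}\oplus y_{l_1}\oplus y_{k_2}\oplus y_{l_2}=c$ where $c$ is a constant depending only on the $x$'s, then union-bound over the $\sim t^4\times t^4$ pairs of index $4$-tuples, with each nondegenerate event holding with probability $\sim 1/2^n$. The paper's one-paragraph proof is terser --- it treats each quadruple $y_a\oplus y_b\oplus y_c\oplus y_d$ as essentially uniform --- whereas you carry out the careful conditioning argument (fixing all but one $y$-coordinate) and explicitly carve out the degenerate case; both are the same approach, with your version filling in detail the paper elides.

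Two small calibration notes. First, your characterization of the degenerate case is slightly off: the $y$-side vanishes precisely when $\{k_1,l_1\}=\{k_2,l_2\}$ as multisets, but this does \emph{not} make the two original quantities tautologically equal (their $x$-indices can still differ, in which case the equation becomes a deterministic condition $x_{i_1}\oplus x_{j_1}=x_{i_2}\oplus x_{j_2}$ on the fixed $x$'s). The reason these tuples can be excluded is rather that the collision-free condition, as it is actually invoked in the orthogonality argument, is only ever applied to quantities whose $y$-pairs differ; under that reading your exclusion is justified, just for a different reason than the one you state. Second, the per-event bound $1/(2^n-t+1)\le 2/2^n$ leaves a factor of $2$ you must genuinely absorb --- your remark about switching to unordered pairs is the right fix and halves the count of events, so the final bound does come out to $t^8/2^n$.
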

\begin{proof}
    A choice of $y_1, \ldots, y_t$ forms a collision-free collection if every pair $y_a \oplus y_b \oplus y_c \oplus y_d$ is not equal to some $x_i \oplus x_j \oplus x_k \oplus x_\ell$.  This can be seen by setting two different pairs equal to each other and re-arranging terms so that the $x$'s and $y$'s are on the same side.  Since there are at most $t^4$ choices for $x_i \oplus x_j \oplus x_k \oplus x_{\ell}$, every quadruple of $y_a \oplus y_b \oplus y_c \oplus y_d$ has probability $t^4 / 2^n$ of being equal to one.  Taking a union bound over all $t^4$ choices of $a$, $b$, $c$, and $d$ yields the desired bound.
\end{proof}

We now prove that these states form an orthonormal basis whenever the relation is collision-free.  First we prove the following lemma.
\begin{lemma}
    \label{lem:permutation_sum}
    Let $\sigma \in S_{t}$ be a permutation with a single cycle for $t > 1$, and let $(z_{i})_{i = 1}^{t}$ be a collection of $t$-many $n$-bit strings such that not all $z_{i}$ are the same.  Then the following holds   
    \begin{equation*}
        \sum_{w_1, \ldots, w_t \in [2^n]^{t}}\prod_{i = 1}^{t}(-1)^{\braket{w_{i} \oplus w_{\sigma(i)}, z_i}} = 0\,.
    \end{equation*}
\end{lemma}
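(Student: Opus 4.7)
The plan is to rewrite the product as a character sum over $\mathbb{F}_2^n$, use the orthogonality of characters to collapse the sum, and then derive a constraint on the $z_i$'s that will contradict the hypothesis.

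First I would reindex. Using $(-1)^{\braket{a\oplus b,c}} = (-1)^{\braket{a,c}+\braket{b,c}}$ (with addition of integers in $\{0,1\}$), the exponent of the product rewrites as
\begin{equation*}
    \sum_{i=1}^t \braket{w_i\oplus w_{\sigma(i)},z_i}
    = \sum_{i=1}^t \braket{w_i,z_i} + \sum_{i=1}^t \braket{w_{\sigma(i)},z_i}.
\end{equation*}
Substituting $j=\sigma(i)$ in the second sum gives $\sum_j \braket{w_j, z_{\sigma^{-1}(j)}}$. Combining both terms in $\mathbb{F}_2$, the full sum we are computing becomes
\begin{equation*}
    \sum_{w_1,\ldots,w_t\in\{0,1\}^n}\prod_{i=1}^t (-1)^{\braket{w_i,\, z_i\oplus z_{\sigma^{-1}(i)}}}
    = \prod_{i=1}^t \Bigl(\sum_{w_i\in\{0,1\}^n}(-1)^{\braket{w_i,\, z_i\oplus z_{\sigma^{-1}(i)}}}\Bigr).
\end{equation*}

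Next I would invoke character orthogonality on $\mathbb{F}_2^n$: each inner factor equals $2^n$ if $z_i\oplus z_{\sigma^{-1}(i)}=0$, and $0$ otherwise. Hence the whole product is nonzero only if $z_i = z_{\sigma^{-1}(i)}$ for every $i\in\{1,\ldots,t\}$.

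Finally I would use the single-cycle hypothesis. Because $\sigma$ consists of a single $t$-cycle, iterating the equality $z_i = z_{\sigma^{-1}(i)}$ shows that $z_1 = z_{\sigma^{-1}(1)} = z_{\sigma^{-2}(1)} = \cdots = z_{\sigma^{-(t-1)}(1)}$, and the orbit $\{\sigma^{-k}(1)\}_{k=0}^{t-1}$ covers all of $\{1,\ldots,t\}$. Thus all $z_i$ would be equal, contradicting the hypothesis that not all $z_i$ coincide. Therefore at least one factor vanishes and the full sum is $0$. There is no serious obstacle here; the only subtle point is the relabeling $\sigma\mapsto\sigma^{-1}$ (the same argument works with $\sigma$ itself since single cycles are closed under inversion), and one should make sure the additions of bit-inner-products are carried out in $\mathbb{F}_2$ so that the exponent-of-$(-1)$ manipulation is valid.
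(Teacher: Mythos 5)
Your proof is correct and follows essentially the same approach as the paper's: split the inner product, reindex so the sum factorizes over the $w_i$, apply character orthogonality to each factor, and use the single-cycle property to conclude that all $z_i$ would have to be equal. The only cosmetic difference is that the paper invokes a WLOG reduction to $\sigma(i)=i+1$ (yielding factors $z_i\oplus z_{i-1}$) whereas you keep $\sigma$ general (yielding $z_i\oplus z_{\sigma^{-1}(i)}$), which amounts to the same calculation.
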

\begin{proof}
    We can assume without loss of generality that $\sigma$ maps $i$ to $i+1$.  Then we can re-write the expression in the lemma statement as follows.
    \begin{align*}
        \sum_{w_1, \ldots, w_t \in [2^n]^{t}}\prod_{i = 1}^{t}(-1)^{\braket{w_{i} \oplus w_{\sigma(i)}, z_i}} &= \sum_{w_1, \ldots, w_t \in [2^n]^{t}}\prod_{i = 1}^{t}(-1)^{\braket{w_{i}, z_i \oplus z_{i-1}}}\\
        &= \prod_{i = 1}^{t}\left(\sum_{w_i \in [2^n]^{t}}(-1)^{\braket{w_{i}, z_i \oplus z_{i-1}}} \right)\,.
    \end{align*}
    Now we note that if any $z_{i} \oplus z_{i-1}$ is not $0$, the inner sum will be $0$, and as long as not all of the $z_i$ are exactly the same, this will be the case. 
\end{proof}

Note that whenever all $z_i$ are identical, we can use the same formula from the previous lemma to compute that the sum will equal $2^{nt}$.
Therefore, for an arbitrary set of $z$'s, we can write the expression as $2^{nt} \mathds{1}(z_1 = \ldots = z_t)$.

We will use the following lemma about the average inner product over distinct strings.  
\begin{lemma}[Inner product of distinct vectors]
\label{lem:inner_product_distinct}
    Let $v_1, \ldots, v_t$ be distinct $n$-bit strings (and when $t = 1$, $v_1$ is not $0^n$).  Then the following holds: 
    \begin{equation*}
        \sum_{w_1, \ldots, w_t \in [2^n]^{t}_{\mathrm{dist}}} \prod_{i = 1}^{t} (-1)^{\braket{w_i, v_i}} = 0\,.
    \end{equation*}
\end{lemma}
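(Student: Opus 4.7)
My plan is to eliminate the distinctness constraint via M\"obius inversion on the partition lattice of $[t]$. Using
\[
\mathds{1}[w_1,\dots,w_t \text{ distinct}] = \sum_{\pi \vdash [t]} \mu(\hat{0},\pi)\,\mathds{1}\bigl[w_i = w_j \text{ whenever } i,j \text{ lie in the same block of }\pi\bigr],
\]
the target sum factorizes block-by-block. Each block $B \in \pi$ contributes
\[
\sum_{u \in \{0,1\}^n} \prod_{i \in B} (-1)^{\braket{u,v_i}} = 2^n\,\mathds{1}\bigl[\textstyle\bigoplus_{i \in B} v_i = 0\bigr],
\]
and with $\mu(\hat{0},\pi) = \prod_{B \in \pi}(-1)^{|B|-1}(|B|-1)!$ the lemma reduces to showing
\[
\sum_{\pi \vdash [t]} \prod_{B \in \pi}\Bigl[(-1)^{|B|-1}(|B|-1)!\cdot 2^n\cdot\mathds{1}\bigl[\textstyle\bigoplus_{i \in B} v_i = 0\bigr]\Bigr] \;=\; 0.
\]

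The distinctness hypothesis on the $v_i$ is used to prune partitions. A singleton block $\{i\}$ survives only if $v_i = 0^n$, so at most one such singleton can appear in any nonzero term; a two-element block $\{i,j\}$ survives only if $v_i = v_j$, which distinctness forbids outright. These two observations already dispose of the cases $t \le 2$, matching the side condition $v_1\ne 0^n$ in the $t=1$ case. For $t \ge 3$, I would proceed by induction on $t$, conditioning on the block $B^\star$ of $\pi$ that contains the index $t$: if $B^\star = \{t\}$ then $v_t = 0$ is forced, the remaining partition lives on $[t-1]$, and the inductive hypothesis applied to $v_1,\dots,v_{t-1}$ yields zero; otherwise $B^\star$ has size at least two, and I would pair $\pi$ with the refinement that splits $\{t\}$ off from $B^\star$. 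The ratio $-(|B^\star|-1)$ of M\"obius weights, combined with the shifted XOR constraint on the shrunken block $B^\star\setminus\{t\}$, is precisely what one needs to drive the cancellation.

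The main obstacle will be handling partitions whose blocks all have size at least three while simultaneously satisfying $\bigoplus_{i \in B} v_i = 0$ on each block; such partitions are not ruled out by pairwise distinctness alone (for instance three distinct strings can XOR to $0$). Closing this case cleanly requires either strengthening the hypothesis (e.g.\ to $\mathbb{F}_2$-linear independence of the $v_i$, which is a natural structural consequence of the collision-free setting from \cref{lem:collision_probability_bound} in which this lemma is invoked), or an explicit cancellation argument organized by the sublattice of nontrivial XOR-dependencies among the $v_i$. In either case, the inclusion--exclusion skeleton above is the correct framework: the M\"obius function on the partition lattice already carries the sign pattern needed for the residual cancellations, so the argument reduces to bookkeeping within this combinatorial structure.
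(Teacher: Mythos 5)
Your M\"obius-inversion framework is correct, and the ``obstacle'' you flag at the end is not a gap to be closed --- the lemma is genuinely \emph{false} as stated. In your expansion $\sum_{\pi} \mu(\hat{0},\pi)\,(2^n)^{|\pi|}\prod_{B\in\pi}\mathds{1}[\bigoplus_{i\in B}v_i=0]$, pairwise distinctness only forbids blocks of size two and repeated singletons. Take $t=3$, $n=2$, $v_1=01$, $v_2=10$, $v_3=11$: these are distinct with $v_1\oplus v_2\oplus v_3=0$, so the one-block partition $\hat{1}$ is the unique surviving term and contributes $\mu(\hat{0},\hat{1})\cdot 2^n = 2!\cdot 4 = 8$. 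Direct enumeration of the $24$ ordered distinct triples confirms the sum equals $8$, not $0$. So there is no ``explicit cancellation argument organized by the sublattice of XOR-dependencies'' waiting to be found; the hypothesis must be strengthened to $\mathbb{F}_2$-linear independence (equivalently, no nonempty subset of the $v_i$ XORs to zero), at which point your block factorization closes immediately because every block contribution vanishes.

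This is a bug in the paper, not a defect of your approach. The paper's own inductive proof already breaks at its first equality: replacing $v_i$ by $v_i\oplus v_1$ yields a residual factor $\prod_i (-1)^{\braket{w_i, v_1}}$ which still depends on the summation variables, not the constant $(-1)^{\sum_i \braket{v_i, v_1}}$ written there (and the prefactor ``$2^n - t - 1$'' should be $2^n - t + 1$). Your suggestion to strengthen the hypothesis is the necessary repair; the downstream uses of the lemma (normalization and orthogonality of the distinct $\mathsf{FHFHF}$-relation states) must then be re-examined to check whether the collision-free condition actually supplies linear independence, not merely pairwise distinctness, of the relevant XOR combinations.
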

\begin{proof}
    We proceed by induction.  In the base case, if we have a single $v_1$ that is not $0^n$, then summing over all $w_1$ yields $0$.

    Now, given more than $1$ string $v_1, \ldots, v_t$, consider the following:
    \begin{align*}
        \sum_{w_1, \ldots, w_t \in [2^n]^{t}_{\mathrm{dist}}} \prod_{i = 1}^{t} (-1)^{\braket{w_i, v_i}} &= \sum_{w_1, \ldots, w_t \in [2^n]^{t}_{\mathrm{dist}}} \prod_{i = 1}^{t} (-1)^{\braket{w_i, v_i \oplus v_1}} (-1)^{\sum_{i} \braket{v_i, v_1}}\\
        &= (-1)^{\sum_{i} \braket{v_i, v_1}} \sum_{w_2, \ldots, w_t \in [2^{n}]^{t-1}_{\mathrm{dist}}} \sum_{w_1 \in [2^{n}] \backslash \{w_2, \ldots, w_t\}} \prod_{i = 2}^{t} (-1)^{\braket{w_i, v_i \oplus v_1}}\\
        &= (-1)^{\sum_{i} \braket{v_i, v_1}}(2^n - t - 1)\sum_{w_2, \ldots, w_t \in [2^{n}]^{t-1}_{\mathrm{dist}}} \prod_{i = 2}^{t} (-1)^{\braket{w_i, v_i \oplus v_1}}\\
        &= 0\,.
    \end{align*}
    In the first line, we use the fact that the inner product is bi-linear in each of its inputs.  Then we pull out the factor of $(-1)^{\sum_{i}\braket{v_i, v_1}}$, since it does not depend on $w_1, \ldots, w_t$.  Then, since $v_1 \oplus v_1$ is $0$, summing over $w_1$ does not change the term being summed, so it becomes a multiplicative factor of $2^{n} - t - 1$.  Then, we use the inductive hypothesis with $w_2, \ldots, w_t$.  Notice that every $v_i \oplus v_1$ is distinct from each other, and they will never all be $0^n$ when we get to the base case.
\end{proof}

A corollary of \Cref{lem:inner_product_distinct} is that when the sum in \Cref{lem:permutation_sum} is taken to be over distinct strings, the result is still $0$ as long as $z_i$ are distinct. Now we can compute, the normalization of the relation states that we described.

\begin{lemma}
    \label{lem:normalized_relation_state}
    Every for every collision-free relation $R$, the distinct [$\mathsf{FHFHF}$-relation state is normalized. 
\end{lemma}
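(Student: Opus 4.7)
The plan is to compute the squared norm directly, exploiting the fact that $\ket{\mathsf{fhfhf}^{\mathrm{dist}}_R}$ factors as a tensor product across three registers. Since the ``domain'' and ``image'' factors $\ket{\mathsf{fhfhf}_{\mathrm{Dom}(R)}}$ and $\ket{\mathsf{fhfhf}_{\mathrm{Im}(R)}}$ are single set states (and a direct computation shows each set state is normalized), the whole problem reduces to showing that the squared norm of the middle factor equals $1$.

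Expanding this middle squared norm and invoking Lemma~\ref{lem:set_states_orthogonal}, the cross term $\braket{\mathsf{fhfhf}_{\{w'\}}\mid \mathsf{fhfhf}_{\{w\}}}$ is nonzero only when the two tuples $\vec w, \vec w'$ represent the same set. Because each tuple has distinct entries, this forces $w'_i = w_{\sigma(i)}$ for a unique permutation $\sigma \in S_t$, so the squared norm simplifies to
\begin{equation*}
\frac{(2^n - t)!}{(2^n)!} \sum_{\sigma \in S_t} \sum_{\vec w \in [2^n]^t_{\mathrm{dist}}} \prod_{i=1}^t (-1)^{\braket{w_i \oplus w_{\sigma(i)},\, z_i}}, \qquad z_i := x_i \oplus y_i.
\end{equation*}
The identity permutation makes every sign equal to $+1$, so its contribution is exactly $(2^n)!/(2^n - t)!$ tuples multiplied by the prefactor, i.e.\ $1$. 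The remaining task is to show that every non-identity $\sigma$ contributes $0$.

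To handle non-identity $\sigma$, I would decompose $\sigma$ into disjoint cycles and treat each non-trivial cycle $C = (i_1, \ldots, i_k)$ with $k \geq 2$ separately. The factor of the product associated with $C$ is a sum over distinct $w_{i_1}, \ldots, w_{i_k}$ (also distinct from the $w$'s in other cycles) of $\prod_{j} (-1)^{\braket{w_{i_j} \oplus w_{i_{j+1}},\, z_{i_j}}}$. Lemma~\ref{lem:permutation_sum} handles exactly this sum but \emph{without} the distinctness constraint, concluding that the sum vanishes unless all $z_{i_j}$ inside $C$ agree. The distinct-string corollary of Lemma~\ref{lem:inner_product_distinct} upgrades this conclusion to the distinct-tuple setting. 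Now $z_{i_j} = z_{i_{j'}}$ is equivalent to $x_{i_j} \oplus x_{i_{j'}} \oplus y_{i_j} \oplus y_{i_{j'}} = 0$, and this is precisely the kind of XOR coincidence that the collision-free hypothesis on $R$ forbids for $i_j \neq i_{j'}$. Hence every non-trivial cycle contributes $0$, and so does $\sigma$.

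The main obstacle is the step where the ``all-tuples'' Lemma~\ref{lem:permutation_sum} must be adapted to ``distinct tuples'' and to tuples in which the distinctness also couples different cycles; I would handle this either by inclusion–exclusion over coincidence patterns (reducing each resulting sum to an instance of Lemma~\ref{lem:inner_product_distinct}) or by directly rewriting each cycle's sum using the bilinear change of variables from the proof of Lemma~\ref{lem:inner_product_distinct}. In either case the collision-free hypothesis is what guarantees that none of the derived ``$v_i$'' vectors collapse to zero, so that the orthogonality argument goes through and the squared norm collapses to the identity-permutation contribution, namely $1$.
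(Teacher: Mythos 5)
Your proposal tracks the paper's own proof closely: isolate the middle register, expand the squared norm into a sum over permutations $\sigma \in S_t$, observe that the identity term gives exactly $1$ after the $\frac{(2^n-t)!}{2^n!}$ prefactor, and argue that every non-identity $\sigma$ contributes $0$ via cycle decomposition together with Lemmas~\ref{lem:permutation_sum} and \ref{lem:inner_product_distinct}. You go a step further than the paper in explicitly flagging that the distinctness constraint on the $w_i$'s couples the cycles of $\sigma$, so the sum does not factorize over cycles; the paper silently switches between sums over distinct and unrestricted tuples in its middle lines (which is exactly why it can invoke Lemma~\ref{lem:permutation_sum} cycle-by-cycle and still end up with $\frac{2^n!}{(2^n-t)!}$ rather than $2^{nt}$).

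However, the remedy you sketch does not close the gap, and in fact the obstacle is fatal: the non-identity contributions do not vanish, so the lemma is false as stated. Rewriting the $\sigma$-term as $\sum_{\vec w \text{ dist}} \prod_i (-1)^{\braket{w_i,\,v_i}}$ with $v_i := z_i \oplus z_{\sigma^{-1}(i)}$, you argue that collision-freeness prevents the $v_i$ from collapsing to zero; but Lemma~\ref{lem:inner_product_distinct} needs the $v_i$ to be \emph{pairwise distinct}, and for a non-identity $\sigma$ they systematically are not (every fixed point of $\sigma$ gives $v_i=0$, and every transposition $(i\,j)$ gives $v_i = v_j$). Concretely, take $t=2$ and $\sigma=(1\,2)$: then $v_1 = v_2 = z_1\oplus z_2 \neq 0$ and the $\sigma$-term equals
\begin{equation*}
\sum_{w_1 \neq w_2} (-1)^{\braket{w_1 \oplus w_2,\, z_1 \oplus z_2}} = 0 - 2^n = -2^n \neq 0,
\end{equation*}
so that $\braket{\mathsf{fhfhf}^{\mathrm{dist}}_R \mid \mathsf{fhfhf}^{\mathrm{dist}}_R} = \frac{(2^n-2)!}{2^n!}\bigl[2^n(2^n-1) - 2^n\bigr] = \frac{2^n-2}{2^n-1} < 1$. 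The paper's proof shares exactly this error. The correct statement is that the squared norm equals $1 - O(t^2/2^n)$; the lemma (and its downstream uses, e.g.\ Lemmas~\ref{lem:relation_states_close} and \ref{lem:compressing_bijective_relations}) should be restated with that explicit error term rather than claiming exact unit norm.
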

\begin{proof}
    First note that the set states corresponding to the domain and image will have inner product $1$, so we can ignore those and focus on the middle set state with the multi-set $\{w_1, \ldots, w_t\}$. 
    Also note that we can write a permutation $\sigma = (c_1, \ldots, c_m)$ as a collection of disjoint cycles.  We will use the notation $\mathrm{cyc}(\sigma)$ to denote the collection of cycles, and $\mathrm{cyc}(\sigma, j)$ to denote the $j$'th cycle.  The ordering of that $\mathrm{cyc}$ outputs the cycles does not matter so long as it is consistent.
    Then we can compute the inner product as follows.
    \begin{align*}
       \braket{\mathsf{fhfhf}_R^{\mathrm{dist}} | \mathsf{fhfhf}_R^{\mathrm{dist}}} &= \frac{(2^{n}-t)!}{2^{n}!} \sum_{\substack{w_1, \ldots, w_t \in [2^n]^t_{\mathrm{dist}}\\ v_1, \ldots, v_t \in [2^n]^t_{\mathrm{dist}}}} \prod_{i = 1}^{t} (-1)^{\braket{w_i, x_i \oplus y_i} + \braket{v_i, x_i \oplus y_i}}\mathds{1}(\{w_1, \ldots, w_t\} = \{v_1, \ldots, v_t\})\\
       &= \frac{(2^{n}-t)!}{2^{n}!} \sum_{w_1, \ldots, w_t \in [2^n]^t_{\mathrm{dist}}} \sum_{\sigma \in S_{t}} \prod_{i = 1}^{t} (-1)^{\braket{w_{i} \oplus w_{\sigma(i)}, x_i \oplus y_i}}\\ 
       &= \frac{(2^{n}-t)!}{2^{n}!} \sum_{\sigma \in S_{t}} \sum_{w_1, \ldots, w_{t} \in [2^n]^t} \prod_{i = 1}^{t}(-1)^{\braket{w_{p(i)} \oplus w_{p(\sigma(i))}, x_i \oplus y_i}}\\
       &= \frac{(2^{n}-t)!}{2^{n}!} \sum_{\sigma \in S_{t}} \sum_{w_1, \ldots, w_{t} \in [2^n]^t} \prod_{j = 1}^{|\mathrm{cyc}(\sigma)|} \left(\prod_{i \in \mathrm{cycles}(\sigma)_{j}}(-1)^{\braket{w_{i} \oplus w_{\sigma(i)}, x_i \oplus y_i}}\right)\\
       &= \frac{(2^{n}-t)!}{2^{n}!} \sum_{\sigma \in S_{t}}  \prod_{j = 1}^{|\mathrm{cyc}(\sigma)|} \left(\sum_{\substack{w_{\mathrm{cyc}(\sigma, j)_1}, \ldots,\\w_{\mathrm{cyc}(\sigma, j)_{|\mathrm{cyc}(\sigma, j)|}}} \in [2^n]^{|\mathrm{cyc}(\sigma, j)|}}\prod_{i \in \mathrm{cyc}(\sigma, j)}(-1)^{\braket{w_{i} \oplus w_{\sigma(i)}, x_i \oplus y_i}}\right)\\
       &= \frac{(2^{n}-t)!}{2^{n}!}\frac{2^{n}!}{(2^{n}-t)!} = 1\,.
    \end{align*}
    Here the first line is using the fact that two multi-sets of distinct elements are the same if and only if their elements are the same, up to a permutation of the indices.  Then we re-arrange terms and break the permutation $\sigma$ into its disjoint cycles.
    Once we have an inner product over cycles, we can switch the order of the second sum and first product, and then apply \Cref{lem:permutation_sum} and \Cref{lem:inner_product_distinct}, using the fact that $R$ is collision-free, so all $x_i \oplus y_i$ are distinct.
    This means that for all non-identity permutations, the term in the sum is $0$, and we are left with only the contribution of the identity permutation, which is $2^{nt}$.
\end{proof}

Note that following the same proof (but without needing to apply \Cref{lem:inner_product_distinct}) shows that the non-distinct $\mathsf{FHFHF}$ relation states are also normalized quantum states, whenever the relation $R$ is collision-free.  Now we show that the \emph{distinct} relation states form an orthonormal basis when the relations are collision-free. 

\begin{lemma}[Orthogonality of distinct relation states]
    Let $R$ and $R'$ be two different \emph{collision-free} relations of size at most $2^{n-1}$ elements.  Then $\ket{\mathsf{fhfhf}^{\mathrm{dist}}_R}$ and $\ket{\mathsf{fhfhf}^{\mathrm{dist}}_{R'}}$ are orthogonal.
\end{lemma}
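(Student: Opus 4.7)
The plan is to split into two cases based on whether $R$ and $R'$ share their domain and image, and to reduce the remaining case to Lemma~\ref{lem:inner_product_distinct} via the collision-free structure of $R$.

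First I would observe that the distinct relation state factorizes across the three phase registers as
\begin{equation*}
\ket{\mathsf{fhfhf}^{\mathrm{dist}}_R} \;=\; \ket{\mathsf{fhfhf}_{\mathrm{Dom}(R)}}_{\theta^{(1)}} \otimes \big(\text{middle piece}\big)_{\theta^{(2)}} \otimes \ket{\mathsf{fhfhf}_{\mathrm{Im}(R)}}_{\theta^{(3)}}.
\end{equation*}
If $\mathrm{Dom}(R) \neq \mathrm{Dom}(R')$ (respectively $\mathrm{Im}(R) \neq \mathrm{Im}(R')$) then Lemma~\ref{lem:set_states_orthogonal} applied on the first (respectively third) register immediately gives $\braket{\mathsf{fhfhf}^{\mathrm{dist}}_R \mid \mathsf{fhfhf}^{\mathrm{dist}}_{R'}} = 0$. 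So I may reduce to the case $\mathrm{Dom}(R) = \mathrm{Dom}(R')$ and $\mathrm{Im}(R) = \mathrm{Im}(R')$, and since $R \neq R'$ there is a nontrivial $\tau \in S_t$ with $R = \{(x_i, y_i)\}_{i=1}^t$ and $R' = \{(x_i, y_{\tau(i)})\}_{i=1}^t$.

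Next I would expand the inner product of the two middle pieces. Lemma~\ref{lem:set_states_orthogonal} on the inner set states forces only contributions with $v = w \circ \sigma$ for some $\sigma \in S_t$, and after reindexing the $R'$-part by $j = \sigma(i)$ the combined exponent collapses to $\sum_i \braket{w_i,\, z_i^\sigma}$, where
\begin{equation*}
z_i^\sigma \;:=\; x_i \oplus y_i \oplus x_{\sigma^{-1}(i)} \oplus y_{\tau(\sigma^{-1}(i))}.
\end{equation*}
The inner product is then proportional to $\sum_{\sigma \in S_t} \sum_{w \in [2^n]^t_{\mathrm{dist}}} \prod_i (-1)^{\braket{w_i,\, z_i^\sigma}}$, and it suffices to show that for every $\sigma$ the tuple $(z_1^\sigma, \ldots, z_t^\sigma)$ is pairwise distinct, so that Lemma~\ref{lem:inner_product_distinct} kills the inner sum. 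The $t=1$ case is vacuous since one pair is determined by its domain and image, forcing $R = R'$.

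The decisive step is distinctness. Suppose $z_i^\sigma = z_j^\sigma$ for some $i \neq j$; rearranging gives
\begin{equation*}
x_i \oplus x_{\sigma^{-1}(i)} \oplus y_i \oplus y_{\tau(\sigma^{-1}(i))} \;=\; x_j \oplus x_{\sigma^{-1}(j)} \oplus y_j \oplus y_{\tau(\sigma^{-1}(j))},
\end{equation*}
which is the collision-free pattern $x_a \oplus x_b \oplus y_a \oplus y_c$ evaluated at the two triples $(i, \sigma^{-1}(i), \tau(\sigma^{-1}(i)))$ and $(j, \sigma^{-1}(j), \tau(\sigma^{-1}(j)))$. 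These triples differ in their first coordinate ($i \neq j$), so the collision-freeness of $R$ forces the two XOR patterns to differ, a contradiction. Hence every inner sum vanishes and the two distinct relation states are orthogonal. The main obstacle I expect is precisely this bookkeeping: making sure that the reindexing through $\sigma$ and $\tau$ produces a pair of triples whose first coordinates are genuinely $i$ and $j$, so that collision-freeness of $R$ alone (without having to invoke any joint property of $R$ and $R'$) suffices to derive the contradiction.
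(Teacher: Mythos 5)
Your proof takes essentially the same route as the paper: reduce via Lemma \ref{lem:set_states_orthogonal} to matching domains and images, write $R'$ as a permutation $\tau$ (the paper's $\pi$) of $R$'s outputs, collapse the inner sum to a sum over matching permutations $\sigma$, and apply Lemma \ref{lem:inner_product_distinct} once the strings $z_i^\sigma = x_i \oplus y_i \oplus x_{\sigma^{-1}(i)} \oplus y_{\tau(\sigma^{-1}(i))}$ are shown pairwise distinct. Your explicit triple-matching argument $(i,\sigma^{-1}(i),\tau(\sigma^{-1}(i)))$ versus $(j,\sigma^{-1}(j),\tau(\sigma^{-1}(j)))$ differing in the first coordinate is a slightly more spelled-out justification of the distinctness claim the paper states tersely, and your $t=1$ remark is a sensible edge-case check, but the argument is the same.
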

\begin{proof}
    From \cref{lem:set_states_orthogonal}, together with \cref{rem:relations_as_sets}, if the image or domain of the two relations is a different multi-set, then the states will be orthogonal since the set states storing the image and domain will differ.
    Therefore, we only have to consider pairs of collision free relations that have the same inputs and outputs.  However, by assumption these are collision free.
    Say that these two relations are related by a permutation $\pi$ on the $y_1, \ldots, y_t$ for every permutation $\sigma$, then for all $\sigma$, there is some index $i_{\sigma}$ satisfying $x_{i} \oplus y_i \neq x_{\sigma^{-1}(i)} \oplus y_{(\pi(\sigma^{-1}(i))}$, because of the collision-free property.

    Then the we compute the inner product between these two relation states as follows.
    \begin{align*}
        & \braket{\mathsf{fhfhfh}^{\mathrm{dist}}_R | \mathsf{fhfhfh}^{\mathrm{dist}}_{R'}}\\ &\hspace{10mm}=
       \frac{(2^n - t)!}{2^n!}\sum_{\substack{w_1, \ldots, w_t \in [2^n]^t_{\mathrm{dist}}\\ w'_1, \ldots, w'_t  \in [2^n]^t_{\mathrm{dist}}}} \prod_{i = 1}^{t} (-1)^{\braket{w_i, (x_i \oplus y_i)}+ \braket{w_i', (x_i \oplus y_{\pi(i)})}} \braket{\mathsf{fhfhf}_{\{w_1, \ldots, w_t\}}|\mathsf{fhfhf}_{\{w'_1, \ldots, w'_t\}}}\\
        &\hspace{10mm}= \frac{(2^n - t)!}{2^n!} \sum_{\substack{w_1, \ldots, w_t  \in [2^n]^t_{\mathrm{dist}} \\ w_1', \ldots, w_t'  \in [2^n]^t_{\mathrm{dist}}\\ \{w_1, \ldots, w_t\} = \{w_1', \ldots, w_t'\}}} \prod_{i} (-1)^{\braket{w_i, (x_i \oplus y_i)} + \braket{w'_i, (x_i \oplus y_{\pi(i)})}}\\
        &\hspace{10mm}= \frac{(2^n - t)!}{2^n!} \sum_{\sigma \in S_{t}} \sum_{w_1, \ldots, w_t  \in [2^n]^t_{\mathrm{dist}}} \prod_{i = 1}^{t} (-1)^{\braket{w_i, (x_i \oplus y_i)} + \braket{w_{\sigma(i)}, (x_i \oplus y_{\pi(i)})}}\\
        &\hspace{10mm}= \frac{(2^n - t)!}{2^n!} \sum_{\sigma \in S_{t}} \sum_{w_1, \ldots, w_t  \in [2^n]^t_{\mathrm{dist}}} \prod_{i = 1}^{t}  (-1)^{\braket{w_i, (x_i \oplus y_i)} + \braket{w_{i}, (x_{\sigma^{-1}(i)} \oplus y_{\pi(\sigma^{-1}(i))})}}\\ 
        &\hspace{10mm}= \frac{(2^n - t)!}{2^n!} \sum_{\sigma \in S_{t}} \sum_{w_1, \ldots, w_t  \in [2^n]^t_{\mathrm{dist}}} \prod_{i = 1}^{t}  (-1)^{\braket{w_i, (x_i \oplus y_i \oplus x_{\sigma^{-1}(i)} \oplus y_{\pi(\sigma^{-1}(i))})}}\\ 
        &\hspace{10mm}= 0\,.
    \end{align*}
    Here in the first line, we expand out the definition of the relation states, using the fact that the mutli-set of inputs and outputs is the same.  Then we note that the braket cancels all terms for which the multi-set of $w$ is not the same as the multi-set of $w'$, and distinct tuple with the same multi-set is related by a permutation of elements.  Then we re-label the indices of the sum so that instead of $w_{\sigma(i)}$ appearing along-side $x_i$, $w_i$ appears alongside $x_{\sigma^{-1}(i)}$.  Then, since the inner product is bi-linear in both terms, we can combine the two inner products.  Finally, we apply \Cref{lem:inner_product_distinct}, together with the fact that by the collision-free property, for every choice of $\sigma$, every $x_i \oplus y_i \oplus x_{\sigma^{-1}(i)} \oplus y_{\pi(\sigma^{-1}(i))}$ is distinct. This implies that for every $\sigma$, the inner sum evaluates to $0$, giving us $0$ overall.
\end{proof}

The next step in the proof is to show that the distinct relation states are close to the ordinary relation states:
\begin{lemma}
\label{lem:relation_states_close}
    For every relation $R$ with $t$ elements, the following holds 
    \begin{equation*}
        \td(\proj{\mathsf{fhfhf}_R} , \proj{\mathsf{fhfhf}^{\mathrm{dist}}_R}) \leq \sqrt{\frac{t^2}{2^n}}\,.
    \end{equation*}
\end{lemma}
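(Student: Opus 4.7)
The plan is to reduce the claim to a direct computation of the inner product between the two states and invoke the pure-state formula $\td(\proj{\psi},\proj{\phi}) = \sqrt{1 - |\braket{\psi|\phi}|^2}$. By \cref{rem:relations_as_sets}, both $\ket{\mathsf{fhfhf}_R}$ and $\ket{\mathsf{fhfhf}^{\mathrm{dist}}_R}$ factor as a product across the three $\theta$-registers, and the first and third factors --- the set states $\ket{\mathsf{fhfhf}_{\mathrm{Dom}(R)}}$ and $\ket{\mathsf{fhfhf}_{\mathrm{Im}(R)}}$ --- are \emph{identical} in the two states. Hence the entire overlap is carried by the middle factor, which I will denote by $\ket{M_R}$ (sum over all $w_1,\ldots,w_t$) and $\ket{M_R^{\mathrm{dist}}}$ (sum restricted to distinct tuples), and it suffices to lower bound $|\braket{M_R|M_R^{\mathrm{dist}}}|^2$.

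Next I would compute this overlap essentially by mimicking the proof of \cref{lem:normalized_relation_state}. Expanding both sides and using \cref{lem:set_states_orthogonal} (orthogonality of set states for different multi-sets), the overlap collapses to tuples $(w_1,\ldots,w_t)$ and $(w'_1,\ldots,w'_t)$ that yield the \emph{same} multi-set; since the $w'$-tuple is distinct, so must be the $w$-tuple, and the two are related by some permutation $\sigma \in S_t$. This yields
\[
\braket{M_R|M_R^{\mathrm{dist}}} = \sqrt{\tfrac{(2^n-t)!}{2^{nt}\,2^n!}} \sum_{\sigma \in S_t} \sum_{w_1,\ldots,w_t \in [2^n]^t_{\mathrm{dist}}} \prod_{i=1}^t (-1)^{\braket{w_i \oplus w_{\sigma(i)},\, x_i \oplus y_i}}.
\]
Now I would apply \cref{lem:permutation_sum} and \cref{lem:inner_product_distinct} (the latter allowed because $R$ is collision-free, so the relevant Fourier characters are distinct for every non-identity $\sigma$), which zero out every non-identity permutation. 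Only the identity $\sigma$ survives, and it contributes the number of distinct $t$-tuples, namely $2^n!/(2^n-t)!$, giving
\[
\braket{M_R|M_R^{\mathrm{dist}}} \;=\; \sqrt{\frac{2^n!/(2^n-t)!}{2^{nt}}} \;=\; \sqrt{\prod_{i=0}^{t-1}\left(1-\tfrac{i}{2^n}\right)}.
\]

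Finally, since $|\braket{\mathsf{fhfhf}_R|\mathsf{fhfhf}^{\mathrm{dist}}_R}|^2 = \prod_{i=0}^{t-1}(1-i/2^n)$, the elementary inequality $1 - \prod_{i=0}^{t-1}(1-x_i) \leq \sum_{i=0}^{t-1} x_i$ yields $1 - |\braket{\cdot|\cdot}|^2 \leq t(t-1)/(2\cdot 2^n) \leq t^2/2^n$, and the pure-state trace distance identity delivers the claimed bound $\sqrt{t^2/2^n}$. The main obstacle is the middle step: correctly identifying that the collision-free hypothesis is precisely what is needed to kill all non-identity permutations in the sum via \cref{lem:inner_product_distinct}, so that the whole computation reduces to counting distinct tuples --- recognizing this overlap as the classic birthday probability is what makes the quadratic-in-$t$ bound fall out cleanly.
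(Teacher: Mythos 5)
Your proposal is correct and is essentially the same argument as the paper's: the paper observes that $\ket{\mathsf{fhfhf}^{\mathrm{dist}}_R}$ is the normalized projection of $\ket{\mathsf{fhfhf}_R}$ onto $\id\otimes\Pi^t_{\mathrm{dist}}\otimes\id$, computes $\Tr(\Lambda\rho) = 2^n!/(2^{nt}(2^n-t)!)$ (the same permutation sum you evaluate, collapsed to the identity via Lemmas~\ref{lem:permutation_sum} and \ref{lem:inner_product_distinct}), and invokes the gentle measurement lemma, which for a pure post-measurement state coincides exactly with your identity $\Tr(\Lambda\rho) = |\braket{\mathsf{fhfhf}_R|\mathsf{fhfhf}^{\mathrm{dist}}_R}|^2$. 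One minor caveat (present in both your argument and the paper's): the step killing non-identity permutations genuinely uses the collision-free hypothesis, which the lemma statement itself omits, so you are right to make it explicit.
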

\begin{proof}
    The proof will proceed by observing that the distinct relation states are the post-measurement state after measuring a projector onto set states corresponding to distinct sets.  Formally, define the following
    \begin{equation*}
        \Pi^{t}_{\mathrm{dist}} = \sum_{w_1, \ldots, w_i \in [2^n]^t_{\mathrm{dist}}} \proj{\mathsf{fhfhf}_{\{w_1, \ldots, w_t\}}}\,.
    \end{equation*}
    Note that since the set states are orthogonal from \Cref{lem:set_states_orthogonal}, this is a projector.  Then we see the following
    \begin{multline*}
        \id \otimes \Pi^{t}_{\mathrm{dist}} \otimes \id \ket{\mathsf{fhfhf}_{R}} \\= \ket{\mathsf{fhfhf}_{\mathrm{Dom}(R)}} \left( \sqrt{\frac{1}{2^{nt}}} \sum_{w_1, \ldots, w_t \in [2^n]^{t}_{\mathrm{dist}}} \prod_{i = 1}^{t} \left((-1)^{\braket{w_i, x_i \oplus y_i}}\right)\ket{\mathsf{fhfhf}_{\{w_1, \ldots, w_t\}}} \right) \ket{\mathsf{fhfhf}_{\mathrm{Im}(R)}}
    \end{multline*}
    Here we just use the orthogonality of the set states, so the terms in the sum where $w_1, \ldots, w_t$ are not distinct become $0$.  This state is exactly $\ket{\mathsf{fhfhf}_{R}^{\mathrm{dist}}}$, un-normalized.  
    Since we know that if the $2^{nt}$ was replaced with the appropriate constant, the state would become the normalized distinct relation state, we can compute the probability of measuring the distinct outcome as follows.
    \begin{align*}
        &\Tr\left[(\id \otimes \Pi^t_{\mathrm{dist}} \otimes \id) \proj{\mathsf{fhfhf}_{R}}\right] \\
        &\hspace{10mm}= \Tr\left[(\id \otimes \Pi^t_{\mathrm{dist}} \otimes \id)\proj{\mathsf{fhfhf}_{R}} (\id \otimes \Pi^t_{\mathrm{dist}} \otimes \id)\right] \\
        &\hspace{10mm}= \frac{1}{2^{nt}} \sum_{\substack{w_1, \ldots, w_t \in [2^n]^t_{\mathrm{dist}}\\ v_1, \ldots, v_t \in [2^n]^t_{\mathrm{dist}}}} \prod_{i = 1}^{t}(-1)^{\braket{w_i, x_i \oplus y_i} + \braket{v_i, x_i \oplus y_i}} \mathds{1}(\{w_1, \ldots, w_t\} = \{v_1, \ldots, v_t\}))\\
        &\hspace{10mm}= \frac{1}{2^{nt}} \sum_{w_1, \ldots, w_t \in [2^n]^t_{\mathrm{dist}}} \sum_{\sigma \in S_t} \prod_{i = 1}^{t} (-1)^{\braket{w_i \oplus w_{\sigma(i)}, x_i \oplus y_i}}\\
        &\hspace{10mm}= \frac{1}{2^{nt}} \frac{2^{n}!}{(2^{n}-t)!}\\
        &\hspace{10mm} \geq 1 - \frac{t(t-1)}{2^{n}}\,.
    \end{align*}
    Here we notice that the third line is identical to the expression we had in \Cref{lem:normalized_relation_state}, and the final line is exactly the probability of measuring distinct random strings.
    To complete the lemma, we apply the gentle measurement lemma.
\end{proof}

Now, we have shown that there is an orthogonal set of states that correspond to relations.  The next step is to show that the $\mathsf{FHFHF}$ oracle, when run on a state $\ket{x} \ket{\mathsf{fhfhf}_{R}}$, actually acts to append $(x, y)$ to the relation state.  We note that in the following lemma, we show that it creates a \emph{non-distinct} relation state.
\begin{lemma}
    \label{lem:fhfhf_action}
    For $t = \poly(n)$, relation $R = \{(x_1, y_1), \ldots, (x_t, y_t)\}$, and $x \in \{0, 1\}^{n}$, the following holds
    \begin{equation*}
        \mathsf{FHFHFO}\ket{x}\ket{\mathsf{fhfhf}_R} = \frac{1}{2^{n/2}}\sum_{y} \ket{y} \ket{\mathsf{fhfhf}_{R \cup \{(x, y)\}}}\,.
    \end{equation*}
\end{lemma}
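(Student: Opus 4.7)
The plan is a direct computation: apply the three phase oracles and two Hadamard layers one at a time, and match the resulting phases and normalizations term-by-term with the definition of $\ket{\mathsf{fhfhf}_{R\cup\{(x,y)\}}}$. No clever argument is needed, only careful bookkeeping; the key observation is that each of the three $\mathsf F$'s ``deposits'' exactly the phase factor associated with one of the three summands $\theta^{(1)}_z(-1)^{\langle z,x_i\rangle}$, $\theta^{(2)}_z(-1)^{\langle z,w_i\rangle}$, $\theta^{(3)}_z(-1)^{\langle z,y_i\rangle}$ in the definition of the relation state, while the two $H^{\otimes n}$ layers create the new $w$ and $y$ sums and produce the phase $(-1)^{\langle w,x\oplus y\rangle}$ that matches the $\prod_i (-1)^{\langle w_i,x_i\oplus y_i\rangle}$ factor.

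Concretely, I would proceed in the following order. First, note that $\mathsf F_{\theta^{(1)}}$ acts diagonally on $\ket{x}$ and multiplies it by $\exp\bigl(\mathrm{i}\sum_z \theta^{(1)}_z(-1)^{\langle z,x\rangle}\bigr)$; combining this with the analogous factor already present in $\ket{\mathsf{fhfhf}_R}$ for the old $x_1,\dots,x_t$ produces the required phase factor for the extended tuple $x_1,\dots,x_t,x$. Second, apply $H^{\otimes n}$ to expand $\ket x \mapsto 2^{-n/2}\sum_w (-1)^{\langle x,w\rangle}\ket w$, creating the new $w$-index summation. Third, $\mathsf F_{\theta^{(2)}}$ contributes the factor $\exp\bigl(\mathrm{i}\sum_z \theta^{(2)}_z(-1)^{\langle z,w\rangle}\bigr)$, extending the $\theta^{(2)}$-product to include the new $w$. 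Fourth, the second $H^{\otimes n}$ maps $\ket w \mapsto 2^{-n/2}\sum_y (-1)^{\langle w,y\rangle}\ket y$, and together with the residual $(-1)^{\langle x,w\rangle}$ from step two yields the combined sign $(-1)^{\langle w,x\oplus y\rangle}$, which is exactly the factor $(-1)^{\langle w_{t+1}, x_{t+1}\oplus y_{t+1}\rangle}$ in the definition of the new relation state. Finally, $\mathsf F_{\theta^{(3)}}$ deposits the last phase $\exp\bigl(\mathrm{i}\sum_z \theta^{(3)}_z(-1)^{\langle z,y\rangle}\bigr)$, extending the $\theta^{(3)}$-product.

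Gathering the prefactors: the two Hadamard layers together contribute $2^{-n}$; the outer $\sum_y \ket y$ gets the explicit $2^{-n/2}$ in the claim, and the remaining $2^{-n/2}$ is absorbed into the new normalization, since $\sqrt{1/2^{n(t+1)}} = 2^{-n/2}\sqrt{1/2^{nt}}$, which is exactly the prefactor in front of the $w$-sum in $\ket{\mathsf{fhfhf}_{R\cup\{(x,y)\}}}$. Matching all pieces term-by-term verifies the claimed identity.

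I do not expect a real obstacle here: the only thing that could go wrong is a bookkeeping error, in particular making sure the bilinearity $(-1)^{\langle x,w\rangle}(-1)^{\langle w,y\rangle} = (-1)^{\langle w,x\oplus y\rangle}$ is tracked correctly and that the $2^{-n}$ from the Hadamards splits as advertised into the output $2^{-n/2}$ and the relation-state renormalization. Note that this lemma does \emph{not} restrict to collision-free relations; it simply describes the action of $\mathsf{FHFHFO}$ on the (possibly non-distinct) relation state $\ket{\mathsf{fhfhf}_R}$, and the collision-freeness will only be invoked later when passing from these states to $\ket{\mathsf{fhfhf}^{\mathrm{dist}}_R}$ via \Cref{lem:relation_states_close}.
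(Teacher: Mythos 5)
Your proposal is correct and follows essentially the same route as the paper's proof: apply $\mathsf F_{\theta^{(1)}}$, $H^{\otimes n}$, $\mathsf F_{\theta^{(2)}}$, $H^{\otimes n}$, $\mathsf F_{\theta^{(3)}}$ in sequence, track the deposited phase factors term by term, and match them against the definition of $\ket{\mathsf{fhfhf}_{R\cup\{(x,y)\}}}$ including the $2^{-n/2}$ normalization split. A minor bonus: your derived sign $(-1)^{\langle w,x\oplus y\rangle}$ is the correct one (it matches the $\prod_i(-1)^{\langle w_i,x_i\oplus y_i\rangle}$ in the definition of the relation state), whereas the paper's displayed intermediate line has a small typo writing $(-1)^{\langle y,x\oplus w\rangle}$ instead.
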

\begin{proof}
    The proof proceeds by evaluating the action of $\mathsf{FHFHF}$ on the input state.
    Let $\alpha_{R}^{\Theta_1, \Theta_2, \Theta_3}$ be the coefficient that appears in the integral in the definition of $\ket{\mathsf{fhfhf}_R}$.
    We first apply the first function, yielding the following
    \begin{equation*}
        \ket{x} \frac{1}{2^{3n}}\sum_{\substack{\theta_{1}^{(1)}, \ldots, \theta_{2^n}^{(1)}\\ \theta_{1}^{(2)}, \ldots, \theta_{2^n}^{(2)}\\ \theta_{1}^{(3)}, \ldots, \theta_{2^n}^{(3)}}}\alpha_{R}^{\theta^{(1)}, \theta^{(2)}, \theta^{(3)}}\exp\left(i\sum_{z}\theta_z^{(1)} (-1)^{\braket{x, z}}\right)\ket{\theta^{(1)}, \theta^{(2)}, \theta^{(3)}}\,. 
    \end{equation*}
    Then we apply the Hadamard, which yields the following state (without normalization)
    \begin{equation*}
       \frac{1}{2^{n/2}}\sum_{y \in \{0, 1\}^{n}}(-1)^{\braket{x, w}}\ket{w}\frac{1}{2^{3n}}\sum_{\substack{\theta_{1}^{(1)}, \ldots, \theta_{2^n}^{(1)}\\ \theta_{1}^{(2)}, \ldots, \theta_{2^n}^{(2)}\\ \theta_{1}^{(3)}, \ldots, \theta_{2^n}^{(3)}}} \alpha_{R}^{\theta^{(1)}, \theta^{(2)}, \theta^{(3)}}\exp\left(i\sum_{z}\theta_z^{(1)} (-1)^{\braket{x, z}}\right)\ket{\theta^{(1)}, \theta^{(2)}, \theta^{(3)}}\,. 
    \end{equation*}
    Finally, we apply the second ideal HPS to the state, yielding the following
    \begin{equation*}
        \frac{1}{2^{n/2}}\sum_{w \in \{0, 1\}^{n}}(-1)^{\braket{x, w}}\ket{w}\sum_{\substack{\theta_{1}^{(1)}, \ldots, \theta_{2^n}^{(1)}\\ \theta_{1}^{(2)}, \ldots, \theta_{2^n}^{(2)}\\ \theta_{1}^{(3)}, \ldots, \theta_{2^n}^{(3)}}} \alpha_{R}^{\theta^{(1)}, \theta^{(2)}, \theta^{(3)}}\exp\left(i\sum_{z}\theta_z^{(1)} (-1)^{\braket{x, z}} + \theta_z^{(2)} (-1)^{\braket{w, z}}\right)\ket{\theta^{(1)}, \theta^{(2)}, \theta^{(3)}}\,. 
    \end{equation*}
    Finally, we apply another Hadamard and a final ideal HPS unitary to get the following state.
    \begin{multline*}
        \frac{1}{2^{n}}\sum_{w, y \in \{0, 1\}^{n}}(-1)^{\braket{y, x\oplus w}}\ket{y}\frac{1}{2^{3n}}\sum_{\substack{\theta_{1}^{(1)}, \ldots, \theta_{2^n}^{(1)}\\ \theta_{1}^{(2)}, \ldots, \theta_{2^n}^{(2)}\\ \theta_{1}^{(3)}, \ldots, \theta_{2^n}^{(3)}}} \alpha_{R}^{\theta^{(1)}, \theta^{(2)}, \theta^{(3)}} \cdot\\\exp\left(i\sum_{z}\theta_z^{(1)} (-1)^{\braket{x, z}} + \theta_z^{(2)} (-1)^{\braket{w, z}} + \theta_z^{(3)} (-1)^{\braket{y, z}}\right)\ket{\theta^{(1)}, \theta^{(2)}, \theta^{(3)}}\,. 
    \end{multline*}
    Expanding out the definition of $\alpha_{R}^{\theta^{(1)}, \theta^{(2)}, \theta^{(3)}}$, pushing the sum over $w$ into the purifying register, and combining the product of exponentials yields the desired state.  
\end{proof}

Now, we have shown that the distinct relation states for the $\mathsf{FHFHF}$ oracle are orthogonal whenever the relation is collision free, and that the action of the $\mathsf{FHFHF}$ oracle is to output a random $y$ and store $(x, y)$ in the relation state.  
Now, we make the connection to the path recording oracle formal by defining two isometries that will act on the purifying register, $\mathsf{Compress}$ and $\mathsf{Collide}$.  The compress isometry acts as follows:
\begin{equation*}
    \mathsf{Compress} = \sum_{R\ \mathrm{collision\ free}} \ket{R}\!\!\bra{\mathsf{fhfhf}^{\mathrm{dist}}_R}\,.
\end{equation*}
The collide isometry, which restricts us to collision-free relations, will act as follows.
\begin{equation*}
    \mathsf{Collide} = \sum_{R\ \mathrm{collision\ free}} \proj{\mathsf{fhfhf}^{\mathrm{dist}}_R}\,.
\end{equation*}
Let $\Pi^{(t)}$ be the projection onto bijection \emph{relation states}, as defined in \cite{ma2024pseudorandom}, and $\Lambda^{(t)}$ be the projection onto \emph{collision-free} relation states.  Then we have the following lemma, which states that compressing collision-free $\mathsf{FHFHF}$ relation states yields something very close to the effect of the path-recording oracle from \cite{ma2024pseudorandom}. 
\begin{lemma}
\label{lem:compressing_bijective_relations}
    For all adversaries $\mathcal{A}_t^{\mathcal{O}}$ that make $t$ queries, let $\ket{\mathcal{A}_t^{\mathcal{O}}}$ be the state of the adversary after making $t$ queries to $\mathcal{O}$.
    Then for all $n$-qubit unitaries $G$, the following holds.
\begin{multline*}
    \td\left(\mathsf{Compress} \cdot \mathsf{Collide}\cdot \proj{\mathcal{A}_t^{\mathsf{FHFHFO}\cdot G}} \cdot \mathsf{Collide} \cdot \mathsf{Compress},  \Lambda^{(t)}\cdot \proj{\mathcal{A}^{\mathsf{PR} \cdot G}_{t}} \cdot \Lambda^{(t)}\right) \\\leq O\left(\sqrt{\frac{t^8}{2^n}}\right)\,.
\end{multline*}
\end{lemma}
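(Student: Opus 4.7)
The plan is to prove the lemma by induction on the number of queries $t$. The base case $t=0$ is immediate since both states reduce to $\proj{\mathcal{A}_0}$ with the relation register initialized to the empty relation, on which $\mathsf{Compress}$, $\mathsf{Collide}$, and $\Lambda^{(0)}$ all act trivially. For the inductive step, the task is to compare the effect of one additional $\mathsf{FHFHFO}\cdot G$ query on the compressed, collision-free state with the effect of one additional $\mathsf{PR}\cdot G$ query on the collision-free projected state, and then sum the per-query errors.

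The core per-query calculation starts from a collision-free distinct relation state $\ket{\mathsf{fhfhf}^{\mathrm{dist}}_R}$ with $|R|<t$. First, \cref{lem:relation_states_close} lets us swap it for the non-distinct $\ket{\mathsf{fhfhf}_R}$ at trace-distance cost $O(|R|/2^{n/2})$. Next, \cref{lem:fhfhf_action} yields
\[
\mathsf{FHFHFO}\ket{x}\ket{\mathsf{fhfhf}_R} \;=\; \frac{1}{2^{n/2}}\sum_{y} \ket{y}\ket{\mathsf{fhfhf}_{R\cup\{(x,y)\}}},
\]
a uniform superposition over $y$ with the new pair appended. Applying $\mathsf{Collide}$ then projects onto the subspace where $R\cup\{(x,y)\}$ remains a collision-free relation; by an extension of \cref{lem:collision_probability_bound} (bounding by $O(t^3)$ the number of $y$ that introduce new XOR-collisions with the existing $x_i$'s and $y_j$'s upon appending a single pair), this projector preserves $1-O(t^3/2^n)$ of the norm, contributing $O(\sqrt{t^3/2^n})$ per query via the gentle measurement lemma. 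Finally, $\mathsf{Compress}$ maps the projected state to $\frac{1}{\sqrt{2^n}}\sum_{\text{good }y}\ket{y}\ket{R\cup\{(x,y)\}}$, which differs from $\mathsf{PR}\ket{x}\ket{R} = \frac{1}{\sqrt{2^n-|R|}}\sum_{y\notin \mathrm{Im}(R)}\ket{y}\ket{R\cup\{(x,y)\}}$ only through a normalization shift of $O(|R|/2^n)$ and the small set of excluded $y$.

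Summing the per-query contributions via the triangle inequality, and using that each $G$ is unitary (so trace distance is preserved between oracle calls), yields the claimed total bound $O(\sqrt{t^8/2^n})$, with the dominant term coming from the global collision-free projection on the full $t$-pair relation provided by \cref{lem:collision_probability_bound}. The main obstacle is a bookkeeping subtlety: the lemma statement applies $\mathsf{Collide}$ only at the boundaries and not between queries, so the induction must either insert intermediate $\mathsf{Collide}$ projections and separately show that they are approximately idempotent (i.e., that $\mathsf{FHFHFO}$ approximately preserves the collision-free subspace, so removing the intermediate projectors costs only $O(\sqrt{t^8/2^n})$ total), or directly argue via a telescoping bound that the unprotected state after $t$ queries still lies within the collision-free subspace up to the stated error. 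In either formulation, the per-query distinct-versus-nondistinct correction and the single global collision-free projection together give the claimed $O(\sqrt{t^8/2^n})$ trace-distance bound.
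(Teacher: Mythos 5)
Your proposal attempts an induction on the number of queries, with per-query error accounting. The paper's proof is not inductive at all: it explicitly writes out the full $t$-query adversary states $\ket{\mathcal{A}_t^{\mathsf{PR}\cdot G}}$ and $\ket{\mathcal{A}_t^{\mathsf{FHFHFO}\cdot G}}$ as weighted sums indexed by $(x_1,\ldots,x_t,y_1,\ldots,y_t)$ (using the iterated action from the fhfhf-action lemma), applies the relation-state closeness lemma \emph{once} to swap to distinct relation states at cost $O(\sqrt{t^2/2^n})$, and then applies $\mathsf{Collide}$, $\mathsf{Compress}$, $\Lambda^{(t)}$, $\Pi^{(t)}$ to these explicit expansions. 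The error comes from a \emph{single} gentle-measurement argument on the global collision-free projection (probability $\geq 1 - t^8/2^n$), not from accumulating per-query errors.

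The genuine gap in your proposal is exactly the ``bookkeeping subtlety'' you flag and then leave unresolved. The lemma applies $\mathsf{Collide}$ and $\Lambda^{(t)}$ only once, at the very end of the $t$-query computation; your induction needs to insert intermediate $\mathsf{Collide}$ projections between queries in order to have an inductive invariant to work with. You note that one then needs to show these intermediate projections can be removed at small cost --- but you do not actually prove this, and it is not obviously simpler than the statement you are trying to establish, since ``$\mathsf{FHFHFO}$ approximately preserves the collision-free subspace'' is essentially a restatement of the lemma's content for a single step. Your per-query arithmetic also does not cleanly reproduce the claimed bound: $t$ queries each contributing $O(\sqrt{t^3/2^n})$ would sum to $O(\sqrt{t^5/2^n})$, yet you then fall back to the global $t^8/2^n$ collision probability for the ``dominant term,'' which conflates the two analyses rather than following either through. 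The paper sidesteps all of this by never decomposing query-by-query, and its single application of the gentle measurement lemma (together with one global application of the distinct-vs-nondistinct relation-state closeness lemma) directly yields the $O(\sqrt{t^8/2^n})$ bound. If you want to rescue the inductive approach you would need to supply the missing argument that inserting and removing intermediate collision-free projectors costs only an overall $O(\sqrt{t^8/2^n})$, which is the hard part and is not present in the proposal.
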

\begin{proof}
    First, note that collision-free relations are a subset of bijective relations, and the projectors $\Pi^{(t)}$ and $\Lambda^{(t)}$ commute, which $\Lambda^{(t)} = \Lambda^{(t)} \Pi^{(t)}$.
    Similar to the proof from \cite{ma2024pseudorandom}, we can explicitly write out the states $\ket{\mathcal{A}_t^{\mathcal{O}}}$ in either case.
    For a set $S$, let $S^t$ denote the $t$-fold cartestian product of the set, and $S^t_{\mathrm{dist}}$ be the distinct subset of $S^t$ (i.e. all sets of $t$ elements that are distinct from each other).
    First, we note the following
    \begin{equation*}
        \ket{\mathcal{A}_t^{\mathsf{PR}\cdot G}} = \sqrt{\frac{(2^n-t)!}{2^n!}} 
        \sum_{\substack{x_1, \ldots, x_t \in (\{0, 1\}^{n})^t \\ y_1, \ldots, y_t \in (\{0, 1\}^n)^{t}_{\mathrm{dist}}}} 
        \left(\prod_{i = 1}^t 
        \left(
            \ket{y_i}\!\!\bra{x_i} \cdot G \cdot A_i
        \right)\ket{0}
        \right) \otimes \ket{\{(x_i, y_i)\}_{i = 1}^t}\,.
    \end{equation*}
    Performing the projection onto bijective relations yields the following state:
    \begin{equation*}
        \Pi^{(t)} \cdot \ket{\mathcal{A}_t^{\mathsf{PR}\cdot G}} = \sqrt{\frac{(2^n-t)!}{2^n!}} 
        \sum_{\substack{x_1, \ldots, x_t \in (\{0, 1\}^{n})^t_{\mathrm{dist}} \\ y_1, \ldots, y_t \in (\{0, 1\}^n)^{t}_{\mathrm{dist}}}} 
        \left(\prod_{i = 1}^t 
        \left(
            \ket{y_i}\!\!\bra{x_i} \cdot G \cdot A_i
        \right)\ket{0}
        \right) \otimes \ket{\{(x_i, y_i)\}_{i = 1}^t}\,.
    \end{equation*}
    Further performing the projection onto collision free relations yields the following:
    \begin{equation*}
        \Pi^{(t)} \cdot \ket{\mathcal{A}_t^{\mathsf{PR}\cdot G}} = \sqrt{\frac{(2^n-t)!}{2^n!}} 
        \sum_{\substack{x_1, \ldots, x_t \in (\{0, 1\}^{n})^t_{\mathrm{dist}} \\ y_1, \ldots, y_t \in \mathsf{CF}_{x_1, \ldots, x_t}}} 
        \left(\prod_{i = 1}^t 
        \left(
            \ket{y_i}\!\!\bra{x_i} \cdot G \cdot A_i
        \right)\ket{0}
        \right) \otimes \ket{\{(x_i, y_i)\}_{i = 1}^t}\,.
    \end{equation*}
    Now we turn to the $\mathsf{FHFHF}$ oracle.  From \Cref{lem:fhfhf_action}, we have the following.
    \begin{equation*}
         \ket{\mathcal{A}_t^{\mathsf{FHFHFO}\cdot G}} = \sqrt{\frac{1}{2^{nt}}} 
        \sum_{\substack{x_1, \ldots, x_t \in (\{0, 1\}^{n})^t \\ y_1, \ldots, y_t \in (\{0, 1\}^n)^{t}}} 
        \left(\prod_{i = 1}^t 
        \left(
            \ket{y_i}\!\!\bra{x_i} \cdot G \cdot A_i
        \right)\ket{0}
        \right) \otimes \ket{\mathsf{fhfhf}_{\{(x_i, y_i)\}_{i = 1}^t}}\,.
    \end{equation*}
    From the \Cref{lem:relation_states_close}, this state is within trace distance $O(\sqrt{t^2 / 2^n})$ of the following state
    \begin{equation*}
         \ket{\mathcal{A}_t^{\mathsf{FHFHFO}\cdot G}} = \sqrt{\frac{1}{2^{nt}}} 
        \sum_{\substack{x_1, \ldots, x_t \in (\{0, 1\}^{n})^t \\ y_1, \ldots, y_t \in (\{0, 1\}^n)^{t}}} 
        \left(\prod_{i = 1}^t 
        \left(
            \ket{y_i}\!\!\bra{x_i} \cdot G \cdot A_i
        \right)\ket{0}
        \right) \otimes \ket{\mathsf{fhfhf}^{\mathrm{dist}}_{\{(x_i, y_i)\}_{i = 1}^t}}\,.
    \end{equation*}
    Applying $\mathsf{Collide}$ to this state, we get exactly the following state.
    \begin{multline*}
        \mathsf{Collide} \cdot \ket{\mathcal{A}_t^{\mathsf{FHFHFO}\cdot G}} = \sqrt{\frac{1}{2^{nt}}} 
        \sum_{\substack{x_1, \ldots, x_t \in (\{0, 1\}^{n})^t_{\mathrm{dist}} \\ y_1, \ldots, y_t \in \mathsf{CF}_{x_1, \ldots, x_t}}} 
        \left(\prod_{i = 1}^t 
        \left(
            \ket{y_i}\!\!\bra{x_i} \cdot G \cdot A_i
        \right)\ket{0}
        \right) \otimes \ket{\mathsf{fhfhf}^{\mathrm{dist}}_{\{(x_i, y_i)\}_{i = 1}^t}}\\
        + \mathsf{Collide} \cdot \sqrt{\frac{1}{2^{nt}}} 
        \sum_{\substack{x_1, \ldots, x_t \in (\{0, 1\}^{n})^t_{\mathrm{dist}} \\ y_1, \ldots, y_t \not\in \mathsf{CF}_{x_1, \ldots, x_t}}} 
        \left(\prod_{i = 1}^t 
        \left(
            \ket{y_i}\!\!\bra{x_i} \cdot G \cdot A_i
        \right)\ket{0}
        \right) \otimes \ket{\mathsf{fhfhf}^{\mathrm{dist}}_{\{(x_i, y_i)\}_{i = 1}^t}}\,.
    \end{multline*}
    Since we showed that collision free relation states are orthogonal to each other, compress will map the first term to the correct relation states, and shrink the trace of the second term. However, since the probability that a $y$ is not collision free relative to $x$ is at most $t^8 / 2^{n}$, the trace of the second term is at most $\sqrt{t^8 / 2^n}$ from the gentle measurement lemma, and the normalization difference in the first term contributes at most a $t^2 / 2^n$ trace distance error, as noted before.  Applying the triangle inequality on all of the steps, we get the desired error bound.
\end{proof}
In the previous lemma, we showed that for collision-free relation states, the $\mathsf{FHFHF}$ can be thought of as implementing the path recording oracle, but \cite{ma2024pseudorandom} only proves that the path recording oracle works on bijective relations. Therefore, we need a small lemma to show that the projecting the path recording isometry onto bijective and collision-free relations yields similar states.
\begin{lemma}
    For all $t$-query adversaries $\mathcal{A}_t^{\mathcal{O}}$ and unitaries $G$, the following bound holds:
    \begin{equation*}
    \td\left(\Lambda^{(t)}\cdot\proj{\mathcal{A}^{\mathsf{PR}\cdot G}} \cdot \Lambda^{(t)}, \Pi^{(t)} \cdot \proj{\mathcal{A}^{\mathsf{PR}\cdot G}}\cdot \Pi^{(t)}\right) \leq O\left(\sqrt{\frac{t^8}{2^{n}}}\right)\,.
    \end{equation*}
\end{lemma}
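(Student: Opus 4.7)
The plan is to exploit the containment $\Lambda^{(t)}\le \Pi^{(t)}$ (every collision-free relation is bijective), so that $P\coloneqq \Pi^{(t)}-\Lambda^{(t)}$ is itself an orthogonal projector, onto the bijective-but-not-collision-free subspace. Writing $\Pi^{(t)}\ket{\mathcal{A}} = \Lambda^{(t)}\ket{\mathcal{A}} + P\ket{\mathcal{A}}$ with the two summands orthogonal, and applying the standard bound $\||\psi\rangle\!\langle\psi| - |\phi\rangle\!\langle\phi|\|_1 \le 2\||\psi\rangle-|\phi\rangle\|$ for subnormalized vectors, one obtains
\begin{equation*}
\td\!\left(\Pi^{(t)}\proj{\mathcal{A}}\Pi^{(t)},\,\Lambda^{(t)}\proj{\mathcal{A}}\Lambda^{(t)}\right)\;\le\; O\!\left(\sqrt{\bra{\mathcal{A}}P\ket{\mathcal{A}}}\right),
\end{equation*}
where $\ket{\mathcal{A}} = \ket{\mathcal{A}_t^{\mathsf{PR}\cdot G}}$. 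The lemma therefore reduces to the quantitative claim $\bra{\mathcal{A}}P\ket{\mathcal{A}} \le O(t^8/2^n)$, which is a ``probability of producing a non-collision-free bijective relation'' statement.

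To establish this, I would plug in the explicit expansion of $\Pi^{(t)}\ket{\mathcal{A}_t^{\mathsf{PR}\cdot G}}$ derived in the proof of \Cref{lem:compressing_bijective_relations}, which writes it as a sum over distinct $x$-tuples and distinct $y$-tuples, weighted by $\sqrt{(2^n-t)!/(2^n)!}$ together with the vectors $\prod_i \ket{y_i}\!\bra{x_i}\cdot G\cdot A_i\ket{0}$ on the remaining registers. Restricting $y$ further to lie outside $\mathsf{CF}_x$ yields
\begin{equation*}
\bra{\mathcal{A}}P\ket{\mathcal{A}} \;=\; \frac{(2^n-t)!}{(2^n)!}\sum_{x\in(\bit^n)^t_{\mathrm{dist}}}\sum_{\substack{y\in(\bit^n)^t_{\mathrm{dist}}\\ y\notin \mathsf{CF}_x}}\Big\|\prod_{i=1}^t\ket{y_i}\!\bra{x_i}\cdot G\cdot A_i\ket{0}\Big\|^2.
\end{equation*}
The unrestricted version of this double sum equals $\|\Pi^{(t)}\ket{\mathcal{A}}\|^2\le 1$. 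By \Cref{lem:collision_probability_bound}, for every fixed distinct $x$, a uniformly random distinct $y$-tuple fails to lie in $\mathsf{CF}_x$ with probability at most $O(t^8/2^n)$; ``symmetrizing'' the inner sum over $y$ should therefore reduce the restricted sum by precisely this fraction, giving $\bra{\mathcal{A}}P\ket{\mathcal{A}}\le O(t^8/2^n)$ as required.

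The main obstacle is this final symmetrization: the amplitudes $\|\prod_i \ket{y_i}\!\bra{x_i}GA_i\ket{0}\|^2$ are \emph{not} uniform over distinct $y$-tuples, so one cannot directly replace the restricted sum by a fraction of the unrestricted one. I plan to bypass this by decomposing the event $y\notin\mathsf{CF}_x$ into at most $t^8$ linear ``collision'' events of the form $y_a\oplus y_b\oplus y_c\oplus y_d = x_i\oplus x_j\oplus x_k\oplus x_\ell$, exactly as in the proof of \Cref{lem:collision_probability_bound}, and bounding each one's contribution to $\bra{\mathcal{A}}P\ket{\mathcal{A}}$ by $O(1/2^n)$ individually. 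Each such bound should follow from a short Fourier-analytic computation reminiscent of \Cref{lemma:BooleanFunctionsHaveDifferentFourierTransforms} that averages out one of the bad $y$-coordinates against the uniform $1/\sqrt{2^n-(i-1)}$ factor that the path-recording isometry imposes at each step. A final union bound over the $\le t^8$ collision events then gives the claimed $O(t^8/2^n)$ estimate, and hence the advertised $O(\sqrt{t^8/2^n})$ trace-distance bound.
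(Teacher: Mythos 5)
Your diagnosis of the gap is exactly right, and it is in fact fatal to the lemma as stated, not merely to the one-line proof the paper gives. The amplitudes $\left\|\prod_i\ket{y_i}\!\bra{x_i}G A_i\ket{0}\right\|^2$ are indeed not uniform over distinct $y$-tuples; worse, the adversary's later queries $x_i$ can depend adaptively on earlier responses $y_j$, so the measured relation $R$ need not look anything like ``fixed $x$, uniform distinct $y$,'' which is the hypothesis of \cref{lem:collision_probability_bound}. Consequently the quantitative claim $\bra{\mathcal{A}}P\ket{\mathcal{A}}\le O(t^8/2^n)$ is false for some $G$, and the per-event $O(1/2^n)$ bound you hope to prove by Fourier analysis cannot hold. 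Take $G=\id$ and the following adversary: it queries fixed distinct basis states $q_1,\dots,q_{t-1}$, coherently copies the responses $y_1,\dots,y_{t-1}$ into ancilla, and then sets its final query register to $v \coloneqq x_1\oplus x_2\oplus x_4\oplus y_1\oplus y_4$ (all of which it knows). Since $G=\id$, we get $x_t=v$, so in the recorded relation $q_{1,2,3}=x_1\oplus x_2\oplus y_1\oplus y_3$ equals $q_{4,t,3}=x_4\oplus x_t\oplus y_4\oplus y_3$ identically, i.e.\ $R$ is \emph{never} collision-free. Meanwhile $R$ is bijective except with probability $O(t/2^n)$. Hence $\Lambda^{(t)}\proj{\mathcal{A}}\Lambda^{(t)}=0$ while $\Tr(\Pi^{(t)}\proj{\mathcal{A}})\ge 1-O(t/2^n)$, and the trace distance in the lemma is $\approx 1/2$, far above $O(\sqrt{t^8/2^n})$.

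So your ``main obstacle'' is not a technicality to be smoothed over with a Fourier estimate; it is the reason the statement needs to be reformulated. Notice how the lemma is actually used in the paper (in the proof of \cref{lem:projecting_bijective_relations}): it is applied inside $\mathbb{E}_{\mathsf{C}}[\cdot]$, with $G=\mathsf{C}$ a random Clifford, exactly as Ma--Huang's Corollary~4.1 requires. It is the averaging over $\mathsf{C}$ that destroys the adversary's ability to aim $x_t$ at a correlated target and restores the ``uniform distinct tuple'' picture to which \cref{lem:collision_probability_bound} applies. The correct version of this lemma should therefore be stated as a bound on $\td\bigl(\Lambda^{(t)}\mathbb{E}_{\mathsf{C}}[\proj{\mathcal{A}^{\mathsf{PR}\cdot\mathsf{C}}}]\Lambda^{(t)},\,\Pi^{(t)}\mathbb{E}_{\mathsf{C}}[\proj{\mathcal{A}^{\mathsf{PR}\cdot\mathsf{C}}}]\Pi^{(t)}\bigr)$, and its proof needs to show (using that a Clifford is a unitary $2$-design) that the $\reg{XY}$-marginal of $\mathbb{E}_{\mathsf{C}}[\proj{\mathcal{A}^{\mathsf{PR}\cdot\mathsf{C}}}]$ restricted to bijections is close to the uniform distribution over distinct $(x,y)$-tuples, after which \cref{lem:collision_probability_bound} does the rest. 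Your reduction to $\bra{\mathcal{A}}P\ket{\mathcal{A}}$ is a clean first step once the statement carries the expectation over $\mathsf{C}$; what it then needs in place of the Fourier argument is the $2$-design mixing bound, not a per-event linear-algebra calculation.
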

\begin{proof}
    Since the probability that a uniformly random $y_1, \ldots, y_t$ is collision free is at least $1 - t^8 / 2^{n}$ when measuring a uniformly random distinct $y_1, \ldots, y_t$, applying the gentle measurement measurement lemma yields the desired trace distance bound. 
\end{proof}

The final important step is to take $G$ to be a $2$-design, which forces the relation states to be close to the subspace of bijective relations.  Formally, we have the following lemma, which follows closely the proof of \cite{ma2024pseudorandom}.
\begin{lemma}[Trace distance of projecting onto bijective relation states]
\label{lem:projecting_bijective_relations}
    \begin{multline*}
        \td\left(\mathbb{E}_{\mathsf{C}} \left[\proj{\mathcal{A}^{\mathsf{FHFHFO}\cdot \mathsf{C}}}\right], \mathsf{Collide} \cdot \mathbb{E}_{\mathsf{C}} \left[\proj{\mathcal{A}^{\mathsf{FHFHFO}\cdot \mathsf{C}}}\right]\cdot \mathsf{Collide}\right) \\\leq O\left(\sqrt{\frac{t^8}{2^{n}}}\right)\,.
    \end{multline*}
\end{lemma}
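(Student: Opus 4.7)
The plan is to reduce the lemma to showing that the averaged state $\mathbb{E}_{\mathsf{C}}[\proj{\mathcal{A}^{\mathsf{FHFHFO}\cdot \mathsf{C}}}]$ has large trace-overlap with the range of the $\mathsf{Collide}$ projector, namely
\begin{equation*}
\Tr\!\left[\mathsf{Collide}\cdot \mathbb{E}_{\mathsf{C}}\!\left[\proj{\mathcal{A}^{\mathsf{FHFHFO}\cdot \mathsf{C}}}\right]\right] \ge 1 - O(t^8/2^n).
\end{equation*}
Once this is in hand, a direct application of the gentle measurement lemma (Lemma~\ref{lemma:gentle-measurement}) immediately converts it into the trace-distance bound $O(\sqrt{t^8/2^n})$ claimed in the statement.

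To establish this overlap bound, I would iteratively apply Lemma~\ref{lem:fhfhf_action} to expand the post-query state as a sum indexed by query tuples $(x_1,y_1,\ldots,x_t,y_t)$, with coefficients of the form $2^{-nt/2}\prod_i \bra{y_i}\mathsf{C}\, A_i(\cdots)$ multiplying the $\mathsf{FHFHF}$-relation state $\ket{\mathsf{fhfhf}_R}$ for $R=\{(x_i,y_i)\}_i$. By Lemma~\ref{lem:relation_states_close}, for every collision-free $R$ the ordinary relation state $\ket{\mathsf{fhfhf}_R}$ is $O(\sqrt{t^2/2^n})$-close to the distinct relation state $\ket{\mathsf{fhfhf}_R^{\mathrm{dist}}}$ that defines $\mathsf{Collide}$, so up to an additive $O(\sqrt{t^2/2^n})$ correction all the amplitude on collision-free tuples already lies in the range of $\mathsf{Collide}$.

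The main estimate is then to bound the weight on tuples whose relation is \emph{not} collision-free, which I would split into two contributions. First, the queried inputs $x_1,\ldots,x_t$ fail to be pairwise distinct: following the second-moment calculation used in \cite{ma2024pseudorandom} for the $\mathsf{PFC}$ ensemble, the 2-design property of the uniformly random Clifford $\mathsf{C}$ forces these inputs to look 2-wise uniform after averaging, contributing weight at most $O(t^2/2^n)$. Second, conditioned on distinct $x_i$'s, the output register is in a uniform superposition over all $y_i$'s by Lemma~\ref{lem:fhfhf_action}, and Lemma~\ref{lem:collision_probability_bound} (extended from distinct $y$'s to arbitrary uniform $y$'s by a union bound over the $O(t^2)$ possible repeats $y_i=y_j$) shows the collision-free condition fails with weight at most $O(t^8/2^n)$. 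Combining these contributions via the triangle inequality yields the required overlap lower bound.

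The main obstacle I anticipate is the 2-design step, which must show that the adversary's arbitrary intermediate unitaries $A_i$ cannot conspire with $\mathsf{C}$ to concentrate weight on non-distinct inputs. Morally this is the same computation as the ``$\mathsf{PFC}$ ensemble maps to a bijective relation'' step of \cite{ma2024pseudorandom}, but it must be rewritten with the uniform $2^{-n/2}$ amplitudes produced by the purified $\mathsf{FHFHFO}$ oracle in place of the $1/\sqrt{2^n-t}$ weights of the path-recording isometry. Since the two distributions agree to leading order in $t/2^n$, I expect the same Weingarten-style second-moment bound to carry through with only minor bookkeeping changes.
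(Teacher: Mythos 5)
Your first paragraph matches the paper's strategy exactly: reduce the trace-distance claim to a lower bound on $\Tr\bigl[\mathsf{Collide}\cdot\mathbb{E}_{\mathsf{C}}[\rho]\bigr]$ and then apply a gentle-measurement argument (the paper uses \Cref{lem:un-normalized_gentle_measurement}, which is cleaner here since the projected state is not renormalized, but the quantitative content is the same). The divergence is in how the overlap bound is obtained. The paper does \emph{not} re-estimate the weight on non-collision-free tuples from scratch; it observes $\mathsf{Compress}^\dagger\cdot\mathsf{Compress}\cdot\mathsf{Collide}=\mathsf{Collide}$, rewrites the overlap as $\Tr\bigl(\mathsf{Compress}\cdot\mathsf{Collide}\cdot\mathbb{E}_{\mathsf{C}}[\rho]\cdot\mathsf{Collide}\cdot\mathsf{Compress}^\dagger\bigr)$, and then chains \Cref{lem:compressing_bijective_relations} and the $\Lambda^{(t)}$-vs-$\Pi^{(t)}$ lemma to transfer everything to the state produced by $\mathsf{PR}\cdot\mathsf{C}$, after which Corollary~4.1 of \cite{ma2024pseudorandom} gives the $t(t-1)/2^n$ bound directly. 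This modular route is what actually makes the proof short.

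Where your proposal has a real gap is the ``2-design step''. You propose to expand the post-query state over tuples $(x_1,y_1,\ldots,x_t,y_t)$ and argue that the Clifford $\mathsf{C}$ makes the queried inputs look 2-wise uniform, but the queries are \emph{adaptive}: $x_i$ depends on the adversary's processing of the earlier $y_j$'s, not just on $\mathsf{C}$, so a one-shot second-moment/Weingarten calculation on the joint tuple distribution does not apply. Handling this adaptivity is precisely the content of Corollary~4.1 of \cite{ma2024pseudorandom}, and the paper's route via \Cref{lem:compressing_bijective_relations} is engineered specifically so that that corollary can be invoked verbatim on the already-understood $\mathsf{PR}\cdot\mathsf{C}$ state, with the $1/\sqrt{2^n-t}$ path-recording weights, rather than being re-derived with the purified $2^{-n/2}$ amplitudes. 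A second, smaller issue: your decomposition treats the overlap as a sum of squared amplitudes over relations, but the relation states $\ket{\mathsf{fhfhf}_R}$ are only shown to be orthogonal (in their distinct form) when $R$ is collision-free; for non-collision-free $R$ the basis is not orthogonal, so ``weight on non-collision-free tuples'' is not literally a sum of amplitudes squared and needs the $\mathsf{Compress}$/$\mathsf{Collide}$ machinery rather than a direct Parseval-style bound. So the skeleton is right, but the core estimate you flag as ``the main obstacle'' is not a bookkeeping change: it is the theorem being reused, and without routing through \Cref{lem:compressing_bijective_relations} you would need to reprove it.
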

\begin{proof}
    Following the proof in \cite{ma2024pseudorandom}, we observe that $\mathsf{Compress}^{\dagger} \cdot \mathsf{Compress} \cdot \mathsf{Collide} = \mathsf{Collide}$.  
    Then applying \Cref{lem:un-normalized_gentle_measurement}, we have the following bound on the trace distance:
    \begin{align*}
        &1 - \Tr\left(\mathsf{Collide}\cdot\mathbb{E}_{\mathsf{C}} \left[\proj{\mathcal{A}^{\mathsf{FHFO}\cdot \mathsf{C}}}\right]\cdot\mathsf{Collide}\right)\\
        &\hspace{15mm} = 1 - \Tr\left(\mathsf{Compress} \cdot \mathsf{Collide}\cdot\mathbb{E}_{\mathsf{C}} \left[\proj{\mathcal{A}^{\mathsf{FHFHFO}\cdot \mathsf{C}}}\right]\cdot\mathsf{Collide} \cdot \mathsf{Compress}^{\dagger}\right)\\
        &\hspace{15mm} \leq 1 - \Tr\left(\Lambda^{(t)}\cdot\mathbb{E}_{\mathsf{C}} \left[\proj{\mathcal{A}^{\mathsf{PR}\cdot \mathsf{C}}}\right]\cdot \Lambda^{(t)}\right) + O\left(\sqrt{\frac{t^8}{2^n}}\right)\\
        &\hspace{15mm} \leq 1 - \Tr\left(\Pi^{(t)}\cdot\mathbb{E}_{\mathsf{C}} \left[\proj{\mathcal{A}^{\mathsf{PR}\cdot \mathsf{C}}}\right]\cdot \Pi^{(t)}\right) + O\left(\sqrt{\frac{t^8}{2^n}}\right)\\
        &\hspace{15mm} \leq \frac{t(t-1)}{2^n} + O\left(\sqrt{\frac{t^8}{2^n}}\right)\,.
    \end{align*}
    Here the second line uses \cref{lem:compressing_bijective_relations}, the second to last line uses the previous lemma, which states that the projection onto collision free relations is close to the projection onto bijections, and the final line uses Corollary 4.1 from \cite{ma2024pseudorandom}.
    This completes the proof of the lemma.
\end{proof}

Now, we put everything together and prove \Cref{thm:fhfhfc_psuedorandom}, using the fact that \cite{ma2024pseudorandom} has already shown that the path recording oracle itself is both right-invariant and indistinguishable from a Haar random unitary.

\begin{proof}[Proof of \Cref{thm:fhfhfc_psuedorandom}]
    We consider the following sequence of hybrids
    \begin{enumerate}
        \item The state of the adversary querying the $\mathsf{FHFHFC}$ ensemble.
        \item The adversary's view of the purification of the $\mathsf{FHFHFC}$ ensemble.
        \item The adversary's view of purification of the $\mathsf{FHFHFC}$ ensemble, after projecting onto collision free relation states.
        \item The adversary's view of the purification of the $\mathsf{PR} \cdot \mathsf{C}$ oracle, after projection onto collision free relations.
        \item The adversary's view of the purification of the $\mathsf{PR} \cdot \mathsf{C}$ oracle.
        \item The adversary's view of the $\mathsf{PR}$ oracle.
        \item The state of the adversary querying a Haar random unitary.
    \end{enumerate}
    $(1)$ is equivalent to $(2)$ by the purification technique for the $\mathsf{FHF}$ oracle (\Cref{lem:purification_fhf}).  $(2)$ is close to $(3)$ by \cref{lem:projecting_bijective_relations}.  $(3)$ is close to $(4)$ by \cref{lem:compressing_bijective_relations}.  $(4)$ is close to $(5)$ by Corollary $4.1$ from \cite{ma2024pseudorandom}, and $(5)$ is close to $(6)$ by Lemma $4.3$ from \cite{ma2024pseudorandom}.  Applying the triangle inequality between all of these hybrids completes the proof. 
\end{proof}

We note that we expect the proof in this section also applies to the case when $\mathsf{F}$ is a random binary phase oracle.  In particular, the only difference is that the set state corresponding to binary phase states can only store distinct sets (since elements cancel $\bmod 2$).  Since we show orthogonality for the distinct relation states, and the closeness to the distinct relation states holds by a counting argument independent of the set states themselves, all of the results should hold in this case.  

\begin{conjecture}
    We conjecture that it can be shown that $\mathsf{FHFHF}$ itself (without the random Clifford operation) for uniformly random phases is a strong (i.e. inverse secure) pseudo-random unitary as well.  As noted before, proving this likely requires understanding when $((x_i, y_i))_{i}$ is uniquely determined by the set of $x$'s, $y$'s, and $x \oplus y$'s.  
\end{conjecture}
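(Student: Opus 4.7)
The plan is to adapt the path-recording framework of the $\mathsf{FHFHFC}$ proof, but with the role of the random Clifford $\mathsf C$ replaced by a direct combinatorial analysis of the $\mathsf{FHFHF}$ relation states. The $\mathsf C$ in the earlier argument served two purposes: (i) as a $2$-design it projected the purified state onto the span of (almost-)bijective relation states, and (ii) it guaranteed that the recorded $y_i$'s were nearly distinct so that the distinct-$\mathsf{fhfhf}$ basis applied. Without $\mathsf C$, the adversary can choose non-distinct inputs $x_1,\dots,x_t$ and the oracle has no a~priori reason to produce distinct $y_i$'s, so a richer basis is needed.

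Following the hint, I would define a relation $R=\{(x_i,y_i)\}_{i=1}^t$ to be \emph{good} if the three multisets $\mathrm{Dom}(R)$, $\mathrm{Im}(R)$, and $\Delta(R)\coloneqq\{x_i\oplus y_i\}$ together uniquely determine the ordered relation (up to relabeling of indices, which does not affect the symmetrized relation state). This is the exact information retained by the three set-state factors of $\ket{\mathsf{fhfhf}_R}$ from the decomposition in Remark~\ref{rem:relations_as_sets}: the domain is recorded by $\mathsf F_1$, the image by $\mathsf F_3$, and the middle factor depends on the multiset of XOR-differences $x_i\oplus y_i$ weighted by the pairing. I would then prove the analogue of Lemma~\ref{lem:set_states_orthogonal}/Lemma~\ref{lem:normalized_relation_state}: if $R,R'$ are both good and are not equal as relations, then $\braket{\mathsf{fhfhf}_R|\mathsf{fhfhf}_{R'}}=0$, while each good-relation state is normalized up to $O(t^2/2^n)$ corrections. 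The orthogonality reduces, after Fourier analysis on $\{0,1\}^n$, to showing that for every non-trivial permutation $\sigma\in S_t$ and every good pairing, the phase shifts $\{x_i\oplus y_i\oplus x_{\sigma^{-1}(i)}\oplus y_{\pi(\sigma^{-1}(i))}\}$ are all nonzero, which is exactly what the goodness condition encodes.

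The next step is a combinatorial probability estimate: when $y_1,\dots,y_t$ are drawn uniformly at random (the marginal distribution produced by the path-recording isometry) and $x_1,\dots,x_t$ are arbitrary, the pair $(x_\bullet,y_\bullet)$ fails to be good only if there exists a nontrivial cycle structure $\sigma\in S_t$ such that, under some re-pairing $\pi$, the multisets $\{x_i\oplus y_i\}$ and $\{x_{\sigma(i)}\oplus y_{\pi\sigma(i)}\}$ coincide. Each such coincidence imposes at least one linear equation over $\F_2^n$ on the $y_j$'s, and a union bound over the $(t!)^2$ possible $(\sigma,\pi)$ pairs shows the bad event has probability $O(\mathrm{poly}(t)/2^n)$. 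The proof would strengthen Lemma~\ref{lem:collision_probability_bound} to this effect, using the Ma--Huang path-recording isometry to sample the $y_i$'s. Once this is in hand, the sequence of hybrids in the proof of Theorem~\ref{thm:fhfhfc_psuedorandom} goes through with $\Lambda^{(t)}$ replaced by the projector $\Lambda_{\mathrm{good}}^{(t)}$ onto good relation states, and $\mathsf{Collide}$, $\mathsf{Compress}$ redefined accordingly; the gentle-measurement and un-normalized gentle-measurement lemmas give the error bound $\tilde O(\mathrm{poly}(t)/\sqrt{2^n})$.

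For strong (inverse-query) security, I would exploit the fact that $(\mathsf F_3 H \mathsf F_2 H \mathsf F_1)^{-1}=\mathsf F_1^{-1}H\mathsf F_2^{-1}H\mathsf F_3^{-1}$ is again an instance of the $\mathsf{FHFHF}$ ensemble with independently and uniformly distributed phases (since the Haar measure on diagonal unitaries is invariant under inversion). This lets me treat a mixed sequence of forward and inverse queries by pairing each inverse query with a symmetric ``reverse'' path-recording oracle, and the same good-relation analysis applies symmetrically on the input and output sides. The main obstacle I anticipate is controlling the contribution from bad relations when the adversary \emph{correlates} forward and inverse queries: the adversary can, in principle, steer the recorded $y_i$'s to force algebraic relations among the $x_j\oplus y_j$. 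Ruling this out requires showing that the middle oracle $\mathsf F_2$ (which sees the Hadamard-transformed intermediate register) acts, in the path-recording picture, as a freshly sampled random function on a register that is already $2$-design-mixed by $H\mathsf F_1$, so that from the adversary's vantage point the $y_i$'s conditioned on any history remain uniformly random modulo the already-recorded relations. Making this conditioning argument rigorous---essentially proving that $H\mathsf F_1$ alone is a good enough ``input randomizer'' to replace $\mathsf C$ up to the required $t$-query error---is, I expect, the crux of the proof.
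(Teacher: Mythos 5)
The statement you are addressing is left explicitly as a \emph{conjecture} in the paper: the authors do not supply a proof, only a hint about what combinatorial structure a proof would need to control. So there is no ``paper proof'' to compare against; what I can do is assess whether your proposal would close the gap the paper flags.

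Your high-level plan is sensible and matches the hint---define a ``goodness'' condition capturing when the relation $R$ is uniquely determined by the triple $(\mathrm{Dom}(R), \mathrm{Im}(R), \{x_i\oplus y_i\})$, re-derive orthogonality and normalization of the $\mathsf{fhfhf}_R$ states on the good subspace via Fourier analysis, and couple to the path-recording isometry. The observation that $(\mathsf F_3H\mathsf F_2H\mathsf F_1)^{-1}$ is identically distributed to $\mathsf{FHFHF}$ is also a genuinely useful reduction for the inverse-query case. However, there is a real gap, and it is not confined to the inverse/forward-correlated case you flag as ``the crux'': it already appears for forward-only queries. In the $\mathsf{FHFHFC}$ proof, the final bound that the purified state has mass $\geq 1 - t(t-1)/2^n$ on the bijective-relation subspace is precisely the step that invokes Ma--Huang's Corollary~4.1 and requires averaging over a random $2$-design $\mathsf C$. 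Your probability estimate for bad relations treats the $y_i$'s as uniformly random given arbitrary (possibly non-distinct) $x_i$'s, but the adversary controls the $x_i$'s adaptively and can, for instance, query the same $x$ twice; the path-recording oracle would then record a non-bijective (hence non-good) relation while the actual unitary $\mathsf{FHFHF}$ returns identical outputs, and the collision-free probability bound (which is stated ``for fixed \emph{distinct} $x_1,\ldots,x_t$'') simply does not apply. The Clifford is what scrambles the adversary's input register so that after averaging the relation concentrates on distinct inputs; this is the mechanism you need to replace, not merely the randomness of the $y_i$'s.

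Your proposed replacement---``show that $H\mathsf F_1$ alone is a good enough input randomizer to substitute for $\mathsf C$''---is named by you as the obstacle, but I would push back on its plausibility as stated. $H\mathsf F_1$ is not a $2$-design; $H\mathsf F_1\ket{x}$ is a binary-phase state determined by the \emph{fixed} (though secret) angles $\theta^{(1)}$ and the classical input $x$, so two queries with the same $x$ produce exactly the same intermediate state, and the averaging argument that gives concentration onto bijective relations has nothing to average over on the adversary's input side. What would actually be needed is either (a) a fundamentally different decomposition of the adversary's state that does not rely on a distinct/collision-free projector, or (b) a proof that the \emph{middle} random diagonal $\mathsf F_2$, acting on the Hadamard-transformed intermediate register, behaves like a fresh random injection conditioned on the history, which requires a two-sided (input-and-output) purification argument not present in the Ma--Huang framework as used here. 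Both are genuine new ideas, not incremental adaptations of the $\mathsf{FHFHFC}$ proof. One further point worth noting: your ``good'' condition (triple determines $R$) is weaker than the paper's ``collision-free'' condition; the orthogonality calculation needs, for every pair of nonidentity permutations $(\sigma,\pi)$, that the vectors $x_i \oplus y_i \oplus x_{\sigma^{-1}(i)} \oplus y_{\pi(\sigma^{-1}(i))}$ are distinct, which is a quantitative nondegeneracy statement, not just uniqueness of $R$, and the two conditions should be reconciled before the orthogonality claim can be taken for granted.
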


\subsubsection{Removing the random Clifford unitary}
In Ref.~\cite{ji2018pseudorandom} it was conjectured that a variant of the $\mathsf{FHFHF}$ ensemble is a pseudo-random unitaries.
Here, we observe that the Clifford unitary in $\mathsf{FHFHFC}$ can be replaced by any approximate unitary $2$-design. 
More precisely, the above proof works unchanged for $C$ a relative error approximate $2$-design $\nu$ in the sense that
\begin{equation}\label{eq: relative design}
    (1-\varepsilon)\underset{U\sim \nu}{\mathbb{E}} U^{\otimes t} (.)(U^{\dagger})^{\otimes t}\preceq \underset{U\sim \nu}{\mathbb{E}} U^{\otimes t} (.)(U^{\dagger})^{\otimes t}\preceq (1+\varepsilon)\underset{U\sim \nu}{\mathbb{E}}U^{\otimes t} (.)(U^{\dagger})^{\otimes t},
\end{equation}
where $A\preceq B$ if $B-A$ is a completely positive map.
Let $\nu_{\mathsf{FHF}}$ denote the probability distribution defined by drawing two iid random diagonal untiaries $U_{1/2}=\sum_{x\in\{0,1\}^n} e^{\mathrm{i}\varphi_{x,1/2}}|x\rangle\langle x|$ for uniformly random angles $\varphi_{x,1/2}\in \{2\pi k/2^n\}_{k=0}^{2^n-1}$ and implementing  $U_1 H^{\otimes n} U_2$.
\begin{theorem}[\cite{nakata2017unitary}]
    We have 
    \begin{equation}\label{eq: additive design}
        ||\mathbb{E}_{U\sim \nu_{\mathsf{FHF}}^{* l}}U^{\otimes 2} (.) (U^{\dagger})^{\otimes 2}-\mathbb{E}_{U\sim \mu_H}U^{\otimes 2} (.) (U^{\dagger})^{\otimes 2}||_{\diamond}\leq 4\times 2^{-nl},
    \end{equation}
    where $||(.)||_{\diamond}$ denotes the diamond norm, or channel indistinguishability defined as 
    \begin{equation}
        ||A||_{\infty}:= \max_{\rho, ||\rho||_1=1}||(A\otimes \mathbbm{1}_d)\rho||_{1}.
    \end{equation}
\end{theorem}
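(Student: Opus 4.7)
The plan is to use the standard moment-operator approach for proving approximate design bounds, which reduces the problem to estimating a spectral gap and iterating it geometrically. Let $M_\nu^{(2)} := \mathbb{E}_{U\sim \nu}\, U^{\otimes 2}\otimes (U^*)^{\otimes 2}$ denote the $t=2$ moment operator of a distribution $\nu$ on $U(2^n)$, viewed as an operator on $(\mathbb{C}^{2^n})^{\otimes 4}$. The first observation I would use is that the diamond-norm difference between the two channels in the theorem statement is controlled (up to a polynomial dimensional factor, or, with more care, by the induced $1\to 1$ norm on the symmetric subspace) by $\|M_{\nu^{*l}}^{(2)} - M_{\mu_H}^{(2)}\|_\infty$; this is standard, e.g.\ the type of bound used in the proof of Lemma~\ref{lemma:gapbound} and already invoked in~\Cref{section:designbounds}. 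Then convolution corresponds to operator composition, $M_{\nu^{*l}}^{(2)} = (M_\nu^{(2)})^{l}$, and the left/right invariance of the Haar measure gives $M_{\mu_H}^{(2)} M_\nu^{(2)} = M_\nu^{(2)} M_{\mu_H}^{(2)} = M_{\mu_H}^{(2)}$, hence
\begin{equation*}
M_{\nu^{*l}}^{(2)} - M_{\mu_H}^{(2)} \;=\; \bigl(M_\nu^{(2)} - M_{\mu_H}^{(2)}\bigr)^{l},
\end{equation*}
so that $\|M_{\nu^{*l}}^{(2)} - M_{\mu_H}^{(2)}\|_\infty \le g(\nu,2)^{l}$, where $g(\nu,2)$ is the essential spectral norm on the orthogonal complement of the Haar-invariant subspace.

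Next, I would compute $M_{\nu_{\mathsf{FHF}}}^{(2)}$ explicitly. Writing $U = U_1 H^{\otimes n} U_2$ with $U_1, U_2$ independent uniform diagonal unitaries with phases in $\{2\pi k/2^n\}_k$, and using independence,
\begin{equation*}
M_{\nu_{\mathsf{FHF}}}^{(2)} \;=\; P \,\mathcal{H}\, P,
\end{equation*}
where $P$ is the $t=2$ moment operator for uniform diagonal unitaries and $\mathcal{H} = (H^{\otimes n})^{\otimes 2} \otimes (H^{\otimes n})^{\otimes 2}$ (using $H^* = H$). As in the computations preceding \Cref{thm:diagonal designs}, $P$ is the orthogonal projector onto the span of basis vectors $|x_1,x_2,y_1,y_2\rangle$ with $\{x_1,x_2\} = \{y_1,y_2\}$ as multisets (using that $q=2^n>2t=4$). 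From there a direct Fourier computation, essentially the same type as in Lemma~\ref{lemma:gapbound}, evaluates matrix elements $\langle x_1,x_2,y_1,y_2|\mathcal{H}|x'_1,x'_2,y'_1,y'_2\rangle = 2^{-2n}(-1)^{\langle x_1,x'_1\rangle + \langle x_2,x'_2\rangle + \langle y_1,y'_1\rangle + \langle y_2,y'_2\rangle}$, and one diagonalises $P\mathcal{H}P$ by exploiting its symmetry under relabelling the two copies and under the global bit-flip $x_i\leftrightarrow y_i$.

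The key step is then identifying the Haar-invariant subspace inside $\mathrm{range}(P)$, which is two-dimensional and spanned by the vectorisations of $\mathds{1}$ and $\mathrm{SWAP}$, and showing that on its orthogonal complement every eigenvalue of $P\mathcal{H}P$ has magnitude at most $2^{-n}$ (up to a small constant factor). This gives $g(\nu_{\mathsf{FHF}},2) \le c\cdot 2^{-n}$, and iterating yields $\|M_{\nu_{\mathsf{FHF}}^{*l}}^{(2)} - M_{\mu_H}^{(2)}\|_\infty \le c^{l} 2^{-nl}$. Converting back to diamond distance, combined with the explicit form of $P\mathcal{H}P$, lands on the bound $4\cdot 2^{-nl}$ in the theorem statement after tracking the numerical constant carefully.

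I expect the main obstacle to be the spectral estimate for $P\mathcal{H}P$ on the non-trivial sector: although symmetry reduces the analysis to a small block, one has to verify that, after subtracting off the two Haar-invariant directions, no residual eigenvalue exceeds $2^{-n}$ and not merely something like $O(2^{-n/2})$. This is the content that genuinely uses the Hadamard gate (and fails for generic Clifford conjugations), and it is the place where the sharp $2^{-nl}$ decay is born; the rest of the argument is a mechanical application of the design machinery already in use elsewhere in the paper.
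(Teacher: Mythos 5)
You are attempting to reprove a result that the paper simply \emph{cites} from Nakata, Hirche, Koashi, and Winter~\cite{nakata2017unitary}; there is no proof in the paper to compare against. So the only question is whether your sketch would close the gap, and I do not think it does, for a specific quantitative reason.

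Your plan is to bound $\lVert M_{\nu_{\mathsf{FHF}}}^{(2)} - M_{\mu_H}^{(2)}\rVert_\infty$, amplify it geometrically via $(M_\nu^{(2)}-M_{\mu_H}^{(2)})^l$, and then convert back to diamond norm. The algebraic identity $(M_\nu^{(2)}-M_{\mu_H}^{(2)})^l = M_{\nu^{*l}}^{(2)} - M_{\mu_H}^{(2)}$ is correct, and your structural description of $M_{\nu_{\mathsf{FHF}}}^{(2)} = P\,\mathcal H\, P$ is fine. The problem is the conversion step. You assert that the diamond norm is controlled by the operator norm ``up to a polynomial dimensional factor,'' but for $t=2$ the moment superoperator acts on $L(\mathbb C^{2^{2n}})$, and the generic inequality used elsewhere in this very paper is $\lVert\cdot\rVert_\diamond \le d\,\lVert\cdot\rVert_\infty$ with $d = 2^{2n}$ (or, for the relative-error variant invoked right below the theorem, $2^{2nt} = 2^{4n}$). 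Restricting to the symmetric subspace does not help, since its dimension is still $\Theta(2^{2n})$. So even if you could establish the spectral estimate $g(\nu_{\mathsf{FHF}},2) \le c/2^n$ — which is itself not obvious and does not follow from the diamond-norm statement you are trying to prove — the amplification route would only give
\begin{equation*}
\lVert M_{\nu_{\mathsf{FHF}}^{*l}}^{(2)} - M_{\mu_H}^{(2)}\rVert_\diamond \;\le\; 2^{2n}\bigl(c/2^n\bigr)^{l} \;=\; c^{l}\,2^{(2-l)n},
\end{equation*}
which is $\Omega(1)$ for $l\le 2$ and, for all $l$, differs from the claimed $4\cdot 2^{-nl}$ by the exponential prefactor $2^{2n}$. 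The claimed bound is stronger than anything your argument can deliver, for every value of $l$.

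The actual proof in Nakata et al.\ does not pass through the essential operator norm at all. It works directly with the superoperator in the diamond (or trace) norm: one computes the exact action of the $Z$-diagonal twirl and the Hadamard-conjugated twirl on the Pauli (or computational) operator basis, decomposes $L(\mathbb C^{2^n})^{\otimes 2}$ into the relevant invariant sectors, and shows that the deviation from the Haar projector after each round is a rank-controlled operator whose \emph{trace} norm already carries the $1/d$ factor, rather than incurring it only on the eigenvalue and then losing $d^2$ in the norm conversion. If you want to recover the $4\cdot 2^{-nl}$ scaling, you need to track the $1\to 1$ norm of the error term in that decomposition directly, as they do; the spectral-gap-plus-conversion machinery from~\Cref{section:designbounds} is designed for a regime where $m$ can be taken large enough to overwhelm the $2^{2nt}$ loss, and it structurally cannot reach the tight additive bound stated here.
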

This is readily combined with the following standard lemma from Ref.~\cite{brandao2016local}:
\begin{lemma}
    If $\nu$ is an additive $\varepsilon$-approximate unitary $t$-design in the sense of \Cref{def: approximate designs} then $\nu$ is a relative $\varepsilon\times 2^{2nt}$-approximate $t$-design.
\end{lemma}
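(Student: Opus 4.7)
The plan is to reformulate the relative-design inequality in the Choi--Jamio{\l}kowski picture, where it becomes a matrix inequality between two positive semidefinite operators, and then to translate the additive diamond-norm bound into an operator-norm bound on the Choi difference.

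First, via Choi--Jamio{\l}kowski, the ordering $(1-\varepsilon') \mathcal{M}^{(t)}_{\mu_H} \preceq \mathcal{M}^{(t)}_{\nu} \preceq (1+\varepsilon') \mathcal{M}^{(t)}_{\mu_H}$ of completely positive maps is equivalent to the matrix inequality $(1-\varepsilon') J_H \leq J_\nu \leq (1+\varepsilon') J_H$ on the $D^2 \times D^2$ Choi matrices $J_H, J_\nu$ of the two moment operators, with $D = 2^{nt}$. Next, testing the diamond norm on the normalized maximally entangled input $\ket{\Omega}\bra{\Omega}/D$ gives $\|\Delta\|_1 \leq D\varepsilon$, and hence $\|\Delta\|_\infty \leq D\varepsilon$, where $\Delta := J_\nu - J_H$.

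The key structural step is to control $J_H$ from below on its support. Since $\mathcal{M}^{(t)}_{\mu_H}$ is self-adjoint in the Hilbert--Schmidt inner product and idempotent (by left-invariance of Haar), it is the orthogonal projection onto the commutant of $U(d)^{\otimes t}$, which by Schur--Weyl duality is spanned by the permutation operators $\{P_\pi : \pi \in S_t\}$. A direct Weingarten-type computation (the case $t=1$ already gives $J_H = \mathbb{I}/D$; for general $t$, $J_H$ decomposes into blocks indexed by irreducible representations of $S_t$ with eigenvalues of the form $1/d_\lambda$, each satisfying $d_\lambda \leq D$) yields the minimum-nonzero-eigenvalue bound $\lambda_{\min}(J_H|_{\mathrm{supp}}) \geq 1/D$. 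Moreover, writing $J_\nu = \int (U^{\otimes t} \otimes \mathbb{I}) \ket{\Omega}\bra{\Omega} (U^{\dagger\otimes t} \otimes \mathbb{I})\, d\nu(U)$, we see that $\mathrm{supp}(J_\nu) \subseteq \mathrm{supp}(J_H)$, since the latter is the span of \emph{all} such rank-one projectors for $U \in U(d)$; hence $\Delta$ is supported on $\mathcal{S} := \mathrm{supp}(J_H)$.

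Combining these pieces, on $\mathcal{S}$ we have $J_H \geq D^{-1} \Pi_\mathcal{S}$, so
$\pm \Delta \leq \|\Delta\|_\infty\, \Pi_\mathcal{S} \leq D\varepsilon \cdot D \cdot J_H = D^2 \varepsilon\, J_H$,
yielding the claimed relative bound with $\varepsilon' = D^2 \varepsilon = 2^{2nt} \varepsilon$. The main obstacle is the third step---the Schur--Weyl/Weingarten input needed to establish the $1/D$ lower bound on the minimum nonzero eigenvalue of $J_H$ (and the support containment for $J_\nu$); the rest consists of standard manipulations of the Choi formalism and the diamond norm.
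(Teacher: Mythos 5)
The paper does not give its own proof of this lemma: it states it as a ``standard lemma'' and cites Brand\~ao--Harrow--Horodecki~\cite{brandao2016local}. Your argument is, up to presentation, the standard one used in that reference --- pass to Choi matrices, bound $\smnorm{\Delta}_\infty$ by $D\varepsilon$ via the maximally entangled input, check support containment, and divide out by the smallest nonzero eigenvalue of $J_H$ --- and it correctly yields the claimed $\varepsilon' = D^2\varepsilon = 2^{2nt}\varepsilon$.

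One small inaccuracy worth fixing in the parenthetical: the nonzero eigenvalues of $J_H$ are not of the form $1/d_\lambda$. Writing $J_H = \mathbb{E}_{U\sim\mu}\,\mathrm{vec}(U^{\otimes t})\,\mathrm{vec}(U^{\otimes t})^\dagger$ and using Schur--Weyl, $U^{\otimes t} = \bigoplus_\lambda \pi_\lambda(U)\otimes I_{d_\lambda}$ (with $\lambda$ running over partitions of $t$ into at most $2^n$ parts, $m_\lambda := \dim V_\lambda^{U(2^n)}$, $d_\lambda := \dim V_\lambda^{S_t}$), Schur orthogonality gives a block-diagonal form
\begin{equation*}
J_H \ \cong\ \bigoplus_\lambda \frac{1}{m_\lambda}\,\mathbb{I}_{m_\lambda^2} \otimes \proj{\Omega_{d_\lambda}},
\end{equation*}
so the nonzero eigenvalues are $d_\lambda/m_\lambda$ with multiplicity $m_\lambda^2$ (and $\Tr J_H = \sum_\lambda m_\lambda d_\lambda = D$, as it must). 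Since $m_\lambda \le D$ and $d_\lambda \ge 1$, your lower bound $\lambda_{\min}(J_H|_{\mathrm{supp}}) \ge 1/D$ still holds, and the remaining steps --- $\mathrm{supp}(J_\nu)\subseteq\mathrm{supp}(J_H)$ and $\pm\Delta \le \smnorm{\Delta}_\infty\,\Pi_{\mathcal S} \le D\varepsilon\cdot D\, J_H$ --- go through unchanged. So the proof is correct; only the stated eigenvalue formula and the indexing (``blocks indexed by irreps of $S_t$ with eigenvalues $1/d_\lambda$'') should be adjusted.
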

It thus suffices to choose $l=3$ to obtain a relative error design from iterations of $\mathsf{F}$ and $\mathsf{H}$.
In particular, we can replace $\mathsf{C}$ in $\mathsf{FHFHFC}$ by a unitary drawn from $\nu_{\mathsf{FHF}}^{*3}$.
Notice that the $\mathsf{F}$ taken in the proof is identically distributed to $D$ (i.e. it is a uniformly random diagonal matrix).
We arrive at the following statement
\begin{theorem}
    The ensemble $\mathsf{FHFHFHFHFHF}$ is indistinguishable from a Haar random unitary by any agent with access to $\mathrm{poly}(n)$ adaptive queries.
\end{theorem}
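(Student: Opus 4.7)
The plan is to derive this theorem as a corollary of Theorem~\ref{thm:fhfhfc_psuedorandom} by replacing the random Clifford $\mathsf{C}$ in the $\mathsf{FHFHFC}$ ensemble with a suitable number of iterations of $\mathsf{FHF}$. The key observation is that an inspection of the proof of Theorem~\ref{thm:fhfhfc_psuedorandom} reveals that the only role of the Clifford unitary is to furnish a relative-error approximate unitary $2$-design. Specifically, $\mathsf{C}$ enters only through Corollary~4.1 of~\cite{ma2024pseudorandom} used in the proof of Lemma~\ref{lem:projecting_bijective_relations}, which in turn only needs relative-error closeness of the second moment operator in the sense of~\eqref{eq: relative design}. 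Hence the proof goes through verbatim if $\mathsf{C}$ is replaced by any distribution $\nu$ satisfying~\eqref{eq: relative design} with $t=2$ and sufficiently small $\varepsilon$.

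To produce such a $\nu$ using only $\mathsf{F}$ and $\mathsf{H}$, I would appeal to the Nakata--Koashi--Murao bound, which states that $\nu_{\mathsf{FHF}}^{*l}$ is an additive $\varepsilon$-approximate unitary $2$-design with $\varepsilon \le 4 \cdot 2^{-nl}$, and then convert additive error to relative error via the standard Brand\~ao--Harrow--Horodecki lemma, which costs a factor of $2^{2nt} = 2^{4n}$. Choosing $l$ sufficiently large then yields a relative error that is negligible in $n$, so that $\nu_{\mathsf{FHF}}^{*l}$ can be substituted for the Clifford unitary inside the $\mathsf{FHFHFC}$ construction without changing the conclusion of Theorem~\ref{thm:fhfhfc_psuedorandom}.

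It remains to identify the resulting gate sequence. Writing out the substitution, we obtain
\[
\mathsf{F}\,\mathsf{H}\,\mathsf{F}\,\mathsf{H}\,\mathsf{F}\;\cdot\;(\mathsf{F}\,\mathsf{H}\,\mathsf{F})^{l} \,,
\]
where the $l$ copies of $\mathsf{FHF}$ are sampled independently. Adjacent factors of $\mathsf{F}$ (corresponding to the trailing $\mathsf{F}$ of one block and the leading $\mathsf{F}$ of the next) are products of two independent uniformly-random diagonal unitaries, and the product of two such i.i.d.\ diagonal unitaries is again a uniformly random diagonal unitary. Consequently, each adjacent pair $\mathsf{FF}$ collapses to a single $\mathsf{F}$ in distribution. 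For $l=3$ this gives the sequence $\mathsf{FHFHFHFHFHF}$ appearing in the statement (the analogous count for other values of $l$ is clear). Combining the indistinguishability from Haar coming from (the generalized) Theorem~\ref{thm:fhfhfc_psuedorandom} with this gate-sequence reduction yields the desired conclusion.

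The only nontrivial obstacle I anticipate is ensuring the quantitative bookkeeping for the additive-to-relative conversion actually produces a small enough relative error with the chosen number of iterations of $\mathsf{FHF}$. If the minimal $l$ required turns out to be larger than the value implicit in the theorem statement, the same argument still goes through, producing a longer alternating sequence $\mathsf{FHFH}\cdots\mathsf{HF}$; the proof strategy is unaffected, and all other ingredients (absorption of $\mathsf{FF}$ into $\mathsf{F}$, black-box use of Theorem~\ref{thm:fhfhfc_psuedorandom}) are purely mechanical.
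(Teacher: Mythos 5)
Your proposal follows the paper's own proof almost exactly: observe that in the analysis of $\mathsf{FHFHFC}$ the Clifford unitary is used only as a relative-error approximate unitary $2$-design, instantiate such a design as $\nu_{\mathsf{FHF}}^{*l}$ via the Nakata et al.\ additive diamond-norm bound and the Brand\~ao--Harrow--Horodecki additive-to-relative conversion, substitute into $\mathsf{FHFHFC}$, and absorb each adjacent $\mathsf{FF}$ into a single $\mathsf{F}$ using the fact that the product of two i.i.d.\ uniformly random diagonal unitaries is again uniformly random. Your closing caveat about the quantitative bookkeeping is in fact well placed: with additive error $4\cdot 2^{-nl}$ and conversion factor $2^{2nt}=2^{4n}$ at $t=2$, the nominal relative error at $l=3$ is of order $2^n$, not small, so the minimal $l$ for which the conversion as stated gives a genuine relative-error design is $l\geq 5$. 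This is a slip in the paper's bookkeeping (it asserts $l=3$ suffices), not a flaw in your argument; the strategy and all structural steps in your proposal are identical to the paper's.
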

In particular, the diagonal unitaries in the repetated $\mathsf{FH}$ can be replaced by cryptographic assumptions such as quantum secure one-way functions or $\mathsf{HPS}$ as discussed in the next section.

\subsubsection{Instantiation under $\mathsf{HPS}$}
\fi
Note that the version of $\mathsf{HPS}$ learning proposed in \cref{sec:background} involves distinguishing an ensemble of states from Haar random states.  As of this paper, we do not know how to build pseudo-random unitaries from pseudo-random states, which seems closely related to building pseudo-random unitaries from the distinguishing variant of $\mathsf{HPS}$.  We instead propose that building pseudo-random unitaries by layering $\mathsf{HPS}$ circuits with Hadamard gates yields a pseudo-random unitary.
\begin{conjecture}[$\mathsf{HPS}$ pseudorandom unitaries]
\label{conj:HPS_unitary}
    Let $\Vec{A} = \{\Vec{A}_\ell \sim \mathbb{Z}_{2}^{m \times n}\}_{\ell = 1}^{6}$, and let $\Vec{k} = \{\Vec{k}_\ell = (\theta_1, \ldots, \theta_m)\}_{\ell = 1}^{6}$ be chosen uniformly at random (according to some discretization), with $m \geq n + \log(n)$.  We conjecture that the following gives rise to a pseudorandom unitary:
    \begin{equation*}
        \mathsf{U}^{\Vec{A}, \Vec{k}} = \prod_{i = 1}^{6} \left( H^{\otimes n} \exp\left(i \sum_{i = 1}^{m} (\theta_{\ell})_i \bigotimes_{j = 1}^{n} Z^{(\vec{A}_{\ell})_{ij}}\right)\right)\,.
    \end{equation*}
\end{conjecture}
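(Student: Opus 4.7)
The plan is to argue pseudorandomness by a two-step reduction. First, reduce the conjectured construction to the ``ideal'' version in which each HPS unitary $U_{\boldsymbol \theta}^{\vec A}$ is replaced by a uniformly random diagonal (Haar) unitary on the computational basis; second, invoke the analysis of the $\mathsf{FHFHFHFHFHF}$ ensemble from \cref{sec:unitary} (or the related $\mathsf{FHFHFC}$ analysis, depending on the precise form), since $\mathsf{HFHFHFHFHFHF}$ has the same structural form of interleaved Hadamards and random diagonal unitaries, and the path-recording analysis carries over essentially verbatim.

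The first step is where most of the work would go. I would use a standard hybrid argument: define hybrids $H_0, H_1, \ldots, H_6$, where $H_k$ uses ideal uniform diagonal unitaries in the first $k$ layers and actual HPS instances in the remaining $6-k$ layers, and bound the distinguishing advantage between adjacent hybrids by a reduction to the decisional $\mathsf{HPS}$ assumption. Because HPS unitaries are diagonal, query access to an HPS oracle on a superposition $\sum_x \alpha_x \ket{x}$ returns $\sum_x \alpha_x e^{i \phi_x(\vec A, \boldsymbol \theta)} \ket{x}$; hence query access is operationally equivalent to (superposition) query access to the phase function $\phi_x(\vec A, \boldsymbol \theta) = \sum_{i=1}^m \theta_i (-1)^{\langle \vec A_i, x\rangle}$.

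\textbf{The main obstacle} is precisely that the decisional $\mathsf{HPS}_{n,m,q,\chi}$ assumption is stated at the level of \emph{states} (polynomially many copies of $U_{\boldsymbol \theta}^{\vec A} \ket{+^n}$), whereas the hybrid argument requires indistinguishability at the level of \emph{query access} to the HPS unitary applied to adversarially chosen intermediate states. Bridging this state-to-query gap is the central technical obstacle. A natural approach is to adapt the compressed-oracle / path-recording framework of \cite{ma2024pseudorandom,zhandry2019record} to this setting: purify the HPS oracle, argue that a $q$-query adversary's joint state only depends on a polynomial-size ``record'' of the phase function, and then show that this record can be simulated from $\mathrm{poly}(n)$ copies of $U_{\boldsymbol \theta}^{\vec A}\ket{+^n}$. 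The structural fact that each HPS application is sandwiched by Hadamards might help, since the input to each diagonal layer is $H^{\otimes n}$ applied to a state produced by the previous layer, and the $\ket{+^n}$-based copy oracle natively supports Fourier-type manipulations.

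A cleaner but more speculative alternative is to strengthen the $\mathsf{HPS}$ assumption to a quantum-query variant: conjecture that oracle access to $U_{\boldsymbol \theta}^{\vec A}$ is computationally indistinguishable from oracle access to a Haar-random diagonal unitary. This would immediately close the hybrid argument, and the approximate diagonal $t$-design property established in \cref{thm:diagonal designs} already rules out information-theoretic query distinguishers using fewer than $t = \Omega(\sqrt{m/n})$ queries, which is partial evidence for the stronger query-level assumption. Either way, the sample-optimal but exponential-time learning algorithms discussed in \cref{sssec:solving hps} and the fact that the worst-to-average reduction of \cref{lem:angle reduction} is insensitive to how the oracle is accessed suggest that the query-level and copy-level variants are of comparable hardness.
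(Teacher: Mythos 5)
The statement you were asked to address is presented in the paper as a \emph{conjecture}, not a theorem: the paper offers no proof, and its only stated evidence is the security of the idealized ensemble obtained by replacing each $\HPS$ layer with a Haar-random diagonal unitary (the $\mathsf{FHFHFC}$, respectively $\mathsf{FHFHFHFHFHF}$, analysis via path recording). Your reduction target therefore matches the paper's, and---more importantly---your diagnosis of the ``state-to-query gap'' is exactly the obstacle that keeps this a conjecture rather than a theorem. The decisional $\HPS$ assumption promises indistinguishability of $\ket{\Phi^{\vec A}_{\boldsymbol\theta}}^{\otimes t}$ from Haar-random state copies, but each step of your hybrid argument would require the challenger to apply $U^{\vec A}_{\boldsymbol\theta}$ to adversarially prepared, possibly entangled intermediate registers, which is oracle access: a strictly stronger form of interaction that copies of $\ket{\Phi^{\vec A}_{\boldsymbol\theta}}$ cannot simulate. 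The paper itself concedes this, writing that ``beyond the security of the idealized scheme, we have no evidence for or against the security of this pseudorandom unitary.''

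Neither of your two suggested remedies closes this gap from the decisional assumption as stated. Option (a)---porting the compressed-oracle/path-recording machinery across the gap---fails because the $\mathsf{FHFHFC}$ proof crucially uses the fact that the diagonal phases are \emph{fresh, independent} randomness for every computational-basis string: the set-state and relation-state orthogonality lemmas collapse the purifying register to a clean database precisely because distinct fresh phases decorrelate. With $\HPS$, the $2^n$ phases $\phi_x = \sum_i \theta_i (-1)^{\langle \vec A_i, x\rangle}$ live in a $\poly(n)$-dimensional parameter space, so the would-be relation states are neither orthogonal nor even linearly independent, and the database compression step has nothing to grab onto. Option (b)---postulating query-level indistinguishability of $U^{\vec A}_{\boldsymbol\theta}$ from a Haar-random diagonal unitary---is, as you note, precisely the missing ingredient, but it does not follow from decisional $\HPS$; it is in spirit a restatement of the conjecture itself. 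Your honest flagging of the gap is the right call, and your heuristic appeal to \cref{thm:diagonal designs} (the diagonal $t$-design property ruling out low-query distinguishers) mirrors the kind of indirect evidence the paper offers in lieu of a proof.
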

Of course, we note that this conjecture is an assumption about the distinguishing advantage of any computationally bounded adversary.  
Our evidence that this conjecture is true comes from the conjectured security of the repeated diagonal-Hadamard pseudo-random unitary\ifhaspru that we proved in the previous section.
In that section, we took ``ideal'' HPS instances (i.e. completely random diagonal unitary) for all the diagonal gates, which can be viewed as running an exponentially deep HPS circuit.
Thus, our conjecture essentially amounts to the fact that stopping this evolution by the HPS unitary after some polynomial-amount of time yields a pseudorandom unitary instead of a truly Haar random one\fi.
It is an interesting question to interrogate the security of this construction, as beyond the security of the idealized scheme, we have no evidence for or against the security of this pseudorandom unitary.
\section{Discussion and Future Work}

Here we discuss connections between the $\mathsf{HPS}$ assumption and other assumptions in post-quantum cryptography and fully quantum cryptography.  We also provide directions for potential future work.

\paragraph{Applications of HPS in the real world.} In the work of \cite{bouland2019computational}, the authors consider pseudo-random states formed by starting from a maximally entangled state, and applying time evolution by a fixed operator (called $H_{\mathrm{CFT}}$) that represents the black hole.  In order to justify their assumption, they model the black hole as a Haar random unitary, and construct pseudo-random states by alternating random Pauli gates with applications of the Haar random unitary.  
The $\mathsf{HPS}$ assumption, while having no formal connection to the physics of black holes, has a similar flavor to the random-circuit toy models of black hole evolution.  In particular, if we imagine that the CFT Hamiltonian looks like the $\mathsf{X}$-program Hamiltonian, then the $\mathsf{HPS}$ assumption becomes akin to distinguishing the states after time-evolution of a black hole from a known state.  Furthermore, if we take the view that black hole evolution appends a uniformly random $\theta_i$, the $\mathsf{HPS}$ Hamiltonian allows us to both perform black hole time evolution, and interweave Pauli $X$ gates, yielding a similar construction to the black-box $\mathsf{PRS}$ from \cite{bouland2019computational}.  We believe that the HPS assumption could serve as a toy model for black hole scrambling dynamics, and would be interested in seeing a small scale implementation on a real quantum computer.

\paragraph{Building more cryptography.}
In this work, we built a number of cryptographic primitives from the HPS assumption.  These included all primitives contained in ``MicroCrypt'' (which are implied by one-way functions but do not imply one-way functions in a black box way), and even some primitives (like public key cryptography with quantum public keys) which are not generally considered MicroCrypt primitives.  
We believe that the concrete assumption we propose should allow for much more interesting cryptography, in the same way that cryptographers have extended the \emph{Learning with Errors}~\cite{regev2009lattices} problem to build increasingly powerful and complex tools like functional encryption, indistinguishability and obfuscation, and zero-knowledge proofs.  We hope that similar extensions of our results will yield efficient implementations of more useful cryptographic primitives on near-term quantum computers as well.  We point to recent work describing possible constructions of indistinguishability obfuscation (iO) from classical reversible circuits as a possible future application of the HPS assumption: Can HPS be used in a similar way to build a quantum analog of random output iO? 

\paragraph{Establishing the security of HPS.}

This paper proposes a concrete problem that we consider hard to solve.  While there have been attempts to solve related problems, and these problems have evaded polynomial time solutions so far, it would be foolish to take the word of a single paper on the computational hardness of any problem.  In order for both the authors and the community at large to build confidence in the security of the HPS assumption, effort must go into attempting to break the assumption, or providing more evidence of its hardness.  This paper (the authors believe) provides ample evidence towards the security of HPS.  We provide an worst-to-average case reduction, an attack in certain parameter regimes, and an analysis of run-time of state of the art algorithms for attacking HPS.  We believe that our work makes the HPS assumption worthy of future study and an interesting algorithmic problem in its own right.  We hope our work prompts researchers to design attacks and attempt to break the security of our assumptions.

\printbibliography

\begin{appendix}
\section{Proofs for the Design Property of Hamiltonian Phase States}
\label{app:design proofs}

We will prove Theorem~\ref{thm:diagonal designs} via a bound on the spectral gap of the random walk $\nu$ on the diagonal group $D(2^n)$. 
Instead of bounding the diamond norm directly, we consider the essential norm (see the discussion in Ref.~\cite{chen2024incompressibility}) in the diagonal group:
\begin{equation}
    g_D(\nu,t)\coloneqq\left|\left| \mathbb{E}_{U\sim \nu}U^{\otimes t}\otimes \overline{U}^{\otimes t}-\mathbb{E}_{U\sim \mu_D}U^{\otimes t}\otimes \overline{U}^{\otimes t}\right|\right|_{\infty}~.
\end{equation}
This equals the operator norm of $\mathbb{E}_{U\sim \nu}U^{\otimes t}\otimes \overline{U}^{\otimes t}$ restricted to orthocomplement of the vectors invariant under $U^{\otimes t}\otimes \overline{U}^{\otimes t}$ for $U\in D(2^n)$.
Therefore, $g_D$ inherits the submultiplicativity of the operator norm:
\begin{equation}
    g_D(\nu^{*m},t)\leq g_D(\nu,t)^m.
\end{equation}
Our strategy is therefore to establish a bound on $g_D(\nu,t)$ and then amplify to apply the following standard lemma:
\begin{lemma}
A probability measure $\nu$ on $D(2^n)$ is a $g_D(\nu,t)\times 2^{nt}$-approximate diagonal $t$-design.
\end{lemma}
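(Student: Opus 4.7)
The plan is to reduce the statement to two standard facts: (i) that the difference of moment operators vanishes on the invariant subspace, so its full operator norm coincides with the essential norm $g_D(\nu,t)$, and (ii) that the diamond norm of a superoperator is controlled by its induced $2\to 2$ norm up to a dimension factor.

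First, I would verify that both $\mathcal{M}_\nu^{(t)}$ and $\mathcal{M}_{\mu_D}^{(t)}$ act as the identity on the subspace of $(\mathbb{C}^{2^n})^{\otimes 2t}$ invariant under $U^{\otimes t}\otimes \overline{U}^{\otimes t}$ for every $U\in D(2^n)$. Indeed, any such invariant vector $|v\rangle$ satisfies $U^{\otimes t}\otimes \overline{U}^{\otimes t}|v\rangle = |v\rangle$ for all diagonal $U$, so averaging with respect to either $\nu$ or $\mu_D$ leaves $|v\rangle$ unchanged. Consequently, the difference $\mathbb{E}_{\nu}U^{\otimes t}\otimes \overline{U}^{\otimes t}-\mathbb{E}_{\mu_D}U^{\otimes t}\otimes \overline{U}^{\otimes t}$ vanishes on the invariant subspace, and its operator norm on the full space agrees with its restriction to the orthogonal complement, which by definition equals $g_D(\nu,t)$.

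Second, I would invoke the standard inequality $\|\Phi\|_\diamond \leq d\,\|\Phi\|_{2\to 2}$ for any superoperator $\Phi$ acting on $\mathcal{L}(\mathbb{C}^d)$ (see e.g.\ Ref.~\cite{low2010pseudo}), applied to $\Phi = \mathcal{M}_\nu^{(t)} - \mathcal{M}_{\mu_D}^{(t)}$ with $d=2^{nt}$. Under the Choi/vectorization identification, the $2\to 2$ norm of this superoperator coincides with the operator norm of its matrix representation on $(\mathbb{C}^{2^n})^{\otimes 2t}$, namely $\|\mathbb{E}_{\nu}U^{\otimes t}\otimes \overline{U}^{\otimes t}-\mathbb{E}_{\mu_D}U^{\otimes t}\otimes \overline{U}^{\otimes t}\|_\infty$. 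Combining this with the previous paragraph gives
\[
\bigl\|\mathcal{M}_\nu^{(t)}-\mathcal{M}_{\mu_D}^{(t)}\bigr\|_\diamond \;\leq\; 2^{nt}\,g_D(\nu,t),
\]
which is exactly the claim that $\nu$ is a $g_D(\nu,t)\cdot 2^{nt}$-approximate diagonal $t$-design in the sense of Definition~\ref{def: diagonal designs}.

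The only subtle point is the $2\to 2$-to-vectorized-operator-norm identification in the diagonal setting: one must be careful with the complex conjugation in the second register so that the Hilbert--Schmidt inner product is sent to the standard inner product on $(\mathbb{C}^{2^n})^{\otimes 2t}$. This is routine but worth a line of verification; otherwise the argument is purely an application of two off-the-shelf facts and does not require any new analysis of the ensemble $\nu$ itself.
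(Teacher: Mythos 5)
Your proof is correct and follows essentially the same route as the paper's: identify the essential norm $g_D(\nu,t)$ with the induced $2\to 2$ norm of the superoperator $\mathcal{M}^{(t)}_\nu - \mathcal{M}^{(t)}_{\mu_D}$ (using that the difference vanishes on the invariant subspace), then apply the standard dimension-factor bound $\|\Phi\|_\diamond \le d\,\|\Phi\|_{2\to2}$ with $d = 2^{nt}$. You have in fact stated the dimension bound more cleanly than the paper does (the paper's "$2^d$" is best read as shorthand for the Hilbert-space dimension $2^{nt}$), and made explicit the step that both moment operators fix the invariant subspace, which the paper leaves implicit.
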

\begin{proof}
  $g_D(\nu,t)$ equals the induced $2$-norm of $\mathcal{M}^{(t)}_{\nu}-\mathcal{M}^{(t)}_{\mu_H}$. 
  The claim now follows from the inequality $||(.)||_{\diamond}\leq 2^{d}||(.)||_{\infty}$ (see e.g.~\cite{low2010pseudo}), where $(.)$ acts on the space of linear operators on $\mathbb{C}^d$. 
  In our case $\mathcal{M}^{(t)}_{\nu}-\mathcal{M}^{(t)}_{\mu_D}$ acts on $\mathbb{C}^{2^{nt}\times 2^{nt}}\stackrel{\sim}{=} \mathbb{C}^{2^{2nt}}$.
\end{proof}

 In the following, we denote by $\nu$ the probability measure over $D(2^n)$ defined by $\exp(\mathrm{i}\theta \bigotimes_{j=1}^nZ^{\boldsymbol{A}_{1j}})$, with $y$ drawn uniformly from $\{0,1\}$ and $\theta$ drawn uniformly from $(0,2\pi]$. 
 In the remainder of this section we will use the simplified notation $a$ for the vector $a=\boldsymbol{A}_{1i}$.
Moreover, we denote by $\nu_q$ the measure that instead draws $\theta$ uniformly from the discrete set $\{2\pi l/q\}_{l=0}^{q-1}$.
Theorem~\ref{thm:diagonal designs} will follow from the following lemma:
\begin{lemma}\label{lemma:gapbound}
  For all $t\geq 1$ and all $n\geq 1$ we have $ g_D(\nu,t)\leq 1-\frac{1}{2t}$.
  Moreover, $g_D(\nu_q, t)\geq 1-\frac{1}{2t}$ if $q>2t$.
\end{lemma}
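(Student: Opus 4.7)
The plan is to handle the two claims separately: the first bounds the largest non-trivial eigenvalue of the moment operator from above, while the second requires exhibiting a concrete pair of tuples realizing the matching lower bound.

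For the upper bound $g_D(\nu,t)\le 1-\tfrac{1}{2t}$, I would expand the moment operator $\mathcal{M}_\nu^{(t)}=\mathbb{E}_{U\sim\nu}U^{\otimes t}\otimes \bar U^{\otimes t}$ in the computational basis. Since every unitary in the support of $\nu$ is diagonal, $\mathcal{M}_\nu^{(t)}$ is itself diagonal, with entries indexed by pairs of $t$-tuples $(\mathbf{x},\mathbf{x}')$; write $\lambda_{\mathbf{x},\mathbf{x}'}$ for the eigenvalue. The Haar-diagonal moment takes value $1$ on the invariant subspace $\{\mathbf{x}\sim\mathbf{x}'\}$ (tuples related by a permutation of indices) and $0$ elsewhere, so $g_D(\nu,t)=\max_{\mathbf{x}\not\sim\mathbf{x}'}|\lambda_{\mathbf{x},\mathbf{x}'}|$. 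A direct computation, averaging first over $\theta$ uniform on $[0,2\pi)$, gives
$$
\lambda_{\mathbf{x},\mathbf{x}'}=\Pr_y\!\left[f^{\mathbf{x},\mathbf{x}'}(y)=0\right],\qquad f^{\mathbf{x},\mathbf{x}'}(y):=\sum_{j=1}^{t}\Big((-1)^{\langle y,x_j\rangle}-(-1)^{\langle y,x_j'\rangle}\Big).
$$
After canceling indices that appear in both tuples, reduce to disjoint tuples of length $r\in[1,t]$. Parseval applied to $f^{\mathbf{x},\mathbf{x}'}$, whose Fourier coefficients are $\hat f(z)=\sum_j(\delta_{x_j}(z)-\delta_{x_j'}(z))$ so that $\sum_z\hat f(z)^2=2r$, combined with the pointwise bound $|f|\le 2r$, yields $\#\{y:f(y)\ne 0\}\ge 2^n/(2r)$, whence $\lambda_{\mathbf{x},\mathbf{x}'}\le 1-\tfrac{1}{2r}\le 1-\tfrac{1}{2t}$.

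For the lower bound $g_D(\nu_q,t)\ge 1-\tfrac{1}{2t}$ when $q>2t$, the same basis expansion applies: sampling $\theta$ uniformly from $\{2\pi\ell/q\}_{\ell=0}^{q-1}$ gives $\mathbb{E}_\theta e^{\mathrm{i}\theta m}=\mathbf{1}[m\equiv 0\pmod q]$, and because the exponent $m$ has magnitude at most $2t<q$ the congruence coincides with integer equality. Hence $g_D(\nu_q,t)=\max_{\mathbf{x}\not\sim\mathbf{x}'}\Pr_y[f^{\mathbf{x},\mathbf{x}'}(y)=0]$ exactly as in the continuous case, and it suffices to exhibit a single pair $(\mathbf{x}^*,\mathbf{x}'^*)$, $\mathbf{x}^*\not\sim\mathbf{x}'^*$, attaining $\Pr_y[f(y)=0]\ge 1-\tfrac{1}{2t}$. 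I would construct this witness from cosets of a linear subspace: for $t=2^k$ with $n\ge k+1$, pick $V\le\mathbb{F}_2^n$ of dimension $k$ and vectors $a,a'\in\mathbb{F}_2^n$ with $a\oplus a'\notin V$; let $\mathbf{x}^*$ enumerate the coset $a+V$ and $\mathbf{x}'^*$ the coset $a'+V$, both of size $t$. The Fourier identity $\sum_{x\in a+V}(-1)^{\langle y,x\rangle}=(-1)^{\langle y,a\rangle}\,t\cdot\mathbf{1}[y\in V^\perp]$ gives $f^{\mathbf{x}^*,\mathbf{x}'^*}(y)=0$ for every $y\notin V^\perp$ (a fraction $1-1/t$ of $y$'s), while on $V^\perp$ the function vanishes precisely when $\langle y,a\oplus a'\rangle=0$; since $a\oplus a'\notin V=(V^\perp)^\perp$ this is a codimension-one subspace of $V^\perp$, contributing an additional $\tfrac{1}{2t}$. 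Summing gives $\Pr_y[f(y)=0]=1-\tfrac{1}{2t}$ exactly, and the cosets being disjoint guarantees $\mathbf{x}^*\not\sim\mathbf{x}'^*$.

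The main obstacle is the general-$t$ case where $t$ is not a power of two, and the small-$n$ regime $n<\lceil\log_2 t\rceil+1$ in which the coset construction cannot be mounted. For non-power-of-two $t$, I would place two disjoint $t$-element subsets $S,S'$ inside a subspace $W$ of size $2^{\lceil\log_2(2t)\rceil}$ chosen as a coset pair of a sub-subspace of $W$ of size dividing $|W|$; the containment $S,S'\subset W$ forces the Walsh--Hadamard transform difference to vanish off $W^\perp$ automatically, while the coset structure restricts the remaining bad $y$'s to a hyperplane of $W^\perp$. The small-$n$ regime is degenerate for the downstream application (the design-property theorem invokes the lemma with $n$ polynomially larger than $t$) and is absorbed into standing parameter conventions.
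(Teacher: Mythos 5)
Your upper-bound argument for the continuous ensemble $\nu$ is essentially the paper's: after diagonalizing the moment operator in the computational basis, the off-invariant eigenvalue becomes $\Pr_y\bigl[f^{\mathbf{x},\mathbf{x}'}(y)=0\bigr]$, strings common to the two tuples are cancelled to leave disjoint tuples of length $r$, and Parseval together with the pointwise bound $|f|\leq 2r$ gives $\Pr_y[f\neq 0]\geq \tfrac{1}{2r}\geq\tfrac{1}{2t}$. (Minor slip: with repeated entries $\sum_z \hat f(z)^2\geq 2r$ rather than equality, but only the inequality is used.)

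For the second claim you diverge, and the divergence stems from a typo in the lemma statement: the intended inequality is $g_D(\nu_q,t)\leq 1-\tfrac{1}{2t}$, not $\geq$. The paper's proof observes that for $q>2t$ the event $\sum_j(-1)^{\langle y,x_j\rangle}-(-1)^{\langle y,x'_j\rangle}\equiv 0\pmod q$ coincides with integer equality, since the quantity has magnitude at most $2t<q$; hence $g_D(\nu_q,t)$ has exactly the same closed form as $g_D(\nu,t)$, and the first part's upper bound carries over verbatim. That upper bound is what the supporting design theorem actually needs to conclude the discretized ensemble is an approximate diagonal $t$-design; a lower bound plays no role there. You in fact derive the needed identity yourself (``$g_D(\nu_q,t)=\max_{\mathbf{x}\not\sim\mathbf{x}'}\Pr_y[f^{\mathbf{x},\mathbf{x}'}(y)=0]$ exactly as in the continuous case''), at which point the intended claim follows for free from your first paragraph. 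Your coset witness is a correct and pleasant tightness observation for $t$ a power of two with $n$ large enough, but it is not what the paper proves or needs, and your sketched extension to general $t$ has a genuine error: placing $S,S'$ inside a subspace $W$ does not make $\widehat{\mathbf{1}_S}-\widehat{\mathbf{1}_{S'}}$ vanish off $W^\perp$; that conclusion requires $S$ and $S'$ to be affine cosets themselves, which pins $|S|$ to a power of two.
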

\begin{proof}
    We begin by expanding the moment operators in the computational basis:
    \begin{align}
    \begin{split}
        \underset{U\sim \nu}{\mathbb{E}} U^{\otimes t}\otimes \overline{U}^{\otimes t}&=\underset{\theta\in (0,2\pi]}{\mathbb{E}}\underset{a\in \{0,1\}^n }{\mathbb{E}}\sum_{\textbf{x},\textbf{x'}\in\{0,1\}^{nt}} e^{\mathrm{i}\theta\sum_{j=1}^t(-1)^{\langle a, x_j\rangle} -(-1)^{\langle a, x'_j\rangle }}|\textbf{x},\textbf{x}'\rangle\langle \textbf{x},\textbf{x}'|.\\
         \underset{U\sim \mu_D}{\mathbb{E}} U^{\otimes t}\otimes \overline{U}^{\otimes t}&=\sum_{\textbf{x},\textbf{x}'\in\{0,1\}^{nt}}\underset{\phi_x\in(0,2\pi]}{\mathbb{E}}e^{\mathrm{i}\sum_{j=1}^t\phi_{x_j}-\phi_{x'_j}}|\textbf{x},\textbf{x}'\rangle\langle\textbf{x},\textbf{x}'|~.
        \end{split}
 \end{align}
 $\mathbb{E}_{\boldsymbol{x},\boldsymbol{x}'\in\{0,1\}^n}\mathbb{E}_{\phi_x}e^{\mathrm{i}\sum_{j=1}^t\phi_{x_j}-\phi_{x'_j}}=0$ if and only if $\textbf{x}$ and $\textbf{x}'$ are related by a permutation of the $t$ bitstrings. 
 If two bitstrings are related by such a permutation we denote it by $\textbf{x}\sim \textbf{x}'$.
 Therefore, we find 
 \begin{align}
     g_D(\nu,t)&=\max_{\textbf{x},\textbf{x}';\textbf{x}\not \sim \textbf{x}'}\underset{a\in \{0,1\}^n}{\mathbb{E}}\underset{\theta\in (0,2\pi]} {\mathbb{E}}e^{\mathrm{i}\theta\sum_{j=1}^t(-1)^{\langle a|x_j\rangle} -(-1)^{\langle a|x'_j\rangle }}.
 \end{align}
 Notice that $\mathbb{E}_{\theta\in (0,2\pi]} e^{\mathrm{i}\theta \sum_{j=1}^t(-1)^{\langle a, x_j\rangle} -(-1)^{\langle a, x'_j\rangle }}=0$ if and only if $\sum_{j=1}^t(-1)^{\langle a, x_j\rangle} -(-1)^{\langle a, x'_j\rangle }\neq 0$.
 Therefore
 \begin{equation}
     g_D(\nu,t)=\max_{\textbf{x},\textbf{x}';\textbf{x}\not \sim \textbf{x}'} \mathrm{Pr}_{a}\left[\sum_{j=1}^t(-1)^{\langle a, x_j\rangle} -(-1)^{\langle a, x'_j\rangle}=0\right].
 \end{equation}
 To bound this probability we use Fourier analysis of Boolean functions. 
 More precisely, we define the function $f^{\textbf{x},\textbf{x}'}:\{0,1\}\to \mathbb{R}$ as
 \begin{equation}
     f^{\textbf{x},\textbf{x}'}(a)\coloneqq\sum_{j=1}^t(-1)^{\langle a, x_j\rangle} -(-1)^{\langle a, x'_j\rangle }.
 \end{equation}
 Recall that the Fourier transform of a Boolean function $f$ is defined as
 \begin{equation}
     \hat{f}(a)= \underset{x\in\{0,1\}}{\mathbb{E}}f(x) (-1)^{\langle x|a\rangle}.
 \end{equation}
 We can easily verify that 
 \begin{equation}
      \hat{f}^{\textbf{x},\textbf{x}'}= \sum_{j=1}^t \delta_{x_j}-\delta_{x'_j},
 \end{equation}
 where $\delta_x(z)=\begin{cases} 0~~\text{if}~~x\neq z\\ 1~~\text{if}~~x=z\end{cases}$.

 If a string appears in both $\textbf{x}$ and $\textbf{x}'$ we can remove it without changing $f^{\textbf{x},\textbf{x}'}$.
 Therefore, we can remove strings until we obtain two tuples of length $0<r\leq t$ that have no common strings.
 Now, via Paseval's identity, we find the following bound:
\begin{align}
    \begin{split}
        (2r)^2 \left|\left\{a,\sum_{j=1}^r(-1)^{\langle a, x_j\rangle} -(-1)^{\langle a, x'_j\rangle }\neq 0\right\}\right|&\geq \sum_{a} \left(\sum_{j=1}^r(-1)^{\langle a, x_j\rangle} -(-1)^{\langle a, x'_j\rangle }\right)^2\\
        &=2^n\underset{a}{\mathbb{E}}\left(f^{\textbf{x},\textbf{x}'}(a)\right)^2\\
        &= 2^{n} \sum_{z} \left(\hat{f}^{\textbf{x},\textbf{x}'}(z)\right)^2\\
        &\geq 2^{n}2r. 
    \end{split}
\end{align}
Therefore, we find 
\begin{equation}
    g_D(\nu,t)=\max_{\textbf{x}\not \sim \textbf{x}'} 1-2^{-n}\left|\left\{a\in\{0,1\},\sum_{j=1}^t(-1)^{\langle a, x_j\rangle} -(-1)^{\langle a, x'_j\rangle }\neq 0\right\}\right|\leq 1-\frac{1}{2r}\leq 1-\frac{1}{2t},
\end{equation}
which completes the proof.

For the bound on $g_D(\nu_q,t)$ we observe just as above that we have 
  \begin{equation}
     g_D(\nu,t)=\max_{\textbf{x},\textbf{x}';\textbf{x}\not \sim \textbf{x}'} \mathrm{Pr}_{a}\left[\sum_{j=1}^t(-1)^{\langle a, x_j\rangle} -(-1)^{\langle a, x'_j\rangle }=0 \mod q \right].
 \end{equation}
 But as $\left|\sum_{j=1}^t(-1)^{\langle a|x_j\rangle} -(-1)^{\langle a|x'_j\rangle }\right|\leq 2t$, and $q> 2t$ we have $\sum_{j=1}^t(-1)^{\langle a, x_j\rangle} -(-1)^{\langle a, x'_j\rangle}=0 \mod q$ if and only if $\sum_{j=1}^t(-1)^{\langle a, x_j\rangle} -(-1)^{\langle a, x'_j\rangle }=0$.
 The rest of the argument thus goes through without change.
\end{proof}

\begin{proof}[Proof of Corollary~\ref{lemma: many orthogonal states}]
We use the notation $|\psi\rangle^{\otimes t,t}\coloneqq|\psi\rangle^{\otimes t}\otimes \overline{|\psi\rangle}^{\otimes t}$.
For any state $|\psi\rangle$:
\begin{align}\label{eq:momentbound}
\begin{split}
    \mathbb{E}_{U\sim \mu_D}|\langle\psi|U|+^n\rangle|^{2t}&=\langle\psi|^{\otimes t,t}\mathbb{E}_{U\sim \mu_D}U^{\otimes t}\otimes \overline{U}^{\otimes t}|+^n\rangle^{\otimes 2t}\\
    &\leq \sqrt{\langle +^n|^{\otimes t,t} (\mathbb{E}_{U\sim \mu_D}U^{\otimes t}\otimes \overline{U}^{\otimes t})^{\dagger}\mathbb{E}_{U\sim \mu_D}U^{\otimes t}\otimes \overline{U}^{\otimes t}|+\rangle^{\otimes t,t}}\\
     &\leq \sqrt{\mathbb{E}_{U\sim \mu_D}|\langle +^n|U|+^n\rangle|^2}.
    \end{split}
\end{align}
We compute
\begin{align}
    \begin{split}
        \mathbb{E}_{U\sim \mu_D}|\langle +^n|U|+^n\rangle|^2 = \frac{1}{2^{2nt}}\sum_{x_1,\ldots,x_t, x'_1,\ldots,x'_t} \mathbb{E}_{\phi_x,x\in\{0,1\}}e^{\mathrm{i}\sum_{j=1}^t \phi_{x_j}-\phi_{x'_j}}
    \end{split}
\end{align}
As long as $\sum_{j=1}^t\phi_{x_j}-\phi_{x'_j}=0$, for all $\phi$, we have $$\mathbb{E}_{\phi_j}e^{\mathrm{i}\sum_{j=1}^t\phi_{x_j}-\phi_{x'_j}}=0.$$
But $\sum_{j=1}^t\phi_{x_j}-\phi_{x'_j}=0$ if and only if $(x_1,\ldots,x_t)$ and $(x'_1,\ldots x'_t)$ are related by a permutation in $S_t$.
Clearly, for every fixed tuple $(x_1,\ldots, x_t)$ there are at most $t!$ permutations.
Consequently, we find 
\begin{equation}
     \mathbb{E}_{U\sim \mu_D}|\langle +^n|U|+^n\rangle|^2\leq \frac{t!}{2^{nt}}
\end{equation}
and therefore, plugging into Eq.~\eqref{eq:momentbound},
\begin{equation}
    \mathbb{E}_{U\sim \mu_D}|\langle\psi|U|+^n\rangle|^{2t}\leq \sqrt{\frac{t!}{2^{nt}}}.
\end{equation}
By Theorem~\ref{thm:diagonal designs} we have that $\nu^{*m}$ is a $\varepsilon$-approximate diagonal $t$-design with $\varepsilon\leq \sqrt{t!/2^{nt}}$ and thus
\begin{equation}
    \mathbb{E}_{U\sim \nu^{*d}} |\langle\psi|U|+^n\rangle|^{2t}\leq   \mathbb{E}_{U\sim \mu_D}|\langle\psi|U|+^n\rangle|^{2t}+\varepsilon\leq 2\sqrt{\frac{t!}{2^{nt}}}.
\end{equation}

We can split the expectation value in~\cref{eq:momentbound} into the discrete part and the angles.
Applying Markov's inequality over the distribution of $\boldsymbol{A}$ we find
\begin{equation}
    \mathrm{Pr}_{\boldsymbol{A}}[\mathbb{E}_{\boldsymbol{\theta}}|\langle\psi|\prod_i e^{\sum_{i=1}^m\mathrm{i}\theta_i\bigotimes_{j=1}^{n}Z^{\boldsymbol{A}_{ij}}}|+^n\rangle|^{2t}\geq (2t!/2^{nt})^{\frac14})]\leq (2(t!/2^{nt})^{\frac14}).
\end{equation}
Therefore, with high probability over the $Z$ strings we have 
\begin{equation}\label{eq:momentboundforfixedZstrings}
    \mathbb{E}_{\boldsymbol{\theta}}|\langle\psi|e^{\sum_{i=1}^m\mathrm{i}\theta_i\bigotimes_{j=1}^{n}Z^{\boldsymbol{A}_{ij}}}|+^n\rangle|^{2t}\geq (2t!/2^{nt})^{\frac14}).
\end{equation}
We can now use Markov's inequality again to prove that most states for any state $|\psi\rangle$ almost all states $U|0^n\rangle$ have small overlap. 
Now consider any $\boldsymbol{A}$ such that~\cref{eq:momentboundforfixedZstrings} holds.
For any $\delta>0$, we have
\begin{align}
\begin{split}
    \mathrm{Pr}_{\boldsymbol{\theta}}[|\langle\psi|\prod_i e^{\mathrm{i}\theta_i\bigotimes_{j=1}^nZ^{\boldsymbol{A}_{ij}}}|+^n\rangle|^{2}\geq (1-\delta)]&=\mathrm{Pr}_{\boldsymbol{\theta}}[|\langle\psi|e^{\sum_{i=1}^m\mathrm{i}\theta_i\bigotimes_{j=1}^{n}Z^{\boldsymbol{A}_{ij}}}|+^n\rangle|^{2t}\geq (1-\delta)^{t}]\\
    & \leq \frac{\mathbb{E}_{\boldsymbol{\theta}}|\langle\psi|e^{\sum_{i=1}^m\mathrm{i}\theta_i\bigotimes_{j=1}^{n}Z^{\boldsymbol{A}_{ij}}}|+^n\rangle|^{2t}}{(1-\delta)^{t}}\\
    &\leq 2^{-\frac{1}{4} [n-\log_2(t)+\log_2(1/(1-\delta)]t}.
    \end{split}
\end{align}
In particular, even if we choose $1-\delta=2^{-n/8}$ and $t=\poly(n)$, we find that 
\begin{align}
    \bra{\psi}e^{\sum_{i=1}^m\mathrm{i}\theta_i\bigotimes_{j=1}^{n}Z^{\boldsymbol{A}_{ij}}}|+^n\rangle|^2\leq 2^{-n/8}
\end{align}
with probability $1-2^{-\frac{1}{10}nt}$.
From a union bound, we then find that there are at least $2^{cnt}$ distinct instances that are (almost) mutually orthogonal to each other.
\end{proof}

\end{appendix}

\end{document}